\newtheorem{definition}{Definition}[section]
\newtheorem{theorem}[definition]{Theorem}
\newtheorem{prop}[definition]{Proposition}
\newtheorem{lemma}[definition]{Lemma}
\newtheorem{remark}[definition]{Remark}
\newtheorem{rem}[definition]{Remark}
\newtheorem{cor}[definition]{Corollary}
\newenvironment{proof}[1][Proof]{\begin{trivlist}
\item[\hskip \labelsep {\bfseries #1}]}{\hfill$\Box$\end{trivlist}}
\newenvironment{proofof}[1][Proof]{\begin{trivlist}
\item[\hskip \labelsep {\bfseries Proof of #1:}]}{\hfill$\Box$\end{trivlist}}
\newcommand{\nn}{\nonumber \\}
\def\dm#1{\(\displaystyle #1 \)}
\newenvironment{textmath}{\(\displaystyle}{\)}
\def\A{\mathcal{A}}
\def\B{\mathcal{L}}
\def\D{\mathcal{D}}
\def\E{\mathcal{E}}
\def\H{\mathcal{H}}
\def\I{\mathcal{I}}
\def\L{\mathcal{L}}
\def\M{\mathcal{M}}
\def\P{\mathcal{P}}
\def\R{\mathbb{R}} 
\def\S{\mathcal{S}}
\def\V{\mathbb{V}}
\def\W{\mathbb{W}}
\def\X{\mathcal{X}}
\def\R{\mathbb{R}}
\def\C{\mathbb{C}}
\def\E{\mathcal{E}}
\def\H{\mathcal{H}}
\def\L{\mathcal{L}}
\def\M{\mathcal{M}}
\def\S{\mathcal{S}}
\def\norm#1{\left\Vert #1 \right\Vert}
\def\vfi{\varphi}
\def\hil{{\mathcal H}}
\def\kil{{\mathcal K}}
\def\A{{\mathcal A}}
\def\B{{\mathcal L}}
\def\C{{\mathcal C}}
\def\I{\mathcal{I}}
\def\M{\mathcal{M}}
\def\P{\mathcal{P}}
\def\S{{\mathcal S}}
\def\X{{\mathcal X}}
\def\half{\frac{1}{2}}
\def\iff{\Longleftrightarrow}
\def\imp{\Longrightarrow}
\def\ep{\varepsilon}
\def\bN{\mathbb{N}}
\def\bR{\mathbb{R}}
\def\bz{\left(}
\def\jz{\right)}
\def\inv{^{-1}}
\def\kiii{}
\def\map{\Phi}
\def\old{}
\def\nw{^{*}}
\def\x{^{(t)}}
\def\xx{(t)}
\def\typ{t}
\def\bog{^{\flat}}
\def\bogg{\flat}
\def\oldd{\{\s\}}
\def\neww{*}
\def\symm{\L_{\mathrm{sym}}}
\def\ssymm{\S_{\mathrm{sym}}}
\def\m{^{(m)}}
\def\rho{\varrho}
\def\hilin{\hil_{\mathrm{in}}}
\def\hilout{\hil_{\mathrm{out}}}
\def\inn{_{\mathrm{in}}}
\def\out{_{\mathrm{out}}}
\def\Sym{\mathfrak{S}} 
\def\SU{\mathrm{SU}} 
\def\End{\L} 
\def\a{\alpha}
\def\norm#1{\Vert #1 \Vert}
\def\ort{^{\perp}}
\def\sa{\mathrm{sa}}
\newcommand{\ki}{\emph}
\newcommand{\s}{\mbox{ }}
\newcommand{\ds}{\mbox{ }\mbox{ }}
\newcommand{\inner}[2]{\langle #1 , #2\rangle}
\newcommand{\diad}[2]{|#1\rangle\langle #2|}
\newcommand{\pr}[1]{\diad{#1}{#1}}
\newcommand{\ext}[1]{\mathbb{#1}}
\newcommand{\sq}[1]{\underline{#1}}
\DeclareMathOperator{\id}{id}
\DeclareMathOperator{\supp}{supp}
\DeclareMathOperator{\ran}{ran}
\DeclareMathOperator{\Tr}{Tr}
\DeclareMathOperator{\logn}{\widehat\log}
\begin{document}

\title{Strong converse exponent for classical-quantum channel coding}

\author{Mil\'an Mosonyi}
\email{milan.mosonyi@gmail.com}

%

\affiliation{
Mathematical Institute, Budapest University of Technology and Economics, \\
Egry J\'ozsef u~1., Budapest, 1111 Hungary.
}

\author{Tomohiro Ogawa}
\email{ogawa@is.uec.ac.jp}
\affiliation{
Graduate School of Information Systems,
University of Electro-Communications,
1-5-1 Chofugaoka, Chofu-shi, Tokyo, 182-8585, Japan.
}

\begin{abstract}
We determine the exact strong converse exponent of classical-quantum channel coding, for every rate above the Holevo capacity.
Our form of the exponent is an
exact analogue of Arimoto's, given as a transform of the R\'enyi capacities with parameters $\alpha>1$. It is important to note that, unlike in the classical case, there are many inequivalent ways to define the R\'enyi divergence of states, and hence the
R\'enyi capacities of channels. Our exponent is in terms of the R\'enyi capacities corresponding to a
version of the R\'enyi divergences that has been introduced recently in
[M\"uller-Lennert, Dupuis, Szehr, Fehr and Tomamichel, \kiii{J.~Math.~Phys.} \textbf{54}, 122203, (2013)], and [Wilde, Winter, Yang, \kiii{Commun.~Math.~Phys.}, \textbf{331}, (2014)]. Our result adds to the growing body of evidence that this new version is the natural definition for the purposes of strong converse problems.
\end{abstract}

\maketitle

\section{Introduction}

Reliable transmission of information through a noisy channel is one of the central problems
in both classical and quantum information theory.
In quantum information theory, a memoryless classical-quantum channel 
is a map that assigns to every input signal from an input alphabet a state of
a quantum system, and repeated use of the channel maps every sequence of input signals into
the tensor product of the output states corresponding to the elements of the sequence. This is a direct analogue of
a memoryless classical channel, where the outputs are probability distributions on some output alphabet; in fact,
classical channels can be seen as a special subclass of classical-quantum channels where all possible output states commute with
each other.

To transmit information through $n$ uses of the channel, the sender and the receiver
have to agree on a code, i.e., an assignment of a sequence of input signals and a
measurement operator on the output system to each possible message, such that the measurement operators form a
valid quantum measurement, normally described by a POVM (positive operator valued measure).
The maximum rate (the logarithm of the number of messages divided by the number of channel uses)
that can be attained by such coding schemes in the asymptotics of large $n$, with an asymptotically vanishing
probability of erroneous decoding, is the capacity of the channel.
The classical-quantum channel coding theorem, due to Holevo \cite{H} and Schumacher and Westmoreland \cite{SW},
identifies this operational notion of capacity with an entropic quantity, called the Holevo capacity, that is
the maximum mutual information in a classical-quantum state between the
input and the output of the channel that can be obtained from some probability distribution on the input through the action of
the channel. This is one of the cornerstones of quantum information theory,
and is a direct analogue of Shannon's classic channel coding theorem, which in turn can be considered as the starting point
of modern information theory.

Clearly, there is a trade-off between the coding rate and the error probability, and the Holevo-Schumacher-Westmoreland (HSW)
theorem identifies a special point on this trade-off curve, marked by the
Holevo capacity of the channel.
The direct part of the theorem \cite{H,SW} states that for any rate below the Holevo capacity, a sequence of codes with
asymptotically vanishing error probability exists, while the converse part (also known as Holevo bound \cite{H73,H79})
says that for any rate above the Holevo capacity, the error probability cannot go to zero.
In fact, more is true: for any rate above the capacity, the error probability inevitably goes to one asymptotically. This is
known as the strong converse theorem, and it is due to Wolfowitz \cite{Wolfowitz57,Wolfowitz78} in the case of classical channels.
The strong converse theorem for classical-quantum channels has been shown indepently by Winter \cite{W} and Ogawa and Nagaoka
\cite{ON99}. Winter's proof follows Wolfowitz's approach, based on the method of types, while the proof of Ogawa and Nagaoka
follows Arimoto's proof for classical channels \cite{A73}.
A much simplifed approach has been found later by Nagaoka \cite{N}, based on the monotonicity of R\'enyi divergences.

Thus, if one plots the optimal asymptotic error against the coding rate, one sees a sharp jump from zero below the Holevo
capacity to one above the Holevo capacity. However, to understand the trade-off between the error and the coding rate, one has to
plot also the error on the logarithmic scale. Indeed, it is known that in the direct domain, i.e., for any rate below the Holevo
capacity, the optimal error probability vanishes with an exponential speed (see, e.g., \cite{Hayashicq} for the
classical-quantum case), and in the converse domain, i.e., for rates above the capacity, the convergence of the
optimal success probability
to zero is also exponential \cite{W,ON99,N}. The value of these exponents as a function of the coding rate gives a
quantification of the trade-off between the error rate and the coding rate. In the direct domain, it is called the error
exponent, and its
value is only known for classical channels and large enough rates.
In the converse domain, it is called the strong converse exponent, and a lower bound on its value has been given in
Arimoto's work \cite{A73}. This was later complemented by
Dueck and K\"orner \cite{DK}, who obtained an upper bound on
the strong converse exponent in the form of a variational expression in terms of the relative entropy.
Despite their very different forms, the bounds of Arimoto and of Dueck and K\"orner turn out to coincide,
and hence together they give the exact strong converse exponent for classical channels.

In this paper we determine the exact strong converse exponent for classical-quantum channels. Our form of the exponent is an
exact analogue of Arimoto's, given as a transform of the R\'enyi capacities with parameters $\alpha>1$.
If $sc(R,W)$ denotes the strong converse exponent of the classical-quantum channel at coding rate $R$, and $\chi_{\alpha}\nw(W)$ is the R\'enyi $\alpha$-capacity of the channel, then our result tells that
\begin{align}\label{main result}
sc(R,W)=\sup_{\alpha>1}\frac{\alpha-1}{\alpha}\left\{R-\chi_{\alpha}\nw(W)\right\}.
\end{align}

It is important to note that, unlike in the classical case, there are many inequivalent ways to define the R\'enyi divergences of
states, and hence the R\'enyi capacities of channels. Based on results in hypothesis testing \cite{MO,CMW,HT14}, it seems that
there are two different families of R\'enyi divergences that appear naturally in the quantification of trade-off relations for
quantum information theoretic problems: one for the direct domains, and another one,
introduced recently in \cite{Renyi_new} and \cite{WWY}, for the converse domains.
We denote these families by $D_{\alpha}\old$ and $D_{\alpha}\nw$, respectively, and give their definitions in
Section \ref{sec:Renyi divergences}. Our expression for the strong converse exponent is in terms of the $D_{\alpha}\nw$
divergences, and hence it gives an operational interpretation to the capacities derived from these divergences.
This shows that the $D_{\alpha}\nw$ divergences are the natural trade-off quantifiers in the converse domain not only
for hypothesis testing but also for classical-quantum channel coding. This in turn provides further evidence to the expectation that the picture should be the same for other - most probably all - coding problems where a direct and a converse domain can be defined.

In classical information theory, the direct and the strong converse exponents are typically expressed in two very different-looking forms: the Gallager- or Arimoto-type exponents, which are in terms of R\'enyi divergences, or by an optimization formula in terms of the Kullback-Leibler divergence (relative entropy), like the Dueck-K\"orner exponent. 
Despite their very different forms,
the bounds of Arimoto and of Dueck and K\"orner turn out to coincide,
which is stated in the paper of Dueck and K\"orner without proof \cite{DK}; see also \cite{Oo15} for an argument.
For an early discussion about the conversion between the two, see e.g., the work of Blahut \cite{Blahut}. Interestingly, the same optimization formulas with the quantum relative entropy result in suboptimal exponents, as has been observed in \cite{ON}. In fact, they give rise to
the same formulas as the classical R\'enyi divergence expressions, but with a family of quantum R\'enyi divergences (denoted by $D_{\alpha}\bog$ in this paper) that is different from both $D_{\alpha}\old$ and $D_{\alpha}\nw$.
One of the main contributions of our paper is the observation that these suboptimal exponents may be converted into the correct
ones by the method of pinching \cite{HP,H:pinching}. Thus, even though the $D_{\alpha}\bog$ divergences are not expected to have a direct operational interpretation like the $D_{\alpha}\old$ and the $D_{\alpha}\nw$ divergences, they turn out to be an important intermediate quantity when extending classical results into the quantum domain.
We remark that these quantities already appeared in the context of matrix analysis in \cite{HP-GT}, and it has been shown in
\cite{AD} that they arise as a limiting case of a two-parameter family of quantum R\'enyi divergences.
However, their properties and their application in quantum information theory have been largely unexplored so far.
For this reason, and because we need many of their mathematical properties to obtain our main result, we give a detailed
exposition of them in Sections \ref{sec:Renyi divergences} and \ref{sec:Renyi capacities}.

The structure of the paper is as follows. In Section \ref{sec:pre}, we summarize the necessary mathematical preliminaries.
Section \ref{sec:Renyi divergences} is devoted to R\'enyi divergences. The new contribution here is the investigation of the properties of the $D_{\alpha}\bog$ divergences. For completeness and for comparison, we state most results for all the three families of R\'enyi divergences mentioned above, but we give most of the proofs only for the $D_{\alpha}\bog$ quantities, as the
properties of the other two families have been investigated in detail elsewhere \cite{Renyi_new,WWY,MO,FL,Beigi}.
In Section \ref{sec:Renyi capacities} we first give an overview of the different notions of mutual information and capacity formulas derived from
R\'enyi divergences, and then, in Section \ref{sec:pinched}, we give one of the main technical contributions of the paper, the connection of the mutual informations of an $n$-fold product channel and its pinched version. This will be the key tool in converting the suboptimal exponent obtained in Section \ref{sec:DueckKorner} into the correct exponent.
The main result of the paper is given in Section \ref{sec:sc}.
After an overview of the problem of classical-quantum channel coding in Section \ref{sec:cl-q sc}, we give a lower bound
on the strong converse exponent in Section \ref{sec:sc lb}. This follows by the monotonicity of the $D_{\alpha}^*$ divergences by
a standard argument due to Nagaoka \cite{Nagaoka}. In Sections \ref{sec:DueckKorner}--\ref{sec:sc achievability} we show that
this lower bound is also an upper bound. We first extend the result of Dueck and K\"orner to classical-quantum channels in
Section \ref{sec:DueckKorner}, and obtain an upper bound on the strong converse exponent in the form of a relative entropy optimization expression, which is then turned into an Arimoto-type expression involving the $D_{\alpha}\bog$ capacities of the channel. We obtain the correct expression for the strong converse exponent in Section \ref{sec:sc achievability} by applying block pinching to the Arimoto-type expression of the preceding section, for increasing blocklengths.

Finally, in Section \ref{sec:qc} we apply our results to classical-quantum channels to obtain the exact strong converse exponent
for some classes of quantum channels studied in \cite{KW,WWY}. In particular, we show that the strong converse exponent is given by \eqref{main result} also for entanglement breaking channels and group covariant channels with additive minimum output
$\alpha$-entropy.

\section{Preliminaries}
\label{sec:pre}

\subsection{Notations and basic lemmas}
\label{sec:notation}

For a finite-dimensional Hilbert space $\hil$, we use the notation
$\L(\hil)$ for the set of linear operators on $\hil$, and we denote by
$\L(\hil)_+$ and $\L(\hil)_{++}$ the set of non-zero positive semidefinite and
positive definite linear operators on $\hil$, respectively.
The set of density operators on $\hil$ is denoted by
$\S(\H)$, i.e.,
\begin{align*}
\S(\H)=\Set{\tau\in\L(\H)_+ |  \Tr\tau=1 },
\end{align*}
and $\S(\hil)_{++}$ stands for the set of invertible density operators.
For any $\rho\in\L(\hil)_+$, we use the notation
\begin{align*}
\S_{\rho}(\H)=\Set{\tau\in\S(\H) |  \supp\tau\le\supp\rho }.
\end{align*}
We use the notation $\SU(\hil)$ for the special unitary group on $\hil$.

For a self-adjoint operator $A\in\L(\hil)$, let $\{A\ge 0\}$ denote the spectral projection of $A$ corresponding to all non-negative eigenvalues. The notations $\{A>0\}$, $\{A\le 0\}$ and $\{A<0\}$ are defined similarly. The positive part $A_+$ of
$A$ is then defined as
\begin{align*}
A_+:=A\{A>0\}.
\end{align*}
It is easy to see that for any $D\in\L(\hil)_+$ such that $D\le I$, we have
\begin{align}\label{pp trace}
\Tr A_+\ge \Tr AD.
\end{align}

We will use the convention that powers of a positive semidefinite operator are only taken on its support and defined to be $0$ on the orthocomplement of its support.
That is, if $a_1,\ldots,a_r$ are the eigenvalues of $A\in\L(\hil)_+$, with corresponding eigenprojections $P_1,\ldots,P_r$, then $A^{p}:=\sum_{i:\,a_i> 0}a_i^p P_i$ for any
$p\in\bR$. In particular, $A^0=\sum_{i:\,a_i> 0}P_i$ is the projection onto the support of $A$.

We denote the natural logarithm by $\log$, and use the standard conventions of information theory
\begin{align}\label{log ext1}
\log 0:=-\infty,\ds\ds\ds\log+\infty:=+\infty.
\end{align}
We will also use the following extension of the logarithm function:
\begin{align}\label{log ext2}
\logn:\,[0,+\infty)\mapsto\bR,\ds\ds\ds \logn(x):=\log x,\ds x\in(0,+\infty),\ds\ds\text{and}\ds\ds \logn 0:=0.
\end{align}
We use $\logn$ to define the logarithm of a positive semidefinite operator only on its support, and to be $0$ on its 
orthocomplement, analogously to the definition of powers above; that is,
$\logn A=\sum_{i:\,a_i>0}\log(a_i)P_i$. Note, however, that numbers can be considered as one-dimensional operators, and 
\eqref{log ext1} gives a different extension of the logarithm function, that we will
also use frequently in later sections, e.g., when expressing the R\'enyi divergences of states with orthogonal or disjoint 
supports (see \eqref{Renyi div def}, Remark \ref{rem:infty div}).

Measurements with finitely many outcomes on a quantum system with Hilbert space $\hil$ can be identified with (completely) positive trace-preserving maps $M:\,\B(\hil)\to\B(\kil)$
 with some finite-dimensional Hilbert space $\kil$, such that
$M(\B(\hil))$ is commutative. We denote the set of all such maps by $\M(\hil)$.

For an operator $\sigma\in\B(\hil)$, we denote by $v(\sigma)$ the number of different eigenvalues of $\sigma$.
If $\sigma$ is self-adjoint with spectral projections $P_1,\ldots,P_r$, then the \ki{pinching} by $\sigma$
is the map $\E_{\sigma}:\,\B(\hil)\to\B(\hil)$, defined as
\begin{align}\label{pinching}
\E_{\sigma}:\,X\mapsto \sum_{i=1}^r P_i X P_i,\ds\ds\ds X\in\B(\hil).
\end{align}
The \ki{pinching inequality} \cite{H:pinching,Hayashibook} tells that if $X$ is positive semidefinite then
\begin{align}\label{pinching inequality}
X\le v(\sigma)\E_{\sigma}(X).
\end{align}
\smallskip

For self-adjoint operators $A,B\in\L(\hil)$, $A\le B$ is understood in the sense of positive semidefinite
(PSD) ordering, i.e., it means that $B-A$ is positive semidefinite.
The following lemma is standard.

\begin{lemma}\label{lem:trace-mono}
Let $f: J \rightarrow \R$ be a monotone function, where $J$ is some interval in $\bR$,
and let $\L(\hil)_{\sa,J}$ be the set of self-adjoint operators with their spectra in $J$.
\begin{align}
&\text{If $f$ is monotone then}\ds A\mapsto\Tr f(A)\ds\text{is monotone on }\L(\hil)_{\sa,J},\label{trace-mono}\\
&\text{if $f$ is convex then}\ds A\mapsto\Tr f(A)\ds\text{is convex on }\L(\hil)_{\sa,J}.\label{trace-convex}
\end{align}
\end{lemma}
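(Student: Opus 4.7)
The plan is to reduce both claims to the scalar behaviour of $f$ on the eigenvalues of the operator, using two classical spectral tools: Weyl's monotonicity principle for part \eqref{trace-mono}, and the Peierls--Bogoliubov inequality for part \eqref{trace-convex}.

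For \eqref{trace-mono}, first I would recall Weyl's monotonicity theorem: if $A,B\in\L(\hil)_{\sa}$ and $A\le B$, then the eigenvalues in decreasing order satisfy $\lambda_k^{\downarrow}(A)\le\lambda_k^{\downarrow}(B)$ for every $k$. This follows at once from the Courant--Fischer min--max characterisation
\[
\lambda_k^{\downarrow}(A)=\max_{\dim V=k}\min_{\substack{v\in V\\ \|v\|=1}}\langle v,Av\rangle,
\]
since $A\le B$ means $\langle v,Av\rangle\le\langle v,Bv\rangle$ for every unit vector $v$. Because all eigenvalues lie in $J$, applying the monotone scalar function $f$ preserves the inequality, and summing over $k$ yields
\[
\Tr f(A)=\sum_k f(\lambda_k^{\downarrow}(A))\le\sum_k f(\lambda_k^{\downarrow}(B))=\Tr f(B).
\]

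For \eqref{trace-convex}, I would use Peierls' inequality: for any convex $f$, any self-adjoint $C\in\L(\hil)_{\sa,J}$, and any orthonormal basis $\{e_i\}$ of $\hil$,
\[
\sum_i f(\langle e_i,Ce_i\rangle)\le\Tr f(C),
\]
with equality when $\{e_i\}$ is an eigenbasis of $C$. Given $A,B\in\L(\hil)_{\sa,J}$ and $\lambda\in[0,1]$, set $C:=\lambda A+(1-\lambda)B$ and choose $\{e_i\}$ to be an eigenbasis of $C$. Then, using the equality case for $C$, the scalar convexity of $f$, and finally the Peierls bound applied to $A$ and to $B$ separately,
\begin{align*}
\Tr f(C)&=\sum_i f\bigl(\lambda\langle e_i,Ae_i\rangle+(1-\lambda)\langle e_i,Be_i\rangle\bigr)\\
&\le\lambda\sum_i f(\langle e_i,Ae_i\rangle)+(1-\lambda)\sum_i f(\langle e_i,Be_i\rangle)\\
&\le\lambda\Tr f(A)+(1-\lambda)\Tr f(B),
\end{align*}
which is the desired convexity.

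I do not expect any serious obstacle: the only minor points to verify are that $J$ being an interval ensures all relevant spectra stay in the domain of $f$ (so convex combinations and the eigenvalues of $C$ still lie in $J$), and that Peierls' inequality itself reduces to the scalar Jensen inequality applied to the probability measure $|\langle e_i,\varphi_j\rangle|^2$ connecting the two bases, where $\{\varphi_j\}$ is an eigenbasis of $C$.
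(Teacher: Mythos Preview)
Your argument is correct. The paper itself gives no proof of this lemma; it simply introduces it with ``The following lemma is standard'' and moves on. Your write-up supplies exactly the kind of standard proof the authors had in mind: Weyl's eigenvalue monotonicity via Courant--Fischer for \eqref{trace-mono}, and the Peierls inequality for \eqref{trace-convex}. Both steps are sound, and you correctly flag the one place where the hypothesis that $J$ is an interval is used, namely to guarantee that the diagonal entries $\langle e_i,Ae_i\rangle$, being convex combinations of eigenvalues of $A$, remain in the domain of $f$.
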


We say that a function $f:\,(0,+\infty)\to\bR$ is \ki{operator monotone increasing} if $A,B\in\L(\hil)_+$, $A\ge B$ implies that
$f(A)\ge f(B)$. We say that $f$ is \ki{operator monotone decreasing} if $-f$ is operator monotone increasing. The following lemma is from \cite[Proposition 1.1]{AH}:
\begin{lemma}\label{lemma:AH}
Let $f$ be a nonnegative operator monotone decreasing function on $(0,+\infty)$,
and $\omega$ be a positive linear functional on $\L(\hil)$. Then the functional
\begin{align*}
 A\mapsto \log \omega(f(A))\ds\ds\ds\text{is convex on}\ds\ds\ds \L(\hil)_{++}.
\end{align*}
\end{lemma}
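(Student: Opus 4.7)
The plan is to reduce the statement, via two integral representations, to the log-convexity of $A \mapsto \omega(e^{-tA})$ for each $t \geq 0$, and then invoke the classical convexity of $A \mapsto \log \omega(e^A)$ on the self-adjoint part of $\L(\hil)$.

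First, by L\"owner's theorem, every nonnegative operator monotone decreasing function on $(0,+\infty)$ admits a representation of the form
\begin{equation*}
f(x) = a + \int_{[0,+\infty)} \frac{1}{x+s}\, d\mu(s),
\end{equation*}
with $a \geq 0$ and $\mu$ a positive Borel measure of suitable integrability (where point masses at $0$ and $\infty$ absorb potential boundary terms such as $b/x$ or a constant). Applying $\omega$ after substituting $A$ for $x$ gives $\omega(f(A)) = a\omega(I) + \int \omega((A+sI)^{-1})\, d\mu(s)$ for $A \in \L(\hil)_{++}$. Next, for each $s \geq 0$, the elementary Laplace identity $(A+sI)^{-1} = \int_0^\infty e^{-t(A+sI)}\, dt$ together with Fubini yields $\omega((A+sI)^{-1}) = \int_0^\infty e^{-ts}\omega(e^{-tA})\, dt$. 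Since the class of positive log-convex functions on a convex set is closed under positive-measure integration (a standard consequence of H\"older's inequality), it now suffices to show that $A \mapsto \omega(e^{-tA})$ is log-convex for each fixed $t \geq 0$, i.e.\ that $A \mapsto \log\omega(e^A)$ is convex on $\L(\hil)_{\sa}$.

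The main substantive step is therefore this convexity of $A \mapsto \log\omega(e^A)$. Writing $\omega(X) = \Tr(KX)$ for some $K \in \L(\hil)_+$, the claim reads: $A \mapsto \log \Tr(Ke^A)$ is convex. In the tracial case $K = I$ this is the classical fact from quantum statistical mechanics, proved via Gibbs' variational principle $\log \Tr e^A = \sup_\sigma\{\Tr \sigma A - \Tr \sigma \log \sigma\}$, which exhibits the quantity as a supremum of affine functions of $A$. For general $K$, one either extends this variational expression using Peierls--Bogoliubov-type inequalities, or computes the second directional derivative directly via Duhamel's formula $\frac{d}{du}\big|_{u=0} e^{A+uB} = \int_0^1 e^{sA} B\, e^{(1-s)A}\, ds$ and checks its nonnegativity through a Cauchy--Schwarz argument in the GNS representation associated with $\omega$. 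This non-tracial case is the principal technical obstacle; once it is settled, the rest of the proof is bookkeeping with the two integral representations above and the standard closure of log-convexity under positive-measure integration.
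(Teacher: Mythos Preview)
The paper does not give its own proof of this lemma; it is quoted from \cite[Proposition~1.1]{AH}. Your outline has a genuine gap at the final step: the assertion that $A\mapsto\log\omega(e^{A})$ is convex on $\L(\hil)_{\sa}$ for an arbitrary positive linear functional $\omega$ is \emph{false}. Take $\hil=\bC^2$, $\omega(X)=X_{11}=\langle e_1,Xe_1\rangle$, and
\begin{align*}
H_1=\begin{pmatrix}\tfrac12 & \tfrac32\\[2pt] \tfrac32 & -\tfrac12\end{pmatrix},\qquad
H_2=\begin{pmatrix}-\tfrac12 & \tfrac12\\[2pt] \tfrac12 & \tfrac12\end{pmatrix},\qquad
\tfrac{1}{2}(H_1+H_2)=\begin{pmatrix}0 & 1\\ 1 & 0\end{pmatrix}.
\end{align*}
A direct computation gives $(e^{H_1})_{11}\approx 3.269$ and $(e^{H_2})_{11}\approx 0.718$, while $(e^{(H_1+H_2)/2})_{11}=\cosh 1\approx 1.543>\sqrt{3.269\cdot 0.718}\approx 1.532$, so midpoint log-convexity fails. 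Shifting by a large multiple of the identity transports this to a counterexample for $A\mapsto\omega(e^{-A})$ on $\L(\hil)_{++}$. Both arguments you sketch rely on tracial structure: the Gibbs variational formula produces $\log\Tr e^{A+\log K}$ rather than $\log\Tr(Ke^A)$, and the Duhamel/Cauchy--Schwarz argument uses cyclicity of the trace to collapse the integrals. The deeper reason your reduction breaks is that $x\mapsto e^{-tx}$ is \emph{not} operator monotone decreasing on $(0,\infty)$, so in passing from the resolvent to the exponential you have stepped outside the hypothesis of the lemma, and the conclusion need not---and does not---survive.

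The fix is to stop one step earlier. Your integral representation already reduces the problem to $f_s(x)=(x+s)^{-1}$, and for these the log-convexity of $A\mapsto\omega\bigl((A+sI)^{-1}\bigr)$ on $\L(\hil)_{++}$ follows directly from two operator-mean facts: the operator AM--GM inequality $\tfrac{1}{2}(A+B)\ge A\,\#\,B$, which after inversion gives $\bigl(\tfrac{1}{2}(A+B)\bigr)^{-1}\le (A\,\#\,B)^{-1}=A^{-1}\,\#\,B^{-1}$, together with the scalar bound $\omega(X\,\#\,Y)\le\omega(X)^{1/2}\omega(Y)^{1/2}$, obtained by writing $\omega(\cdot)=\langle v,\cdot\,v\rangle$ and applying Cauchy--Schwarz to $\langle w,Tw\rangle$ with $w=X^{1/2}v$ and $T=(X^{-1/2}YX^{-1/2})^{1/2}$. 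Closure of log-convexity under positive integration then finishes the job; this operator-mean route is the one used in the Ando--Hiai paper.
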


\subsection{Convexity}

We will use the following lemma, and its equivalent version for concavity, without further notice:

\begin{lemma}\label{lemma:inf sup convexity}
Let $X$ be a convex set in a vector space and $Y$ be an arbitrary set, and let $f:\,X\times Y\to\bR$
be such that for every $y\in Y$, $x\mapsto f(x\,,y)$ is convex. Then
\begin{align}\label{sup convexity}
x\mapsto\sup_{y\in Y}f(x,y)\ds\text{is convex}.
\end{align}
If, moreover, $Y$ is also a convex set in a vector space, and $(x,y)\mapsto f(x,y)$ is convex, then
\begin{align}\label{inf convexity}
x\mapsto\inf_{y\in Y}f(x,y)\ds\text{is convex}.
\end{align}
\end{lemma}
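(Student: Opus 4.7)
The plan is to handle the two statements separately, using only the definitions of convexity and of $\sup/\inf$. Both arguments are standard and entirely routine; neither contains a nontrivial step.

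For \eqref{sup convexity}, I would fix $x_1,x_2\in X$ and $\lambda\in[0,1]$, and set $x_\lambda:=\lambda x_1+(1-\lambda)x_2$. For every fixed $y\in Y$, convexity of $f(\cdot,y)$ gives
\begin{align*}
f(x_\lambda,y)\le \lambda f(x_1,y)+(1-\lambda)f(x_2,y)\le \lambda\sup_{y'\in Y}f(x_1,y')+(1-\lambda)\sup_{y'\in Y}f(x_2,y').
\end{align*}
The right-hand side no longer depends on $y$, so taking the supremum over $y\in Y$ on the left yields the desired inequality. This step does not require $Y$ to carry any structure.

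For \eqref{inf convexity}, assuming also that $Y$ is a convex subset of a vector space and that $f$ is jointly convex, I would fix $x_1,x_2\in X$ and $\lambda\in[0,1]$ as before, and take arbitrary $y_1,y_2\in Y$. Joint convexity applied at the pair $(x_\lambda,\lambda y_1+(1-\lambda)y_2)$ gives
\begin{align*}
\inf_{y\in Y}f(x_\lambda,y)\le f\bigl(x_\lambda,\lambda y_1+(1-\lambda)y_2\bigr)\le \lambda f(x_1,y_1)+(1-\lambda)f(x_2,y_2),
\end{align*}
where the first inequality uses that $\lambda y_1+(1-\lambda)y_2\in Y$ by convexity of $Y$. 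Since the left-hand side is independent of $y_1$ and $y_2$, I would then take the infimum over $y_1\in Y$ and $y_2\in Y$ separately on the right, obtaining the claimed convexity of $x\mapsto\inf_{y}f(x,y)$.

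There is no real obstacle; the only minor nuisance is the possibility that some suprema or infima are $\pm\infty$, in which case the inequalities are understood in the usual extended-real sense and remain valid. The same proofs show the parallel concavity statements by reversing all inequalities.
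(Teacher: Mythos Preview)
Your proof is correct and is exactly the standard argument; the paper itself does not really give a proof at all, merely stating that \eqref{sup convexity} is trivial from the definition and referring to \cite[Section 3.2.5]{BV} for \eqref{inf convexity}, which is precisely the computation you spell out.
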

\begin{proof}
The assertion in \eqref{sup convexity} is trivial from the definition of convexity. The proof of
\eqref{inf convexity} is also quite straightforward; see, e.g., \cite[Section 3.2.5]{BV}.
\end{proof}

\begin{definition}
A multi-variable function on the product of convex sets is called jointly convex.
\end{definition}

\begin{lemma}\label{lemma:convex transformation}
Let $f:\,J\to\bR$ be a convex function on some interval $J\subseteq(0,+\infty)$. Then for any affine function 
$\vfi:\,J'\to J$ on an interval $J'\subseteq\bR$, the function $t\mapsto \vfi(t)f\bz\frac{1}{\vfi(t)}\jz$ is convex on any subinterval of $J'$ where it is well-defined and $f$ is continuous on $\ran \vfi$ (this is only important if an endpoint of $J$ is in the range of $\vfi$).
\end{lemma}
\begin{proof}
Since $f$ is convex, it can be written as $f(x)=\sup_{i\in\I}\{a_ix+b_i\}$, where $\I$ is some index set, and 
$a_i,b_i\in\bR$. Thus, 
\begin{align*}
\vfi(t)f\bz\frac{1}{\vfi(t)}\jz=\vfi(t)\sup_i\left\{a_i\frac{1}{\vfi(t)}+b_i\right\}
=
\sup_i\left\{a_i+b_i\vfi(t)\right\}
\end{align*}
which, as the supremum of affine functions, is convex.
\end{proof}

\subsection{Minimax theorems}

Let $X,Y$ be non-empty sets and $f:\,X\times Y\to\bR\cup\{-\infty,+\infty\}$ be a function. Minimax theorems provide sufficient conditions under which
\begin{align}\label{minimax statement}
\inf_{x\in X}\sup_{y\in Y}f(x,y)=
\sup_{y\in Y}\inf_{x\in X}f(x,y).
\end{align}

The following minimax theorem is from \cite[Corollary A.2]{MH}.
\begin{lemma}\label{lemma:minimax2}
Let $X$ be a compact topological space, $Y$ be an ordered set, and let $f:\,X\times Y\to \bR\cup\{-\infty,+\infty\}$ be a function. Assume that
\smallskip

\s(i) $f(.\,,\,y)$ is lower semicontinuous for every $y\in Y$ and
\smallskip

(ii) \begin{minipage}[t]{15cm}
$f(x,.)$ is monotonic increasing for every $x\in X$, or
$f(x,.)$ is monotonic decreasing for every $x\in X$.
\end{minipage}
\smallskip

\noindent Then \eqref{minimax statement} holds,
and the infima in \eqref{minimax statement} can be replaced by minima.
\end{lemma}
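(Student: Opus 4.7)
The plan is to establish only the non-trivial inequality $\inf_{x \in X} \sup_{y \in Y} f(x,y) \le \sup_{y \in Y} \inf_{x \in X} f(x,y)$, as the reverse is automatic. I would first reduce to the monotone increasing case: if $f(x,\cdot)$ is monotonically decreasing for every $x$, then reversing the order on $Y$ produces an $f$ that is monotonically increasing in $y$, while the suprema and infima over $y$ are unaffected by relabeling the order. So assume $f(x,\cdot)$ is monotonically increasing for every $x \in X$.

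Set $\beta := \inf_{x} \sup_{y} f(x,y)$ and $\alpha := \sup_{y} \inf_{x} f(x,y)$, and fix any $c \in \bR$ with $c < \beta$. The key observation is that for each $x \in X$ there is some $y \in Y$ with $f(x,y) > c$, so the sets
\begin{align*}
U_y := \{x \in X : f(x,y) > c\}, \quad y \in Y,
\end{align*}
form an open cover of $X$; each $U_y$ is open by the lower semicontinuity assumption (i). Compactness of $X$ produces a finite subcover $U_{y_1}, \ldots, U_{y_n}$. I would then choose $y^* \in Y$ to be an upper bound of $\{y_1, \ldots, y_n\}$ in the order of $Y$, which exists since $Y$ is ordered (taken here to mean at least directed). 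For any $x \in X$, picking some $i$ with $x \in U_{y_i}$ and applying monotonicity in $y$ gives $f(x, y^*) \ge f(x, y_i) > c$. Hence $\inf_{x} f(x, y^*) \ge c$, so $\alpha \ge c$; letting $c \uparrow \beta$ then yields $\alpha \ge \beta$.

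For the ``infima can be replaced by minima'' assertion, $\inf_x f(x,y)$ is attained because the lsc function $f(\cdot,y)$ achieves its minimum on the compact space $X$, and $\inf_x \sup_y f(x,y)$ is attained because $x \mapsto \sup_y f(x,y)$ is itself lsc as a pointwise supremum of lsc functions, and hence attains its infimum on $X$.

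The main obstacle is not really a calculation but rather the interpretation of ``ordered set'': the argument requires that every finite subset of $Y$ admits an upper bound in $Y$, which is automatic when $Y$ is totally ordered or directed. Apart from this clarification, the proof is an essentially routine open-cover-plus-monotonicity argument, with no deep machinery needed beyond compactness of $X$ and the elementary fact that sublevel complements of lsc functions are open.
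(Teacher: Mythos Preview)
The paper does not give its own proof of this lemma; it simply cites \cite[Corollary A.2]{MH}. Your open-cover-plus-monotonicity argument is correct and is in fact the standard proof of this type of result: the compactness/lsc hypothesis turns the pointwise existence of a good $y$ into a finite list of $y$'s, and monotonicity lets you dominate that finite list by a single $y^*$. The edge cases $\beta=\pm\infty$ are harmless (if $\beta=-\infty$ the inequality is vacuous, and if $\beta=+\infty$ your argument applies for every real $c$).

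Your own caveat about ``ordered set'' is exactly right and is the only point worth flagging. The proof needs that any finite subset of $Y$ has an upper bound, so $Y$ must be directed (or totally ordered). In every application in the paper $Y$ is an interval of $\bR$ (e.g.\ $\{\ep>0\}$ or $\{\alpha>1\}$), so this is not an issue in context, but it is worth stating the hypothesis precisely as ``$Y$ directed'' rather than merely ``ordered''.
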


The following lemma combines a special version of the minimax theorems due to Kneser \cite{Kneser} and Fan \cite{Fan}
(with conditions (i) and (ii)), and Sion's minimax theorem \cite{Sion,Komiya} (with conditions (i') and (ii')). Recall
that a function $f:\,C\to\bR$ on a convex set $C$ is \ki{quasi-convex} if
\begin{align*}
f(tx_1+(1-t)x_2)\le\max\{f(x_1),f(x_2)\},\ds\ds\ds x_1,x_2\in C,\ds t\in(0,1),
\end{align*}
and it is \ki{quasi-concave} if $-f$ is quasi-convex.

\begin{lemma}\label{lemma:KF minimax}
Let $X$ be a compact convex set in a topological vector space $V$ and $Y$ be a convex
subset of a vector space $W$. Let $f:\,X\times Y\to\bR$ be such that
\smallskip

\s(i) $f(x,.)$ is concave on $Y$ for each $x\in X$, and
\smallskip

(ii) $f(.,y)$ is convex and lower semi-continuous  on $X$ for each $y\in Y$.
\smallskip

\noindent or
\smallskip

\s(i') $f(x,.)$ is quasi-concave and upper semi-continuous on $Y$ for each $x\in X$, and
\smallskip

(ii') $f(.,y)$ is quasi-convex and lower semi-continuous  on $X$ for each $y\in Y$.
\smallskip

\noindent Then \eqref{minimax statement} holds,
and the infima in \eqref{minimax statement} can be replaced by minima.
\end{lemma}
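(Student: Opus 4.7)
The plan is to treat this as a bookkeeping result rather than a theorem to be derived from scratch: the statement is a compilation of two well-established minimax theorems, so the task reduces to matching the hypotheses of each clause to those of the cited results and then invoking them, plus one short compactness argument for the ``minima'' refinement.

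For the first set of hypotheses (i) and (ii), I would recognize the configuration as exactly the setup of the Kneser--Fan minimax theorem \cite{Kneser,Fan}: a convex-concave payoff $f$ on a product $X\times Y$ of convex sets, with $X$ compact in a topological vector space and $f(\,\cdot\,,y)$ lower semicontinuous for every $y$. No topology on $Y$ is needed. The identity \eqref{minimax statement} follows by direct citation. For the primed hypotheses (i') and (ii'), I would identify the assumptions with those of Sion's minimax theorem \cite{Sion,Komiya}, in which convexity and concavity are weakened to quasi-convexity and quasi-concavity, but semicontinuity is required on \emph{both} sides (lower semicontinuity in $x$ and upper semicontinuity in $y$). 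Again the minimax equality is obtained by direct invocation.

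The remaining piece, namely replacing the outer infimum by a minimum, is not part of either cited theorem and requires a brief argument. Under (ii) the map $g(x):=\sup_{y\in Y}f(x,y)$ is a pointwise supremum of lower semicontinuous functions on $X$, hence is itself lower semicontinuous; the same holds under (ii'). Since $X$ is compact, $g$ attains its infimum, so the outer infimum in $\inf_{x\in X}\sup_{y\in Y}f(x,y)$ is a minimum. The inner infimum on the right-hand side of \eqref{minimax statement} is then also a minimum, because the minimax equality forces it to be attained (in the Kneser--Fan case one may alternatively check attainment directly from the convexity-compactness structure, but via \eqref{minimax statement} the conclusion is automatic).

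There is no real obstacle; the only care needed is to line up the precise topological and convexity hypotheses with those in the references, in particular to verify that no spurious assumption on $Y$ creeps in under (i)--(ii) and that the upper semicontinuity in $y$ is correctly placed under (i')--(ii'). Since the lemma is used only as a black box in later sections, this level of detail suffices.
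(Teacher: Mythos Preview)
Your proposal is correct and matches the paper's approach: the lemma is stated there without proof, simply as a compilation of the Kneser--Fan and Sion minimax theorems with the appropriate references. Your added remark on attainment of the infima via lower semicontinuity on the compact set $X$ is a welcome clarification that the paper leaves implicit; note that for the inner infimum on the right-hand side the direct argument (each $f(\cdot,y)$ is lsc on compact $X$) is cleaner than appealing to the minimax equality.
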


\subsection{Universal symmetric states}

For every $n\in\bN$, let $\Sym_n$ denote the symmetric group, i.e., the group of permutations of $n$ elements.
For every finite-dimensional Hilbert space $\hil$, $\Sym_n$ has a natural unitary representation on
$\hil^{\otimes n}$, defined by
\begin{align*}
\pi_{\hil}: \ket{\psi_{1}} \otimes \ket{\psi_{2}} \otimes \dots \otimes \ket{\psi_{n}}
\longmapsto \ket{\psi_{\pi^{-1}(1)}} \otimes \dots \otimes \ket{\psi_{\pi^{-1}(n)}}
\qquad \ket{\psi_i}\in\H,\,\pi\in\Sym_n.
\end{align*}
Let $\symm(\hil^{\otimes n})$ denote the set of symmetric, or permutation-invariant, operators, i.e.,
\begin{align*}
\symm(\hil^{\otimes n}):=\{A\in\L(\hil^{\otimes n}):\,\pi_{\hil}A=A\pi_{\hil}\ds\forall \pi\in\Sym_n\}
=
\{\pi_{\hil}|\,\pi\in\Sym_n\}',
\end{align*}
where for $\A\subset\L(\kil)$, $\A'$ denotes the commutant of $\A$.
Likewise, we denote by $\ssymm(\hil^{\otimes n})$ the set of symmetric states, i.e.,
$\ssymm(\hil^{\otimes n}):=\symm(\hil^{\otimes n})\bigcap\S(\hil^{\otimes n})$.

\begin{lemma}\label{lemma:universal}
For every finite-dimensional Hilbert space $\hil$ and every $n\in\bN$, there exists a symmetric state
$\sigma_{u,n}\in\ssymm(\hil^{\otimes n})\bigcap \ssymm(\hil^{\otimes n})'$
such that
every symmetric state $\omega\in\ssymm(\hil^{\otimes n})$ is dominated as
\begin{align}\label{univ state}
\omega\le v_{n,d}\,\sigma_{u,n},\ds\ds\ds\ds\ds\ds v_{n,d}\le (n+1)^{\frac{(d+2)(d-1)}{2}},
\end{align}
where $d=\dim\hil$. Moreover, the number of different eigenvalues of $\sigma_{u,n}$ is upper bounded by $v_{n,d}$.
We call every such state $\sigma_{u,n}$ a \emph{universal symmetric state}.
\end{lemma}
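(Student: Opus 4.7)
The plan is to construct $\sigma_{u,n}$ explicitly via the Schur--Weyl decomposition of $\hil^{\otimes n}$, which turns the domination inequality \eqref{univ state} into a trivial block-diagonal estimate and reduces everything to a polynomial count of the number and dimensions of the relevant irreducible representations.

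I would first invoke Schur--Weyl duality to write $\hil^{\otimes n}=\bigoplus_\lambda U_\lambda\otimes V_\lambda$, where the sum runs over Young diagrams $\lambda$ with $n$ boxes and at most $d$ rows, $U_\lambda$ is the corresponding irreducible $\SU(\hil)$-module, and $V_\lambda$ is the irreducible $\Sym_n$-module. Under this decomposition the symmetric group acts as $\bigoplus_\lambda I_{U_\lambda}\otimes\pi_\lambda$, so by Schur's lemma the algebra of symmetric operators is $\symm(\hil^{\otimes n})=\bigoplus_\lambda\L(U_\lambda)\otimes I_{V_\lambda}$, and its center is spanned by the isotypic projections $P_\lambda$. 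Consequently every symmetric state has the canonical block form $\omega=\bigoplus_\lambda p_\lambda\,\rho_\lambda\otimes(I_{V_\lambda}/\dim V_\lambda)$ for some probability distribution $(p_\lambda)_\lambda$ and some $\rho_\lambda\in\S(U_\lambda)$, while every central state is a convex combination of the normalized isotypic projections $P_\lambda/\Tr P_\lambda$.

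With this in hand I would set $\sigma_{u,n}:=\sum_\lambda(\dim U_\lambda/Z_{n,d})(P_\lambda/\Tr P_\lambda)$ with $Z_{n,d}:=\sum_\lambda\dim U_\lambda$, which is by construction a state in the center, hence in $\ssymm(\hil^{\otimes n})\cap\ssymm(\hil^{\otimes n})'$. On the $\lambda$-block it restricts to the scalar $(Z_{n,d}\dim V_\lambda)^{-1}I_{U_\lambda}\otimes I_{V_\lambda}$, while the corresponding block of $\omega$ is dominated by $p_\lambda(\dim V_\lambda)^{-1}I_{U_\lambda}\otimes I_{V_\lambda}$ because $\rho_\lambda\le I_{U_\lambda}$. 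Combining these and using $p_\lambda\le 1$ yields $\omega\le Z_{n,d}\,\sigma_{u,n}$ block by block, so the constant $v_{n,d}:=Z_{n,d}$ works. The distinct eigenvalues of $\sigma_{u,n}$ are found among the block scalars, so their number is at most the number of Young diagrams, which is in turn dominated by $Z_{n,d}$.

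The only genuine calculation left, and the main obstacle, is the polynomial upper bound $Z_{n,d}\le(n+1)^{(d+2)(d-1)/2}$. I would split $Z_{n,d}$ as (number of diagrams)$\,\times\,$(maximal irrep dimension). The first factor is bounded by $(n+1)^{d-1}$ since $\lambda$ is pinned down by $(\lambda_1,\dots,\lambda_{d-1})\in\{0,\dots,n\}^{d-1}$. For the second I would invoke the Weyl dimension formula $\dim U_\lambda=\prod_{i<j}(\lambda_i-\lambda_j+j-i)/(j-i)$, which has $d(d-1)/2$ factors, each bounded above by $\lambda_i-\lambda_j+1\le n+1$, giving $\dim U_\lambda\le(n+1)^{d(d-1)/2}$. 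Multiplying yields the exponent $(d-1)+d(d-1)/2=(d+2)(d-1)/2$, closing the argument.
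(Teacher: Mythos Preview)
Your proof is correct and essentially identical to the paper's construction in Appendix~A: both build $\sigma_{u,n}$ as a central (scalar-on-each-isotypic-block) state via Schur--Weyl duality and bound the domination constant by combining $|Y_{n,d}|\le(n+1)^{d-1}$ with the Weyl dimension bound $\dim U_\lambda\le(n+1)^{d(d-1)/2}$. The only cosmetic difference is that the paper weights the isotypic projections uniformly, obtaining $v_{n,d}=|Y_{n,d}|\cdot\max_\lambda\dim U_\lambda$, whereas you weight them proportionally to $\dim U_\lambda$, obtaining the slightly sharper $v_{n,d}=Z_{n,d}=\sum_\lambda\dim U_\lambda$; both satisfy the same polynomial bound.
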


A construction for a universal symmetric state has been given in \cite{universalcq}, which we briefly review in Appendix \ref{sec:universal} for readers' convenience. See also
\cite[Lemma 1]{HT14} for a different argument for the existence of a universal symmetric state, with
$(n+1)^{d^2-1}$ in place of $(n+1)^{\frac{(d+2)(d-1)}{2}}$ in \eqref{univ state}.

The crucial property for us is that
\begin{align}\label{v limit}
\lim_{n\to+\infty}\frac{1}{n}\log v_{n,d}=0.
\end{align}

\subsection{Classical-quantum channels}
\label{sec:cq channels}

By a \ki{classical-quantum channel} (or \ki{channel}, for short) we mean a map
\begin{align*}
W:\,\X\to\S(\hil),
\end{align*}
where $\X$ is an arbitrary set (called the \ki{input alphabet}),
and $\hil$ is a finite-dimensional Hilbert space. That is, $W$ maps input signals in
$\X$ into quantum states on $\hil$. We denote the set of classical-quantum channels with input space $\X$ and
output Hilbert space $\hil$ by $C(\hil|\X)$.

For every channel $W\in C(\hil|\X)$, we define the lifted channel
\begin{align*}
\ext{W}:\,\X\to\S(\hil_\X\otimes\hil),\ds\ds\ds
\ext{W}(x):=\pr{x}\otimes W(x).
\end{align*}
Here, $\hil_{\X}$ is an auxiliary Hilbert space, and $\{\ket{x}:\,x\in\X\}$ is an orthonormal basis in it.
As a canonical choice, one can use $\hil_{\X}=l^2(\X)$, the $L^2$-space on $\X$ with respect to the counting measure,
and choose $\ket{x}$ to be the characteristic function (indicator function) of the singleton $\{x\}$.
Note that this is well-defined irrespectively of the cardinality of $\X$.

Let $\P_f(\X)$ denote the set of finitely supported probability measures on $\X$.
We identify every $P\in\P_f(\X)$ with the corresponding probability mass function, and hence
write $P(x)$ instead of $P(\{x\})$ for every $x\in\X$.
We can redefine every channel $W$ with input alphabet $\X$ as a channel on the set of Dirac measures
$\{\delta_x:\,x\in\X\}\subset\P_f(\X)$ by defining $W(\delta_x):=W(x)$. $W$ then admits a natural affine extension
to $\P_f(\X)$, given by
\begin{align*}
W(P):=\sum_{x\in\X}P(x)W(x).
\end{align*}
In particular, the extension of the lifted channel $\ext{W}$ outputs classical-quantum states of the form
\begin{align*}
\ext{W}(P)=\sum_{x\in\X}P(x)\pr{x}\otimes W(x).
\end{align*}
Note that the marginals of $\ext{W}(P)$ are
\begin{align}\label{marginals}
\Tr_{\hil}\ext{W}(P)=\sum_{x\in\X}P(x)\pr{x},\ds\ds\ds
\Tr_{\hil_{\X}}\ext{W}(P)=\sum_{x\in\X}P(x)W(x)=W(P).
\end{align}
With a slight abuse of notation, we will also denote $\sum_{x\in\X}P(x)\pr{x}$ by $P$.

The \ki{$n$-fold i.i.d.~extension} of a channel $W:\,\X\to\S(\hil)$ is defined as
$W^{\otimes n}:\,\X^n\to\S(\hil^{\otimes n})$,
\begin{align*}
W^{\otimes n}(\sq{x}):=W(x_1)\otimes\ldots\otimes W(x_n),\ds\ds\ds\ds\ds
\sq{x}=x_1\ldots x_n\in\X^n.
\end{align*}
Given $\X$, we will always choose the auxiliary Hilbert space $\hil_{\X^n}$ to be $\hil_{\X}^{\otimes n}$ and
$\ket{\sq{x}}:=\ket{x_1}\otimes\ldots\otimes\ket{x_n},\,\sq{x}=x_1\ldots x_n\in\X^n$. With this convention, the lifted channel
of $W^{\otimes n}$ is equal to $\ext{W}^{\otimes n}$. Moreover, for every probability distribution $P\in\P_f(\X)$,
\begin{align*}
W^{\otimes n}(P^{\otimes n})=W(P)^{\otimes n}\ds\ds\ds\text{and}\ds\ds\ds
\ext{W}^{\otimes n}(P^{\otimes n})=\ext{W}(P)^{\otimes n},
\end{align*}
where $P^{\otimes n}\in\P(\X^n),\,P^{\otimes n}(\sq{x}):=P(x_1)\cdot\ldots\cdot P(x_n),\,\sq{x}=x_1\ldots x_n\in\X^n$,
denotes the $n$-th i.i.d.~extension of $P$.

\section{Quantum R\'enyi divergences}
\label{sec:Renyi divergences}

\subsection{Definitions and basic properties}
\label{sec: Renyi def}

For classical probability distributions $p,q$ on a finite set $\X$, their R\'enyi divergence with parameter
$\alpha\in[0,+\infty)\setminus\{1\}$ is defined as
\begin{align*}
D_{\alpha}(p\|q):=\frac{1}{\alpha-1}\log Q_{\alpha}(p\|q),\ds\ds\ds
Q_{\alpha}(p\|q):=\sum_{x\in\X}p(x)^{\alpha}q(x)^{1-\alpha},
\end{align*}
when $\supp p\subseteq\supp q$ or $\alpha\in[0,1)$, and it is defined to be $+\infty$ otherwise.
For non-commuting states, various inequivalent generalizations of the R\'enyi divergences have been proposed.
Here we consider
the following quantities, defined for every pair of positive definite operators $\rho,\sigma\in\L(\hil)_{++}$
and every $\alpha\in(0,+\infty)$:
\begin{align}
Q_{\alpha}(\rho\|\sigma)&:=
\Tr \rho^{\alpha}\sigma^{1-\alpha},\label{quasi}\\
Q_{\alpha}\nw(\rho\|\sigma)&:=
\Tr \bz \rho^{\half}\sigma^{\frac{1-\alpha}{\alpha}}\rho^{\half}\jz^{\alpha},\label{sand}\\
Q_{\alpha}\bog(\rho\|\sigma)&:=
\Tr e^{\alpha\log\rho+(1-\alpha)\log\sigma}.\label{exp}
\end{align}

The expression in \eqref{quasi} is a \ki{quantum $f$-divergence}, or \ki{quasi-entropy}, corresponding to the power function
$x^{\alpha}$ \cite{HMPB,P86}. Its concavity \cite{Lieb-convexity} and convexity \cite{Ando} properties are of central importance to quantum information theory \cite{Lieb-Ruskai,NC},
and the corresponding R\'enyi divergences have an operational significance in the direct part of binary quantum state discrimination as quantifiers of the trade-off between
the two types of error probabilities \cite{ANSzV,Hayashicq,Nagaoka}.
The R\'enyi divergences corresponding to \eqref{sand} have been introduced recently in
\cite{Renyi_new} and \cite{WWY}; in the latter paper, they were named ``\ki{sandwiched R\'enyi relative entropy}''.
They have been shown to have an operational significance in the converse part of various discrimination problems as quantifiers of the trade-off between
the type I success and the type II error probability \cite{CMW,HT14,MO,MO-correlated}.
$Q_{\alpha}\bog$ has been studied in information geometry \cite{AN}, and its logarithm appeared in \cite{HP-GT} in connection to the Golden-Thompson inequality. It is the natural quantity appearing in classical divergence-sphere optimization representations
of various information quantities, as pointed out in \cite[Section VI]{ON}, \cite[Section V]{OH}, \cite{Hayashibook} and \cite[Remark 1]{Nagaoka}.
The corresponding R\'enyi divergence was shown to be a limiting case of a two-parameter family of R\'enyi divergences in \cite{AD}.
A closely related quantity appears as a free energy functional in quantum statistical physics \cite{OP}.
Note that for commuting $\rho$ and $\sigma$, all three definitions \eqref{quasi}--\eqref{exp} coincide, and are equal to the classical expression $\sum_{x}\rho(x)^{\alpha}\sigma(x)^{1-\alpha}$, where $\rho(x)$ and $\sigma(x)$ are the diagonal elements
of $\rho$ and $\sigma$, respectively, in a joint eigen-basis.

We extend the above definitions
for general, not necessarily invertible positive semidefinite operators $\rho,\sigma\in\L(\hil)_+$ as
\begin{align}
Q_{\alpha}\x(\rho\|\sigma)&:=\lim_{\ep\searrow 0}Q_{\alpha}\x(\rho+\ep I\|\sigma+\ep I)\label{ext1}\\
&=
\lim_{\ep\searrow 0}Q_{\alpha}\x(\rho+\ep (I-\rho^0)\|\sigma+\ep (I-\sigma^0)).\label{ext2}
\end{align}
Here and henceforth $\xx$ stands for one of the three possible values
$\xx=\oldd,\,\xx=\neww$ or $\xx=\bogg$, where $\oldd$ denotes the empty string, i.e.,
$Q_{\alpha}\x$ with $\xx=\oldd$ is simply $Q_{\alpha}$.

\begin{lemma}\label{lemma:Q def}
For every $\rho,\sigma\in\L(\hil)_+$, and every $\alpha\in(0,+\infty)\setminus\{1\}$, the limits in
\eqref{ext1} and \eqref{ext2} exist and are equal to each other. Moreover,
if $\alpha\in(0,1)$ or $\rho^0\le\sigma^0$,
\begin{align}
Q_{\alpha}(\rho\|\sigma)&=
\Tr \rho^{\alpha}\sigma^{1-\alpha},\label{quasi2}\\
Q_{\alpha}\nw(\rho\|\sigma)&=
\Tr \bz \rho^{\half}\sigma^{\frac{1-\alpha}{\alpha}}\rho^{\half}\jz^{\alpha},\label{sand2}\\
Q_{\alpha}\bog(\rho\|\sigma)&=
\Tr Pe^{\alpha P(\logn\rho)P+(1-\alpha)P(\logn\sigma)P},\label{exp2}
\end{align}
where $P:=\rho^0\wedge\sigma^0$ is the projection onto the intersection of the supports of $\rho$ and $\sigma$,
and for all three values of $\typ$,
\begin{align*}
Q_1\x(\rho\|\sigma)=\Tr\rho,
\end{align*}
and
\begin{align*}
Q_{\alpha}\x(\rho\|\sigma)=+\infty\ds\ds\ds\text{when}\ds\ds\ds
\alpha>1\ds\text{and}\ds \rho^0\nleq\sigma^0.
\end{align*}
In particular, the extension in \eqref{ext1}--\eqref{ext2} is consistent in the sense that for invertible $\rho$ and $\sigma$ we recover the formulas in \eqref{quasi}--\eqref{exp}.
\end{lemma}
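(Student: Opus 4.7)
My plan is to handle each of the three definitions of $Q_\alpha\x$ separately, and within each to consider the subcases $\alpha=1$, $\alpha\in(0,1)$, $\alpha>1$ with $\rho^0\le\sigma^0$, and $\alpha>1$ with $\rho^0\nleq\sigma^0$. Three items can be dispatched immediately. The $\alpha=1$ case is trivial because all three definitions collapse to $Q_1\x(\rho_\ep\|\sigma_\ep)=\Tr\rho_\ep=\Tr\rho+\ep\dim\hil\to\Tr\rho$. Consistency with the invertible case is also immediate: if $\rho,\sigma$ are invertible then $\rho^0=\sigma^0=I$, hence $P=I$, and the projections in \eqref{quasi2}--\eqref{exp2} become trivial, returning \eqref{quasi}--\eqref{exp}. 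The equality of the limits in \eqref{ext1} and \eqref{ext2} will follow once the former is identified with a closed-form expression in each convergent case, since $(\rho+\ep I)-(\rho+\ep(I-\rho^0))=\ep\rho^0$ is an $O(\ep)$ perturbation on the support (and similarly for $\sigma$) and the closed-form expression is continuous in it.

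For $Q_\alpha$, I would expand using the spectral decompositions $\rho=\sum_i r_i R_i$ and $\sigma=\sum_j s_j S_j$:
\[
Q_\alpha(\rho_\ep\|\sigma_\ep)=\sum_{i,j}(r_i+\ep)^\alpha(s_j+\ep)^{1-\alpha}\Tr R_iS_j.
\]
Pairs with $r_i,s_j>0$ converge to $r_i^\alpha s_j^{1-\alpha}\Tr R_iS_j$; pairs with $r_i=0$ vanish via $\ep^\alpha\to0$; pairs with $r_i>0, s_j=0$ produce $r_i^\alpha\ep^{1-\alpha}\Tr R_iS_j$, vanishing for $\alpha\in(0,1)$ and diverging for $\alpha>1$ unless $\Tr R_iS_j=0$ for every such pair, equivalently $\rho^0\le\sigma^0$. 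This gives \eqref{quasi2} in the convergent regime and $+\infty$ in the remaining one. For $Q_\alpha\nw$, set $A_\ep:=\rho_\ep^{1/2}\sigma_\ep^{(1-\alpha)/\alpha}\rho_\ep^{1/2}$. For $\alpha\in(0,1)$ the exponent $(1-\alpha)/\alpha$ is positive and $A_\ep\to A_0:=\rho^{1/2}\sigma^{(1-\alpha)/\alpha}\rho^{1/2}$ by continuous functional calculus. For $\alpha>1$ with $\rho^0\le\sigma^0$, I would split $\sigma_\ep^{(1-\alpha)/\alpha}$ along $\sigma^0\oplus(\sigma^0)^\perp$; the divergent piece $\ep^{(1-\alpha)/\alpha}(I-\sigma^0)$ is killed by $\rho$ since $\rho(I-\sigma^0)=0$, so again $A_\ep\to A_0$. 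In both convergent cases $\Tr A_\ep^\alpha\to\Tr A_0^\alpha$ by continuity of $x\mapsto x^\alpha$ at $0$. For $\alpha>1$ with $\rho^0\nleq\sigma^0$, the bound $A_\ep\ge\ep^{(1-\alpha)/\alpha}\rho_\ep^{1/2}(I-\sigma^0)\rho_\ep^{1/2}$ together with Lemma \ref{lem:trace-mono} yields $\Tr A_\ep^\alpha\ge c\,\ep^{1-\alpha}\to+\infty$, with $c>0$ since $\Tr\rho(I-\sigma^0)>0$.

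For $Q_\alpha\bog$, set $A_\ep:=\alpha\log\rho_\ep+(1-\alpha)\log\sigma_\ep$ and $B_0:=\alpha P(\logn\rho)P+(1-\alpha)P(\logn\sigma)P$ with $P:=\rho^0\wedge\sigma^0$. I aim to show that the spectrum of $A_\ep$ splits, as $\ep\searrow 0$, into a ``top'' family of $\dim P$ eigenvalues converging to those of $B_0|_P$ and a ``bottom'' family of $\dim P^\perp$ eigenvalues tending to $-\infty$; the trace of $e^{A_\ep}$ then converges to $\Tr_P e^{B_0|_P}$, which is the right-hand side of \eqref{exp2}. The bottom family appears because any $\psi\in P^\perp$ has a component in $\ker\rho$ or $\ker\sigma$, contributing a $\log\ep$ term with positive coefficient to $\langle\psi,A_\ep\psi\rangle$: this is clear in the $\alpha\in(0,1)$ subcase since both $\alpha,1-\alpha>0$, and in the $\alpha>1$, $\rho^0\le\sigma^0$ subcase since $P^\perp=\ker\rho\supseteq\ker\sigma$ and $\alpha>0$. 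The divergent case $\alpha>1$, $\rho^0\nleq\sigma^0$ can be reduced to the $Q_\alpha\nw$ analysis via an Araki--Lieb--Thirring-type comparison $Q_\alpha\bog\ge Q_\alpha\nw$ for $\alpha>1$, transferring the divergence.

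The hardest step is the spectral splitting for $Q_\alpha\bog$: since neither $\log\rho_\ep$ nor $\log\sigma_\ep$ commutes with $P$ in general, the off-diagonal blocks of $A_\ep$ in the $P\oplus P^\perp$ decomposition have operator norm of order $|\log\ep|$, and standard eigenvalue-perturbation estimates do not apply out of the box. I would address this either via a quantitative perturbation argument (estimating the $P^\perp$-components of the top eigenvectors as $O(1/|\log\ep|)$, which forces the top eigenvalues to match those of $PA_\ep P|_P$ up to vanishing error), or by sandwiching $\Tr e^{A_\ep}$ between matching upper and lower bounds obtained from the Golden--Thompson inequality and the variational formula $\log\Tr e^{A_\ep}=\sup_\omega\{\Tr\omega A_\ep+S(\omega)\}$ applied to states $\omega$ approximating the Gibbs state of $B_0|_P$ on $P$.
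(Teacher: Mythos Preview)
Your treatment of $Q_\alpha$ and $Q_\alpha^*$ is correct and more explicit than the paper's, which dismisses these cases as following ``similar lines, and are simpler'' than the $\flat$ case.

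For $Q_\alpha^\flat$ there is a genuine gap in the divergent case $\alpha>1$, $\rho^0\nleq\sigma^0$. You propose to transfer the divergence from $Q_\alpha^*$ via an Araki--Lieb--Thirring-type inequality $Q_\alpha^\flat\ge Q_\alpha^*$, but this inequality is \emph{false}: the actual ordering for $\alpha>1$ is $Q_\alpha^\flat\le Q_\alpha^*\le Q_\alpha$ (this is Proposition~\ref{lemma:Q ordering} in the paper; one can also check it on the $2\times 2$ example of Remark~\ref{rem:strict ordering}). Neither of the available comparisons furnishes the lower bound you need. The paper instead argues directly: since $\rho^0\nleq\sigma^0$, some eigenvector $\psi$ of $\rho$ with eigenvalue $\lambda>0$ satisfies $c:=\langle\psi,(I-\sigma^0)\psi\rangle>0$, and then
\[
\Tr e^{A_\ep}\ \ge\ \langle\psi,e^{A_\ep}\psi\rangle\ \ge\ e^{\langle\psi,A_\ep\psi\rangle}
\ =\ (\lambda+\ep)^{\alpha}\,\ep^{(1-\alpha)c}\,\exp\!\bigl((1-\alpha)\langle\psi,\logn(\sigma+\ep\sigma^0)\psi\rangle\bigr),
\]
the second step by convexity of the exponential; the right-hand side diverges since $(1-\alpha)c<0$.

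For the convergent $\flat$ cases your spectral-splitting idea is viable, but your diagnosis of the difficulty is off. The off-diagonal blocks $P^\perp A_\ep P$ are actually \emph{bounded}, not of order $|\log\ep|$: since $P\le\rho^0$ and $P\le\sigma^0$ one has $(I-\rho^0)P=0=(I-\sigma^0)P$, so the $\log\ep$ parts of $\log\rho_\ep$ and $\log\sigma_\ep$ vanish once multiplied by $P$ on one side. With bounded off-diagonals and a spectral gap of order $|\log\ep|$ between $PA_\ep P$ and $P^\perp A_\ep P^\perp$, standard perturbation theory already gives the splitting you want, without the Golden--Thompson or variational workarounds. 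The paper avoids this analysis altogether by citing \cite[Lemma~4.1]{HP-GT} for $\alpha\in(0,1)$, and for $\alpha>1$ with $\rho^0\le\sigma^0$ reduces to the same lemma after noting that $\sigma$ may be taken invertible, so that the exponent reads $\alpha\log(\rho+\ep(I-\rho^0))+(\alpha-1)\log\sigma^{-1}$.
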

\begin{proof}
We only prove the assertions for $Q_{\alpha}\bog$, as the proofs for the other quantities follow similar lines, and are simpler.
For $\alpha\in(0,1)$, \eqref{exp2} has been proved in \cite[Lemma 4.1]{HP-GT}. Next, assume that
$\alpha>1$ and $\rho^0\le\sigma^0$. Then we can assume without loss of generality that $\sigma$ is invertible. Hence,
\begin{align*}
Q_{\alpha}\bog(\rho+\ep (I-\rho^0)\|\sigma+\ep (I-\sigma^0))&=
Q_{\alpha}\bog(\rho+\ep (I-\rho^0)\|\sigma)
=
\Tr\exp\bz \alpha\log(\rho+\ep(I-\rho^0))+(\alpha-1)\log\sigma\inv\jz,
\end{align*}
and applying again \cite[Lemma 4.1]{HP-GT}, we see that the limit as $\ep\searrow 0$ is equal to
\begin{align*}
\Tr P\exp\bz \alpha P(\logn\rho)P+(\alpha-1)P(\logn\sigma\inv)P\jz=
\Tr P\exp\bz \alpha P(\logn\rho)P+(1-\alpha)P(\logn\sigma)P\jz.
\end{align*}
This shows that the limit in \eqref{ext2} exists and is equal to \eqref{exp2}.
Showing that the limit in \eqref{ext1} also exists, and is equal to \eqref{exp2},
follows by a trivial modification.

Hence, we are left to prove the case where $\alpha>1$ and $\rho^0\nleq\sigma^0$.
By the latter assumption, there exists an eigenvector $\psi$ of $\rho$, with eigenvalue $\lambda>0$
such that $c:=\inner{\psi}{(I-\sigma^0)\psi}>0$.
Then we have
\begin{align}
&\Tr\exp\bz\alpha\log(\rho+\ep I)+(1-\alpha)\log(\sigma+\ep I)\jz\nn
&\ds\ge
\inner{\psi}{\exp\bz\alpha\log(\rho+\ep I)+(1-\alpha)\log(\sigma+\ep I)\jz\psi}\nn
&\ds\ge
\exp\bz\alpha\inner{\psi}{\log(\rho+\ep I)\psi}+(1-\alpha)\inner{\psi}{\log(\sigma+\ep I)\psi}\jz\nn
&\ds=
\exp\bz\alpha\log(\lambda+\ep)+(1-\alpha)\inner{\psi}{\logn(\sigma+\ep \sigma^0)\psi}
+(1-\alpha)(\log\ep)\inner{\psi}{(I-\sigma^0)\psi}\jz\nn
&\ds=
\ep^{(1-\alpha)c}(\lambda+\ep)^{\alpha}\exp((1-\alpha)\inner{\psi}{\logn(\sigma+\ep \sigma^0)\psi}),\label{lb4}
\end{align}
where the first inequality is obvious, and the second one is due to the convexity of the exponential function.
The expression in \eqref{lb4} goes to $+\infty$ as $\ep\searrow 0$, and hence
the limit in \eqref{ext2} is equal to $+\infty$, as required. The proof for
the limit in \eqref{ext1} goes the same way.
\end{proof}

\begin{rem}
When $\rho^0\le\sigma^0$, \eqref{exp2} can be written as
\begin{align}
Q_{\alpha}\bog(\rho\|\sigma)&=
\Tr \rho^0 e^{\alpha\logn\rho+(1-\alpha)\rho^0(\logn\sigma)\rho^0}\label{exp4}\\
&=
\Tr e^{\alpha\logn\rho+(1-\alpha)\rho^0(\logn\sigma)\rho^0}-\Tr(I-\rho^0).\label{exp3}
\end{align}
\end{rem}

\bigskip

The \ki{quantum R\'enyi divergences} corresponding to the $Q$ quantities are defined as
\begin{align}\label{Renyi div def}
D_{\alpha}\x(\rho\|\sigma):=\frac{1}{\alpha-1}\log \frac{Q_{\alpha}\x(\rho\|\sigma)}{\Tr\rho}=\frac{1}{\alpha-1}\log Q_{\alpha}\x(\rho\|\sigma)-\frac{1}{\alpha-1}\log\Tr\rho=\frac{\psi_{\alpha}\x(\rho\|\sigma)-\psi_{1}\x(\rho\|\sigma)}{\alpha-1},
\end{align}
for any $\alpha\in(0,+\infty)\setminus\{1\}$, where $\typ$ is any of the three possible values, and we use the notation
\begin{align}
\psi_{\alpha}\x(\rho\|\sigma):=\log Q_{\alpha}\x(\rho\|\sigma),\ds\ds\ds\alpha\in(0,+\infty).
\label{def:psi}
\end{align}
By definition,
\begin{align}\label{psi ep limit}
\psi_{\alpha}\x(\rho\|\sigma)=\lim_{\ep\searrow 0}\psi_{\alpha}\x(\rho+\ep I\|\sigma+\ep I),\ds\ds\ds\alpha\in(0,+\infty).
\end{align}

With these definitions we have
\begin{lemma}
If $\rho^0\le\sigma^0$ then $\psi_{\alpha}\x(\rho\|\sigma)$ is continuous in $\alpha$ on $(0,+\infty)$. If
$\rho^0\nleq\sigma^0$ then $\psi_{\alpha}\x(\rho\|\sigma)$ is continuous in $\alpha$
on $(0,1)$, it has a jump at $1$ as
\begin{align}\label{discont}
\lim_{\alpha\nearrow 1}\psi_{\alpha}\x(\rho\|\sigma)<\log\Tr\rho=\psi_{1}\x(\rho\|\sigma),
\end{align}
and it is $+\infty$ on $(1,+\infty)$.
\end{lemma}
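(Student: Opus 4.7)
The plan is to leverage the explicit closed-form expressions for $Q_{\alpha}\x(\rho\|\sigma)$ from Lemma \ref{lemma:Q def}, which are valid on the region where $\alpha\in(0,1)$ or $\rho^0\le\sigma^0$. On this region each of $Q_{\alpha}$, $Q_{\alpha}\nw$ and $Q_{\alpha}\bog$ is built from continuous operator-valued functional calculi (powers, the extended logarithm and the exponential), so it is manifestly continuous in $\alpha$. Hence $\psi_{\alpha}\x$ will be continuous on all of $(0,+\infty)$ when $\rho^0\le\sigma^0$, and on $(0,1)$ in general. The only remaining point in the support case is continuity at $\alpha=1$, which I would verify by evaluating \eqref{quasi2}--\eqref{exp2} at $\alpha=1$: each expression reduces to $\Tr\rho$, using $\sigma^0\rho=\rho$ in \eqref{quasi2}, $\sigma^0\rho^{\half}=\rho^{\half}$ in \eqref{sand2}, and $P=\rho^0$ in \eqref{exp2}.

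For $\rho^0\nleq\sigma^0$ and $\alpha>1$, Lemma \ref{lemma:Q def} already gives $Q_{\alpha}\x(\rho\|\sigma)=+\infty$; the cases $\typ=\oldd$ and $\typ=\neww$ follow by the same $\ep$-perturbation argument employed there for $\bogg$, since the negative-exponent power of $\sigma+\ep I$ blows up on $\ker\sigma$ while $\rho$ has a nontrivial component supported there.

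The heart of the lemma is the strict jump \eqref{discont} from the left when $\rho^0\nleq\sigma^0$. For each $\typ$ I would pass to the limit $\alpha\nearrow 1$ in the corresponding formula and check strict inequality with $\Tr\rho$. For $\typ=\oldd$ the limit is $\Tr\rho\sigma^0$; for $\typ=\neww$ the outer power tends to $1$ while $\sigma^{(1-\alpha)/\alpha}\to\sigma^0$, giving again $\Tr\rho^{\half}\sigma^0\rho^{\half}=\Tr\rho\sigma^0$. In both cases $\Tr\rho\sigma^0<\Tr\rho$ since $\rho$ has a nonzero projection onto $\ker\sigma$ by the failure of the support condition.

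The main obstacle is the Bogolyubov case $\typ=\bogg$, for which the left limit evaluates via \eqref{exp2} to $\Tr P\,e^{P(\logn\rho)P}$ with $P=\rho^0\wedge\sigma^0<\rho^0$. I plan to estimate this through Jensen's operator inequality $e^{\inner{\phi}{A\phi}}\le\inner{\phi}{e^A\phi}$, applied in an orthonormal eigenbasis $\{\phi_j\}$ of $P(\logn\rho)P$ restricted to $P\hil$, to obtain
\begin{align*}
\Tr P\,e^{P(\logn\rho)P}=\sum_j e^{\inner{\phi_j}{(\logn\rho)\phi_j}}\le\sum_j\inner{\phi_j}{e^{\logn\rho}\phi_j}=\Tr P(\rho+(I-\rho^0))=\Tr P\rho,
\end{align*}
where the last equality uses $P\le\rho^0$ and the identification $e^{\logn\rho}=\rho+(I-\rho^0)$. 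Then $\Tr P\rho<\Tr\rho$ holds because $\rho^0-P$ is a nonzero positive operator whose support is contained in $\supp\rho$, so $\Tr(\rho^0-P)\rho$ is a strictly positive weighted sum of nonzero eigenvalues of $\rho$. This delivers the strict jump for all three types and completes the proof.
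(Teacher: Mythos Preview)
Your argument is correct, and for the $\bogg$ case it follows a genuinely different route from the paper's.

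The paper proves the strict jump for $\typ=\bogg$ by first rescaling $\rho$ to $\tilde\rho=\rho/r_{\min}$ so that $\logn\tilde\rho\ge 0$, then using trace monotonicity (Lemma~\ref{lem:trace-mono}) to add the nonnegative block $P\ort(\logn\tilde\rho)P\ort$ to the exponent, and finally invoking trace convexity together with the fact that the pinching $A\mapsto PAP+P\ort A P\ort$ is a convex combination of unitaries to replace the pinched exponent by $\logn\tilde\rho$ itself. This yields the bound $\Tr P\,e^{P(\logn\rho)P}\le \Tr\rho-r_{\min}\Tr(\rho^0-P)$.

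Your argument is more elementary: working in an eigenbasis of the compression $P(\logn\rho)P$ on $P\hil$, a single application of scalar Jensen (convexity of $\exp$) gives $\Tr P\,e^{P(\logn\rho)P}\le \Tr P\rho$ directly, with no need for the rescaling or the pinching-as-unitary-mixture device. In fact your bound is sharper, since $\Tr P\rho=\Tr\rho-\Tr(\rho^0-P)\rho\le \Tr\rho-r_{\min}\Tr(\rho^0-P)$. What the paper's approach buys is that it stays entirely within the operator-trace-inequality toolkit already set up in Section~\ref{sec:notation}, whereas your approach trades those tools for a direct spectral computation. Both deliver the strict inequality; yours is shorter. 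One minor point: you should note explicitly that the case $P=0$ (where the left limit is $-\infty$) is covered trivially, as the paper does.
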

\begin{proof}
The only non-trivial claim is \eqref{discont} for $\typ=\bogg$, which can be seen the following way.
Let $r_{\min}$ denote the smallest positive eigenvalue of $\rho$, let
$\tilde\rho:=\rho/r_{\min}$, let $P:=\rho^0\wedge\sigma^0$ and $P\ort:=I-P$.
If $P=0$ then $\psi_{\alpha}\bog(\rho\|\sigma)=-\infty$ for all $\alpha\in(0,1)$, from which
\eqref{discont} is immediate. Assume now that $0\ne P\ne\rho^0$. Then
\begin{align*}
\lim_{\alpha\nearrow 1}Q_{\alpha}\bog(\rho\|\sigma)&=
\Tr Pe^{P(\logn\rho)P}
=
\Tr Pe^{P(\logn\tilde\rho)P+(\log r_{\min})P}
=
r_{\min}\Tr Pe^{P(\logn\tilde\rho)P}\\
&=
r_{\min}\left[\Tr e^{P(\logn\tilde\rho)P}-\Tr(I-P)\right]
\le
r_{\min}\left[\Tr e^{P(\logn\tilde\rho)P+P\ort(\logn\tilde\rho)P\ort}-\Tr(I-P)\right]\\
&\le
r_{\min}\left[\Tr e^{\logn\tilde\rho}-\Tr(I-P)\right]
=
r_{\min}\left[\Tr\tilde\rho+\Tr(I-\rho^0)-\Tr(I-P)\right]\\
&=
\Tr\rho-r_{\min}\Tr(\rho^0-P)\\
&<
\Tr\rho,
\end{align*}
where the first inequality is due to \eqref{trace-mono}, and the second inequality is due to
\eqref{trace-convex} and the fact that the pinching by $(P,P\ort)$ can be written as a convex combination of unitaries
\cite[Problem II.5.4]{Bhatia}. Taking the logarithm gives \eqref{discont}.
\end{proof}

\begin{rem}\label{rem:infty div}
Note that 
\begin{align*}
\text{for}\ds \alpha>1,\ds D_{\alpha}\x(\rho\|\sigma) =+\infty\ds\iff\ds Q_{\alpha}\x(\rho\|\sigma) =+\infty\ds\iff\ds
\rho^0\nleq\sigma^0.
\end{align*}
for all three values of $\xx$. On the other hand,
\begin{align*}
\text{for}\ds \alpha\in(0,1),\ds D_{\alpha}\x(\rho\|\sigma) =+\infty\ds\iff\ds Q_{\alpha}\x(\rho\|\sigma) =0\ds\iff\ds
\begin{cases}
\rho^0\sigma^0=0,&\xx=\oldd,\,\xx=\neww,\\
\rho^0\wedge\sigma^0=0,&\xx=\bogg.
\end{cases}
\end{align*}
Here, $\rho^0\sigma^0=0$ is equivalent to the supports of $\rho$ and $\sigma$ being orthogonal to each other, while 
$\rho^0\wedge\sigma^0=0$ is equivalent to the supports being disjoint in the sense that $\supp\rho\cap\supp\sigma=\{0\}$.
In all other cases, $D_{\alpha}\x(\rho\|\sigma)$ is finite.
\end{rem}

\bigskip

The \ki{relative entropy} of a pair of positive semidefinite operators $\rho,\sigma\in\L(\hil)_+$
is defined \cite{Umegaki} as
\begin{align}\label{relentr def}
D(\rho\|\sigma):=\Tr\rho(\logn\rho-\logn\sigma)
\end{align}
when $\rho^0\le\sigma^0$, and $+\infty$ otherwise. It is easy to verify that
\begin{align}\label{relentr limit}
D(\rho\|\sigma)=\lim_{\ep\searrow 0}D(\rho+\ep I\|\sigma+\ep I).
\end{align}
For any $\rho\in\L(\hil)_+$, its von Neumann entropy $H(\rho)$ is defined as
\begin{align*}
H(\rho):=-D(\rho\|I)=-\Tr\rho\log\rho.
\end{align*}
The same way as in \cite[Lemma 2.1]{Streater} (see also \cite[Proposition 3]{Petz94}), we see that
\begin{align}\label{Klein}
D(\rho\|\sigma)-\Tr(\rho-\sigma)\ge\frac{1}{2\max\{\norm{\rho},\norm{\sigma}\}}\Tr(\rho-\sigma)^2\ge 0
\end{align}
for any $\rho,\sigma\in\B(\hil)_+$. As it was pointed out in \cite[Lemma 6]{Tropp}, this implies that for any
$A\in\B(\hil)_+$,
\begin{align}\label{Tropp var}
\Tr A=\max_{\tau\in\B(\hil)_+}\{\Tr\tau-D(\tau\|A)\},
\end{align}
and the maximum is attained uniquely at $\tau=A$.
Strictly speaking, \eqref{Klein} and \eqref{Tropp var} were shown in the above references for invertible operators; they can be obtained in the general case by using
\eqref{relentr limit}.

\begin{lemma}\label{lemma:limit at 1}
For every $\rho,\sigma\in\L(\hil)_+$, and all three possible values of $\typ$,
\begin{align}\label{limit 1}
D_{1}\x(\rho\|\sigma):=\lim_{\alpha\to 1}D_{\alpha}\x(\rho\|\sigma)=D_1(\rho\|\sigma):=\frac{1}{\Tr\rho}D(\rho\|\sigma).
\end{align}
\end{lemma}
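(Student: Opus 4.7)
The plan is to read $D_\alpha\x(\rho\|\sigma)$ as the difference quotient of $\alpha\mapsto\psi_\alpha\x(\rho\|\sigma)$ at $\alpha=1$. Since $Q_1\x(\rho\|\sigma)=\Tr\rho$ was established in the preceding lemma, $\psi_1\x(\rho\|\sigma)=\log\Tr\rho$, and the defining formula becomes $D_\alpha\x(\rho\|\sigma)=(\psi_\alpha\x-\psi_1\x)/(\alpha-1)$; whenever this difference quotient has a limit at $\alpha=1$ that limit equals $(d/d\alpha)\psi_\alpha\x|_{\alpha=1}=(1/\Tr\rho)\cdot(d/d\alpha)Q_\alpha\x(\rho\|\sigma)|_{\alpha=1}$ by the chain rule. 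So the task is reduced to proving that this last derivative equals $D(\rho\|\sigma)$ for each of the three types.

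The case $\rho^0\not\leq\sigma^0$ should be peeled off first: then $D(\rho\|\sigma)=+\infty$ and one only needs $D_\alpha\x(\rho\|\sigma)\to+\infty$ as $\alpha\to 1$. For $\alpha>1$ the preceding lemma forces $Q_\alpha\x(\rho\|\sigma)=+\infty$, hence $D_\alpha\x=+\infty$ identically on $(1,+\infty)$. For $\alpha\nearrow 1$ the jump relation \eqref{discont} makes the numerator $\psi_\alpha\x-\psi_1\x$ approach a strictly negative value while $\alpha-1\to 0^-$, again producing $+\infty$ in the limit.

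For the principal case $\rho^0\leq\sigma^0$, I would restrict to $\supp\sigma$ in order to assume $\sigma$ invertible; passage to this block does not alter $Q_\alpha\x$ (routine for $Q_\alpha$ and $Q_\alpha\nw$ from the support conventions, and for $Q_\alpha\bog$ from the intrinsic formula \eqref{exp2}). On this block each $\alpha\mapsto Q_\alpha\x(\rho\|\sigma)$ is smooth near $\alpha=1$ and can be differentiated term by term. For $Q_\alpha$, the product rule combined with $\rho\sigma^0=\rho$ gives $(d/d\alpha)\Tr\rho^\alpha\sigma^{1-\alpha}|_{\alpha=1}=\Tr[\rho\logn\rho]-\Tr[\rho\log\sigma]=D(\rho\|\sigma)$. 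For $Q_\alpha\bog$, I would use \eqref{exp2} to write $Q_\alpha\bog=\Tr\rho^0 e^{\alpha\logn\rho+(1-\alpha)\rho^0(\logn\sigma)\rho^0}$; since the exponent commutes with $\rho^0$, the Duhamel formula together with cyclicity of the trace collapses to $\Tr[\rho(\logn\rho-\rho^0(\logn\sigma)\rho^0)]=\Tr[\rho\logn\rho]-\Tr[\rho\log\sigma]=D(\rho\|\sigma)$. For $Q_\alpha\nw$, introduce $X(\alpha):=\rho^{1/2}\sigma^{(1-\alpha)/\alpha}\rho^{1/2}$ and $f(\alpha,s):=\Tr X(\alpha)^s$, so that $Q_\alpha\nw(\rho\|\sigma)=f(\alpha,\alpha)$ and $X(1)=\rho$; then $\partial_s f(1,1)=\Tr[\rho\logn\rho]$ and $\partial_\alpha f(1,1)=\Tr X'(1)=-\Tr[\rho\log\sigma]$, and their sum is again $D(\rho\|\sigma)$.

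The main delicacy is the derivative for $Q_\alpha\nw$, where both the base and the exponent of $X(\alpha)^\alpha$ depend on $\alpha$; the splitting into partial derivatives circumvents differentiating the matrix logarithm. A secondary subtlety is that the Duhamel computation for $Q_\alpha\bog$ must be based on the intrinsic support-restricted formula \eqref{exp2}, since the naive combination $\alpha\log\rho+(1-\alpha)\log\sigma$ is ill-defined when $\rho$ is not invertible. Once $\sigma$ has been truncated to its own support and the projection $\rho^0$ is carefully tracked, both issues are handled cleanly.
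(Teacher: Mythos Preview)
Your proposal is correct and follows essentially the same route as the paper. The paper only writes out the $\flat$ case in detail (citing a straightforward computation for $Q_\alpha$ and the literature for $Q_\alpha\nw$), and its argument there is exactly yours: use \eqref{exp4} and differentiate $\psi_\alpha\bog$ at $\alpha=1$, then handle $\rho^0\nleq\sigma^0$ via the jump \eqref{discont} for $\alpha\nearrow 1$ and the identical $+\infty$ for $\alpha>1$. Your additional contribution is making the other two cases self-contained; in particular, the two-variable splitting $f(\alpha,s)=\Tr X(\alpha)^s$ for $Q_\alpha\nw$ is a clean way to avoid differentiating the matrix power directly, and it works because $X(\alpha)$ has constant support $\rho^0$ once $\sigma$ is restricted to be invertible, so $f$ is $C^1$ near $(1,1)$ and the chain rule applies.
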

\begin{proof}
The case $\xx=\oldd$ follows by a straightforward computation, and the case
$\xx=\neww$ have been proved by various methods in \cite{Renyi_new,WWY,M13}. 
Hence, we only have to prove the case $\xx=\bogg$.
Assume first that $\rho^0\le\sigma^0$. Then \eqref{exp4} holds for every $\alpha\in(0,+\infty)$,
and thus
\begin{align*}
\lim_{\alpha\to 1}D_{\alpha}\bog(\rho\|\sigma)&=
\lim_{\alpha\to 1}\frac{\psi_{\alpha}\bog(\rho\|\sigma)-\psi_{1}\bog(\rho\|\sigma)}{\alpha-1}
=
\frac{d}{d\alpha}\Big\vert_{\alpha=1}\psi_{\alpha}\bog(\rho\|\sigma)\nn
&=
\frac{1}{Q_{\alpha}\bog(\rho\|\sigma)}\Tr\rho^0e^{\alpha\logn\rho+(1-\alpha)\rho^0(\logn\sigma)\rho^0}
\left[\logn\rho-\rho^0(\logn\sigma)\rho^0\right]\Bigg\vert_{\alpha=1}\nn
&=
\frac{1}{\Tr\rho}\Tr\rho(\logn\rho-\logn\sigma),
\end{align*}
as required. Now assume that $\rho^0\nleq\sigma^0$. Then $\lim_{\alpha\nearrow 1}\psi_{\alpha}\bog(\rho\|\sigma)<\psi_{1}\bog(\rho\|\sigma)$ by \eqref{discont},
and thus
\begin{align*}
\lim_{\alpha\nearrow 1}D_{\alpha}\bog(\rho\|\sigma)&=
\lim_{\alpha\nearrow 1}\frac{\psi_{\alpha}\bog(\rho\|\sigma)-\psi_{1}\bog(\rho\|\sigma)}{\alpha-1}
=
+\infty=\frac{1}{\Tr\rho}D(\rho\|\sigma),
\end{align*}
while $D_{\alpha}\bog(\rho\|\sigma)=+\infty,\,\alpha>1$, which implies
\begin{align*}
\lim_{\alpha\searrow 1}D_{\alpha}\bog(\rho\|\sigma)=\lim_{\alpha\searrow 1}+\infty=+\infty=\frac{1}{\Tr\rho}D(\rho\|\sigma).
\end{align*}
\end{proof}

By \eqref{psi ep limit} and \eqref{relentr limit} we have, for every $\rho,\sigma\in\L(\hil)_+$ and all three values of $\typ$,
\begin{align}\label{Renyi ep limit}
D_{\alpha}\x(\rho\|\sigma)=\lim_{\ep\searrow 0}D_{\alpha}(\rho+\ep I\|\sigma+\ep I),\ds\ds\ds
\alpha\in(0,+\infty).
\end{align}

\smallskip

The importance of the $\bogg$ quantities stems from the following variational representations in Theorem \ref{thm:e-geodesic}. Let
\begin{align}\label{s alpha}
s(\alpha):=
\begin{cases}
1,& \alpha\ge 1,\\
-1,& \alpha<1.
\end{cases}
\end{align}

\begin{theorem}\label{thm:e-geodesic}
For every $\rho, \sigma\in\L(\H)_+$ such that $P:=\rho^0\wedge\sigma^0\ne 0$, and for every $\alpha\in(0,+\infty)\setminus\{1\}$,
\begin{align}
Q_{\alpha}\bog(\rho\|\sigma)
&=
\max_{\tau\in\B(\hil)_+,\,\tau^0\le\rho^0}\{\Tr\tau-\alpha D(\tau\|\rho)-(1-\alpha)D(\tau\|\sigma)\},\label{Tropp var2}\\
\psi_{\a}\bog(\rho\|\sigma)&=
-\min_{\tau\in\S_{\rho}(\H)}\left\{\alpha D(\tau\|\rho)+(1-\alpha)D(\tau\|\sigma)\right\},
\label{eq:49}\\
D_{\alpha}\bog(\rho\|\sigma)&=
s(\alpha)\max_{\tau\in\S_{\rho}(\hil)}s(\alpha)\left\{D(\tau\|\sigma)-\frac{\a}{\a-1}D(\tau\|\rho)
\right\}.\label{D bog var}
\end{align}
Moreover, \eqref{eq:49}--\eqref{D bog var} are valid even if $\rho^0\wedge\sigma^0= 0$ and $\alpha\in(0,1)$, and
\eqref{Tropp var2}--\eqref{eq:49} hold also for $\alpha=1$.
When $D_{\alpha}\bog(\rho\|\sigma)$ is finite and $\alpha\in(0,+\infty)\setminus\{1\}$, the optima in
\eqref{eq:49}--\eqref{D bog var} are reached at the unique state
\begin{align}\label{Hellinger arc}
\tau_{\a}:=P e^{\a P(\logn\rho)P+(1-\a)P(\logn\sigma)P}/Q_{\alpha}\bog(\rho\|\sigma),
\end{align}
and at $Q_{\alpha}\bog(\rho\|\sigma)\tau_{\alpha}$ in \eqref{Tropp var2}.
\end{theorem}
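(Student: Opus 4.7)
The plan is to derive the three formulas in cascade, \eqref{Tropp var2} $\Rightarrow$ \eqref{eq:49} $\Rightarrow$ \eqref{D bog var}, each from its predecessor. The foundational identity is Tropp's variational formula \eqref{Tropp var}, applied to the operator $A := P e^{\alpha P(\logn\rho)P+(1-\alpha)P(\logn\sigma)P}$ whose trace equals $Q_{\alpha}\bog(\rho\|\sigma)$ by Lemma \ref{lemma:Q def}.

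For \eqref{Tropp var2}: for every $\tau\in\L(\hil)_+$ with $\tau^0\le P$ one has $\tau = P\tau P$ and $\tau^0\le\rho^0,\sigma^0$, so that $D(\tau\|\rho)$ and $D(\tau\|\sigma)$ are finite and
\begin{align*}
D(\tau\|A) = -H(\tau) - \Tr\tau\logn A = -H(\tau) - \alpha\Tr\tau\logn\rho - (1-\alpha)\Tr\tau\logn\sigma = \alpha D(\tau\|\rho) + (1-\alpha)D(\tau\|\sigma).
\end{align*}
Tropp's formula $\Tr A = \max_{\tau\in\L(\hil)_+}\{\Tr\tau - D(\tau\|A)\}$ has unique maximizer $\tau=A$, and since $A^0=P\le\rho^0$ this maximizer already satisfies the constraint $\tau^0\le\rho^0$ in \eqref{Tropp var2}. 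For $\tau$ with $\tau^0\le\rho^0$ but $\tau^0\not\le\sigma^0$, the objective equals $-\infty$ when $\alpha\in(0,1)$ (irrelevant for the maximum) and $+\infty$ when $\alpha>1$ (matching $Q_{\alpha}\bog=+\infty$ per Lemma \ref{lemma:Q def}), while the case $\alpha=1$ reduces directly to \eqref{Tropp var}.

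To pass from \eqref{Tropp var2} to \eqref{eq:49}, parameterize $\tau = t\sigma'$ with $t=\Tr\tau>0$ and $\sigma'\in\S_\rho(\hil)$. Using $D(t\sigma'\|\rho)=t\log t+tD(\sigma'\|\rho)$ and analogously for $\sigma$, the objective becomes $t-t\log t-tK(\sigma')$ with $K(\sigma'):=\alpha D(\sigma'\|\rho)+(1-\alpha)D(\sigma'\|\sigma)$; elementary calculus gives the unique inner optimum $t=e^{-K(\sigma')}$ with value $e^{-K(\sigma')}$. Hence $Q_{\alpha}\bog=\max_{\sigma'\in\S_\rho(\hil)}e^{-K(\sigma')}$, and taking logarithms yields \eqref{eq:49}. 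The degenerate case $P=0$, $\alpha\in(0,1)$ is handled by noting that every $\tau\in\S_\rho$ then has $D(\tau\|\sigma)=+\infty$, so $K(\tau)=+\infty$, matching $\psi_{\alpha}\bog=-\infty$. Equation \eqref{D bog var} follows by dividing $\psi_{\alpha}\bog=-\min_\tau K(\tau)$ by $\alpha-1$, using $-K(\tau)/(\alpha-1)=D(\tau\|\sigma)-\tfrac{\alpha}{\alpha-1}D(\tau\|\rho)$, and letting $s(\alpha)$ absorb the sign-flip that turns $\min$ into $\max$ when $\alpha<1$. The unique maximizer $\tau=A$ in \eqref{Tropp var2} normalizes to $\tau_\alpha=A/\Tr A$ of \eqref{Hellinger arc}, so $\tau_\alpha$ is the unique optimizer throughout.

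The main obstacle is conceptual rather than computational: the whole proof hinges on recognizing that Tropp's formula, applied to the specific exponential $A$ from Lemma \ref{lemma:Q def}, is tailor-made to yield \eqref{Tropp var2}---a Legendre/Fenchel duality observation---after which \eqref{eq:49} and \eqref{D bog var} follow by routine algebra. The remaining work is bookkeeping: tracking support conditions, extended-real arithmetic in the degenerate cases ($P=0$, $\alpha=1$, $\rho^0\not\le\sigma^0$), and the sign of $\alpha-1$ in the final rearrangement.
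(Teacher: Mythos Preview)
Your proof is correct and follows essentially the same route as the paper: both apply Tropp's variational formula \eqref{Tropp var} to the operator $A=P e^{\alpha P(\logn\rho)P+(1-\alpha)P(\logn\sigma)P}$, observe that $D(\tau\|A)=\alpha D(\tau\|\rho)+(1-\alpha)D(\tau\|\sigma)$ for $\tau^0\le P$, handle the degenerate/support cases by inspection, and pass between \eqref{Tropp var2} and \eqref{eq:49} via the factorization $\tau=t\sigma'$ with inner optimum $t=e^{-K(\sigma')}$. The only cosmetic difference is organizational (you go \eqref{Tropp var2}$\Rightarrow$\eqref{eq:49}, whereas the paper first shows the two are equivalent and then verifies \eqref{Tropp var2}).
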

\begin{proof}
First, note that \eqref{eq:49} and \eqref{D bog var} only differ in a constant multiplier, and hence we only prove
\eqref{eq:49}.
For $\alpha=1$ we have
\begin{align*}
\max_{\tau\in\B(\hil)_+,\,\tau^0\le\rho^0}\{\Tr\tau-D(\tau\|\rho)\}=\Tr\rho=Q_{1}\bog(\rho\|\sigma),
\end{align*}
due to \eqref{Tropp var}, and
\begin{align*}
\min_{\tau\in\S_{\rho}(\H)}\{D(\tau\|\rho)\}=
\min_{\tau\in\S_{\rho}(\H)}\{D(\tau\|\rho/\Tr\rho)-\log\Tr\rho\}=
-\log\Tr\rho=-\psi_{1}\bog(\rho\|\sigma).
\end{align*}
Next, consider the case where $\alpha>1$ and $\rho^0\nleq\sigma^0$. Then
the choice $\tau=\tilde\rho:=\rho/\Tr\rho\in\S_{\rho}(\hil)$ yields
\begin{align*}
\Tr\tau-\alpha D(\tilde\rho\|\rho)-(1-\alpha)D(\tilde\rho\|\sigma)
=
1+\alpha\log\Tr\rho-(1-\alpha)\cdot(+\infty)=+\infty=Q_{\alpha}\bog(\rho\|\sigma),\\
\alpha D(\tilde\rho\|\rho)+(1-\alpha)D(\tilde\rho\|\sigma)
=
-\alpha\log\Tr\rho+(1-\alpha)\cdot(+\infty)=
-\infty=-\psi_{\a}\bog(\rho\|\sigma),
\end{align*}
proving \eqref{Tropp var2}--\eqref{eq:49}.
If $\alpha\in(0,1)$ and $\rho^0\wedge\sigma^0=0$ then for any state $\tau$,
$D(\tau\|\rho)$ or $D(\tau\|\sigma)$ is equal to $+\infty$, and thus
\begin{align*}
\min_{\tau\in\S_{\rho}(\hil)}\{(1-\alpha)D(\tau\|\sigma)+\alpha D(\tau\|\rho)\}=+\infty
=-\psi_{\a}\bog(\rho\|\sigma).
\end{align*}

Hence, for the rest we assume that $P=\rho^0\wedge\sigma^0\ne 0$, and
$\alpha\in(0,1)$ or $\rho^0\le\sigma^0$, in which case
we can use \eqref{exp2}.
Note that if $\rho^0\le\sigma^0$ then $P=\rho^0$, and if $\alpha\in(0,1)$ and $\tau^0\nleq\sigma^0$ then
$(1-\alpha)D(\tau\|\sigma)+\alpha D(\tau\|\rho)=+\infty$. Hence, in both cases the optimization can be restricted to
$\S_P(\hil)$, i.e., we have to prove that
\begin{align}
Q_{\alpha}\bog(\rho\|\sigma)
&=
\max_{\tau\in\B(\hil)_+,\,\tau^0\le P}\{\Tr\tau-\alpha D(\tau\|\rho)-(1-\alpha)D(\tau\|\sigma)\},\label{Tropp var3}\\
\psi_{\a}\bog(\rho\|\sigma)&=
-\min_{\tau\in\S_{P}(\H)}\left\{\alpha D(\tau\|\rho)+(1-\alpha)D(\tau\|\sigma)\right\}.
\label{var3}
\end{align}
For every $\tau\in\B(\hil)_+,\,\tau^0\le P$, let $c(\tau):=\alpha D(\tau\|\rho)+(1-\alpha)D(\tau\|\sigma)$, and
$\tilde\tau:=\tau/\Tr\tau$. Then
\begin{align}
\max_{\tau\in\B(\hil)_+,\,\tau^0\le P}\{\Tr\tau-c(\tau)\}
&=
\max_{\tau\in\B(\hil)_+,\,\tau^0\le P}\{\Tr\tau-(\Tr\tau)\log(\Tr\tau)-(\Tr\tau)c(\tilde\tau)\}\label{Tropp eq1}\\
&=
\max_{\tau\in\S_P(\hil)}\max_{t>0}\{t-t\log t-tc(\tau)\}
\ds=
\max_{\tau\in\S_P(\hil)}\exp\bz- c(\tau)\jz\label{Tropp eq2}\\
&=
\exp\bz-\min_{\tau\in\S_P(\H)}c(\tau)\jz,\label{Tropp eq3}
\end{align}
i.e., \eqref{Tropp var3} and \eqref{var3} are equivalent to each other. A straightforward computation shows that
\eqref{exp2} and \eqref{Tropp var} yield \eqref{Tropp var3}, proving \eqref{Tropp var2}--\eqref{D bog var}.
The assertion about the unique optimizers then follows from the uniqueness of the optimizer in \eqref{Tropp var}.
\end{proof}

\begin{rem}
The family of states in \eqref{Hellinger arc} is a quantum generalization of the Hellinger arc.
\end{rem}

Next, we give a refinement of \eqref{var3}, which also yields an alternative proof of \eqref{var3}.

\begin{prop}\label{prop:var}
Let $\rho, \sigma\in\L(\H)_+$ be such that $P:=\rho^0\wedge\sigma^0\ne 0$, and
let $\tau_{\alpha}$ be as in \eqref{Hellinger arc}.
For every $\tau\in\S_P(\hil)$, and
every $\alpha\in(0,+\infty)\setminus\{1\}$,
\begin{align}\label{var4}
\alpha D(\tau\|\rho)+(1-\alpha)D(\tau\|\sigma)=D(\tau\|\tau_{\alpha})-\psi_{\a}\bog(\rho\|\sigma),
\end{align}
or equivalently,
\begin{align}\label{var5}
D_{\alpha}\bog(\rho\|\sigma)=\frac{\alpha}{1-\alpha}D(\tau\|\rho)+D(\tau\|\sigma)-\frac{1}{1-\alpha}D(\tau\|\tau_{\alpha}).
\end{align}
In particular, \eqref{var3} holds, with $\tau_{\alpha}$ being the unique minimizer, and
\begin{align*}
D_{\alpha}\bog(\rho\|\sigma)=\frac{\alpha}{1-\alpha}D(\tau_{\alpha}\|\rho)+D(\tau_{\alpha}\|\sigma).
\end{align*}
\end{prop}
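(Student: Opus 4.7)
The identity \eqref{var4} is essentially a computational statement about the exponential family $\{\tau_\alpha\}$, so my plan is to prove \eqref{var4} by direct manipulation of the definitions, then derive \eqref{var5} as an algebraic rearrangement, and deduce the uniqueness claim from the strict positivity of Umegaki's relative entropy.

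First I would verify that $\tau_\alpha$ is indeed a state with support equal to $P$. Writing $Y:=\alpha P(\logn\rho)P+(1-\alpha)P(\logn\sigma)P$, we have $Y = PYP$, so $e^{Y}=Pe^{Y|_{P\hil}}P + (I-P)$. Hence $Pe^{Y}=Pe^{Y|_{P\hil}}P$ is positive definite on $P\hil$ and vanishes on $P^\perp\hil$, which together with \eqref{exp2} gives $\Tr\tau_\alpha=1$ and $\tau_\alpha^0=P$. Moreover, on $P\hil$ we have $\logn\tau_\alpha = Y - \psi_\alpha^\flat(\rho\|\sigma)P$, while $\logn\tau_\alpha=0$ on $P^\perp\hil$; since $Y$ and $P$ both vanish on $P^\perp\hil$, this can be written uniformly as $\logn\tau_\alpha = Y - \psi_\alpha^\flat(\rho\|\sigma)P$ on the whole of $\hil$.

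Next, the crucial computation: because $\tau^0\le P$, we have $\tau=P\tau P$ and therefore
\begin{align*}
\Tr\tau\,\logn\tau_\alpha
&=\Tr\tau Y-\psi_\alpha^\flat(\rho\|\sigma)\\
&=\alpha\Tr\tau\,\logn\rho+(1-\alpha)\Tr\tau\,\logn\sigma-\psi_\alpha^\flat(\rho\|\sigma),
\end{align*}
using $\Tr\tau P(\logn\rho)P=\Tr\tau\,\logn\rho$ and similarly for $\sigma$. Substituting into $D(\tau\|\tau_\alpha)=\Tr\tau\,\logn\tau-\Tr\tau\,\logn\tau_\alpha$ and grouping terms yields precisely \eqref{var4}. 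Note that both $D(\tau\|\rho)$ and $D(\tau\|\sigma)$ are finite here because $\tau^0\le P\le\rho^0,\sigma^0$.

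The equivalence with \eqref{var5} is then an algebraic rearrangement: dividing \eqref{var4} by $1-\alpha$ and invoking the relation between $D_\alpha^\flat$ and $\psi_\alpha^\flat$ converts one into the other. For the final assertions, observe that the right-hand side of \eqref{var4} equals $\psi_\alpha^\flat(\rho\|\sigma)+D(\tau\|\tau_\alpha)$, so minimizing the left-hand side over $\tau\in\S_P(\hil)$ is equivalent to minimizing $D(\tau\|\tau_\alpha)$, which by \eqref{Klein} is nonnegative and vanishes uniquely at $\tau=\tau_\alpha$. This re-proves \eqref{var3}, identifies $\tau_\alpha$ as the unique minimizer, and substituting $\tau=\tau_\alpha$ into \eqref{var5} yields the last displayed identity.

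I do not anticipate a substantive obstacle; the only real subtlety is bookkeeping around the projection $P$ and the extended logarithm $\logn$, i.e.\ ensuring that every operator one takes a logarithm of either is invertible on $P\hil$ or has its kernel absorbed by a $\tau$ (or $P$) factor under the trace. Once the identity $\logn\tau_\alpha=Y-\psi_\alpha^\flat(\rho\|\sigma)P$ is justified, the rest is a one-line trace manipulation.
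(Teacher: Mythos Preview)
Your proof is correct. The paper takes a slightly more roundabout route: instead of directly substituting the formula for $\logn\tau_\alpha$ into $D(\tau\|\tau_\alpha)$, it first invokes the information-geometric triangular identity
\[
D(r\|t)=D(r\|s)+D(s\|t)+\Tr(r-s)(\logn s-\logn t)
\]
with $s=\tau_\alpha$ and $t\in\{\rho,\sigma\}$, combines the two resulting expressions with weights $\alpha$ and $1-\alpha$ so that the cross terms cancel, and arrives at the Pythagorean-type identity
\[
\alpha D(\tau\|\rho)+(1-\alpha)D(\tau\|\sigma)=D(\tau\|\tau_\alpha)+\alpha D(\tau_\alpha\|\rho)+(1-\alpha)D(\tau_\alpha\|\sigma);
\]
it then computes the constant term separately to be $-\psi_\alpha^\flat(\rho\|\sigma)$. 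Your approach merges these two steps into a single substitution and is arguably more transparent, while the paper's version has the virtue of isolating the generalized Pythagorean relation, which is of independent geometric interest. One minor slip: in your last paragraph the right-hand side of \eqref{var4} is $D(\tau\|\tau_\alpha)-\psi_\alpha^\flat(\rho\|\sigma)$, not $+\psi_\alpha^\flat(\rho\|\sigma)$, though your subsequent reasoning is unaffected since the sign of this constant is irrelevant to the minimization.
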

\begin{proof}
Let $\alpha\in(0,+\infty)\setminus\{1\}$.
The quantum relative entropy admits the following simple identity,
related to the triangular relation
in information geometry \cite[Theorems 3.7, 7.1]{AN}:
for any $r,s\in\S(\H)$ and any $t\in\L(\H)_+$ such that $r^0\le s^0\le t^0$,
\begin{align}
D(r\|t)=D(r\|s)+D(s\|t)+\Tr(r-s)(\logn s -\logn t).
\label{eq:53}
\end{align}
Hence for any $\tau\in\S_P(\H)$, we have
\begin{align*}
D(\tau\|\rho)
&=D(\tau\|\tau_{\a})+D(\tau_{\a}\|\rho)+\Tr(\tau-\tau_{\a})(\logn\tau_{\a}-\logn\rho) \nn
&=D(\tau\|\tau_{\a})+D(\tau_{\a}\|\rho)+(\a-1)\Tr(\tau-\tau_{\a})(\logn\rho-\logn\sigma),
\\
D(\tau\|\sigma)
&=D(\tau\|\tau_{\a})+D(\tau_{\a}\|\sigma)+\Tr(\tau-\tau_{\a})(\logn\tau_{\a}-\logn\sigma) \nn
&=D(\tau\|\tau_{\a})+D(\tau_{\a}\|\sigma)+\a\Tr(\tau-\tau_{\a})(\logn\rho-\logn\sigma).
\end{align*}
Combining these relations, it holds for any $\tau\in\S_P(\H)$ that
\begin{align}
\alpha D(\tau\|\rho)+(1-\alpha)D(\tau\|\sigma)
&=D(\tau\|\tau_{\a})+\alpha D(\tau_{\a}\|\rho)+(1-\alpha)D(\tau_{\a}\|\sigma) \label{eq:55}
\end{align}
By the definition of $\tau_{\alpha}$,
\begin{align}
\logn\tau_{\a}&=
\a P(\logn\rho)P+(1-\a)P(\logn\sigma)P-\psi_{\a}\bog(\rho\|\sigma),
\end{align}
and thus
\begin{align*}
D(\tau_{\a}\|\rho)&=\Tr\tau_{\alpha}(\logn\tau_{\alpha}-\logn\rho)
=
(\a-1)\Tr\tau_{\a}(\logn\rho-\logn\sigma)-\psi_{\a}\bog(\rho\|\sigma), \\
D(\tau_{\a}\|\sigma)&=\Tr\tau_{\alpha}(\logn\tau_{\alpha}-\logn\sigma)
=\a\Tr\tau_{\a}(\logn\rho-\logn\sigma)-\psi_{\a}\bog(\rho\|\sigma).
\end{align*}
Hence,
\begin{align}
\alpha D(\tau_{\a}\|\rho)+(1-\alpha)D(\tau_{\a}\|\sigma)=-\psi_{\a}\bog(\rho\|\sigma).
\label{eq:54}
\end{align}
Combining this with \eqref{eq:55} yields \eqref{var4}. By the strict positivity of the relative entropy,
\eqref{var4} yields \eqref{var3} and that $\tau_{\alpha}$ is the unique minimizer.
\end{proof}

\begin{remark}
The following variational formulas were shown in \cite{Tropp} and \cite{HP-GT}, respectively:
For any self-adjoint operator $H$, and any positive definite operator $A$,
\begin{align}
\Tr e^{H+\log A}&=\max_{\tau\in\B(\hil)_{++}}\{\Tr\tau+\Tr\tau H-D(\tau\|A)\},\label{Tropp var4}\\
\log\Tr e^{H+\log A}&=\max_{\tau\in\S(\hil)_{++}}\{\Tr\tau H-D(\tau\|A)\}.\label{HP var}
\end{align}
With the substitution $H:=\alpha\log\rho,\,A:=\sigma^{1-\alpha}$, we can recover \eqref{Tropp var2}--\eqref{eq:49}
for invertible $\rho$ and $\sigma$. What is new in Theorem \ref{thm:e-geodesic}, apart from extending the variational representations
for non-invertible $\rho$ and $\sigma$, is making the connection between the variational expressions
\eqref{Tropp var2} and \eqref{eq:49} (equivalently, between \eqref{Tropp var4} and \eqref{HP var}) in \eqref{Tropp eq1}--\eqref{Tropp eq3}. This shows that proving either of \eqref{Tropp var2} or \eqref{eq:49} yields immediately the other
variational expression as well. In the proof of Theorem \ref{thm:e-geodesic} we followed Tropp's argument \cite{Tropp} based on
\eqref{Tropp var} to obtain \eqref{Tropp var2}, and from it \eqref{eq:49}. The proof based on Proposition \ref{prop:var} proceeds the other way around: we first prove \eqref{eq:49}, which then yields \eqref{Tropp var2}.
Note that this alternative proof gives a new
proof of \eqref{Tropp var4} and \eqref{HP var} through the choice $\rho:=e^{H/\alpha},\,\sigma:=A^{\frac{1}{1-\alpha}}$.
\end{remark}

\begin{rem}
For classical random variables (corresponding to commuting density operators), the
expression \eqref{var5} seems to have first appeared in \cite{CsM2003}. This yields
the variational expressions \eqref{eq:49} and \eqref{D bog var} for classical random variables;
an alternative proof for these have appeared in \cite{Shayevitz}.
\end{rem}

Most of the relevant properties of $D_{\alpha}\bog$ can be derived from the variational formula in Theorem \ref{thm:e-geodesic}.
The following Lemma has the same importance for $D_{\alpha}\nw$:
\begin{lemma}\label{lemma:attainability}
For any $\rho,\sigma\in\L(\hil)_+$, we have
\begin{align}
D_{\a}\nw(\rho\|\sigma)
&=\lim_{n\to\infty}\frac{1}{n}D_{\a}(\E_{\sigma^{\otimes n}}\rho^{\otimes n}\|\sigma^{\otimes n}),
&\alpha\in(0,+\infty),
\label{att by pinching}\\
D_{\a}\nw(\rho\|\sigma)&=\lim_{n\to\infty}\frac{1}{n}\max_{M_n\in\M(\hil^{\otimes n})}D_{\a}\bz M_n(\rho^{\otimes n})\|M_n(\sigma^{\otimes n})\jz,
&\alpha\in[1/2,+\infty),
\label{Renyi attainability}
\end{align}
where $\E_{\sigma^{\otimes n}}$ is the pinching \eqref{pinching} by $\sigma^{\otimes n}$, and
the maximization in the second line is over finite-outcome measurements on $\H^{\otimes n}$ (see section \ref{sec:notation}.)
\end{lemma}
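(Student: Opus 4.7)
The plan is to use the pinching inequality~\eqref{pinching inequality} as the central technical tool, combined with the additivity of $D_{\alpha}\nw$ under tensor products and the data-processing inequality (DPI). Write $\sigma_n := \sigma^{\otimes n}$, $\rho_n := \rho^{\otimes n}$, and $\tilde\rho_n := \E_{\sigma_n}(\rho_n)$. The key observations are: $\tilde\rho_n$ commutes with $\sigma_n$, so $D_{\alpha}(\tilde\rho_n\|\sigma_n) = D_{\alpha}\nw(\tilde\rho_n\|\sigma_n)$ reduces to the classical R\'enyi divergence of the joint eigenvalue distributions; additivity of $D_{\alpha}\nw$ gives $D_{\alpha}\nw(\rho_n\|\sigma_n) = n\,D_{\alpha}\nw(\rho\|\sigma)$; and $\frac{1}{n}\log v_n \to 0$ where $v_n := v(\sigma_n)$ is polynomial in $n$ by the same count as in \eqref{v limit}.

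For \eqref{att by pinching}, I would apply the pinching inequality $\rho_n \le v_n\tilde\rho_n$, conjugate it by $\sigma_n^{(1-\alpha)/(2\alpha)}$ (which preserves operator inequalities), and invoke the monotonicity of $X \mapsto \Tr X^{\alpha}$ from Lemma~\ref{lem:trace-mono} to obtain
\begin{align*}
Q_{\alpha}\nw(\rho_n\|\sigma_n) \;\le\; v_n^{\alpha}\,Q_{\alpha}\nw(\tilde\rho_n\|\sigma_n) \;=\; v_n^{\alpha}\,Q_{\alpha}(\tilde\rho_n\|\sigma_n),
\end{align*}
the last equality from commutativity. Taking logarithms, dividing by $n(\alpha-1)$, and tracking the sign of $\alpha-1$, this yields in the limit the inequality $\liminf_n \frac{1}{n}D_{\alpha}(\tilde\rho_n\|\sigma_n) \ge D_{\alpha}\nw(\rho\|\sigma)$ for $\alpha>1$, and the $\limsup$ bound $\le D_{\alpha}\nw(\rho\|\sigma)$ for $\alpha<1$. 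The complementary direction is then supplied by the DPI of $D_{\alpha}\nw$ (valid for $\alpha \ge 1/2$) applied to the pinching CPTP map, which gives $D_{\alpha}\nw(\tilde\rho_n\|\sigma_n) \le D_{\alpha}\nw(\rho_n\|\sigma_n) = n\,D_{\alpha}\nw(\rho\|\sigma)$, i.e.\ $\limsup_n \frac{1}{n}D_{\alpha}(\tilde\rho_n\|\sigma_n) \le D_{\alpha}\nw(\rho\|\sigma)$.

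For \eqref{Renyi attainability}, DPI of $D_{\alpha}\nw$ applied to an arbitrary finite-outcome measurement $M_n$ gives the ``$\le$'' direction: since $M_n(\rho_n)$ and $M_n(\sigma_n)$ lie in the commutative output algebra, $D_{\alpha}(M_n(\rho_n)\|M_n(\sigma_n)) = D_{\alpha}\nw(M_n(\rho_n)\|M_n(\sigma_n)) \le D_{\alpha}\nw(\rho_n\|\sigma_n) = n\,D_{\alpha}\nw(\rho\|\sigma)$. Achievability follows by realising the pinching $\E_{\sigma_n}$ as the projective (von Neumann) measurement in an eigenbasis of $\sigma_n$: the resulting measurement $M_n$ satisfies $D_{\alpha}(M_n(\rho_n)\|M_n(\sigma_n)) = D_{\alpha}(\tilde\rho_n\|\sigma_n)$, which by \eqref{att by pinching} tends to $D_{\alpha}\nw(\rho\|\sigma)$.

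The main obstacle I anticipate is closing \eqref{att by pinching} in the range $\alpha \in (0, 1/2)$, where DPI of $D_{\alpha}\nw$ is not available, so the pinching-inequality step alone supplies only the upper bound $\limsup_n \frac{1}{n}D_{\alpha}(\tilde\rho_n\|\sigma_n) \le D_{\alpha}\nw(\rho\|\sigma)$ and the matching lower bound is missing. A separate argument is needed there, for instance an Araki--Lieb--Thirring-type comparison between $Q_{\alpha}$ and $Q_{\alpha}\nw$ on the commuting pair $(\tilde\rho_n,\sigma_n)$ absorbing the polynomial factor $v_n^{\alpha}$, or a monotonicity/continuity argument in $\alpha$. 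The regime $\alpha \ge 1/2$ (which covers all cases relevant to the paper's strong converse application, where ultimately $\alpha > 1$) is handled cleanly by the pinching-plus-DPI strategy, and this in turn furnishes \eqref{Renyi attainability} directly.
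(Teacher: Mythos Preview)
The paper does not give a self-contained proof of this lemma: it simply cites \cite[Corollary~III.8]{MO} for \eqref{att by pinching} when $\alpha>1$, cites \cite[Corollary~3]{HT14} for the extension to $\alpha\in(0,1)$, and then observes that \eqref{Renyi attainability} follows from \eqref{att by pinching} together with the monotonicity of $D_{\alpha}\nw$ under measurements for $\alpha\ge 1/2$. Your argument for $\alpha>1$ (pinching inequality for one direction, DPI for the other) is exactly the standard \cite{MO} argument, and your derivation of \eqref{Renyi attainability} from \eqref{att by pinching} matches the paper's.

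There is, however, a genuine gap in your treatment of $\alpha\in(0,1)$, and it is larger than you state. You claim the pinching-plus-DPI strategy handles all of $\alpha\ge 1/2$, but in fact for every $\alpha\in(0,1)$ the two tools give the \emph{same} direction. From $\rho_n\le v_n\tilde\rho_n$ you correctly obtain $Q_{\alpha}\nw(\rho_n\|\sigma_n)\le v_n^{\alpha}Q_{\alpha}(\tilde\rho_n\|\sigma_n)$; dividing by $\alpha-1<0$ flips the inequality to
\[
\tfrac{1}{n}D_{\alpha}(\tilde\rho_n\|\sigma_n)\le D_{\alpha}\nw(\rho\|\sigma)+\tfrac{\alpha}{(1-\alpha)n}\log v_n,
\]
which is an \emph{upper} bound. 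DPI of $D_{\alpha}\nw$ under the pinching map likewise gives $D_{\alpha}(\tilde\rho_n\|\sigma_n)=D_{\alpha}\nw(\tilde\rho_n\|\sigma_n)\le D_{\alpha}\nw(\rho_n\|\sigma_n)=nD_{\alpha}\nw(\rho\|\sigma)$, again an upper bound. Neither yields the required lower bound $\liminf_n\tfrac{1}{n}D_{\alpha}(\tilde\rho_n\|\sigma_n)\ge D_{\alpha}\nw(\rho\|\sigma)$. So the obstacle you flag for $\alpha\in(0,1/2)$ actually persists throughout $\alpha\in(0,1)$; closing it requires the separate argument of \cite{HT14}, not merely DPI in the extended range. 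Since the paper's applications only use $\alpha>1$, this does not affect the main results, but your assessment of where the difficulty lies is off by the interval $[1/2,1)$.
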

Both \eqref{att by pinching} and \eqref{Renyi attainability} tells that $D_{\alpha}\nw$ can be recovered as the limit of the
R\'enyi divergences of commuting operators. The first identity \eqref{att by pinching} was proved in \cite[Corollary III.8]{MO}
for $\alpha\in(1,+\infty)$, and later extended to $\alpha\in(0,+\infty)$ in \cite[Corollary 3]{HT14}.
The second identity \eqref{Renyi attainability} tells that $D_{\alpha}\nw$ can be recovered as the largest post-measurement
R\'enyi divergence in the asymptotics of many copies, and it follows immediately from \eqref{att by pinching} and the
monotonicity of $D_{\alpha}\nw$ under measurements for $\alpha\in[1/2,+\infty)$ \cite{FL}.

\begin{lemma}\label{lemma: conv mon}
Let $\rho,\sigma\in\L(\hil)_+$, and $\typ$ be any of the three possible values. Then the functions
\begin{align}\label{conv in alpha}
\alpha\mapsto \psi_{\alpha}\x(\rho\|\sigma)\ds\ds\text{and}\ds\ds
\alpha\mapsto Q_{\alpha}\x(\rho\|\sigma)
\ds\ds\text{are convex on }\ds\ds (0,+\infty),
\end{align}
and the function
\begin{align}\label{mon in alpha}
\alpha\mapsto D_{\alpha}\x(\rho\|\sigma)\ds\text{is monotone increasing on }\ds (0,+\infty).
\end{align}
\end{lemma}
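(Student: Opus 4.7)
The plan is to reduce all three claims to the single fact that $\alpha \mapsto \psi_{\alpha}\x(\rho\|\sigma)$ is convex on $(0,+\infty)$. Once this is in hand, convexity of $Q_{\alpha}\x = e^{\psi_\alpha\x}$ is automatic, since the exponential is convex and nondecreasing, and monotonicity of $D_{\alpha}\x$ follows from the identity
\begin{align*}
D_{\alpha}\x(\rho\|\sigma) = \frac{\psi_{\alpha}\x(\rho\|\sigma) - \psi_{1}\x(\rho\|\sigma)}{\alpha - 1},
\end{align*}
which exhibits $D_\alpha\x$ as the slope of the chord of the graph of $\psi\x$ joining the fixed point $\bz 1, \log\Tr\rho\jz$ (the common value of $\psi_1\x$ for all three types, by Lemma \ref{lemma:Q def}) to the varying point $\bz\alpha, \psi_{\alpha}\x\jz$. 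For a convex function, such chord slopes are monotone nondecreasing in the free endpoint, which is precisely the desired monotonicity of $D_\alpha\x$.

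For the flat ($\bogg$) case, I first assume $\rho,\sigma$ are invertible and write $\psi_\alpha\bog(\rho\|\sigma) = \log\Tr e^{H(\alpha)}$ with $H(\alpha) := \log\sigma + \alpha(\log\rho - \log\sigma)$, which is affine in $\alpha$. The claim then reduces to convexity of $A \mapsto \log\Tr e^{-A}$ on $\L(\hil)_{++}$ (after a harmless additive shift so that $cI - H(\alpha)$ is positive definite on the $\alpha$-interval of interest). This convexity is exactly Lemma \ref{lemma:AH} applied with the trace functional and the nonnegative operator monotone decreasing function $f(x) = e^{-x}$. The general, not necessarily invertible, case follows by the limit \eqref{psi ep limit}, since a pointwise limit of convex functions is convex.

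For the sandwiched ($\neww$) case, I invoke Lemma \ref{lemma:attainability} to write
\begin{align*}
\psi_\alpha\nw(\rho\|\sigma) = \lim_{n\to\infty}\frac{1}{n}\psi_\alpha\bz\E_{\sigma^{\otimes n}}(\rho^{\otimes n})\,\|\,\sigma^{\otimes n}\jz,
\end{align*}
where the two operators on the right commute. In the commuting setting, log-convexity of $\alpha \mapsto \sum_x p(x)^{\alpha}q(x)^{1-\alpha}$ is immediate from H\"older's inequality applied to the factorisation $p^{\alpha_t}q^{1-\alpha_t} = (p^{\alpha_0}q^{1-\alpha_0})^{1-t}(p^{\alpha_1}q^{1-\alpha_1})^t$, and convexity passes to pointwise limits. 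For the original quasi-entropy case (no superscript), convexity of $\alpha\mapsto\log\Tr\rho^\alpha\sigma^{1-\alpha}$ is classical and can be proved by the corresponding noncommutative $L^p$-space H\"older inequality, or cited from \cite{MO}.

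The main obstacle is the flat case: the operator $e^{\alpha\log\rho + (1-\alpha)\log\sigma}$ does not factorise, so H\"older's inequality cannot be applied directly, and one is forced to invoke the nontrivial operator inequality packaged in Lemma \ref{lemma:AH}. The sandwiched and original cases are comparatively cheap, being handled either by pinching down to a commuting pair or by classical means.
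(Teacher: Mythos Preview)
Your overall strategy matches the paper's exactly: prove convexity of $\alpha\mapsto\psi_\alpha\x$, deduce convexity of $Q_\alpha\x=e^{\psi_\alpha\x}$ from monotone convexity of $\exp$, and deduce monotonicity of $D_\alpha\x$ from the chord-slope interpretation. The sandwiched case via \eqref{att by pinching} is also exactly the paper's argument, and your treatment of the $\oldd$ case (H\"older or citation) is an acceptable alternative to the paper's direct second-derivative computation.

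The genuine gap is in the $\bogg$ case. You invoke Lemma~\ref{lemma:AH} with $f(x)=e^{-x}$, but $e^{-x}$ is \emph{not} operator monotone decreasing on $(0,+\infty)$: the exponential fails to be operator monotone in either direction (equivalently, $-e^{-z}$ is not a Pick function, since its imaginary part $e^{-x}\sin y$ changes sign on the upper half-plane). So the hypothesis of Lemma~\ref{lemma:AH} is not met, and the lemma cannot be applied. The target statement --- convexity of $A\mapsto\log\Tr e^{A}$ on self-adjoint operators --- is of course true, but it requires a different argument.

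The paper's fix is to use the variational formula \eqref{eq:49} of Theorem~\ref{thm:e-geodesic}: for invertible $\rho,\sigma$,
\[
\psi_{\alpha}\bog(\rho\|\sigma)=\sup_{\tau\in\S(\hil)}\bigl\{-\alpha D(\tau\|\rho)-(1-\alpha)D(\tau\|\sigma)\bigr\},
\]
which exhibits $\psi_\alpha\bog$ as a supremum of affine functions of $\alpha$ and hence convex. This is essentially the Gibbs variational principle $\log\Tr e^{H}=\max_{\tau\in\S(\hil)}\{\Tr\tau H+H(\tau)\}$ specialised to $H=\alpha\log\rho+(1-\alpha)\log\sigma$, and it is the natural replacement for your appeal to Lemma~\ref{lemma:AH}. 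With this correction the rest of your argument goes through unchanged.
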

\begin{proof}
It is enough to prove \eqref{conv in alpha} for invertible $\rho$ and $\sigma$ due to \eqref{psi ep limit} and the fact that the limit of convex functions is convex.
The second derivative of $\alpha\mapsto \psi_{\alpha}(\rho\|\sigma)$ can be seen to be non-negative
by a straightforward computation, proving
\eqref{conv in alpha} for $\xx=\oldd$.
Combining this with \eqref{att by pinching} shows that $\alpha\mapsto \psi_{\alpha}\nw(\rho\|\sigma)$
is the limit of convex functions, and hence is itself convex.
For $\xx=\bogg$, \eqref{eq:49} yields
\begin{align}
\psi_{\alpha}\bog(\rho\|\sigma)=
\sup_{\tau\in\S(\hil)}\left\{-\alpha D(\tau\|\rho)-(1-\alpha)D(\tau\|\sigma)\right\}.
\end{align}
Thus, $\alpha\mapsto \psi_{\alpha}\bog(\rho\|\sigma)$ is the supremum of convex functions in $\alpha$, and hence is itself convex.
Since $Q_{\alpha}\x(\rho\|\sigma)=\exp(\psi_{\alpha}\x(\rho\|\sigma))$, and the exponential function is monotone increasing and convex,
convexity of $\alpha\mapsto Q_{\alpha}\x(\rho\|\sigma)$ follows from the above.
Since
\begin{align*}
D_{\alpha}\x(\rho\|\sigma)=\frac{\psi_{\alpha}\x(\rho\|\sigma)-\psi_{1}\x(\rho\|\sigma)}{\alpha-1},
\end{align*}
\eqref{mon in alpha} follows immediately from the convexity of $\alpha\mapsto \psi_{\alpha}\x(\rho\|\sigma)$.
\end{proof}
\begin{rem}
Monotonicity of $\alpha\mapsto D_{\alpha}\nw(\rho\|\sigma)$ has been shown in \cite[Theorem 7]{Renyi_new} by a different method.
\end{rem}

\begin{rem}
Convexity of $\alpha\mapsto \psi_{\alpha}\x$ easily yields the concavity of the so-called auxiliary function. We comment on it in 
more detail in Appendix \ref{sec:aux}.
\end{rem}

Monotonicity in $\alpha$ ensures that the limits
\begin{align}
D_{0}\x(\rho\|\sigma)&:=\lim_{\alpha\searrow 0}D_{\alpha}\x(\rho\|\sigma)=\inf_{\alpha\in(0,+\infty)}D_{\alpha}\x(\rho\|\sigma),
\label{0 infty limits1}\\
D_{\infty}\x(\rho\|\sigma)&:=\lim_{\alpha\to+\infty}D_{\alpha}\x(\rho\|\sigma)=\sup_{\alpha\in(0,+\infty)}D_{\alpha}\x(\rho\|\sigma)
\label{0 infty limits2}
\end{align}
exist.
For $\alpha=0$, a straightforward computation verifies that
\begin{align*}
D_0(\rho\|\sigma)=\log\Tr\rho-\log\Tr\rho^0\sigma\ds\ds\ds\text{and}\ds\ds\ds
D_0\bog(\rho\|\sigma)=\log\Tr\rho-\log\Tr Pe^{P(\logn\sigma)P},
\end{align*}
where $P=\rho^0\wedge\sigma^0$. For $\xx=\neww$, a procedure to compute $D_0\nw(\rho\|\sigma)$ was given in \cite[Section 5]{AD} for
the case $\rho^0\le\sigma^0$.

For $\alpha=+\infty$, we get
\begin{align}
D_{\infty}(\rho\|\sigma)&=\log\max\left\{\frac{r}{s}:\,\Tr P_r Q_s>0\right\},\label{old infty}\\
D_{\infty}\nw(\rho\|\sigma)&=D_{\max}(\rho\|\sigma):=\log\inf\left\{\lambda:\,\rho\le \lambda\sigma\right\},\label{new infty}\\
D_{\infty}\bog(\rho\|\sigma)&=\log\inf\{\lambda:\,\logn\rho\le \rho^0(\logn (\lambda\sigma))\rho^0\}\label{bog infty}
\end{align}
when $\rho^0\le\sigma^0$, and $D_{\infty}\x(\rho\|\sigma)=+\infty$ otherwise.
In \eqref{old infty}, $P_r$ and $Q_s$ denote the spectral projections of $\rho$ and $\sigma$, corresponding to
the eigenvalues $r$ and $s$, respectively,
and the equality follows by a straightforward computation. In \eqref{new infty}, $D_{\max}$ is the max-relative entropy \cite{RennerPhD,Datta}, and the equality has been shown in
\cite[Theorem 5]{Renyi_new}.
The case $\xx=\bogg$ follows from Theorem \ref{thm:e-geodesic}, as
when $\rho^0\le\sigma^0$,
\begin{align}
D_{\infty}\bog(\rho\|\sigma)&=
\sup_{\alpha>1}\sup_{\tau\in\S_{\rho}(\H)}\left\{D(\tau\|\sigma)-\frac{\a}{\a-1}D(\tau\|\rho)\right\}\nn
&=
\sup_{\tau\in\S_{\rho}(\hil)}\sup_{\alpha>1}\left\{D(\tau\|\sigma)-\frac{\a}{\a-1}D(\tau\|\rho)\right\}\nn
&=
\sup_{\tau\in\S_{\rho}(\hil)}\left\{D(\tau\|\sigma)-D(\tau\|\rho)\right\}\label{infty variational}\\
&=
\sup_{\tau\in\S_{\rho}(\hil)}\left\{\Tr\tau\bz\logn\rho-\logn\sigma\jz\right\}\nn
&=
\inf\{\kappa:\,\logn\rho-\rho^0(\logn\sigma)\rho^0\le\kappa\rho^0\}\nn
&=
\inf\{\kappa:\,\logn\rho\le \rho^0(\logn (e^{\kappa}\sigma))\rho^0\}.\nonumber
\end{align}
Note that \eqref{infty variational} is an extension of \eqref{D bog var} to $\alpha=+\infty$.
\medskip

Lemmas \ref{lemma: conv mon} and \ref{lemma:limit at 1} with the definitions \eqref{0 infty limits1}--\eqref{0 infty limits2} yield the following
\begin{cor}
Let $\rho,\sigma\in\L(\hil)_+$, and $\typ$ be any of the three possible values. Then the function
\begin{align*}
\alpha\mapsto D_{\alpha}\x(\rho\|\sigma)\in[0,+\infty]\ds\text{is continuous on }\ds [0,+\infty].
\end{align*}
\end{cor}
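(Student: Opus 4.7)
The plan is to handle continuity on each of the intervals $[0,1)$, the point $\{1\}$, and $(1,+\infty]$ separately, drawing on three ingredients already available: convexity of $\alpha\mapsto\psi_{\alpha}\x(\rho\|\sigma)$ on $(0,+\infty)$ from Lemma \ref{lemma: conv mon}, the identification of the value at $\alpha=1$ from Lemma \ref{lemma:limit at 1}, and the very definitions in \eqref{0 infty limits1}--\eqref{0 infty limits2} of $D_0\x$ and $D_{\infty}\x$ as one-sided limits (which exist by the monotonicity in Lemma \ref{lemma: conv mon}).

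First I would split into three cases according to the supports of $\rho$ and $\sigma$. When $\rho^0\le\sigma^0$, Lemma \ref{lemma:Q def} together with the formulas \eqref{quasi2}--\eqref{exp2} show that $\psi_{\alpha}\x(\rho\|\sigma)$ is real-valued on $(0,+\infty)$; being convex on this open interval, it is continuous there, so $D_{\alpha}\x$ is continuous on $(0,1)\cup(1,+\infty)$. When $\rho^0\nleq\sigma^0$ but $P:=\rho^0\wedge\sigma^0\ne 0$, the same trace formulas in Lemma \ref{lemma:Q def} remain strictly positive on $(0,1)$, so $\psi_{\alpha}\x$ is finite and convex, hence continuous, on $(0,1)$; and on $[1,+\infty)$ we have $D_{\alpha}\x\equiv+\infty$, because $\rho^0\nleq\sigma^0$ forces both $D(\rho\|\sigma)=+\infty$ and $\psi_{\alpha}\x(\rho\|\sigma)=+\infty$ for $\alpha>1$ by Lemma \ref{lemma:Q def}. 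Finally, when $P=0$, Lemma \ref{lemma:Q def} gives $Q_{\alpha}\x(\rho\|\sigma)=0$ on $(0,1)$ and $Q_{\alpha}\x(\rho\|\sigma)=+\infty$ on $(1,+\infty)$, so $D_{\alpha}\x\equiv+\infty$ on $(0,+\infty)$, trivially continuous. Thus in all three cases continuity on $(0,1)\cup(1,+\infty)$ follows.

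Continuity at $\alpha=1$ is precisely the content of Lemma \ref{lemma:limit at 1}, which asserts that the two one-sided limits both equal $D(\rho\|\sigma)/\Tr\rho=D_1\x(\rho\|\sigma)$. Continuity at $\alpha=0$ and at $\alpha=+\infty$ is automatic, since \eqref{0 infty limits1}--\eqref{0 infty limits2} \emph{define} $D_0\x$ and $D_{\infty}\x$ as the corresponding one-sided limits, whose existence is guaranteed by the monotonicity in Lemma \ref{lemma: conv mon}. The only delicate bookkeeping is to track when $\psi_{\alpha}\x$ takes the values $\pm\infty$ and to verify that the ratio $(\psi_{\alpha}\x-\psi_1\x)/(\alpha-1)$ is then unambiguously $+\infty$; this is straightforward because the sign of the infinity always matches that of $\alpha-1$, but it is the one spot in the argument where one must be careful. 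No single step is hard — the statement is essentially a bookkeeping corollary of Lemmas \ref{lemma: conv mon} and \ref{lemma:limit at 1}.
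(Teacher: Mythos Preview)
Your approach is exactly what the paper has in mind: the corollary is stated there without proof, the preceding sentence simply citing Lemmas \ref{lemma: conv mon} and \ref{lemma:limit at 1} together with the definitions \eqref{0 infty limits1}--\eqref{0 infty limits2}. Your write-up is correct except for one factual slip in Case~3. The claim that $P=\rho^0\wedge\sigma^0=0$ forces $Q_{\alpha}\x(\rho\|\sigma)=0$ on $(0,1)$ is true for $\xx=\bogg$ (where \eqref{exp2} visibly involves $P$) but false for $\xx=\oldd$ and $\xx=\neww$: take $\rho=\pr{0}$ and $\sigma=\pr{+}$ in $\bC^2$, for which $P=0$ yet $\Tr\rho^{\alpha}\sigma^{1-\alpha}=1/2$ for every $\alpha\in(0,1)$. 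The repair is immediate: for $\xx\in\{\oldd,\neww\}$, either the supports of $\rho$ and $\sigma$ are orthogonal, in which case $Q_{\alpha}\x=0$ on $(0,1)$ and $D_{\alpha}\x\equiv+\infty$ on $[0,+\infty]$ as you claimed, or they are not, in which case $\psi_{\alpha}\x$ is finite on $(0,1)$ and your Case~2 argument applies verbatim.
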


\subsection{Convexity and monotonicity}

It is easy to see that when $\rho$ and $\sigma$ commute, all the
quantum R\'enyi divergences $D_{\alpha}\x$ with $\xx=\oldd,\,\xx=\neww$ and $\xx=\bogg$ coincide, and are equal to the classical R\'enyi divergence of the eigenvalues of $\rho$ and
$\sigma$. The properties of the classical R\'enyi divergences (monotonicity under stochastic maps, joint convexity, etc.) are very well understood, and are fairly easy to prove.
Corresponding properties of the various quantum generalizations are typically much harder to verify and need not hold for every value of $\typ$ and $\alpha$.
The cases $\xx=\oldd$ and $\xx=\neww$ are by now quite well understood, too; see, e.g., \cite{HMPB,MH,P86} for $\xx=\oldd$ and the recent papers
\cite{Renyi_new,WWY,Beigi,FL,MO,MO-correlated,HT14} for $\xx=\neww$. Hence, we will focus on the so far less studied $D_{\alpha}\bog$ below, and prove most of the claims only for this version, but state the various properties for all three values of $\typ$ for completeness and for comparison.

We say that $D_{\alpha}\x$ is monotone for a fixed $\alpha$, if for all finite-dimensional Hilbert spaces $\hil,\kil$,
every $\rho,\sigma\in\B(\hil)_+$ and every linear completely positive trace-preserving map
$\map:\,\B(\hil)\to\B(\kil)$, we have
$D_{\alpha}\x(\map(\rho)\|\map(\sigma))\le D_{\alpha}\x(\rho\|\sigma)$.
Monotonicity of $s(\alpha)Q_{\alpha}\x$ is defined
in the same way, and it is clear that
for a fixed pair $(\xx,\alpha)$, $D_{\alpha}\x$ is monotone if and only if $s(\alpha)Q_{\alpha}\x$ is monotone.
Similarly, we say that $s(\alpha)Q_{\alpha}\x$ is
jointly convex, if for every finite-dimensional Hilbert space $\hil$, the map
$(\rho,\sigma)\mapsto s(\alpha)Q_{\alpha}\x(\rho\|\sigma)$ is convex on $\B(\hil)_+\times\B(\hil)_+$, where
$s(\alpha)$ is given by \eqref{s alpha}.
By a standard argument,
for any $\alpha\in(0,+\infty)\setminus\{1\}$ and any value of $\xx$,
monotonicity of $D_{\alpha}\x$ is equivalent to the
joint convexity of $s(\alpha)Q_{\alpha}\x$.
We have the following:

\begin{theorem}\label{thm:monotonicity}
The maximal interval of $\alpha$ for which $s(\alpha)Q_{\alpha}\x$ is jointly convex 
is
\begin{align*}
[0,2]\ds\text{for}\ds \xx=\oldd,\ds\ds\ds\ds
[1/2,+\infty)\ds\text{for}\ds \xx=\neww,\ds\ds\text{and}\ds\ds
[0,1]\ds\text{for}\ds \xx=\bogg.
\end{align*}
These are also the maximal intervals in $\bR_+$ for which $D_{\alpha}\x$ is monotone,
except for $D_{\alpha}\nw$, which is monotone also for $\alpha=+\infty$.
\end{theorem}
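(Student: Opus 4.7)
First, I would record the classical equivalence (due to Uhlmann and Lindblad) between joint convexity of $s(\alpha)Q_{\alpha}\x$ on $\L(\hil)_+\times\L(\hil)_+$ and monotonicity of $D_\alpha\x$ under CPTP maps: monotonicity implies joint convexity by applying a partial trace to the block-diagonal embedding $(\rho_1,\rho_2)\mapsto \lambda\pure{1}\otimes\rho_1+(1-\lambda)\pure{2}\otimes\rho_2$, while joint convexity together with isometric invariance yields monotonicity via the Stinespring representation of a CPTP map. This reduces the theorem to determining the sharp $\alpha$-range of joint convexity in each of the three families. For $\xx=\oldd$ (Lieb's concavity on $[0,1]$, Ando's convexity on $[1,2]$, and the counterexamples beyond $\alpha=2$) and for $\xx=\neww$ (Frank--Lieb, Beigi, M\"uller-Lennert et al.\ on $[1/2,+\infty)$, with the known sharpness at $\alpha=1/2$), the claims are available from the cited literature; in line with the stated policy of the paper I would focus on the so far less-studied case $\xx=\bogg$.

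For the forward direction $\alpha\in[0,1]$ with $\xx=\bogg$, joint concavity of $Q_\alpha\bog$ in $(\rho,\sigma)$ is immediate from the Tropp-type variational formula~\eqref{Tropp var2} in Theorem~\ref{thm:e-geodesic}:
\[
Q_\alpha\bog(\rho\|\sigma)=\max_{\tau\in\L(\hil)_+,\,\tau^0\le\rho^0}\{\Tr\tau-\alpha D(\tau\|\rho)-(1-\alpha)D(\tau\|\sigma)\}.
\]
Because $\alpha,1-\alpha\ge 0$ and the Umegaki relative entropy is jointly convex in its pair of arguments, the integrand is jointly concave in $(\tau,\rho,\sigma)$. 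Partial maximization over $\tau$ preserves joint concavity in the remaining variables by the standard midpoint trick: if $\tau_i$ attains the max at $(\rho_i,\sigma_i)$, then $\lambda\tau_1+(1-\lambda)\tau_2$ is a feasible (generally suboptimal) test point at $(\lambda\rho_1+(1-\lambda)\rho_2,\lambda\sigma_1+(1-\lambda)\sigma_2)$, and joint concavity of the objective yields
\[
Q_\alpha\bog(\lambda\rho_1+(1-\lambda)\rho_2\|\lambda\sigma_1+(1-\lambda)\sigma_2)\ge \lambda Q_\alpha\bog(\rho_1\|\sigma_1)+(1-\lambda)Q_\alpha\bog(\rho_2\|\sigma_2).
\]
Thus $s(\alpha)Q_\alpha\bog=-Q_\alpha\bog$ is jointly convex on $[0,1]$, as required.

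The step I expect to be the main obstacle is sharpness at the upper endpoint, i.e., showing that $s(\alpha)Q_\alpha\bog=Q_\alpha\bog$ fails to be jointly convex (equivalently, $D_\alpha\bog$ fails to be monotone) for every $\alpha>1$. The variational argument of the preceding paragraph collapses because $1-\alpha<0$ turns $-(1-\alpha)D(\tau\|\sigma)$ into a convex term inside the maximization, destroying the joint-concavity structure. To establish failure I would construct an explicit finite-dimensional counterexample to data processing for $D_\alpha\bog$. The ingredients are: (i) Golden--Thompson gives $Q_\alpha\bog(\rho\|\sigma)\le\Tr\rho^\alpha\sigma^{1-\alpha}=Q_\alpha(\rho\|\sigma)$, with strict inequality on generic non-commuting pairs; (ii) on commuting arguments all three families coincide, so $D_\alpha\bog(\E_\sigma(\rho)\|\sigma)=D_\alpha(\E_\sigma(\rho)\|\sigma)$; (iii) for suitably chosen $\rho,\sigma$ one can arrange $D_\alpha(\E_\sigma(\rho)\|\sigma)>D_\alpha\bog(\rho\|\sigma)$, which directly violates data processing under the CPTP pinching map $\E_\sigma$. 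Verifying this strict inequality uniformly in $\alpha>1$ requires a concrete $2\times 2$ calculation with two non-commuting qubit states, and is the bulk of the work in the negative direction.

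Finally, the exceptional extension to $\alpha=+\infty$ singled out in the theorem for $D_\alpha\nw$ follows at once from the identification $D_\infty\nw(\rho\|\sigma)=D_{\max}(\rho\|\sigma)=\log\inf\{\lambda:\rho\le\lambda\sigma\}$ in~\eqref{new infty}: for any CP map $\Phi$, the implication $\rho\le\lambda\sigma\Rightarrow\Phi(\rho)\le\lambda\Phi(\sigma)$ shows that the infimum can only decrease, so $D_{\max}$ is monotone. The joint-convexity formulation does not naturally cover $\alpha=+\infty$ because $Q_\alpha\nw$ diverges there, which is why the statement isolates this endpoint for the monotonicity claim only.
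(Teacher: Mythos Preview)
Your proposal is correct and matches the paper's approach closely: joint concavity of $Q_\alpha\bog$ for $\alpha\in[0,1]$ is obtained from the same variational formula \eqref{Tropp var2} together with joint convexity of the relative entropy, and failure for $\alpha>1$ is established via exactly the pinching-by-$\sigma$ counterexample you anticipate, with the explicit qubit choice $\rho=\tfrac12\begin{pmatrix}1&1\\1&1\end{pmatrix}$ and $\sigma=\mathrm{diag}(a,b)$. One minor remark: your ingredient (i) (Golden--Thompson) is not actually used---since $D_\alpha$ \emph{is} monotone on $(1,2]$, the strict gap $D_\alpha\bog<D_\alpha$ alone cannot produce a violation there; the paper simply computes $Q_\alpha\bog(\rho\|\sigma)=(ab)^{(1-\alpha)/2}$ and $Q_\alpha\bog(\E_\sigma\rho\|\sigma)=(a^{1-\alpha}+b^{1-\alpha})/2^\alpha$ directly and compares.
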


The case $\xx=\oldd$ was proved in \cite{Lieb-convexity,Ando}; see also \cite{P86}. The case
$\xx=\neww$ was proved in \cite{FL}; see also \cite{Renyi_new,WWY} ($\alpha\in(1,2]$) and
\cite{MO,Beigi} ($\alpha>1$). Either of these cases yield the monotonicity of the relative entropy under CPTP maps ($\alpha=1)$,
which is again equivalent to its joint convexity.
Joint convexity for $\xx=\bogg$ and $\alpha\in(0,1)$ follows immediately from \eqref{Tropp var2} and the joint convexity of the relative entropy.
An alternative proof can be obtained from (i) of \cite[Theorem 1.1]{Hiai-convexity} by
taking $A=\rho,\,B=\sigma$, $\map=\Psi=\id$, $p=\alpha/z,\,q=(1-\alpha)/z$, taking the limit $z\to+\infty$, and using
(27) from \cite{AD}.
Failure of joint convexity for $\xx=\oldd$ and $\alpha>2$
was pointed out in \cite{Renyi_new}; see also \cite[Appendix A]{MO}.
Failure of joint convexity for $\xx=\neww$ and $\alpha<1/2$, was also pointed out in
\cite{Renyi_new}, based on numerical counterexamples; an analytic proof was given in
\cite{BFT_variational}.
We are only left to prove the failure of joint convexity of $Q_{\alpha}\bog$ (monotonicity of $D_{\alpha}\bog$) for $\alpha>1$:
\begin{lemma}
$Q_{\alpha}\bog$ is not monotone under CPTP maps for any $\alpha>1$. In fact, it is not even monotone under pinching
by the reference operator; that is, for every $\alpha>1$, there exist $\rho,\sigma\in\L(\hil)_+$ such that
\begin{align*}
Q_{\alpha}\bog(\rho\|\sigma)<Q_{\alpha}\bog(\E_{\sigma}\rho\|\sigma).
\end{align*}
As a consequence, $Q_{\alpha}\bog$ is not jointly convex for $\alpha>1$.
\end{lemma}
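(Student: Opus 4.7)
My plan is to exhibit a concrete qubit pair $(\rho,\sigma)$ for which pinching by $\sigma$ strictly increases $Q_\alpha\bog$, for each prescribed $\alpha>1$. Take
\[
\sigma = p\,|0\rangle\langle 0| + (1-p)\,|1\rangle\langle 1|,\qquad \rho = |\psi\rangle\langle\psi|,\qquad |\psi\rangle = \cos\theta\,|0\rangle + \sin\theta\,|1\rangle,
\]
with $p\in(0,1/2)$ and $\theta\in(0,\pi/2)$ to be chosen. Since $\rho$ is a rank-one projection and $\logn\rho$ vanishes on its support, formula \eqref{exp2} applied with $P:=\rho^0\wedge\sigma^0=\rho$, together with the identity $P e^{cP} P = e^{c}P$ valid for any scalar $c$ and rank-one projection $P$, gives
\[
Q_\alpha\bog(\rho\|\sigma) = e^{(1-\alpha)\langle\psi|\logn\sigma|\psi\rangle} = p^{(1-\alpha)\cos^2\theta}\,(1-p)^{(1-\alpha)\sin^2\theta}.
\]
The pinched operator $\E_\sigma\rho = \cos^2\theta\,|0\rangle\langle 0|+\sin^2\theta\,|1\rangle\langle 1|$ commutes with $\sigma$, so all three families $Q_\alpha, Q_\alpha\nw, Q_\alpha\bog$ agree on it and return the classical expression
\[
Q_\alpha\bog(\E_\sigma\rho\|\sigma) = \cos^{2\alpha}\theta\cdot p^{1-\alpha} + \sin^{2\alpha}\theta\cdot(1-p)^{1-\alpha}.
\]

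Next, fix $\alpha>1$ and pick $p\in(0,1/2)$ small enough that $(\alpha-1)\log\bigl((1-p)/p\bigr)>\alpha$, which is possible since the left-hand side diverges as $p\searrow 0$. A first-order Taylor expansion of both quantities in $\eta:=\sin^2\theta$ about $\eta=0$, combined with the observation that $\eta^\alpha = o(\eta)$ for $\alpha>1$, yields
\[
Q_\alpha\bog(\E_\sigma\rho\|\sigma) - Q_\alpha\bog(\rho\|\sigma) = \eta\,p^{1-\alpha}\Bigl[(\alpha-1)\log\tfrac{1-p}{p}-\alpha\Bigr] + o(\eta),
\]
whose leading coefficient is strictly positive by the choice of $p$. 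Consequently, for $\theta>0$ sufficiently small we have $Q_\alpha\bog(\rho\|\sigma) < Q_\alpha\bog(\E_\sigma\rho\|\sigma)$, which refutes monotonicity of $D_\alpha\bog$ (equivalently of $Q_\alpha\bog$ when $\alpha>1$) under the CPTP pinching map $\E_\sigma$.

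The final claim about joint convexity then follows from the equivalence, recalled in the paragraph just before the lemma, between CPTP-monotonicity of $D_\alpha\x$ and joint convexity of $s(\alpha)Q_\alpha\x$ via the standard doubling argument $\rho_1\oplus\rho_2\mapsto \lambda\rho_1+(1-\lambda)\rho_2$. The main obstacle I anticipate is controlling the sign of the leading-order perturbation \emph{uniformly in} $\alpha>1$; the explicit sufficient condition $(\alpha-1)\log((1-p)/p)>\alpha$ handles this, with the caveat that for $\alpha$ close to $1$ the parameter $p$ must be taken exponentially small in $1/(\alpha-1)$, but this is always achievable within $(0,1/2)$.
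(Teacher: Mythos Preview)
Your proof is correct and uses essentially the same qubit example as the paper: a rank-one $\rho$ against a diagonal $\sigma$, exploiting that $\logn\rho=0$ on its support so that $Q_\alpha\bog(\rho\|\sigma)$ reduces to $e^{(1-\alpha)\langle\psi|\log\sigma|\psi\rangle}$. The only difference is in how the inequality is verified: the paper fixes $\theta=\pi/4$ (i.e., $\rho=\tfrac12\bigl(\begin{smallmatrix}1&1\\1&1\end{smallmatrix}\bigr)$) and reduces to the algebraic inequality $c<(1+c^2)/2^{\alpha}$ in the ratio $c=(b/a)^{(1-\alpha)/2}$, whereas you keep $\theta$ free and take it small, using a first-order expansion in $\eta=\sin^2\theta$. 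Both arguments yield the same conclusion with comparable effort.
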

\begin{proof}
Let $\rho:=\half\begin{pmatrix}1 & 1 \\ 1 & 1\end{pmatrix}$ and
$\sigma:=\begin{pmatrix} a & 0 \\ 0 & b\end{pmatrix}$, $a,b>0$. A straightforward computation shows that
\begin{align*}
Q_{\alpha}\bog(\rho\|\sigma)=(ab)^{\frac{1-\alpha}{2}}.
\end{align*}
If we take $a\ne b$ then
$\E_{\sigma}\rho=\half I$, and
\begin{align*}
Q_{\alpha}\bog(\E_{\sigma}\rho\|\sigma)=\frac{a^{1-\alpha}+b^{1-\alpha}}{2^{\alpha}}
\end{align*}
for every $\alpha>1$. Thus, our aim is to find $a$ and $b$ such that
\begin{align*}
\sqrt{a^{1-\alpha}b^{1-\alpha}}<\frac{a^{1-\alpha}+b^{1-\alpha}}{2^{\alpha}},
\ds\ds\text{or equivalently},\ds\ds
c\le\frac{1+c^2}{2^{\alpha}},
\ds\ds\text{where}\ds\ds c:=\sqrt{b/a}^{1-\alpha}.
\end{align*}
It is easy to see that this latter inequality has positive solutions, providing examples such that
$Q_{\alpha}\bog(\rho\|\sigma)<Q_{\alpha}\bog(\E_{\sigma}\rho\|\sigma)$.
\end{proof}

Since the $\log$ is concave and increasing,
$s(\alpha)\psi_{\alpha}\x$ is jointly convex for any $\alpha\in[0,1)$
for which $s(\alpha)Q_{\alpha}\x$ is jointly convex. On the other hand, it is well-known and easy to verify that
$s(\alpha)\psi_{\alpha}\x$ is not jointly convex for $\alpha>1$ even for classical probability distributions.
However, we have the following partial convexity properties:

\begin{prop}\label{prop:log convexity}
For every $\rho\in\L(\hil)_+$, 
the functions
\begin{align*}
\sigma\mapsto s(\alpha)Q_{\alpha}\x(\rho\|\sigma),\ds\ds\ds
\sigma\mapsto s(\alpha)\psi_{\alpha}\x(\rho\|\sigma),\ds\ds\ds
\sigma\mapsto D_{\alpha}\x(\rho\|\sigma)
\end{align*}
are convex on $\L(\hil)_+$ for $\xx=\bogg$ and $\alpha\in[0,+\infty)$,
for $\xx=\neww$ and $\alpha\in[1/2,+\infty)$, and for $\xx=\oldd$ and $\alpha\in[0,2]$.
Moreover, $\sigma\mapsto D_{\infty}\bog(\rho\|\sigma)$ and $\sigma\mapsto D_{\infty}\nw(\rho\|\sigma)$ are also convex.
\end{prop}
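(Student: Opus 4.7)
The strategy is to treat the three families indexed by $\typ=\oldd,\,\neww,\,\bogg$ in parallel, by the same template: first establish convexity (or concavity, as dictated by the sign $s(\alpha)$) of $\sigma\mapsto Q_{\alpha}\x(\rho\|\sigma)$, then upgrade this to the strictly stronger log-convexity statement for $\sigma\mapsto s(\alpha)\psi_\alpha\x(\rho\|\sigma)$. Since $\psi_\alpha\x=(\alpha-1)D_\alpha\x+\log\Tr\rho$, convexity of $s(\alpha)\psi_\alpha\x$ in $\sigma$ is equivalent to convexity of $D_\alpha\x$; and convexity of $Q_\alpha\x=\exp\psi_\alpha\x$ then follows by composing with the monotone convex exponential. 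For $\alpha<1$, where the required statement for $\psi_\alpha\x$ is concavity, mere concavity of $Q_\alpha\x$ in $\sigma$ already suffices, because $\log$ is monotone increasing and concave, and a monotone concave function composed with a concave function is concave.

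For $\xx=\bogg$ and $\alpha\in[0,1]$, Theorem \ref{thm:monotonicity} gives joint concavity of $Q_{\alpha}\bog$, hence concavity of $Q_\alpha\bog(\rho\|\,\cdot\,)$, and the rest follows by the above template. For $\xx=\bogg$ and $\alpha>1$ joint convexity fails, and I would instead use the variational formula from Theorem \ref{thm:e-geodesic}, rewritten as
\begin{align*}
\psi_{\alpha}\bog(\rho\|\sigma)=\sup_{\tau\in\S_\rho(\hil)}\bigl\{(\alpha-1)D(\tau\|\sigma)-\alpha D(\tau\|\rho)\bigr\}.
\end{align*}
Since $\sigma\mapsto D(\tau\|\sigma)$ is convex (joint convexity of the relative entropy) and $\alpha-1>0$, each term inside the supremum is convex in $\sigma$, so $\psi_\alpha\bog(\rho\|\,\cdot\,)$ is convex as a supremum of convex functions. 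The remaining two claims for $Q_\alpha\bog$ and $D_\alpha\bog$ follow as in the template.

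The cases $\typ=\oldd$ and $\typ=\neww$ proceed by the same template. Joint convexity of $s(\alpha)Q_\alpha\x$ from Theorem \ref{thm:monotonicity} (or, more directly, operator convexity/concavity of the relevant power function $\sigma\mapsto\sigma^{1-\alpha}$, resp.~$\sigma\mapsto\sigma^{(1-\alpha)/\alpha}$, combined with monotone convexity/concavity of $X\mapsto\Tr X^\alpha$) settles the claim for $Q_\alpha\x$. The subtle step is the log-convexity for $\alpha>1$, since the logarithm of a convex function need not be convex. For $\xx=\oldd$ and $\alpha\in(1,2]$ it is exactly the content of Lemma \ref{lemma:AH}, applied with $\omega(X):=\Tr\rho^{\alpha}X$ (a positive linear functional) and $f(x):=x^{1-\alpha}$, which is nonnegative and operator monotone decreasing on $(0,+\infty)$ in this range. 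For $\xx=\neww$ and $\alpha>1$ the same lemma applies after a Schatten-norm duality step,
\begin{align*}
Q_{\alpha}\nw(\rho\|\sigma)^{1/\alpha}
= \max_{B\ge 0,\ \|B\|_{\alpha/(\alpha-1)}\le 1}\omega_B\bigl(\sigma^{(1-\alpha)/\alpha}\bigr),
\qquad \omega_B(X):=\Tr\bigl(\rho^{1/2}B\rho^{1/2}\bigr)X,
\end{align*}
where now $(1-\alpha)/\alpha\in(-1,0)$ so that $x^{(1-\alpha)/\alpha}$ is again nonnegative and operator monotone decreasing; Lemma \ref{lemma:AH} gives convexity of $\sigma\mapsto\log\omega_B\bigl(\sigma^{(1-\alpha)/\alpha}\bigr)$ for each admissible $B$, and taking the supremum preserves convexity, yielding convexity of $\frac{1}{\alpha}\psi_\alpha\nw(\rho\|\,\cdot\,)$. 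Finally, the infinite-order statements for $D_\infty\bog$ and $D_\infty\nw$ follow from writing $D_\infty\x=\sup_{\alpha>1}D_\alpha\x$ via Lemma \ref{lemma: conv mon}, as the supremum of the convex functions just obtained. This log-convexity step at $\alpha>1$ is the main obstacle; everything else reduces to standard manipulations with the template.
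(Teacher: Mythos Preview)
Your proposal is correct and follows essentially the same approach as the paper. In particular, for $\alpha\in[0,1]$ both reduce to the joint convexity/concavity in Theorem~\ref{thm:monotonicity}; for $\xx=\bogg$ and $\alpha>1$ both use the variational representation from Theorem~\ref{thm:e-geodesic} together with convexity of $D(\tau\|\cdot)$; for $\xx=\neww$ and $\alpha>1$ your Schatten-duality formula is exactly the variational expression the paper quotes from \cite[Lemma~12]{Renyi_new} (with the reparametrization $B=\tau^{(\alpha-1)/\alpha}$), and in both cases Lemma~\ref{lemma:AH} does the work; and for $\xx=\oldd$ you spell out explicitly (via Lemma~\ref{lemma:AH} with $\omega(X)=\Tr\rho^{\alpha}X$) what the paper merely cites to \cite[Theorem~II.1]{MH}. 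The only omission is the reduction to invertible $\rho,\sigma_1,\sigma_2$ via \eqref{Renyi ep limit}, which Lemma~\ref{lemma:AH} requires since it is stated on $\L(\hil)_{++}$.
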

\begin{proof}
When $\alpha<1$, the assertions follow immediately from Theorem \ref{thm:monotonicity}.
The assertions about $Q_{1}\x$ and $\psi_1\x$ are obvious from their definitions, and
the assertion about $D_1\x$ follows from the joint convexity of the relative entropy.
Hence for the rest we assume that $\alpha>1$.
Note that convexity of $\psi_{\alpha}\x(\rho\|.)$ is equivalent to the log-convexity of
$Q_{\alpha}\x(\rho\|.)$, which is stronger than convexity.
Moreover, since $\psi_{\alpha}\x(\rho\|.)$ and $D_{\alpha}\x(\rho\|.)$ only differ in a positive constant multiplier,
it is enough to show the convexity of $D_{\alpha}\x(\rho\|.)$. Thus, we have to show that
for any $\rho,\sigma_1,\sigma_2\in\L(\hil)_+$ and any $\lambda\in(0,1)$,
\begin{align}\label{log convexity}
D_{\alpha}\x(\rho\|(1-\lambda)\sigma_1+\lambda\sigma_2)\le
(1-\lambda)D_{\alpha}\x(\rho\|\sigma_1)+\lambda D_{\alpha}\x(\rho\|\sigma_2).
\end{align}
By \eqref{Renyi ep limit}, we may and will assume that $\rho,\sigma_1,\sigma_2$ are all invertible.

We first consider the case $\xx=\bogg$.
By Theorem \ref{thm:e-geodesic},
\begin{align}
&D_{\alpha}\bog(\rho\|(1-\lambda)\sigma_1+\lambda\sigma_2)\nn
&\ds=
\sup_{\tau\in\S(\hil)}\left\{D(\tau\|(1-\lambda)\sigma_1+\lambda\sigma_2)-\frac{\a}{\a-1}D(\tau\|\rho)\right\}\nn
&\ds\le
\sup_{\tau\in\S(\hil)}\left\{(1-\lambda)D(\tau\|\sigma_1)+\lambda D(\tau\|\sigma_2)-\frac{\a}{\a-1}D(\tau\|\rho)\right\}\nn
&\ds\le
\sup_{\tau_1,\tau_2\in\S(\H)}
\left\{(1-\lambda)\bz D(\tau_1\|\sigma_1)-\frac{\a}{\a-1}D(\tau_1\|\rho)\jz+\lambda\bz D(\tau_2\|\sigma_2)-\frac{\a}{\a-1}D(\tau_2\|\rho)\jz\right\}\nn
&\ds=
(1-\lambda)D_{\alpha}\bog(\rho\|\sigma_1)+\lambda D_{\alpha}\bog(\rho\|\sigma_2),
\end{align}
where the first inequality is due to the convexity of the relative entropy in its second argument, the second inequality is obvious, and the last line is again due to Theorem \ref{thm:e-geodesic}.
Convexity of $\sigma\mapsto D_{\infty}\bog(\rho\|\sigma)$ follows by taking the limit $\alpha\to+\infty$.

Next, we consider $\xx=\neww$. By \cite[Lemma 12]{Renyi_new}, we have
\begin{align*}
D_{\alpha}\nw(\rho\|\sigma)=\frac{\alpha}{\alpha-1}\sup_{\tau\in\L(\hil)_+,\,\Tr\tau\le 1}\log\omega_{\tau}(f_\alpha(\sigma)),
\end{align*}
where
$f_{\alpha}(x)=x^{(1-\alpha)/\alpha},\,x\in(0,+\infty)$, and $\omega_{\tau}(X):=\Tr(X\rho^{1/2}\tau^{(\alpha-1)/\alpha}\rho^{1/2}),\,X\in\L(\hil)$. Obviously, $\omega_{\tau}$ is a positive linear functional, and for $\alpha>1$,
$f_{\alpha}$ is operator monotone decreasing \cite{Bhatia}. Hence, by Lemma \ref{lemma:AH},
$\sigma\mapsto D_{\alpha}\nw(\rho\|\sigma)$ is the supremum of convex functions, and hence is itself convex.

The case $\xx=\oldd$ follows by a similar argument; see \cite[Theorem II.1]{MH} for details.
\end{proof}

\subsection{Further properties and relations}

Here we establish some further properties of the R\'enyi divergences that are going to be useful later in the paper.
The following Lemma is easy to verify:

\begin{lemma}\label{lemma:basic properties}
Let $\rho,\sigma\in\L(\hil)_+$, $\alpha\in(0,+\infty]\setminus\{1\}$, and $\typ$ be any of the three possible values.
\smallskip

\noindent 1. The $Q$ quantities are multiplicative, and hence the corresponding R\'enyi divergences are additive in the
sense that for every $n\in\bN$,
\begin{align}\label{additivity}
Q_{\alpha}\x(\rho^{\otimes n}\|\sigma^{\otimes n})=Q_{\alpha}\x(\rho\|\sigma)^n,\ds\ds\ds\ds
D_{\alpha}\x(\rho^{\otimes n}\|\sigma^{\otimes n})=nD_{\alpha}(\rho\|\sigma).
\end{align}

\noindent 2. For every $\lambda>0$,
\begin{align*}
Q_{\alpha}\x(\lambda\rho\|\sigma)&=\lambda^{\alpha}Q_{\alpha}\x(\rho\|\sigma),
& &D_{\alpha}\x(\lambda\rho\|\sigma)=D_{\alpha}\x(\rho\|\sigma)+\log\lambda,\\
Q_{\alpha}\x(\rho\|\lambda\sigma)&=\lambda^{1-\alpha}Q_{\alpha}\x(\rho\|\sigma),
& &D_{\alpha}\x(\rho\|\lambda\sigma)=D_{\alpha}\x(\rho\|\sigma)-\log\lambda.
\end{align*}
\end{lemma}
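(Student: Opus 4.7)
The plan is to verify each identity in turn for all three variants $\typ\in\{\oldd,\neww,\bogg\}$, handling first the case where the quantities are finite and then the degenerate cases.

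For multiplicativity of the $Q$ quantities, the first step is to reduce to invertible $\rho,\sigma$ on both sides. This requires a small care: the extension in \eqref{ext1}--\eqref{ext2} perturbs $\rho^{\otimes n}$ and $\sigma^{\otimes n}$ on $\hil^{\otimes n}$, while $\rho^{\otimes n}$ is obtained from a perturbation of $\rho$ on $\hil$. I will sidestep this by using the explicit support-restricted formulas \eqref{quasi2}--\eqref{exp2} from Lemma~\ref{lemma:Q def}, which apply whenever $\alpha\in(0,1)$ or $\rho^0\le\sigma^0$. The key facts are that tensor products distribute over the operations appearing: $(A^{\otimes n})^{p}=(A^{p})^{\otimes n}$ for any $p\in\R$ (with the convention on supports), $\logn(A^{\otimes n})=\sum_{i=1}^{n}I^{\otimes(i-1)}\otimes(\logn A)\otimes I^{\otimes(n-i)}$, and the projection identity $(A\wedge B)^{\otimes n}$ equals the meet of the supports of $A^{\otimes n}$ and $B^{\otimes n}$. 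For $\typ=\oldd,\neww$ this immediately yields $Q_\alpha\x(\rho^{\otimes n}\|\sigma^{\otimes n})=Q_\alpha\x(\rho\|\sigma)^n$ since $\Tr X^{\otimes n}=(\Tr X)^n$. For $\typ=\bogg$, using $P^{\otimes n}$ in the support formula, the exponent is a sum of operators that act nontrivially on different tensor factors and hence commute, so the exponential factorizes as a tensor product, and the trace again gives the $n$-th power.

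The remaining case for multiplicativity is $\alpha>1$ with $\rho^0\nleq\sigma^0$, where $Q_\alpha\x(\rho\|\sigma)=+\infty$ by Lemma~\ref{lemma:Q def}. I must show $(\rho^{\otimes n})^0\nleq(\sigma^{\otimes n})^0$, so that the left-hand side is also $+\infty$ (with the convention $(+\infty)^n=+\infty$). This is immediate: if $\psi\in\mathrm{ran}\,\rho^0\setminus\mathrm{ran}\,\sigma^0$, then $\psi^{\otimes n}$ lies in the support of $\rho^{\otimes n}$ but $(\sigma^0)^{\otimes n}\psi^{\otimes n}=(\sigma^0\psi)^{\otimes n}=0$. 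Additivity of $D_\alpha\x$ then follows from multiplicativity of $Q_\alpha\x$ and the fact that $\log\Tr\rho^{\otimes n}=n\log\Tr\rho$.

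For the scaling identities in part~2, the cleanest route is again through the support formulas \eqref{quasi2}--\eqref{exp2}, since the supports of $\lambda\rho$ and $\rho$ coincide for $\lambda>0$. For $\typ=\oldd,\neww$ the factors $\lambda^{\alpha}$ and $\lambda^{1-\alpha}$ pop out by homogeneity of the power functions. For $\typ=\bogg$, $\logn(\lambda\rho)=(\log\lambda)\rho^0+\logn\rho$, and likewise for $\sigma$; inserting into \eqref{exp2} and using that $(\log\lambda)P$ commutes with everything inside the exponent, one pulls out $e^{\alpha(\log\lambda)}=\lambda^\alpha$ in the first identity and $e^{(1-\alpha)(\log\lambda)}=\lambda^{1-\alpha}$ in the second. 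The corresponding identities for $D_\alpha\x$ follow from the definition $D_\alpha\x=\frac{1}{\alpha-1}\log Q_\alpha\x-\frac{1}{\alpha-1}\log\Tr\rho$ and the fact that $\Tr(\lambda\rho)=\lambda\Tr\rho$ while $\Tr\rho$ is unchanged under scaling of $\sigma$.

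The only mild obstacle is the bookkeeping in the non-invertible case, specifically ensuring the support-restricted formulas are used consistently on both sides of each identity; everything else is a direct calculation using the distributivity of tensor products with functional calculus and the scalar homogeneity of power, exponential, and logarithm.
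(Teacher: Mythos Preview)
Your direct verification is exactly what the paper intends: it states only that the lemma ``is easy to verify'' and gives no proof, so there is nothing to compare against beyond confirming that your computations go through---which they do. Two small points: first, the identity $\logn(A^{\otimes n})=\sum_{i}I^{\otimes(i-1)}\otimes(\logn A)\otimes I^{\otimes(n-i)}$ is false when $A$ is singular (the $I$'s should be $A^{0}$'s), but this is harmless here since in \eqref{exp2} everything is conjugated by $P^{\otimes n}$ with $P\le\rho^{0},\sigma^{0}$; second, you do not treat $\alpha=+\infty$, which is included in the statement for $D_{\alpha}\x$ and follows immediately from the finite-$\alpha$ case by taking the supremum over $\alpha>1$ on both sides.
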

\medskip

We have the following ordering of the R\'enyi divergences:
\begin{prop}\label{lemma:Q ordering}
For any $\rho,\sigma\in\L(\hil)_+$, we have
\begin{align}
&D_{\alpha}\nw(\rho\|\sigma)\le D_{\alpha}(\rho\|\sigma)\le D_{\alpha}\bog(\rho\|\sigma),&\alpha\in[0,1),\label{D ordering 1}\\
&D_{\alpha}\bog(\rho\|\sigma)\le D_{\alpha}\nw(\rho\|\sigma)\le D_{\alpha}(\rho\|\sigma),&\alpha\in(1,+\infty].\label{D ordering}
\end{align}
\end{prop}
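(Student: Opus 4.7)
The plan is to convert the orderings into inequalities between the corresponding $Q$ quantities and obtain all three families simultaneously from the Araki--Lieb--Thirring / Hiai--Petz log-majorization refinement of the Golden--Thompson inequality. Since $D_{\alpha}\x(\rho\|\sigma)=\frac{1}{\alpha-1}\log Q_{\alpha}\x(\rho\|\sigma)-\frac{1}{\alpha-1}\log\Tr\rho$, an inequality of the form $D_{\alpha}^{x}\le D_{\alpha}^{y}$ is equivalent to $Q_{\alpha}^{x}\le Q_{\alpha}^{y}$ when $\alpha>1$, and to its reverse when $\alpha\in(0,1)$. By \eqref{Renyi ep limit} together with the continuity corollary stated just above, it suffices to establish the corresponding $Q$-inequalities for invertible $\rho,\sigma$ and $\alpha\in(0,1)\cup(1,+\infty)$; the boundary cases $\alpha=0$ and $\alpha=+\infty$ then follow by taking limits.

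The key observation is that all three $Q$'s are values of a single one-parameter family. Setting $H:=(1-\alpha)\log\sigma$ and $K:=\alpha\log\rho$, a direct computation (together with cyclicity of the trace, which identifies $\Tr(\rho^{1/2}\sigma^{(1-\alpha)/\alpha}\rho^{1/2})^{\alpha}$ with $\Tr(\sigma^{(1-\alpha)/(2\alpha)}\rho\,\sigma^{(1-\alpha)/(2\alpha)})^{\alpha}$) yields
\begin{align*}
Q_{\alpha}\bog(\rho\|\sigma)=\Tr e^{H+K},\qquad
Q_{\alpha}(\rho\|\sigma)=\Tr e^{H/2}e^{K}e^{H/2},\qquad
Q_{\alpha}\nw(\rho\|\sigma)=\Tr\bz e^{pH/2}e^{pK}e^{pH/2}\jz^{1/p}\Big|_{p=1/\alpha}.
\end{align*}
Thus all three quantities are special values of the common function $F(p):=\Tr\bz e^{pH/2}e^{pK}e^{pH/2}\jz^{1/p}$, evaluated at $p\to 0^{+}$ (Lie--Trotter limit), $p=1$, and $p=1/\alpha$, respectively.

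The main ingredient is then the classical Araki / Hiai--Petz log-majorization theorem, asserting that $F$ is monotone non-decreasing on $(0,+\infty)$ with $\lim_{p\searrow 0}F(p)=\Tr e^{H+K}$. For $\alpha>1$ the three evaluation points satisfy $0<1/\alpha<1$, so monotonicity gives $Q_{\alpha}\bog\le Q_{\alpha}\nw\le Q_{\alpha}$, and dividing by $\alpha-1>0$ yields $D_{\alpha}\bog\le D_{\alpha}\nw\le D_{\alpha}$. For $\alpha\in(0,1)$ one has $1/\alpha>1$, hence the same monotonicity gives $Q_{\alpha}\bog\le Q_{\alpha}\le Q_{\alpha}\nw$; dividing by the negative quantity $\alpha-1$ reverses the orientation and yields $D_{\alpha}\nw\le D_{\alpha}\le D_{\alpha}\bog$, as required. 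The only non-trivial input is the Hiai--Petz monotonicity; everything else is bookkeeping between $Q$ and $D$ and the continuity extension to $\alpha\in\{0,+\infty\}$.
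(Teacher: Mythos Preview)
Your proof is correct and takes a genuinely different route from the paper's.

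The paper handles the three pairwise comparisons by three separate arguments: the inequality $D_{\alpha}\nw\le D_{\alpha}$ is obtained from the Araki--Lieb--Thirring inequality; the comparison between $D_{\alpha}$ and $D_{\alpha}\bog$ from the Golden--Thompson inequality; and the remaining inequality $D_{\alpha}\bog\le D_{\alpha}\nw$ for $\alpha>1$ by a regularization/pinching argument --- one passes to $n$-fold tensor powers, applies the pinching inequality with respect to $\sigma^{\otimes n}$ together with operator monotonicity of the logarithm and monotonicity of $D_{\alpha}\nw$ under pinching, and then takes $n\to\infty$ using the polynomial growth of $v(\sigma^{\otimes n})$.

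Your approach instead realizes all three $Q$ quantities as values of the single function $F(p)=\Tr\bz e^{pH/2}e^{pK}e^{pH/2}\jz^{1/p}$ at $p\to 0^+$, $p=1/\alpha$ and $p=1$, and then invokes Araki's log-majorization theorem (equivalently, the Hiai--Petz refinement of Golden--Thompson) to obtain monotonicity of $F$. This gives all six inequalities in one stroke. What you gain is uniformity and brevity: Araki's log-majorization is a common strengthening of both Araki--Lieb--Thirring and (via the Lie--Trotter limit) Golden--Thompson, so you are essentially using the sharpest available matrix inequality once, rather than its two weaker consequences separately plus an \emph{ad hoc} asymptotic argument. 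What the paper's pinching proof of $D_{\alpha}\bog\le D_{\alpha}\nw$ buys is a self-contained argument relying only on tools already developed in the paper (the pinching inequality and monotonicity of $D_{\alpha}\nw$), and it also anticipates the pinching techniques that become essential later in Sections~\ref{sec:pinched} and~\ref{sec:sc achievability}.
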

\begin{proof}
It is enough to prove the inequalities for positive definite $\rho$ and $\sigma$, as the general case then follows by \eqref{ext1}.
The inequality $D_{\alpha}\nw(\rho\|\sigma)\le D_{\alpha}(\rho\|\sigma)$ is equivalent to the  Araki-Lieb-Thirring inequality \cite{Araki,LT}.
By the Golden-Thompson inequality \cite{Golden,Symanzik,Thompson}, $\Tr e^{A+B}\le\Tr e^A e^B$ for any self-adjoint $A,B$. This yields that
$D_{\alpha}(\rho\|\sigma)\le D_{\alpha}\bog(\rho\|\sigma)$ for $\alpha\in(0,1)$, and
$D_{\alpha}(\rho\|\sigma)\ge D_{\alpha}\bog(\rho\|\sigma)$ for $\alpha>1$.
(Vice versa, the inequality $s(\alpha)D_{\alpha}(\rho\|\sigma)\ge s(\alpha)D_{\alpha}\bog(\rho\|\sigma)$
for a fixed $\alpha\in(0,+\infty)\setminus\{1\}$ and every $\rho,\sigma\in\B(\hil)_{++}$ implies the Golden-Thompson
inequality.)
Hence, we are left to prove the first inequality in \eqref{D ordering}.

Let us fix $\rho,\sigma\in\L(\hil)_{++}$ and $\alpha\in(1,+\infty)$, and for every $n\in\bN$, let
$\rho_n:=\rho^{\otimes n},\,\sigma_n:=\sigma^{\otimes n}$. Then
\begin{align*}
Q_{\alpha}\bog(\rho\|\sigma)^n&=Q_{\alpha}\bog(\rho^{\otimes n}\|\sigma^{\otimes n})
=
\Tr e^{\alpha\log\rho_n+(1-\alpha)\log\sigma_n}\\
&\le
\Tr e^{\alpha\log\E_{\sigma_n}(\rho_n)+\alpha\log v(\sigma_n)+(1-\alpha)\log\sigma_n}\\
&=
v(\sigma_n)^{\alpha}\Tr\bz\E_{\sigma_n}(\rho_n)\jz^{\alpha}\sigma_n^{1-\alpha}
=
v(\sigma_n)^{\alpha}Q_{\alpha}\nw\bz \E_{\sigma_n}(\rho_n)\|\sigma_n\jz\\
&\le
v(\sigma_n)^{\alpha}Q_{\alpha}\nw\bz \rho_n\|\sigma_n\jz
=
v(\sigma_n)^{\alpha}Q_{\alpha}\nw\bz \rho\|\sigma\jz^n,
\end{align*}
where the first and the last identities are due to \eqref{additivity}, and the first inequality is due to the pinching inequality
\eqref{pinching inequality}, $\rho_n\le v(\sigma_n)\E_{\sigma_n}(\rho_n)$, the operator monotonicity of the logarithm, and Lemma \ref{lem:trace-mono}.
The equalities in the third line are due to the fact that $\sigma_n$ and $\E_{\sigma_n}(\rho_n)$ commute, and the last inequality
is due to the monotonicity of $D_{\alpha}\nw$ under pinching \cite[Proposition 14]{Renyi_new}.
Taking now the $n$-th root and then the limit $n\to+\infty$, and using that
$v(\sigma_n)\le (n+1)^{d-1}$, we get the desired inequality.
\end{proof}

\begin{rem}\label{rem:strict ordering}
It is known that equality in the inequality $D_{\alpha}\nw(\rho\|\sigma)\le D_{\alpha}(\rho\|\sigma)$ holds if and only if $\alpha=1$ or $\rho$ and $\sigma$ commute with each other
\cite{Hiai-ALT}. It is also easy to see that the other inequalities don't hold with equality in general, either. Indeed,
choosing $\rho:=\half\begin{pmatrix}1 & 1 \\ 1 & 1\end{pmatrix}$ and
$\sigma:=\begin{pmatrix} a & 0 \\ 0 & b\end{pmatrix}$, $a,b>0$, a straightforward computation shows that
\begin{align*}
Q_{\alpha}\bog(\rho\|\sigma)=(ab)^{\frac{1-\alpha}{2}},\ds\ds\ds
Q_{\alpha}\nw(\rho\|\sigma)=\bz\frac{a^{\frac{1-\alpha}{\alpha}}+b^{\frac{1-\alpha}{\alpha} }}{2}\jz^{\alpha},\ds\ds\ds
Q_{\alpha}(\rho\|\sigma)=\frac{a^{1-\alpha}+b^{1-\alpha}}{2},
\end{align*}
which are not equal to each other for general $a$ and $b$.
\end{rem}

All the R\'enyi divergences are strictly positive on pairs of states:
\begin{prop}
For every $\alpha>0$, and all three values of $\typ$,
\begin{align}\label{positivity1}
D_{\alpha}\x(\rho\|\sigma)\ge \log\Tr\rho-\log\Tr\sigma,\ds\ds\ds\ds\ds\ds
\rho,\sigma\in\L(\hil)_+,
\end{align}
with equality if and only if $\rho$ is a constant multiple of $\sigma$,
or equivalently,
\begin{align}\label{positivity3}
D_{\alpha}\x(\rho\|\sigma)\ge 0,\ds\ds\ds\ds\ds\ds
\rho,\sigma\in\S(\hil),
\end{align}
with equality if and only if $\rho=\sigma$.
\end{prop}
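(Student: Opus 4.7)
The plan is to reduce \eqref{positivity1} to the positivity statement \eqref{positivity3} for states via the scaling property, and then to establish positivity on states by splitting on $\alpha\ge 1$ versus $\alpha<1$, treating the three families $\typ\in\{\oldd,\neww,\bogg\}$ separately in the latter regime.

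Applying Lemma~\ref{lemma:basic properties}(2) first with $\lambda=1/\Tr\rho$ and then with $\sigma$ rescaled by $1/\Tr\sigma$ yields
\begin{align*}
D_{\alpha}\x(\rho/\Tr\rho\,\|\,\sigma/\Tr\sigma)
=D_{\alpha}\x(\rho\|\sigma)-\log\Tr\rho+\log\Tr\sigma,
\end{align*}
which simultaneously shows the equivalence of \eqref{positivity1} and \eqref{positivity3} applied to the normalized states and translates the condition ``$\rho$ is a constant multiple of $\sigma$'' into ``$\rho/\Tr\rho=\sigma/\Tr\sigma$.'' The converse implication (equality in \eqref{positivity1} whenever $\rho$ is a constant multiple of $\sigma$) is immediate, since for proportional operators all three families $Q_{\alpha}\x$ collapse to the common classical expression. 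So it remains to prove $D_{\alpha}\x(\rho\|\sigma)\ge 0$ for $\rho,\sigma\in\S(\hil)$ and every value of $\typ$, with equality iff $\rho=\sigma$.

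For $\alpha\ge 1$ I would combine the monotonicity of $\alpha\mapsto D_{\alpha}\x(\rho\|\sigma)$ from Lemma~\ref{lemma: conv mon} with Lemma~\ref{lemma:limit at 1} to obtain $D_{\alpha}\x(\rho\|\sigma)\ge D(\rho\|\sigma)$, and then apply the quantitative form of Klein's inequality \eqref{Klein}, $D(\rho\|\sigma)\ge\tfrac{1}{2}\Tr(\rho-\sigma)^2$, to conclude non-negativity together with strictness when $\rho\ne\sigma$. For $\alpha\in(0,1)$ and $\typ\in\{\oldd,\bogg\}$, Theorem~\ref{thm:monotonicity} supplies monotonicity of $D_{\alpha}\x$ under CPTP maps on this range, so applying the trace map gives $D_{\alpha}\x(\rho\|\sigma)\ge 0$, and applying any POVM $M$ with $M(\rho)\ne M(\sigma)$ gives strict inequality from the classical equality case of Jensen. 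For $\alpha\in(0,1)$ and $\typ=\neww$, the pinching identity \eqref{att by pinching} of Lemma~\ref{lemma:attainability} represents $D_{\alpha}^*(\rho\|\sigma)$ as the limit of R\'enyi divergences of the commuting state pair $\E_{\sigma^{\otimes n}}\rho^{\otimes n},\sigma^{\otimes n}$, each being a non-negative classical R\'enyi divergence.

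The main obstacle will be the strict inequality for $\typ=\neww$ with $\alpha\in(0,1/2)$, where monotonicity of $D_{\alpha}^*$ under general CPTP maps is not available. I would handle this through the pinching identity \eqref{att by pinching}, showing that whenever $\rho\ne\sigma$ the classical sequence $\tfrac{1}{n}D_{\alpha}(\E_{\sigma^{\otimes n}}\rho^{\otimes n}\|\sigma^{\otimes n})$ stays bounded below by a positive constant; this reduces, via the asymptotic behavior of the pinched states (the pinching inequality \eqref{pinching inequality} together with \eqref{v limit}) and the strict Klein bound $D(\rho\|\sigma)>0$, to the classical strict positivity of R\'enyi divergence for distinct probability distributions.
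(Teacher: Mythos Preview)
Your reduction to states via scaling and the non-negativity arguments are correct, but your organization and your handling of the equality case differ substantially from the paper's, and the latter has a gap.

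The paper does not split on the three values of $\typ$. Instead it uses the ordering $D_{\alpha}^*\le D_{\alpha}\le D_{\alpha}^\flat$ for $\alpha\in(0,1)$ (Proposition~\ref{lemma:Q ordering}) together with monotonicity in $\alpha$ (Lemma~\ref{lemma: conv mon}) to reduce the entire inequality \eqref{positivity3} to the single case $D_{\alpha}^*$ with $\alpha\in(0,1)$. Non-negativity is then obtained in one line: single-copy pinching by $\sigma$ (which is monotone for $D_{\alpha}^*$ for all $\alpha>0$) followed by the classical H\"older inequality gives $Q_{\alpha}^*(\rho\|\sigma)\le Q_{\alpha}^*(\E_{\sigma}\rho\|\sigma)\le 1$. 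This is shorter than your three separate arguments, and in particular avoids invoking the asymptotic pinching identity \eqref{att by pinching}.

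For the equality characterization, the paper again works only with $D_{\alpha}^*$ and $\alpha\in(0,1)$, and uses convexity of $\alpha\mapsto\psi_{\alpha}^*(\rho\|\sigma)$ (Lemma~\ref{lemma: conv mon}): since $\psi_{\alpha}^*\le 0$ on $(0,1)$ and $\psi_1^*=0$, if $\psi_{\alpha}^*=0$ at one point it vanishes identically on $(0,1)$, whence $D_{\alpha}^*(\rho\|\sigma)=0$ for all $\alpha\in(0,1)$, and taking $\alpha\nearrow 1$ gives $D(\rho\|\sigma)=0$, so $\rho=\sigma$ by \eqref{Klein}.

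Your proposed treatment of the equality case for $\typ=\neww$ and $\alpha\in(0,1/2)$ is where the gap lies. You propose to show that $\tfrac{1}{n}D_{\alpha}(\E_{\sigma^{\otimes n}}\rho^{\otimes n}\|\sigma^{\otimes n})$ is bounded below by a positive constant when $\rho\ne\sigma$, appealing to ``the pinching inequality together with \eqref{v limit} and the strict Klein bound $D(\rho\|\sigma)>0$'' to reduce to classical strict positivity. But classical strict positivity only gives $D_{\alpha}(\E_{\sigma^{\otimes n}}\rho^{\otimes n}\|\sigma^{\otimes n})>0$ for each $n$, not a uniform lower bound after dividing by $n$; the pinching inequality goes the wrong direction to bound $D_{\alpha}$ from below for $\alpha<1$; and there is no direct inequality relating $D_{\alpha}(p\|q)$ to $D(p\|q)$ from below for $\alpha<1$ that would let you import the Klein bound. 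As stated, this step does not go through. The cleanest fix is exactly the paper's convexity-in-$\alpha$ argument, which you already have all the ingredients for.
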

\begin{proof}
The equivalence of \eqref{positivity1} and \eqref{positivity3} is immediate from the scaling properties in Lemma \ref{lemma:basic properties}, and hence we only prove \eqref{positivity3}. Moreover, by the ordering in \eqref{D ordering 1} and
the monotonicity in \eqref{mon in alpha}, it is enough to consider $D_{\alpha}\nw$ and $\alpha\in(0,1)$.
By the monotonicity under pinching \cite[Proposition 14]{Renyi_new} and the classical H\"older inequality, we have
$Q_{\alpha}\nw(\rho\|\sigma)\le Q_{\alpha}\nw(\E_{\sigma}\rho\|\sigma)\le\bz\Tr\E_{\sigma}(\rho)\jz^{\alpha}\bz\Tr\sigma\jz^{1-
\alpha}\le 1$, which yields \eqref{positivity3}.
As a consequence, $\psi_{\alpha}\nw(\rho\|\sigma)\le 0$ for every $\alpha\in (0,1)$. Now if
$D_{\alpha}\nw(\rho\|\sigma)=0$ for some $\alpha\in(0,1)$ then
$\psi_{\alpha}\nw(\rho\|\sigma)=0$. By the convexity of $\psi_{\alpha}\nw(\rho\|\sigma)$ in $\alpha$,
(Lemma \ref{lemma: conv mon}), this is only possible if $\psi_{\alpha}\nw(\rho\|\sigma)=0$ for every
$\alpha\in(0,1)$. Hence, $D_{\alpha}\nw(\rho\|\sigma)=0,\,\alpha\in(0,1)$, and taking the limit
$\alpha\nearrow 1$ yields $D(\rho\|\sigma)=0$. By \eqref{Klein} this implies $\rho=\sigma$.
\end{proof}

\begin{remark}
An alternative proof for the case $\xx=\neww$ has been given in \cite[Theorem 5]{Beigi}.
\end{remark}

\begin{lemma}\label{cor:monotonicity}
For every $\rho\in\L(\hil)_+$, we have
\begin{align*}
\sigma'\ge\sigma\ds\imp\ds D_{\alpha}\x(\rho\|\sigma')\le D_{\alpha}\x(\rho\|\sigma)
\end{align*}
for $\xx=\oldd$ and $\alpha\in[0,1]$, for $\xx=\neww$ and $\alpha\in[1/2,+\infty]$, and for
$\xx=\bogg$ and $\alpha\in[0,+\infty]$.
\end{lemma}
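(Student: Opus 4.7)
The plan is to treat the three values of $\typ$ separately, throughout relying on operator monotonicity of the logarithm, which implies $\logn\sigma'\ge\logn\sigma$ (on the joint support of $\sigma$ and $\sigma'$) and hence $D(\tau\|\sigma')\le D(\tau\|\sigma)$ whenever $\sigma'\ge\sigma$ and $\tau^0\le\sigma^0$.

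For $\xx=\oldd$ and $\alpha\in[0,1)$, the power $x\mapsto x^{1-\alpha}$ is operator monotone, so $\sigma'^{1-\alpha}\ge\sigma^{1-\alpha}$. Multiplying by $\rho^\alpha\ge 0$ and taking the trace gives $Q_\alpha(\rho\|\sigma')\ge Q_\alpha(\rho\|\sigma)$, and dividing the logarithm by $\alpha-1<0$ flips the inequality to the desired form. The endpoint $\alpha=1$ is immediate from $\Tr\rho\logn\sigma'\ge\Tr\rho\logn\sigma$.

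For $\xx=\neww$, I would handle the subranges $\alpha\in[1/2,1)$ and $\alpha\in(1,+\infty)$ in parallel. In the first, $(1-\alpha)/\alpha\in(0,1]$, making $x\mapsto x^{(1-\alpha)/\alpha}$ operator monotone increasing; in the second, $(1-\alpha)/\alpha\in(-1,0)$, making the same map operator monotone decreasing. Conjugating by $\rho^{1/2}$ preserves the PSD order, and then the trace monotonicity of $x\mapsto x^\alpha$ (Lemma \ref{lem:trace-mono}) yields the appropriate monotonicity of $Q_\alpha\nw$ in $\sigma$, which combined with the sign of $\alpha-1$ gives $D_\alpha\nw(\rho\|\sigma')\le D_\alpha\nw(\rho\|\sigma)$ in both subcases. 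The endpoint $\alpha=1$ is handled as for $\oldd$, and $\alpha=+\infty$ is immediate from the characterization \eqref{new infty}: if $\rho\le\lambda\sigma$ then $\rho\le\lambda\sigma'$.

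For $\xx=\bogg$, I would invoke the variational representation \eqref{D bog var} on $(0,+\infty)\setminus\{1\}$. Since $D(\tau\|\sigma)$ is decreasing in $\sigma$, the expression $D(\tau\|\sigma)-\tfrac{\alpha}{\alpha-1}D(\tau\|\rho)$ inside the formula is pointwise decreasing in $\sigma$; both the maximum arising when $s(\alpha)=1$ (i.e.~$\alpha>1$) and the minimum arising when $s(\alpha)=-1$ preserve this, giving $D_\alpha\bog(\rho\|\sigma')\le D_\alpha\bog(\rho\|\sigma)$. The case $\alpha=+\infty$ uses \eqref{infty variational} in exactly the same way, while the endpoints $\alpha=0$ and $\alpha=1$ follow from the continuity of $\alpha\mapsto D_\alpha\bog$ on $[0,+\infty]$ established in the preceding corollary. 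The only real obstacle is the bookkeeping: tracking which power function is operator monotone increasing versus decreasing in the $\neww$ analysis, and aligning the sign of $\alpha-1$ with the direction of inequality for $Q_\alpha\x$ or $\psi_\alpha\x$ to produce the correct direction for $D_\alpha\x$.
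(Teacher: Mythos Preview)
Your treatment of the cases $\xx=\oldd$ and $\xx=\neww$ matches the paper's almost word for word: operator monotonicity of $x\mapsto x^{1-\alpha}$ (resp.\ $x\mapsto x^{(1-\alpha)/\alpha}$), together with trace monotonicity from Lemma \ref{lem:trace-mono}, is exactly what the paper invokes; the endpoint $\alpha=+\infty$ for $\xx=\neww$ you handle by the explicit description \eqref{new infty} of $D_{\max}$ rather than by limits, which is an equally valid one-liner.

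For $\xx=\bogg$ you take a genuinely different route. The paper, after first reducing to invertible $\rho,\sigma,\sigma'$ via \eqref{Renyi ep limit}, simply notes that operator monotonicity of $\log$ gives $\log\sigma'\ge\log\sigma$, and then applies \eqref{trace-mono} to the defining formula $Q_\alpha\bog(\rho\|\sigma)=\Tr e^{\alpha\log\rho+(1-\alpha)\log\sigma}$: the sign of $(1-\alpha)$ controls the direction of the $Q$-inequality, and the sign of $(\alpha-1)$ then yields the desired inequality for $D_\alpha\bog$. You instead go through the variational formula \eqref{D bog var} (and \eqref{infty variational} at $\alpha=+\infty$), using that $\sigma\mapsto D(\tau\|\sigma)$ is antitone in the PSD order; since the feasible set $\S_\rho(\hil)$ is independent of $\sigma$, the pointwise inequality passes through both the $\max$ ($\alpha>1$) and the $\min$ ($\alpha\in(0,1)$). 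Both arguments are correct. The paper's has the virtue of being completely parallel to the other two cases and of avoiding any reference to the heavier Theorem \ref{thm:e-geodesic}; yours has the pleasant feature that the sign bookkeeping is absorbed into the single observation that $D(\tau\|\cdot)$ is antitone.

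One point worth tightening: the paper's first move---reducing to invertible $\rho,\sigma,\sigma'$ via \eqref{Renyi ep limit}---cleanly disposes of all support issues at once. Your parenthetical ``on the joint support of $\sigma$ and $\sigma'$'' is not literally correct as an operator inequality $\logn\sigma'\ge\logn\sigma$ (take $\sigma=\pr{0}$, $\sigma'=2\pr{0}+\tfrac{1}{10}\pr{1}$), and while the conclusion $D(\tau\|\sigma')\le D(\tau\|\sigma)$ for $\tau^0\le\sigma^0$ is nonetheless true, justifying it properly amounts to exactly the $\ep$-perturbation argument the paper performs up front. Inserting that reduction at the start would make your proof fully rigorous with no further changes.
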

\begin{proof}
By \eqref{Renyi ep limit}, we can assume without loss of generality that $\rho,\sigma,\sigma'$ are invertible.
The assertions then follow immeditely from the following:
For $\xx=\oldd$ from the fact that $x\mapsto x^{\alpha},\,x\in(0,+\infty)$, is operator monotone increasing for
$\alpha\in(0,1)$; for $\xx=\neww$ from the fact that $x\mapsto x^{\frac{1-\alpha}{\alpha}}$ is operator monotone
increasing for $\alpha\in[1/2,1)$, and operator monotone decreasing for $\alpha\in(1,+\infty)$, and from \eqref{trace-mono}; and for
$\xx=\bogg$ from the fact that the logarithm is operator monotone increasing.
When necessary, the cases $\alpha=0,1,+\infty$ can be obtained by taking the appropriate limit in $\alpha$.
\end{proof}
\begin{remark}
The case $\xx=\neww$ has been shown in \cite[Proposition 4]{Renyi_new}, in a slightly different way.
\end{remark}
\smallskip

The following technical lemma will play an important role in the minimax arguments used to establish the equivalence of the various definitions of the R\'enyi capacities in Proposition \ref{prop:cap equal}.

Recall the definition of $s(\alpha)$ from \eqref{s alpha}.

\begin{lemma}\label{lemma:ep limit}
For any of the three possible values of $\typ$, define
\begin{align}\label{alpha int}
\A:=\A\nw:=(0,+\infty)\setminus\{1\},\ds\ds\ds
\A\bog:=(1,+\infty).
\end{align}
Let $\rho,\sigma\in\L(\hil)_+$, and let $\alpha\in \A\x$.

\noindent 1. The function
$\ep\mapsto s(\alpha)Q_{\alpha}\x(\rho\|\sigma+\ep I)$ is monotone decreasing in $\ep\in(0,+\infty)$, and
\begin{align}\label{Q limit}
s(\alpha)Q_{\alpha}\x(\rho\|\sigma)
=
\lim_{\ep\searrow 0}s(\alpha)Q_{\alpha}\x(\rho\|\sigma+\ep I)
=
\sup_{\ep>0}s(\alpha)Q_{\alpha}\x(\rho\|\sigma+\ep I).
\end{align}
The same hold for $s(\alpha)\psi_{\alpha}\x$ and $D_{\alpha}\x$ in place of $s(\alpha)Q_{\alpha}\x$, and also for
$D_{\infty}\x$.

\noindent 2. If $\sigma$ is invertible then the function
$\ep\mapsto Q_{\alpha}\x(\rho+\ep I\|\sigma)$ is monotone increasing in $\ep\in(0,+\infty)$, and
\begin{align}\label{Q limit2}
Q_{\alpha}\x(\rho\|\sigma)
=
\lim_{\ep\searrow 0}Q_{\alpha}\x(\rho+\ep I\|\sigma)
=
\inf_{\ep>0}Q_{\alpha}\x(\rho+\ep I\|\sigma).
\end{align}
The same hold for $\psi_{\alpha}\x$ and $s(\alpha)D_{\alpha}\x$ in place of $Q_{\alpha}\x$.
Moreover, these relations are valid also when $\xx=\bogg$ and $\alpha\in(0,1)$.

\noindent 3. We have
\begin{align}\label{Q limit3}
Q_{\alpha}\x(\rho\|\sigma)
=
\lim_{\ep\searrow 0}\lim_{\delta\searrow 0}Q_{\alpha}\x(\rho+\delta I\|\sigma+\ep I)=
\begin{cases}
\inf_{\ep> 0}\inf_{\delta> 0}Q_{\alpha}\x(\rho+\delta I\|\sigma+\ep I),&\alpha\in(0,1),\\
\sup_{\ep> 0}\inf_{\delta> 0}Q_{\alpha}\x(\rho+\delta I\|\sigma+\ep I),&\alpha>1.
\end{cases}
\end{align}
and the same hold for $\psi_{\alpha}\x$ and $s(\alpha)D_{\alpha}\x$ in place of $Q_{\alpha}\x$.
\end{lemma}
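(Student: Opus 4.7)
The approach rests on a single observation that underlies all three parts: the perturbation families $\{\sigma+\ep I\}_{\ep>0}$ and $\{\rho+\ep I\}_{\ep>0}$ consist of mutually commuting operators. In the common eigenbasis of $\sigma$ (respectively $\rho$), any continuous scalar function $f$ yields $f(\sigma+\ep I) = \sum_i f(\lambda_i+\ep)\ketbra{v_i}{v_i}$, so the PSD-monotonicity of this family in $\ep$ is determined by the \emph{ordinary} scalar monotonicity of $f$, bypassing any need for operator monotonicity. This reduces Part 1 to three case checks.

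First I would address the monotonicity in Part 1. For $\xx=\oldd$ I apply the observation to $f(x)=x^{1-\alpha}$: the scalar function decreases for $\alpha>1$ and increases for $\alpha\in(0,1)$, so after tracing against the PSD operator $\rho^{\alpha}$ the function $s(\alpha)Q_{\alpha}(\rho\|\sigma+\ep I)$ is monotone decreasing in both regimes. For $\xx=\neww$ I would employ the symmetric form $Q_{\alpha}\nw(\rho\|\sigma)=\Tr(\rho^{1/2}\sigma^{(1-\alpha)/\alpha}\rho^{1/2})^{\alpha}$, apply the observation to $f(x)=x^{(1-\alpha)/\alpha}$, use that conjugation by $\rho^{1/2}$ preserves PSD ordering, and close with the monotonicity of $Y\mapsto\Tr Y^{\alpha}$ from Lemma~\ref{lem:trace-mono}. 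For $\xx=\bogg$ with $\alpha>1$ I would invoke the representation \eqref{exp2} (valid because $(\sigma+\ep I)^0=I\ge\rho^0$) with $f(x)=\log x$, note that the negative coefficient $(1-\alpha)<0$ reverses the direction of monotonicity, and apply trace monotonicity of the exponential. Once monotonicity is in hand, the identity $\lim_{\ep\searrow 0}=\sup_{\ep>0}$ is automatic; the extension to $s(\alpha)\psi_{\alpha}\x$ follows from monotonicity of $\log$, the extension to $D_{\alpha}\x$ holds because the factor $\tfrac{1}{\alpha-1}$ in its definition flips sign precisely where $s(\alpha)$ does, and $D_{\infty}\x$ is handled via \eqref{0 infty limits2} by interchanging two suprema.

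The step I expect to demand the most care is identifying the limit $\lim_{\ep\searrow 0}s(\alpha)Q_{\alpha}\x(\rho\|\sigma+\ep I)$ with $s(\alpha)Q_{\alpha}\x(\rho\|\sigma)$ when $\alpha>1$ and $\rho^0\nleq\sigma^0$, since in this regime a naive continuity argument fails and the claimed limit is $+\infty$. My plan is to diagonalize $\sigma$ and split the contribution from $\supp\sigma$ (continuous at $\ep=0$) from the contribution from $\ker\sigma$, where the relevant eigenvalue factor behaves like $\ep^{1-\alpha}$ and diverges. For $\xx=\oldd$ the coefficient of the diverging term is $\Tr(I-\sigma^0)\rho^{\alpha}(I-\sigma^0)$, which is nonzero precisely when $\rho^0\nleq\sigma^0$, matching the value $+\infty$ declared by Lemma~\ref{lemma:Q def}; parallel spectral computations handle the other two types.

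Part 2 is entirely parallel to Part 1 with the roles of the two arguments exchanged; the key difference is that $f(x)=x^{\alpha}$, or $f(x)=x$ inside the symmetrized form, or $f(x)=\log x$ are all monotone increasing functions of the $\ep$-perturbed argument for \emph{every} $\alpha>0$, which is exactly why the case $\xx=\bogg$ extends to $\alpha\in(0,1)$. Since $\sigma$ is invertible, the direct formulas are continuous in $\rho\in\L(\hil)_+$, so the identification $\lim_{\ep\searrow 0}=\inf_{\ep>0}$ immediately yields the claimed value. Part 3 then assembles Parts 1 and 2: for each fixed $\ep>0$ the inner limit in $\delta$ gives $Q_{\alpha}\x(\rho\|\sigma+\ep I)=\inf_{\delta>0}Q_{\alpha}\x(\rho+\delta I\|\sigma+\ep I)$ by Part 2 (applied with the invertible $\sigma+\ep I$), and the outer limit in $\ep$ of this expression returns $Q_{\alpha}\x(\rho\|\sigma)$, which equals $\sup_{\ep>0}$ for $\alpha>1$ and $\inf_{\ep>0}$ for $\alpha\in(0,1)$ by Part 1.
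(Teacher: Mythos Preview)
Your proposal is correct and follows essentially the same approach as the paper's proof: the paper also dismisses the monotonicity claims as ``trivial to verify'', handles the divergent case $\alpha>1$, $\rho^0\nleq\sigma^0$ via the spectral decomposition of $\sigma$ (writing $\rho^0(\log(\sigma+\ep I))\rho^0=\sum_{s>0}\log(s+\ep)\rho^0 P_s\rho^0+(\log\ep)\rho^0(I-\sigma^0)\rho^0$ and exhibiting a diverging lower bound), and assembles Part~3 from Parts~1 and~2 exactly as you do. One small imprecision: your claim that for invertible $\sigma$ ``the direct formulas are continuous in $\rho\in\L(\hil)_+$'' is fine for $\xx=\oldd$ and $\xx=\neww$, but for $\xx=\bogg$ the explicit formula \eqref{exp4} involves $\rho^0$, which jumps as $\ep\searrow 0$; the paper instead obtains the Part~2 limit directly from the definition \eqref{ext2} and Lemma~\ref{lemma:Q def} (noting $I-\sigma^0=0$ when $\sigma$ is invertible), and you should appeal to that rather than to continuity.
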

\begin{proof}
Note that \eqref{Q limit3} is immediate from \eqref{Q limit} and \eqref{Q limit2}, and
the claims about the monotonicity are trivial to verify, and hence the second identities in \eqref{Q limit} and \eqref{Q limit2} follow if we can prove the first identities.
It is easy to see that for an invertible $\sigma$,
\begin{align*}
\lim_{\ep\searrow 0}Q_{\alpha}\x(\rho+\ep I\|\sigma)
=
\lim_{\ep\searrow 0}Q_{\alpha}\x(\rho+\ep (I-\rho^0)\|\sigma),
\end{align*}
and thus \eqref{Q limit2} follows from \eqref{ext2} and Lemma \ref{lemma:Q def}.

We only prove \eqref{Q limit} for $\xx=\bogg$, as the other cases follow by very similar, and slightly simpler arguments.
Let $P_s$ denote the spectral projection of $\sigma$ corresponding to $s\in\bR$; if $s$ is not an eigenvalue of $\sigma$ then
$P_s=0$. Then
\begin{align*}
\rho^0(\logn(\sigma+\ep I))\rho^0=
\sum_{s>0}\rho^0 P_s\rho^0\log(s+\ep)+\rho^0(I-\sigma^0)\rho^0\log\ep.
\end{align*}
If $\rho^0\le\sigma^0$ then $\rho^0(I-\sigma^0)\rho^0=0$, and \eqref{Q limit} follows trivially.
Assume next that $\rho^0\nleq\sigma^0$. Then there exists
some $c>0$ such that $\rho^0(I-\sigma^0)\rho^0\ge cQ$, where $Q:=(\rho^0(I-\sigma^0)\rho^0)^0\ne 0$,
and $Q\le\rho^0$.
Hence, for every $\ep\in(0,1)$,
\begin{align*}
\rho^0(\logn(\sigma+\ep I))\rho^0
\le
\kappa_{\ep}\rho^0+(\log\ep)cQ,
\end{align*}
where $\kappa_{\ep}:=\max\{0,\log(\norm{\sigma}+\ep)\}$.
Let $\rho_{\min}$ denote the smallest non-zero eigenvalue of $\rho$. By the above,
\begin{align*}
\Tr \rho^0 e^{\alpha\logn\rho+(1-\alpha)\rho^0(\logn(\sigma+\ep I))\rho^0}
&\ge
\Tr \rho^0 e^{\alpha(\log\rho_{\min})\rho^0+(1-\alpha) \kappa_{\ep}\rho^0+(1-\alpha)(\log\ep)cQ}\\
&=
\rho_{\min}^{\alpha}e^{\kappa_{\ep}(1-\alpha)}\Tr\left[\ep^{c(1-\alpha)}Q+\rho^0-Q\right],
\end{align*}
and the last quantity goes to $+\infty=Q_{\alpha}\bog(\rho\|\sigma)$ as $\ep\searrow 0$.

For $\alpha\in\A\x$, the assertions about $\psi_{\alpha}\x$ and $D_{\alpha}\x$ follow trivially from \eqref{Q limit} and \eqref{Q limit2}. Finally, $\ep\mapsto D_{\infty}\x(\rho\|\sigma+\ep I)$ is the pointwise limit of monotone functions, and hence is itself monotone, and
\begin{align}
\lim_{\ep\searrow 0}D_{\infty}\x(\rho\|\sigma+\ep I)
&=
\sup_{\ep> 0}D_{\infty}\x(\rho\|\sigma+\ep I)
=
\sup_{\ep> 0}\sup_{\alpha\in(1,+\infty)}D_{\alpha}\x(\rho\|\sigma+\ep I)\nn
&=
\sup_{\alpha\in(1,+\infty)}\sup_{\ep> 0}D_{\alpha}\x(\rho\|\sigma+\ep I)
=
\sup_{\alpha\in(1,+\infty)}D_{\alpha}\x(\rho\|\sigma)
=
D_{\infty}\x(\rho\|\sigma).
\nonumber
\end{align}
\end{proof}

\begin{cor}\label{cor:lsc}
Let $\typ$ be any of the three possible values, and let
$\alpha\in\A\x$, where $\A\x$ is given in \eqref{alpha int}.
For every $\rho\in\L(\hil)_+$, the function
\begin{align*}
\sigma\mapsto s(\alpha)Q_{\alpha}\x(\rho\|\sigma)\ds\ds\ds\text{is lower semicontinuous on} \ds\L(\hil)_+,
\end{align*}
and the same hold for $s(\alpha)\psi_{\alpha}\x$ and $D_{\alpha}\x$ in place of $s(\alpha)Q_{\alpha}\x$.

For every $\sigma\in\L(\hil)_{++}$, the function
\begin{align*}
\rho\mapsto Q_{\alpha}\x(\rho\|\sigma)\ds\ds\ds\text{is upper semicontinuous on} \ds\L(\hil)_+,
\end{align*}
and the same hold for $\psi_{\alpha}\x$ and $s(\alpha)D_{\alpha}\x$ in place of $Q_{\alpha}\x$.
Moreover, $\sigma\mapsto D_{\infty}\x(\rho\|\sigma)$ is also lower semicontinuous on $\L(\hil)_+$.
The assertions about upper semicontinuity are also valid for $\xx=\bogg$ and $\alpha\in(0,1)$.
\end{cor}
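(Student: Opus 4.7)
The plan is to derive the corollary from the monotone approximation identities of Lemma \ref{lemma:ep limit}. The fundamental observation is that when $\sigma$ is replaced by $\sigma+\ep I$, or $\rho$ is replaced by $\rho+\ep I$ in the presence of an already invertible $\sigma$, the resulting expressions depend continuously on the non-regularized operator. Indeed, for positive definite reference operators the explicit formulas \eqref{quasi}--\eqref{exp2} involve only traces, operator powers with positive exponents, and operator exponentials of self-adjoint operators, all of which are continuous by the continuous functional calculus. A mild subtlety arises for $\typ=\bogg$, where \eqref{exp2} contains the projection $P=\rho^0\wedge\sigma^0$; but this projection is kept constant in both regularization schemes, since $(\sigma+\ep I)^0=I$ (so $P=\rho^0$ is fixed when $\rho$ is fixed and $\sigma$ varies) and $(\rho+\ep I)^0=I$ (so $P=\sigma^0$ is fixed when $\sigma$ is fixed and $\rho$ varies).

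For lower semicontinuity in $\sigma$, fix $\rho\in\L(\hil)_+$ and $\alpha\in\A\x$. By part 1 of Lemma \ref{lemma:ep limit},
\begin{align*}
s(\alpha)Q_{\alpha}\x(\rho\|\sigma)=\sup_{\ep>0}\,s(\alpha)Q_{\alpha}\x(\rho\|\sigma+\ep I),
\end{align*}
and each $\sigma\mapsto s(\alpha)Q_{\alpha}\x(\rho\|\sigma+\ep I)$ is continuous on $\L(\hil)_+$ by the opening remark. A pointwise supremum of continuous real-valued functions is lower semicontinuous, giving the claim for $s(\alpha)Q_{\alpha}\x$. Because $\log$ is continuous and monotone increasing, the same conclusion holds for $s(\alpha)\psi_{\alpha}\x=s(\alpha)\log Q_{\alpha}\x$, and since $D_{\alpha}\x(\rho\|\sigma)=|\alpha-1|^{-1}\,s(\alpha)\,\psi_{\alpha}\x(\rho\|\sigma)-|\alpha-1|^{-1}\log\Tr\rho$ is a positive affine function of $s(\alpha)\psi_{\alpha}\x$, lower semicontinuity transfers to $D_{\alpha}\x$.

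For upper semicontinuity in $\rho$, fix $\sigma\in\L(\hil)_{++}$ and $\alpha\in\A\x$ (or $\xx=\bogg$, $\alpha\in(0,1)$). By part 2 of Lemma \ref{lemma:ep limit},
\begin{align*}
Q_{\alpha}\x(\rho\|\sigma)=\inf_{\ep>0}\,Q_{\alpha}\x(\rho+\ep I\|\sigma),
\end{align*}
and each term is continuous in $\rho$ by the opening remark. An infimum of continuous functions is upper semicontinuous, and the argument transfers to $\psi_{\alpha}\x$ and $s(\alpha)D_{\alpha}\x$ in the same manner as above.

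Finally, for $\sigma\mapsto D_{\infty}\x(\rho\|\sigma)$ use \eqref{0 infty limits2} to write
\begin{align*}
D_{\infty}\x(\rho\|\sigma)=\sup_{\alpha>1}D_{\alpha}\x(\rho\|\sigma),
\end{align*}
which is a pointwise supremum of lower semicontinuous functions and thus itself lower semicontinuous. The main work has effectively already been carried out in Lemma \ref{lemma:ep limit}: once those monotone approximation identities (and the care they take of the possibly infinite values arising from support conditions) are available, semicontinuity is an immediate byproduct, and there is no serious obstacle remaining.
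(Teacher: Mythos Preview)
Your proof is correct and follows essentially the same approach as the paper: express $s(\alpha)Q_{\alpha}\x(\rho\|\sigma)$ (resp.\ $Q_{\alpha}\x(\rho\|\sigma)$) via Lemma \ref{lemma:ep limit} as a supremum (resp.\ infimum) over $\ep>0$ of functions that are continuous because the regularized second (resp.\ first) argument is invertible, and then handle $D_{\infty}\x$ as a supremum of lower semicontinuous functions. Your added remarks on why the $\ep$-regularized expressions are continuous, in particular the observation that the projection $P$ in \eqref{exp2} stays fixed under each regularization, make explicit a point the paper leaves implicit.
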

\begin{proof}
Let $\alpha\in\A\x$ be fixed.
For every $\ep>0$, $\sigma\mapsto s(\alpha)Q_{\alpha}\x(\rho\|\sigma+\ep)$ is continuous. Hence, by Lemma \ref{lemma:ep limit},
the function $\sigma\mapsto s(\alpha)Q_{\alpha}\x(\rho\|\sigma)$ is the supremum of continuous functions, and thus is itself lower
semicontinuous. Similarly, if $\sigma\in\L(\hil)_{++}$ then $\rho\mapsto Q_{\alpha}\x(\rho+\ep I\|\sigma)$ is continuous for
every $\ep>0$, and hence, by Lemma \ref{lemma:ep limit}, the function $\rho\mapsto Q_{\alpha}\x(\rho\|\sigma)$ is the infimum
of continuous functions and thus upper semicontinuous. The assertions about $\psi_{\alpha}\x$ and
$D_{\alpha}\x$ follow immediately.
In particular, 
$\sigma\mapsto D_{\infty}\x(\rho\|\sigma)$
is the supremum
of lower semicontinuous functions in $\sigma$, and hence is itself lower semicontinuous.
\end{proof}

\section{R\'enyi capacities}
\label{sec:Renyi capacities}

The celebrated Holevo-Schumacher-Westmoreland theorem \cite{H,SW} states that the
asymptotic classical information transmission capacity of a quantum channel under the constraint of asymptotically vanishing 
error probability coincides with its Holevo capacity; see Section \ref{sec:cl-q sc} for details. Based on results in classical 
information theory, it is natural to expect that a more refined description of the trade-off between the coding rate and the 
decoding error would involve R\'enyi generalizations of the Holevo capacity. In the main result of our paper, Theorem \ref{thm:sc exponent}, we will show that this is indeed the case for the strong converse exponent of classical-quantum channels.
In this section we collect the necessary definitions and technicalities that we will need later in the proof of our main result.

\subsection{R\'enyi mutual informations and equivalent definitions of the R\'enyi capacities}

For a quantum channel $W:\,\X\to\S(\hil)$ and a finitely supported probability distribution $P\in\P_f(\X)$, 
the corresponding 
\ki{Holevo quantity} $\chi(W,P)$ is defined as the mutual information in the classical-quantum state $\ext{W}(P)$, expressable in the following equivalent ways:
\begin{align}
\chi(W,P)
&:=
D(\ext{W}(P)\|P\otimes W(P))
=
\inf_{\sigma\in\S(\hil)}D(\ext{W}(P)\|P\otimes \sigma)\label{Holevo1}\\
&=
\sum_{x\in\X}P(x)D(W(x)\|W(P))=
\inf_{\sigma\in\S(\hil)}\sum_{x\in\X}P(x)D(W(x)\|\sigma).\label{Holevo2}
\end{align}
(Recall that in $P\otimes\sigma$ in \eqref{Holevo1}, $P$ stands for $\sum_{x\in\X}P(x)\pr{x}$, the first marginal of $\ext{W}(P)$ \eqref{marginals}.)
It is also customary to call $\chi(W,P)$ the Holevo quantity of the ensemble of states $\{\rho,P(\rho)\}_{\rho\in\supp P}$.

The \ki{Holevo capacity} $\chi(W)$ of the channel $W$ is then defined as the supremum of all such mutual informations over all finitely supported probability distributions at the input of the channel:
\begin{align}
\chi(W):=\sup_{P\in\P_f(\X)}\chi(W,P)&=
\sup_{P\in\P_f(\X)}\inf_{\sigma\in\S(\hil)}D(\ext{W}(P)\|P\otimes \sigma)\label{Holevo3}\\
&=\sup_{P\in\P_f(\X)}\inf_{\sigma\in\S(\hil)}\sum_{x\in\X}P(x)D(W(x)\|\sigma).\label{Holevo4}
\end{align}

When the relative entropy is replaced with some R\'enyi divergence, the expressions in \eqref{Holevo1}-\eqref{Holevo2} may not coincide anymore, and therefore we have various options to formally define the R\'enyi analogues of the Holevo quantity. We will consider the following two options that will both turn out to be useful for later applications:
\begin{align}
\chi_{\alpha,1}\x(W,P)&:=\inf_{\sigma\in\S(\hil)}D_{\alpha}\x(\ext{W}(P)\|P\otimes \sigma),\ds\ds\ds\ds\ds\ds
&\alpha\in[0,+\infty].\label{chi def}\\
\chi_{\alpha,2}\x(W,P)&:=\inf_{\sigma\in\S(\hil)}\sum_{x\in\X}P(x)D_{\alpha}\x(W(x)\|\sigma),\ds\ds\ds\ds\ds\ds
&\alpha\in[0,+\infty].\label{chi def2}
\end{align}
These are exact analogues of the second formulas in \eqref{Holevo1} and \eqref{Holevo2}, respectively.
Note that \eqref{chi def} is a notion of R\'enyi mutual information in the classical-quantum state $\ext{W}(P)$; see, e.g.~\cite{HT14} for an operational interpretation of this quantity in the context of hypothesis testing (for $\xx=\oldd$ and $\xx=\neww$).
A straightforward computation verifies that for all $\alpha$,
\begin{align*}
Q_{\alpha}\x(\ext{W}(P)\|P\otimes \sigma)
=
\sum_{x\in\X}P(x)Q_{\alpha}\x(W(x)\|\sigma),
\end{align*}
and hence
\begin{align}\label{chi def3}
\chi_{\alpha,1}\x(W,P)=\inf_{\sigma\in\S(\hil)}\frac{1}{\alpha-1}\log \sum_{x\in\X}P(x)Q_{\alpha}\x(W(x)\|\sigma).
\end{align}

We define the \ki{R\'enyi capacities} of a channel $W$, corresponding to the R\'enyi $\alpha$-divergence $D_{\alpha}\x$, as
\begin{align}\label{alpha cap def}
\chi_{\alpha,i}\x(W):=
\sup_{P\in\P_f(\X)}\chi_{\alpha,i}\x(W,P),\ds\ds\ds i=1,2.
\end{align}
Although in general the idenitites in \eqref{Holevo1}--\eqref{Holevo2} do not extend to the R\'enyi quantities with 
$\alpha\ne 1$,
we will show in Proposition \ref{prop:cap equal} below that after the optimization over all finitely supported probability distributions, the possible differences disappear, and we have
\begin{align}\label{cap equality}
\chi_{\alpha}\x(W):=\chi_{\alpha,1}\x(W)=\chi_{\alpha,2}\x(W),
\end{align}
at least for certain pairs of $((t),\alpha)$, including those important for us later. 
Thus, in these cases we can uniquely define the R\'enyi capacity of a channel $W$.
We will show the equality in \eqref{cap equality} by showing that both $\chi_{\alpha,1}\x(W)$ and $\chi_{\alpha,2}\x(W)$
are equal to 
the \ki{R\'enyi divergence radius} of the image of the channel, defined as
\begin{align}\label{radius def}
R_{\alpha}\x(W):=\inf_{\sigma\in\S(\hil)}\sup_{x\in\X}D_{\alpha}\x(W(x)\|\sigma).
\end{align}

The equality \eqref{cap equality} will play an important role in proving our main theorem.
Indeed, following the Dueck-K\"orner argument for classical channels, we naturally get a bound on the strong converse exponent in terms of the second type R\'enyi-Holevo quantities, more precisely,  the
$\chi_{\alpha,2}\bog$ quantities; see Theorems \ref{thm:sc upper}, \ref{thm:achive}, and \ref{thm:conversion}. To convert that to the correct form involving the 
$\chi_{\alpha}\nw$ capacities, we use additivity properties that are only known for the first type 
R\'enyi-Holevo quantities, more precisely for 
$\chi_{\alpha,1}$ and $\chi_{\alpha,1}\nw$; see Lemma \ref{lemma:chi additivity}, Corollary \ref{cor:pinched chi limit},
and the last part of the proof of Theorem \ref{thm:sc achievability}.
The identity in \eqref{cap equality} tells that we can freely switch between the two types of R\'enyi-Holevo quantities after 
optimization over the input probability distributions.

\begin{rem}
Note that for $\alpha\in\A\x$ (defined in \eqref{alpha int}), the infima in \eqref{chi def}, \eqref{chi def2} and \eqref{radius def} can be replaced with minima, due to Corollary \ref{cor:lsc}.
\end{rem}

\begin{prop}\label{prop:cap equal}
We have
\begin{align}
R_{\alpha}\x(W)&=
\sup_{\ep>0}\inf_{\sigma\in\S(\hil)}\sup_{x\in\X}D_{\alpha}\x(W(x)\|\sigma+\ep I)\label{cap eq1}\\
&=\sup_{P\in\P_f(\X)}\chi_{\alpha,2}\x(W,P),\label{cap eq3}\\
&=
\sup_{P\in\P_f(\X)}\chi_{\alpha,1}\x(W,P)\label{cap eq2}\\
&=
\chi_{\alpha}\x(W).\label{cap eq4}
\end{align}
for $\xx=\oldd$ and $\alpha\in(0,2]$, for $\xx=\neww$ and $\alpha\in[1/2,+\infty)$, and for
$\xx=\bogg$ and $\alpha\in(1,+\infty)$. Moreover, the expressions in \eqref{cap eq3} and \eqref{cap eq2}
are also equal to each other for $\xx=\bogg$ and $\alpha\in(0,1)$.
\end{prop}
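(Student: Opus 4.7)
I plan to prove the four equalities by arguing that each reduces to a minimax interchange. The identity \eqref{cap eq2}$=$\eqref{cap eq4} holds by the definition \eqref{alpha cap def}, so three genuine steps remain.

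First, for $R_{\alpha}\x(W) = \sup_P \chi_{\alpha, 2}\x(W, P)$, I use $\sup_{x \in \X} a(x) = \sup_{P \in \P_f(\X)} \sum_x P(x) a(x)$ to rewrite $R_{\alpha}\x(W) = \inf_\sigma \sup_P f(\sigma, P)$ with $f(\sigma, P) := \sum_x P(x) D_{\alpha}\x(W(x)\|\sigma)$. The function $f$ is linear (hence concave) in $P$, convex in $\sigma$ by Proposition~\ref{prop:log convexity}, and lower semicontinuous in $\sigma$ by Corollary~\ref{cor:lsc}, all in the ranges claimed. Since $\S(\hil)$ is compact and convex, Kneser-Fan (Lemma~\ref{lemma:KF minimax}, conditions (i)-(ii)) swaps $\inf$ and $\sup$, yielding \eqref{cap eq3}$=R_{\alpha}\x(W)$.

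For the $\sup_\ep$ identity \eqref{cap eq1}, I set $g(\sigma, \ep) := \sup_x D_\alpha\x(W(x)\|\sigma + \ep I)$. Since $\sigma + \ep I$ is invertible, $\sigma \mapsto D_\alpha\x(W(x)\|\sigma + \ep I)$ is continuous on $\S(\hil)$, so $g(\cdot, \ep)$ is lower semicontinuous. By Lemma~\ref{lemma:ep limit}, $\ep \mapsto D_\alpha\x(W(x)\|\sigma + \ep I)$ is monotone decreasing, and hence so is $g(\sigma, \cdot)$. Lemma~\ref{lemma:minimax2} (minimax on an ordered set) then produces $\inf_\sigma \sup_\ep g = \sup_\ep \inf_\sigma g$, and the left-hand side equals $R_\alpha\x(W)$ after interchanging the two suprema and invoking Lemma~\ref{lemma:ep limit} once more.

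Finally, to connect $\chi_{\alpha, 2}\x$ with $\chi_{\alpha, 1}\x$ I split on the sign of $\alpha - 1$. For $\alpha > 1$, Jensen's inequality applied to the concave $\log$ yields $\chi_{\alpha, 1}\x(W, P) \ge \chi_{\alpha, 2}\x(W, P)$ pointwise in $P$, while the trivial bound $\sum_x P(x) Q_\alpha\x(W(x)\|\sigma) \le \sup_x Q_\alpha\x(W(x)\|\sigma)$ (followed by $\log$, division by $\alpha - 1 > 0$, and $\inf_\sigma$) gives $\chi_{\alpha, 1}\x(W, P) \le R_\alpha\x(W)$; these sandwich with the first step to close the case. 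For $\alpha < 1$ with $\x \in \{\oldd, \neww\}$, I express $\sup_P \chi_{\alpha, 1}\x = \frac{1}{\alpha - 1} \log \inf_P \sup_\sigma F(P, \sigma)$, where $F(P, \sigma) := \sum_x P(x) Q_\alpha\x(W(x)\|\sigma)$ is linear in $P$, jointly concave in $\sigma$ (Theorem~\ref{thm:monotonicity}), and upper semicontinuous in $\sigma$ (Corollary~\ref{cor:lsc}); after a sign flip, Kneser-Fan applies and gives $\inf_P \sup_\sigma F = \sup_\sigma \inf_P F = \sup_\sigma \min_x Q_\alpha\x(W(x)\|\sigma)$, which matches $R_\alpha\x = \frac{1}{\alpha - 1} \log \sup_\sigma \min_x Q_\alpha\x$. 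The residual case $\x = \bogg$, $\alpha \in (0, 1)$ is the main obstacle: Corollary~\ref{cor:lsc} no longer supplies semicontinuity of $Q_\alpha\bog(\rho\|\cdot)$, and the map $P \mapsto D_\alpha\bog(\ext{W}(P)\|P\otimes\sigma)$ turns out to be convex rather than concave, so minimax cannot be invoked directly. Here I intend to combine the pointwise bound $\chi_{\alpha, 1}\bog \le \chi_{\alpha, 2}\bog$ (from Jensen) with a saddle-point analysis at an optimum $(P^*, \sigma^*)$ of $\chi_{\alpha, 2}\bog$, using the Hellinger-arc variational formulas of Theorem~\ref{thm:e-geodesic} and Proposition~\ref{prop:var} to exhibit a tilted input distribution that closes the Jensen gap and establishes the matching lower bound on $\sup_P \chi_{\alpha, 1}\bog$.
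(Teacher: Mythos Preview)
Your overall strategy is sound and shares the minimax backbone with the paper, but the execution differs in two places that deserve comment.

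\textbf{The $+\infty$ issue.} In your first step you apply Lemma~\ref{lemma:KF minimax} directly to $f(\sigma,P)=\sum_xP(x)D_{\alpha}\x(W(x)\|\sigma)$ on the compact set $\S(\hil)$. For $\alpha>1$ this function equals $+\infty$ whenever some $W(x)^0\nleq\sigma^0$, whereas Lemma~\ref{lemma:KF minimax} is stated for real-valued $f$. The paper avoids this entirely by first inserting the $\ep$-regularization: it works with $s(\alpha)Q_{\alpha}\x(W(x)\|\sigma+\ep I)$ (finite and continuous in $\sigma$ for every $\ep>0$), uses Lemma~\ref{lemma:minimax2} to move the $\sup_{\ep}$ outside the $\inf_{\sigma}$, and only then applies Kneser--Fan. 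This single device lets the paper treat all matching $(\typ,\alpha)$ pairs uniformly, with no case split on the sign of $\alpha-1$. Your approach can be repaired the same way, but as written it leans on a version of the minimax theorem that the paper does not supply.

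\textbf{Where your route is different and nicer.} For $\alpha>1$ your Jensen-plus-sandwich argument
\[
\chi_{\alpha,2}\x(W,P)\le\chi_{\alpha,1}\x(W,P)\le R_{\alpha}\x(W)
\]
is a clean shortcut the paper does not take: the paper instead shows $\sup_P\chi_{\alpha,1}\x$ and $\sup_P\chi_{\alpha,2}\x$ are each equal to \eqref{cap eq1} by two parallel minimax computations (one with $s(\alpha)Q_{\alpha}\x$, one with $D_{\alpha}\x$). Your argument gives the same conclusion with less machinery once the $R_{\alpha}\x=\sup_P\chi_{\alpha,2}\x$ step is secured.

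\textbf{The residual case.} For $\xx=\bogg$ and $\alpha\in(0,1)$ your proposal is only a sketch: the ``tilted input distribution'' construction via Theorem~\ref{thm:e-geodesic} and Proposition~\ref{prop:var} is not spelled out, and it is not clear that it closes the gap. The paper does not single this case out either; it relies on the same $\ep$-regularized minimax argument, using that Proposition~\ref{prop:log convexity} gives convexity of $\sigma\mapsto D_{\alpha}\bog(\rho\|\sigma)$ and of $\sigma\mapsto s(\alpha)Q_{\alpha}\bog(\rho\|\sigma)$ for all $\alpha\in(0,+\infty)$. If you want a self-contained argument here, the simplest route is to adopt the paper's $\ep$-device rather than invent a new saddle-point analysis.
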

\begin{proof}
Let us fix a matching pair $\typ$ and $\alpha$ as in the statement of the Theorem. We assume that $\alpha\ne 1$, since that case is already known \cite{OPW,SW}.
By definition,
\begin{align}
R_{\alpha}\x(W)&=
\inf_{\sigma\in\S(\hil)}\sup_{x\in\X}D_{\alpha}\x(W(x)\|\sigma)
=
\inf_{\sigma\in\S(\hil)}\sup_{x\in\X}\sup_{\ep>0}D_{\alpha}\x(W(x)\|\sigma+\ep I)\label{cap eq ep}\\
&=\inf_{\sigma\in\S(\hil)}\sup_{\ep>0}\sup_{x\in\X}D_{\alpha}\x(W(x)\|\sigma+\ep I),\nonumber
\end{align}
where the last expression in \eqref{cap eq ep} follows from Lemma \ref{lemma:ep limit}. Note that
$\ep\mapsto\sup_{x\in\X}D_{\alpha}\x(W(x)\|\sigma+\ep I)$ is monotone decreasing for every
$\sigma\in\S(\hil)$, due to Lemma \ref{lemma:ep limit}. On the other hand, for every $\ep>0$ and
$x\in\X$, $\sigma\mapsto D_{\alpha}\x(W(x)\|\sigma+\ep I)$ is continuous, and hence
$\sigma\mapsto \sup_{x\in\X}D_{\alpha}\x(W(x)\|\sigma+\ep I)$ is lower semi-continuous on the compact set $\S(\hil)$.
Hence, by Lemma \ref{lemma:minimax2},
\begin{align*}
R_{\alpha}\x(W)&=
\sup_{\ep>0}\inf_{\sigma\in\S(\hil)}\sup_{x\in\X}D_{\alpha}\x(W(x)\|\sigma+\ep I),
\end{align*}
proving \eqref{cap eq1}.

By Lemma \ref{lemma:ep limit},
\begin{align*}
\sup_{P\in\P_f(\X)}\chi_{\alpha,1}\x(W,P)=
\frac{1}{\alpha-1}\log s(\alpha)\sup_{P\in\P_f(\X)}\inf_{\sigma\in\S(\hil)}\sup_{\ep>0}
s(\alpha)\sum_{x\in\X}P(x)Q_{\alpha}\x(W(x)\|\sigma+\ep I),
\end{align*}
where $s(\alpha)$ is given in \eqref{s alpha}.
Note that $s(\alpha)\sum_{x\in\X}P(x)Q_{\alpha}\x(W(x)\|\sigma+\ep I)$ is monotone decreasing in $\ep$ and continuous in $\sigma$, and hence, by Lemma \ref{lemma:minimax2},
\begin{align*}
&\sup_{P\in\P_f(\X)}\inf_{\sigma\in\S(\hil)}\sup_{\ep>0}s(\alpha)\sum_{x\in\X}P(x)Q_{\alpha}\x(W(x)\|\sigma+\ep I)\\
&\ds\ds=
\sup_{P\in\P_f(\X)}\sup_{\ep>0}\inf_{\sigma\in\S(\hil)}s(\alpha)\sum_{x\in\X}P(x)Q_{\alpha}\x(W(x)\|\sigma+\ep I)\nn
&\ds\ds=
\sup_{\ep>0}\sup_{P\in\P_f(\X)}\inf_{\sigma\in\S(\hil)}s(\alpha)\sum_{x\in\X}P(x)Q_{\alpha}\x(W(x)\|\sigma+\ep I),
\end{align*}
where the second equality is trivial.
For every $\ep>0$, $s(\alpha)\sum_{x\in\X}P(x)Q_{\alpha}\x(W(x)\|\sigma+\ep I)$ is convex and continuous in $\sigma$
due to Proposition \ref{prop:log convexity}, and it is affine (and thus concave) in $P$. Hence, by Lemma \ref{lemma:KF minimax},
\begin{align*}
\sup_{P\in\P_f(\X)}\inf_{\sigma\in\S(\hil)}s(\alpha)\sum_{x\in\X}P(x)Q_{\alpha}\x(W(x)\|\sigma+\ep I)
&=
\inf_{\sigma\in\S(\hil)}\sup_{P\in\P_f(\X)}s(\alpha)\sum_{x\in\X}P(x)Q_{\alpha}\x(W(x)\|\sigma+\ep I)\\
&=
\inf_{\sigma\in\S(\hil)}\sup_{x\in\X}s(\alpha)Q_{\alpha}\x(W(x)\|\sigma+\ep I),
\end{align*}
where the second equality is trivial. This proves the equality of \eqref{cap eq1} and \eqref{cap eq2},
and the equality of \eqref{cap eq2} and \eqref{cap eq4} is by definition.

Finally, the expression in \eqref{cap eq3} can be written as
\begin{align*}
\sup_{P\in\P_f(\X)}\inf_{\sigma\in\S(\hil)}\sup_{\ep>0}\sum_{x\in\X}P(x)D_{\alpha}\x(W(x)\|\sigma+\ep I)
&=
\sup_{P\in\P_f(\X)}\sup_{\ep>0}\inf_{\sigma\in\S(\hil)}\sum_{x\in\X}P(x)D_{\alpha}\x(W(x)\|\sigma+\ep I)\nn
&=
\sup_{\ep>0}\sup_{P\in\P_f(\X)}\inf_{\sigma\in\S(\hil)}\sum_{x\in\X}P(x)D_{\alpha}\x(W(x)\|\sigma+\ep I).
\end{align*}
where the first expression is due to Lemma \ref{lemma:ep limit}. The second expression follows from
the continuity of $D_{\alpha}\x(W(x)\|\sigma+\ep I)$ in $\sigma$ and its monotonicity in $\ep$,
due to Lemma \ref{lemma:minimax2}. The third expression follows trivially from the second.
Using now the convexity of $D_{\alpha}\x(W(x)\|\sigma+\ep I)$ in $\sigma$, due to Proposition
\ref{prop:log convexity}, and
following the same argument as in the previous paragraph, we see that the last expression above
is equal to \eqref{cap eq1}.
\end{proof}

\begin{rem}
Proofs of some the above identities for various values of $\xx$ and $\alpha$ can be found scattered in the literature; see, e.g.~\cite{OPW,SW,WWY,MH,KW}. The above Proposition unifies and extends all such previous results; the only exception is the equality
$R_{\alpha}\old(W)=\sup_{P\in\P_f(\X)}\chi_{\alpha,1}\old(W,P)$ for $\alpha>2$, that was shown in \cite{KW}, but is not covered
by the above Proposition.
\end{rem}
\begin{rem}
As in the classical case (see, e.g.~\cite{Csiszar}), our proof of Proposition \ref{prop:cap equal} is based on minimax arguments. However, unlike in the proof in 
\cite{Csiszar}, where $\X$ is assumed to be finite, we cannot assume the compactness of $\P_f(\X)$, and hence we need to use the 
compactness of $\S(\hil)$ instead. This poses a technical difficulty, as the R\'enyi divergences can take infinite values when
the reference state is not invertible, in which case the usual minimax theorems may not be applicable. We introduced the 
intermediate step in \eqref{cap eq1} to circumvent this difficulty.
One might also extend the definitions of the R\'enyi-Holevo quantities using arbitrary (not necessarily finitely supported) 
probability measures on the image of the channel and work with the compact (w.r.t.~the weak topology) set of such probability measures, as was done in \cite{MH}. 
However, the problem of infinite values persists in this case, and some step similar to the one in \eqref{cap eq1} is still necessary. Moreover, such a treatment 
requires mathematically more involved arguments, which we could avoid in the above proof.
\end{rem}

\begin{cor}\label{cor:infty cap}
For $\xx=\neww$ and $\xx=\bogg$,
\begin{align}
\chi_{\infty}\x(W)&=
\sup_{P\in\P_f(\X)}\chi_{\infty,1}\x(W,P)
=
\sup_{P\in\P_f(\X)}\chi_{\infty,2}\x(W,P)=
\sup_{\alpha\in(1,+\infty)}\chi_{\alpha}\x(W)
.\label{infty cap eq}
\end{align}
\end{cor}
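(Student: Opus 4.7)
The plan is to reduce the identities at $\alpha=+\infty$ to the corresponding identities at finite $\alpha>1$ from Proposition \ref{prop:cap equal}, by exploiting the monotonicity of $D_{\alpha}\x$ in $\alpha$ together with Lemma \ref{lemma:minimax2}. Note that the first equality $\chi_{\infty}\x(W)=\sup_{P\in\P_f(\X)}\chi_{\infty,1}\x(W,P)$ is simply the definition \eqref{alpha cap def} at $\alpha=+\infty$, so the real work is to establish the remaining two equalities.

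First I would show that for every fixed $P\in\P_f(\X)$,
\begin{align*}
\chi_{\infty,1}\x(W,P)=\sup_{\alpha\in(1,+\infty)}\chi_{\alpha,1}\x(W,P),
\qquad
\chi_{\infty,2}\x(W,P)=\sup_{\alpha\in(1,+\infty)}\chi_{\alpha,2}\x(W,P).
\end{align*}
For the first identity, write $D_{\infty}\x(\ext{W}(P)\|P\otimes\sigma)=\sup_{\alpha>1}D_{\alpha}\x(\ext{W}(P)\|P\otimes\sigma)$ using \eqref{0 infty limits2}, so that $\chi_{\infty,1}\x(W,P)=\inf_{\sigma\in\S(\hil)}\sup_{\alpha>1}D_{\alpha}\x(\ext{W}(P)\|P\otimes\sigma)$. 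Apply Lemma \ref{lemma:minimax2} on the compact set $\S(\hil)$ and the ordered set $(1,+\infty)$: the function $\sigma\mapsto D_{\alpha}\x(\ext{W}(P)\|P\otimes\sigma)$ is lower semi-continuous by Corollary \ref{cor:lsc} (applicable for both $\xx=\neww$ with $\alpha>1$ and $\xx=\bogg$ with $\alpha>1$), and it is monotone increasing in $\alpha$ by Lemma \ref{lemma: conv mon}. Swapping inf and sup then gives the claimed identity. For $\chi_{\infty,2}\x$, I would first use that $P$ is finitely supported to pull the sup over $\alpha>1$ outside the finite sum (monotone convergence for a finite monotone sum), obtaining $\sum_{x}P(x)D_{\infty}\x(W(x)\|\sigma)=\sup_{\alpha>1}\sum_{x}P(x)D_{\alpha}\x(W(x)\|\sigma)$, and then apply Lemma \ref{lemma:minimax2} to the same effect.

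Taking $\sup_{P\in\P_f(\X)}$ on both sides of the two identities above and interchanging the two independent suprema, I obtain
\begin{align*}
\sup_{P\in\P_f(\X)}\chi_{\infty,i}\x(W,P)
=\sup_{\alpha>1}\sup_{P\in\P_f(\X)}\chi_{\alpha,i}\x(W,P),\qquad i=1,2.
\end{align*}
Proposition \ref{prop:cap equal}, applied for $\xx=\neww$ on $[1/2,+\infty)$ and for $\xx=\bogg$ on $(1,+\infty)$, tells me that the inner supremum equals $\chi_{\alpha}\x(W)$ in both cases $i=1,2$. Therefore both outer suprema coincide with $\sup_{\alpha>1}\chi_{\alpha}\x(W)$, which is exactly the remaining equalities in \eqref{infty cap eq}.

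The only point requiring care is justifying the minimax swap, and this is where I expect to spend most attention: one must verify that lower semi-continuity in $\sigma$ holds for the relevant $(\xx,\alpha)$ pairs (which is precisely what Corollary \ref{cor:lsc} gives for $\alpha\in\A\x$, the issue being that $\A\bog=(1,+\infty)$ excludes $\alpha\le 1$, but since we are taking $\sup_{\alpha>1}$ this is consistent), and that monotonicity in $\alpha$ holds on $(1,+\infty)$, which is supplied by Lemma \ref{lemma: conv mon}. Beyond that, the argument is a routine bookkeeping of suprema.
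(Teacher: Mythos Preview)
Your proposal is correct and follows essentially the same route as the paper: reduce the $\alpha=+\infty$ identities to the finite-$\alpha$ identities of Proposition \ref{prop:cap equal} via the pointwise identity $\chi_{\infty,i}\x(W,P)=\sup_{\alpha>1}\chi_{\alpha,i}\x(W,P)$, then interchange the two suprema. The only difference is that you spell out the justification of this pointwise identity through Lemma \ref{lemma:minimax2} and Corollary \ref{cor:lsc}, whereas the paper invokes it tersely (the detailed argument appears in the proof of \eqref{chi limit2} in the lemma immediately following the corollary).
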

\begin{proof}
The first equality in \eqref{infty cap eq} is by definition, and
\begin{align*}
\sup_{P\in\P_f(\X)}\chi_{\infty,1}\x(W,P)
&=
\sup_{P\in\P_f(\X)}\sup_{\alpha\in(1,+\infty)}\chi_{\alpha,1}\x(W,P)
=
\sup_{\alpha\in(1,+\infty)}\sup_{P\in\P_f(\X)}\chi_{\alpha,1}\x(W,P)\\
&=
\sup_{\alpha\in(1,+\infty)}\chi_{\alpha}\x(W)\\
&=
\sup_{\alpha\in(1,+\infty)}\sup_{P\in\P_f(\X)}\chi_{\alpha,2}\x(W,P)
=
\sup_{P\in\P_f(\X)}\sup_{\alpha\in(1,+\infty)}\chi_{\alpha,2}\x(W,P)\\
&=
\sup_{P\in\P_f(\X)}\chi_{\infty,2}\x(W,P),
\end{align*}
where we used Proposition \ref{prop:cap equal}.
\end{proof}
\medskip

\begin{lemma}
For all three values of $\xx$, any $P\in\P_f(\X)$, and $i=1,2$, $\alpha\mapsto\chi_{\alpha,i}\x(W,P)$ is monotone increasing on $(0,+\infty]$, and
\begin{align}
\chi(W,P)&=\lim_{\alpha\searrow 1}\chi_{\alpha,i}\x(W,P)=\inf_{\alpha>1}\chi_{\alpha,i}\x(W,P),\label{chi limit1}\\
\chi_{\infty,i}\x(W,P)&=\lim_{\alpha\nearrow +\infty}\chi_{\alpha,i}\x(W,P)=\sup_{1<\alpha<+\infty}\chi_{\alpha,i}\x(W,P).\label{chi limit2}
\end{align}
Similarly, $\alpha\mapsto\chi_{\alpha}\x(W)$ is monotone increasing on $(0,+\infty]$, and
\begin{align}
\chi_{\infty}\x(W)&=\lim_{\alpha\nearrow +\infty}\chi_{\alpha}\x(W)=\sup_{1<\alpha<+\infty}\chi_{\alpha}\x(W),
\label{chi limit4}\\
\chi(W)&=\lim_{\alpha\searrow 1}\chi_{\alpha}\x(W)=\inf_{1<\alpha<+\infty}\chi_{\alpha}\x(W).\label{chi limit5}
\end{align}
\end{lemma}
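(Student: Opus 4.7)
The plan is to handle $\chi_{\alpha,i}\x(W,P)$ first, and then lift the conclusions to $\chi_{\alpha}\x(W)$. Monotonicity in $\alpha$ is immediate: by Lemma \ref{lemma: conv mon}, $\alpha\mapsto D_{\alpha}\x(\rho\|\sigma)$ is monotone increasing, and this monotonicity is preserved under infima over $\sigma$, finite sums over $x$, and suprema over $P$, yielding all three monotonicity claims at once. For the limit as $\alpha\searrow 1$ of $\chi_{\alpha,i}\x(W,P)$, I would use monotonicity to replace the limit by $\inf_{\alpha>1}$, commute this infimum with the infimum over $\sigma$ (and, for $i=2$, with the finite sum over $x$), apply Lemma \ref{lemma:limit at 1} pointwise to pass from $D_{\alpha}\x$ to the relative entropy, and then recognize the result as $\chi(W,P)$ via \eqref{Holevo1}--\eqref{Holevo2}.

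For the limit as $\alpha\to+\infty$ of $\chi_{\alpha,i}\x(W,P)$, write $F_{\alpha}(\sigma)$ for the function of $\sigma$ whose infimum defines $\chi_{\alpha,i}\x(W,P)$. This family is monotone increasing in $\alpha$, each $F_{\alpha}$ is lower semicontinuous in $\sigma$ by Corollary \ref{cor:lsc}, and $F_{\infty}=\sup_{\alpha}F_{\alpha}$ is the corresponding function built from $D_{\infty}\x$. The bound $\sup_{\alpha}\chi_{\alpha,i}\x\le\chi_{\infty,i}\x$ is trivial. For the reverse I would choose near-minimizers $\sigma_n\in\S(\hil)$ with $F_n(\sigma_n)\le\chi_{n,i}\x(W,P)+1/n$, extract a convergent subsequence $\sigma_{n_k}\to\sigma^*$ by compactness of $\S(\hil)$, and, for each fixed $m$, combine $F_m(\sigma_{n_k})\le F_{n_k}(\sigma_{n_k})$ (valid once $n_k\ge m$) with lower semicontinuity of $F_m$ to conclude $F_m(\sigma^*)\le\sup_{\alpha}\chi_{\alpha,i}\x(W,P)$; letting $m\to+\infty$ then yields $\chi_{\infty,i}\x(W,P)\le F_{\infty}(\sigma^*)\le\sup_{\alpha}\chi_{\alpha,i}\x(W,P)$.

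For $\chi_{\alpha}\x(W)$, monotonicity and the limit at $\alpha\to+\infty$ pass through $\sup_{P}$ by interchanging two suprema and using the first part. The delicate point is the limit as $\alpha\searrow 1$: while $\chi(W)\le\inf_{\alpha>1}\chi_{\alpha}\x(W)$ follows from monotonicity, the reverse direction is a sup-inf interchange that no direct minimax theorem can handle, because $\P_{f}(\X)$ is not compact. I would first settle $\xx=\oldd$ by invoking Proposition \ref{prop:cap equal}, which for $\alpha\in(1,2]$ gives $\chi_{\alpha}\old(W)=\inf_{\sigma}\sup_{x}D_{\alpha}(W(x)\|\sigma)$. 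For any $\ep>0$, pick an invertible $\sigma^*$ with $\sup_{x}D(W(x)\|\sigma^*)\le\chi(W)+\ep$, using \eqref{Holevo5} together with a small mixing $\sigma\mapsto(1-\lambda)\sigma+\lambda I/d$ controlled via Lemma \ref{cor:monotonicity}. The function $\rho\mapsto D_{\alpha}(\rho\|\sigma^*)$ is then continuous on the compact set $\S(\hil)$ and decreases monotonically to $D(\rho\|\sigma^*)$ as $\alpha\searrow 1$, so Dini's theorem applied on the closure of $\{W(x):x\in\X\}\subset\S(\hil)$ gives uniform convergence, hence $\chi_{\alpha}\old(W)\le\sup_{x}D_{\alpha}(W(x)\|\sigma^*)\le\chi(W)+2\ep$ for $\alpha$ close enough to $1$. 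The cases $\xx=\neww$ and $\xx=\bogg$ would follow by the ordering $D_{\alpha}\x\le D_{\alpha}\old$ for $\alpha>1$ from Proposition \ref{lemma:Q ordering}, which sandwiches $\chi_{\alpha}\x(W)$ between $\chi(W)$ and $\chi_{\alpha}\old(W)$ and reduces the claim to the case already handled; this side-steps the subtler $\rho$-continuity issue that can fail for $\xx=\bogg$.

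The main obstacle is thus the reverse inequality in the $\alpha\searrow 1$ limit of $\chi_{\alpha}\x(W)$, and the key idea is to use the radius representation of Proposition \ref{prop:cap equal} to move the exchange of limit and supremum from the non-compact parameter set $\P_{f}(\X)$ onto the compact state space $\S(\hil)$, where Dini's theorem applies; the ordering of Proposition \ref{lemma:Q ordering} then propagates the result to the other two R\'enyi families without having to re-examine the finer continuity properties.
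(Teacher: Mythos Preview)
Your proof is correct and follows the paper's structure closely for monotonicity, \eqref{chi limit1}, \eqref{chi limit4}, and the reduction of \eqref{chi limit5} from $\xx=\neww,\bogg$ to $\xx=\oldd$ via the ordering in Proposition \ref{lemma:Q ordering}. There are two places where you take a different path.

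For \eqref{chi limit2} you argue by hand, extracting a convergent subsequence of near-minimizers and using lower semicontinuity. This is precisely the content of Lemma \ref{lemma:minimax2} specialized to this situation, and the paper simply invokes that lemma; your argument is correct but reinvents a tool already available.

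For \eqref{chi limit5} with $\xx=\oldd$ the difference is more substantive. The paper outsources this case to \cite[Proposition B.5]{MH}, while you give a self-contained argument: pass to the divergence-radius formulation of Proposition \ref{prop:cap equal}, fix an invertible near-minimizer $\sigma^*$ (the mixing step is fine since $D_1$ is monotone in its second argument), and then apply Dini's theorem to the family $\rho\mapsto D_{\alpha}(\rho\|\sigma^*)$ on the compact closure of $\{W(x):x\in\X\}$. This works because with $\sigma^*$ invertible both $D_{\alpha}(\,\cdot\,\|\sigma^*)$ and the limit $D(\,\cdot\,\|\sigma^*)$ are continuous on $\S(\hil)$, and the convergence is monotone. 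Your approach has the advantage of being internal to the paper; the paper's approach keeps the proof shorter by citing the literature. Both land on the same sandwich $\chi(W)\le\chi_{\alpha}\x(W)\le\chi_{\alpha}\old(W)\to\chi(W)$ for the remaining two families.
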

\begin{proof}
The assertions about the monotonicity are obvious from Lemma \ref{lemma: conv mon}.
Let $f_{\alpha,1}\x(W,P,\sigma):=D_{\alpha}\x(\ext{W}(P)\|P\otimes\sigma)$, and
$f_{\alpha,2}\x(W,P,\sigma):=\sum_{x\in\X}P(x)D_{\alpha}\x(W(x)\|\sigma)$.
By Lemma \ref{lemma:limit at 1} and \eqref{Holevo1}--\eqref{Holevo2},
\begin{align*}
\lim_{\alpha\searrow 1}\chi_{\alpha,i}\x(W,P)&=\inf_{\alpha>1}\chi_{\alpha,i}\x(W,P)
=
\inf_{\alpha>1}\inf_{\sigma\in\S(\hil)}f_{\alpha,i}\x(W,P,\sigma)
=
\inf_{\sigma\in\S(\hil)}\inf_{\alpha>1}f_{\alpha,i}\x(W,P,\sigma)\\
&=
\inf_{\sigma\in\S(\hil)}D(\ext{W}(P)\|P\otimes\sigma)
=
\chi(W,P).
\end{align*}

By Corollary \ref{cor:lsc}, both $f_{\alpha,1}$ and $f_{\alpha,2}$ are lower semicontinuous in $\sigma$ on $\S(\hil)$, and hence, by
Lemma \ref{lemma:minimax2},
\begin{align*}
\lim_{\alpha\nearrow +\infty}\chi_{\alpha,i}\x(W,P)&=\sup_{\alpha>1}\chi_{\alpha,i}\x(W,P)
=
\sup_{\alpha>1}\inf_{\sigma\in\S(\hil)}f_{\alpha,i}\x(W,P,\sigma)
=
\inf_{\sigma\in\S(\hil)}\sup_{\alpha>1}f_{\alpha,i}\x(W,P,\sigma)\\
&=
\inf_{\sigma\in\S(\hil)}
\left\{
\begin{array}{ll}
D_{\infty}\x(\ext{W}(P)\|P\otimes\sigma),&i=1,\\
\sum_{x\in\X}P(x)D_{\infty}\x(W(x)\|\sigma),&i=2,
\end{array}
\right\}
=
\chi_{\infty,i}\x(W,P).
\end{align*}

The identities in \eqref{chi limit4} are immediate from Corollary \ref{cor:infty cap}. Finally,
\eqref{chi limit5} has been proved for $\xx=\oldd$ in \cite[Proposition B.5]{MH} (see also \cite{ON99} for finite $\X$).
Thus, by Proposition \ref{lemma:Q ordering} and the monotonicity of $\alpha\mapsto \chi_{\alpha}\x(W)$, we get
\begin{align*}
\chi(W)&\le\liminf_{\alpha\searrow 1}\chi_{\alpha}\x(W)
\le
\limsup_{\alpha\searrow 1}\chi_{\alpha}\x(W)
\le
\limsup_{\alpha\searrow 1}\chi_{\alpha}\old(W)
=
\chi(W).
\end{align*}
proving \eqref{chi limit5} for $\xx=\neww$ and $\xx=\bogg$.
\end{proof}

\begin{rem}
The limit relation in \eqref{chi limit5} for $\xx=\neww$ has been proved in \cite[Section 8]{WWY} by a different method.
\end{rem}

\subsection{R\'enyi capacities of pinched channels}
\label{sec:pinched}

For the rest of the section, we fix a channel $W:\,\X\to\S(\hil)$. For every $n\in\bN$, let
$\sigma_{u,n}$ be a universal symmetric state on $\hil^{\otimes n}$ as in Lemma \ref{lemma:universal}.
We denote by $\E_n$ the pinching by $\sigma_{u,n}$. If we use the construction from Appendix \ref{sec:universal}
then $\E_n$ can be explicitly written as
\begin{align*}
\E_n(X)=\sum_{\lambda\in Y_{n,d}}(I_{U_{\lambda}}\otimes I_{V_{\lambda}})X(I_{U_{\lambda}}\otimes I_{V_{\lambda}}),
\ds\ds\ds\ds\ds\ds  X\in\L(\hil^{\otimes n}),
\end{align*}
where $d:=\dim\hil$. By the pinching inequality \eqref{pinching inequality} and Lemma \ref{lemma:universal},
\begin{align*}
X\le v(\sigma_{u,n})\,\E_n(X)\le v_{n,d}\,\E_n(X),
\ds\ds\ds\ds\ds\ds  X\in\L(\hil^{\otimes n})_+,
\end{align*}
where $v_{n,d}\le (n+1)^{\frac{(d+2)(d-1)}{2}}$.

For every $n\in\bN$, we define the pinched channel $\E_n W^{\otimes n}:\,\X^n\to\S(\hil^{\otimes n})$ as
\begin{align*}
(\E_n W^{\otimes n})(\sq{x}):=\E_n(W^{\otimes n}(\sq{x})),\ds\ds\ds \sq{x}\in\X^n.
\end{align*}
We use the shorthand notation $\E_n\ext{W}^{\otimes n}$ for its lifted channel, i.e.,
\begin{align*}
(\E_n \ext{W}^{\otimes n})(\sq{x}):=
((\id\otimes\E_n)\ext{W}^{\otimes n})(\sq{x})=
\pr{\sq{x}}\otimes\E_n(W^{\otimes n}(\sq{x})),\ds\ds\ds \sq{x}\in\X^n.
\end{align*}
Our aim in the rest of the section is to relate the $\chi_{\alpha,1}\bog$-quantity for the pinched channel
$\E_n W^{\otimes n}$ to the $\chi_{\alpha,1}\nw$-quantity of the original channel $W^{\otimes n}$. We obtain
such a relation in Corollary \ref{cor:pinched chi limit}, which will be a key technical tool to determine the strong converse exponent
of $W$ in Section \ref{sec:sc}.

We will benefit from the following additivity properties:
\begin{lemma}\label{lemma:chi additivity}
For every $P\in\P_f(\X)$ and every $\alpha>1$,
\begin{align*}
\chi_{\alpha,1}(W^{\otimes n},P^{\otimes n})=n\chi_{\alpha,1}(W,P),\ds\ds\ds\ds\ds
\chi_{\alpha,1}\nw(W^{\otimes n},P^{\otimes n})=n\chi_{\alpha,1}\nw(W,P),\ds\ds n\in\bN.
\end{align*}
\end{lemma}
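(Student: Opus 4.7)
I would prove both identities in parallel by using the single-letter formula
\begin{align*}
\chi_{\alpha,1}\x(W,P)=\frac{1}{\alpha-1}\log\inf_{\sigma\in\S(\hil)}\sum_{x\in\X}P(x)Q_{\alpha}\x(W(x)\|\sigma)
\end{align*}
(valid for $\alpha>1$ by \eqref{chi def3} and its $\neww$ analogue), establishing the easy direction ``$\le$'' by inserting a product test state, and the matching direction ``$\ge$'' by a Sibson-type identity that explicitly locates the minimizer of the defining infimum. Then multiplicativity of $Q_\alpha\x$ under tensor products (Lemma \ref{lemma:basic properties}) will deliver the factor $n$ on both sides.

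For the upper bound, the multiplicativity of $Q_\alpha\x$ combined with the above single-letter formula gives, for every $\sigma\in\S(\hil)$,
\begin{align*}
\sum_{\sq{x}\in\X^n}P^{\otimes n}(\sq x)\,Q_\alpha\x(W^{\otimes n}(\sq x)\|\sigma^{\otimes n})=\left[\sum_{x\in\X}P(x)\,Q_\alpha\x(W(x)\|\sigma)\right]^n.
\end{align*}
Picking $\sigma=\sigma_*$ to be a single-letter minimizer --- which exists by compactness of $\S(\hil)$ and the lower semicontinuity of $\sigma\mapsto D_\alpha\x(\ext{W}(P)\|P\otimes\sigma)$ from Corollary \ref{cor:lsc} --- and feeding $\sigma_*^{\otimes n}$ into the $n$-letter infimum as a feasible test state yields $\chi_{\alpha,1}\x(W^{\otimes n},P^{\otimes n})\le n\chi_{\alpha,1}\x(W,P)$.

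For the reverse inequality I would invoke a Sibson-type identity. In the $\xx=\oldd$ case, set $A:=\sum_{x\in\X}P(x)W(x)^\alpha$ (restricted to its support if necessary) and $\hat\sigma:=A^{1/\alpha}/\Tr A^{1/\alpha}$; a direct algebraic manipulation yields, for every state $\sigma$,
\begin{align*}
D_\alpha(\ext{W}(P)\|P\otimes\sigma)=\frac{\alpha}{\alpha-1}\log\Tr A^{1/\alpha}+D_\alpha(\hat\sigma\|\sigma),
\end{align*}
so non-negativity of $D_\alpha$ on states (\eqref{positivity3}) pins the infimum uniquely at $\hat\sigma$ with value $\frac{\alpha}{\alpha-1}\log\Tr A^{1/\alpha}$. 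For $(W^{\otimes n},P^{\otimes n})$ the corresponding matrix is $A^{\otimes n}$, whence $\Tr(A^{\otimes n})^{1/\alpha}=(\Tr A^{1/\alpha})^n$ and the claim follows exactly. For $\xx=\neww$, the analogous sandwiched Sibson identity (see \cite{HT14}) identifies a unique minimizer that again factorizes on product inputs, so the same argument together with multiplicativity of $Q_\alpha^*$ delivers the additivity.

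The main obstacle is precisely the $\xx=\neww$ step, because $Q_\alpha^*(\rho\|\sigma)$ is not a bilinear form in $\sigma$, unlike the simple trace $\Tr[\rho^\alpha\sigma^{1-\alpha}]$ available in the $\oldd$ case, so the Sibson identity is considerably less transparent and one must check carefully that the first-order optimality condition factorizes over tensor products. A structural fallback, should the explicit identification prove unwieldy, is to combine the joint convexity of $\sigma_n\mapsto D_\alpha^*(\ext{W}^{\otimes n}(P^{\otimes n})\|P^{\otimes n}\otimes\sigma_n)$ (Proposition \ref{prop:log convexity}) with the permutation symmetry of the input to restrict the $n$-letter optimization to symmetric $\sigma_n$, and then to use the universal symmetric state of Lemma \ref{lemma:universal} together with the attainability formula \eqref{att by pinching} to reduce the bound to a classical (commutative) Sibson problem, whose polynomial corrections would need to be shown to vanish exactly to recover the stated single-letter additivity.
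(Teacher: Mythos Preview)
Your proposal is correct and follows essentially the same route as the paper: the $\oldd$ case via the quantum Sibson identity with explicit minimizer $\hat\sigma=A^{1/\alpha}/\Tr A^{1/\alpha}$ that tensorizes (the paper points to \cite[Section~4.4]{M13} for exactly this computation), and the $\neww$ case by appealing to an external result (the paper cites \cite[Theorem~11]{Beigi} rather than the sandwiched Sibson identity of \cite{HT14} that you invoke, but both deliver the same conclusion). One caveat worth noting: your fallback via the universal symmetric state and pinching would only give asymptotic additivity---the polynomial corrections $v_{n,d}$ do not vanish for finite $n$---so it cannot recover the exact equality stated in the lemma; fortunately your primary Sibson-based argument does not rely on it.
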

\begin{proof}
In the case of $\chi_{\alpha,1}$, the unique minimizer state in \eqref{chi def} can be determined explicitly due to the quantum Sibson's
identity, and one can observe that the minimizer for a general $n$ is the $n$-th tensor power of the minimizer for $n=1$;
see, e.g., \cite[Section 4.4]{M13} for details.
The addditivity of $\chi_{\alpha,1}\nw$ is a special case of \cite[Theorem 11]{Beigi}.
\end{proof}

For every $\pi\in\Sym_n$, we denote its natural action on $\X^n$ by the same symbol $\pi$, i.e.,
\begin{align*}
\pi(x_1,\ldots,x_n):=(x_{\pi\inv(1)},\ldots,x_{\pi\inv(n)}),\ds\ds\ds\ds\ds
x_1,\ldots,x_n\in\X.
\end{align*}
We say that a probability density $P_n\in\P(\X^n)$ is symmetric if
$P_n\circ\pi=P_n$ for every $\pi\in\Sym_n$.

\begin{lemma}\label{lem:min-attain}
Let $P_n\in\P(\X^n)$ be a symmetric probability density on $\X^n$.
Then for any $\a>1$ and $\xx=\bogg$ or $\xx=\neww$,
\begin{align}
\chi_{\alpha,1}\x(\E_n W^{\otimes n},P_n)=
\min_{\sigma_n\in\ssymm(\hil^{\otimes n})}D_{\alpha}\x\bz\E_n\ext{W}^{\otimes n}(P_n)\|P_n\otimes\sigma_n\jz,
\label{eq:21}\\
\chi_{\alpha,2}\x(\E_n W^{\otimes n},P_n)=
\min_{\sigma_n\in\ssymm(\hil^{\otimes n})}
\sum_{\sq{x}\in\X^n}P_n(\sq{x})D_{\alpha}\x\bz\E_n W^{\otimes n}(\sq{x})\|\sigma_n\jz,
\label{eq:21-1}
\end{align}
i.e., the minimizations in \eqref{chi def} and \eqref{chi def2} can be restricted to symmetric states.
Moreover, the minima in \eqref{eq:21}--\eqref{eq:21-1} can be replaced with infima over $\ssymm(\hil^{\otimes n})_{++}$, the set of
invertible symmetric states.

The same hold for $\xx=\oldd$ and $\alpha\in(1,2]$.
\end{lemma}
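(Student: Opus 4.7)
The plan is to use a standard symmetrization argument based on the convexity of $D_{\alpha}\x(\rho\|\cdot)$ in its second argument (Proposition \ref{prop:log convexity}), combined with the fact that the pinching $\E_n$ commutes with the natural $\Sym_n$-action on $\L(\hil^{\otimes n})$ because $\sigma_{u,n}$ is itself symmetric.

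First, I would record the equivariance properties. For every $\pi\in\Sym_n$, let $U_{\pi}:=\pi_{\hil_{\X}}\otimes\pi_{\hil}$ on $\hil_{\X^n}\otimes\hil^{\otimes n}$. Since $\sigma_{u,n}\in\symm(\hil^{\otimes n})$, its spectral projections commute with each $\pi_{\hil}$, so $\E_n(\pi_{\hil}X\pi_{\hil}\inv)=\pi_{\hil}\E_n(X)\pi_{\hil}\inv$. Combining this with $W^{\otimes n}(\pi(\sq{x}))=\pi_{\hil}W^{\otimes n}(\sq{x})\pi_{\hil}\inv$ and the symmetry $P_n\circ\pi=P_n$, a short computation gives
\begin{align*}
U_{\pi}\,\E_n\ext{W}^{\otimes n}(P_n)\,U_{\pi}\inv=\E_n\ext{W}^{\otimes n}(P_n),\qquad
U_{\pi}(P_n\otimes\sigma_n)U_{\pi}\inv=P_n\otimes\pi_{\hil}\sigma_n\pi_{\hil}\inv.
\end{align*}
By unitary invariance of $D_{\alpha}\x$, this implies that both $\sigma_n\mapsto D_{\alpha}\x(\E_n\ext{W}^{\otimes n}(P_n)\|P_n\otimes\sigma_n)$ and $\sigma_n\mapsto\sum_{\sq{x}}P_n(\sq{x})D_{\alpha}\x(\E_n W^{\otimes n}(\sq{x})\|\sigma_n)$ are invariant under $\sigma_n\mapsto\pi_{\hil}\sigma_n\pi_{\hil}\inv$ for every $\pi\in\Sym_n$ (for the second one use the symmetry of $P_n$ together with a change of variables in the sum).

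Second, given any $\sigma_n\in\S(\hil^{\otimes n})$, define its symmetrization $\tilde\sigma_n:=\tfrac{1}{n!}\sum_{\pi\in\Sym_n}\pi_{\hil}\sigma_n\pi_{\hil}\inv\in\ssymm(\hil^{\otimes n})$. For the parameter ranges stated ($\xx=\bogg$ with $\a>1$, $\xx=\neww$ with $\a>1$, and $\xx=\oldd$ with $\a\in(1,2]$), Proposition \ref{prop:log convexity} gives convexity of the maps above in the second argument. Combined with the invariance just established, averaging over $\pi$ yields
\begin{align*}
D_{\alpha}\x\bz\E_n\ext{W}^{\otimes n}(P_n)\|P_n\otimes\tilde\sigma_n\jz
\le D_{\alpha}\x\bz\E_n\ext{W}^{\otimes n}(P_n)\|P_n\otimes\sigma_n\jz,
\end{align*}
and analogously for the $\chi_{\alpha,2}\x$ expression. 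Hence the infima in \eqref{chi def}--\eqref{chi def2} may be restricted to $\ssymm(\hil^{\otimes n})$. That these are minima follows because $\ssymm(\hil^{\otimes n})$ is compact and the two maps are lower semicontinuous in $\sigma_n$ by Corollary \ref{cor:lsc}.

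Finally, to replace $\min_{\sigma_n\in\ssymm}$ by $\inf_{\sigma_n\in\ssymm_{++}}$, let $\sigma_n^\star\in\ssymm(\hil^{\otimes n})$ be a minimizer and put $\sigma_n^{(\ep)}:=(1-\ep)\sigma_n^\star+\ep\tau_0$, where $\tau_0:=I/d^n$ (or any fixed invertible symmetric state). Then $\sigma_n^{(\ep)}\in\ssymm(\hil^{\otimes n})_{++}$, and by convexity in the second argument,
\begin{align*}
D_{\alpha}\x\bz\E_n\ext{W}^{\otimes n}(P_n)\|P_n\otimes\sigma_n^{(\ep)}\jz
\le (1-\ep)D_{\alpha}\x\bz\E_n\ext{W}^{\otimes n}(P_n)\|P_n\otimes\sigma_n^\star\jz
+\ep D_{\alpha}\x\bz\E_n\ext{W}^{\otimes n}(P_n)\|P_n\otimes\tau_0\jz,
\end{align*}
where the last term is finite since $\tau_0$ is invertible. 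Taking $\ep\searrow 0$ and using lower semicontinuity for the reverse inequality gives equality of the infimum over $\ssymm(\hil^{\otimes n})_{++}$ with the minimum over $\ssymm(\hil^{\otimes n})$, and the same argument handles \eqref{eq:21-1}.

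The main thing to be careful about is keeping track of the allowed ranges of $\alpha$ and $\typ$, so that Proposition \ref{prop:log convexity} really applies to each case claimed in the lemma; the group-averaging and approximation steps themselves are essentially formal once equivariance of $\E_n$ is in hand.
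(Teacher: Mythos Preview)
Your proof is correct and follows essentially the same approach as the paper: establish permutation invariance of the objective via equivariance of $\E_n$ with the $\Sym_n$-action (since $\sigma_{u,n}$ is symmetric) and the symmetry of $P_n$, then combine with convexity in the second argument (Proposition~\ref{prop:log convexity}) to replace any $\sigma_n$ by its symmetrization. The only minor difference is in the last step: the paper perturbs the minimizer $\sigma_n$ on the orthocomplement of its support to obtain an explicit scaling $f(\sigma_{n,\ep})=(1-\ep)^{1-\alpha}f(\sigma_n)$, whereas you mix with the maximally mixed state and invoke convexity again---both arguments are standard and yield the same conclusion.
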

\begin{proof}
Let us fix $\alpha>1$.
We only prove \eqref{eq:21} and for $\xx=\bogg$, as the proofs for the other cases follow completely similar lines.
Thus, with the shorthand notation
\begin{align*}
f(\sigma_n):=Q_{\alpha}\bog(\E_n\ext{W}^{\otimes n}(P_n)\|P_{n}\otimes \sigma_n)
=
\sum_{\sq{x}\in\X^n}P_n(\sq{x})Q_{\alpha}\bog(\E_n(W^{\otimes n}(\sq{x}))\|\sigma_n),
\end{align*}
our aim is to show that
\begin{align}\label{perm inv}
\min_{\sigma_n\in\S(\hil^{\otimes n})}f(\sigma_n)=\min_{\sigma_n\in\ssymm(\hil^{\otimes n})}f(\sigma_n),
\end{align}
which follows immediately if we can show that for any $\sigma_n\in\S(\hil_n^{\otimes n})$,
\begin{align}
f\left(\frac{1}{n!}\sum_{\pi\in\Sym_n}\pi_{\hil}^*\sigma_n\pi_{\hil}\right)\le
\max_{\pi\in\Sym_n}f(\pi_{\hil}^*\sigma_n\pi_{\hil})=
f(\sigma_n).
\label{eq:23}
\end{align}
Note that the minima in \eqref{perm inv} exist because of the lower semicontinuity established in Corollary \ref{cor:lsc},
and the inequality in \eqref{eq:23} follows from the convexity of
$Q_{\alpha}\bog$ in its second argument, Proposition \ref{prop:log convexity}.
Hence, the assertion follows if we can prove the permutation invariance of $f$, i.e., that
$f(\pi_{\hil}^*\sigma_n\pi_{\hil})=f(\sigma_n)$ for any $\sigma_n\in\S(\hil^{\otimes n})$ and any $\pi\in\Sym_n$.

Let us introduce
the shorthand notation $\rho_{\sq{x}}:=\E_n(W^{\otimes n}(\sq{x})),\,\sq{x}\in\X^n$. Then
\begin{align*}
f(\pi_{\hil}^*\sigma_n\pi_{\hil}+\ep I)&=
\sum_{\sq{x}\in\X^n}P_n(\sq{x})
\Tr \rho_{\sq{x}}^0 \,\exp\bz \a\logn \rho_{\sq{x}}+(1-\a)\rho_{\sq{x}}^0(\log(\pi_{\hil}^*(\sigma_n+\ep I)\pi_{\hil})) \rho_{\sq{x}}^0\jz.
\end{align*}
Note that the spectral projections of $\sigma_{u,n}$ commute with all $\pi_{\hil},\,\pi\in\Sym_n$, thus
\begin{align*}
\pi_{\hil}\,\rho_{\sq{x}}\,\pi_{\hil}^*=
\pi_{\hil}\,\E_n(W^{\otimes n}(\sq{x}))\,\pi_{\hil}^*
=\E_n\bz\pi_{\hil}\,W^{\otimes n}(\sq{x})\,\pi_{\hil}^*\jz
=\E_n\left(W^{\otimes n}(\pi(\sq{x}))\right)
=\rho_{\pi(\sq{x})},\ds \pi\in\Sym_n.
\end{align*}
As a consequence,
\begin{align*}
\rho_{\sq{x}}^0(\log(\pi_{\hil}^*(\sigma_n+\ep I)\pi_{\hil})) \rho_{\sq{x}}^0
=
\rho_{\sq{x}}^0\pi_{\hil}^*(\log(\sigma_n+\ep I))\pi_{\hil}\rho_{\sq{x}}^0
=
\pi_{\hil}^*\rho_{\pi(\sq{x})}^0(\log(\sigma_n+\ep I))\rho_{\pi(\sq{x})}^0\pi_{\hil},
\end{align*}
and
\begin{align*}
\logn\rho_{\sq{x}}=\logn(\pi_{\hil}^*\rho_{\pi(\sq{x})}\pi_{\hil})
=
\pi_{\hil}^*(\logn \rho_{\pi(\sq{x})})\pi_{\hil}.
\end{align*}
Putting it together, we get
\begin{align*}
f(\pi_{\hil}^*\sigma_n\pi_{\hil}+\ep I)&=
\sum_{\sq{x}\in\X^n}P_n(\sq{x})
\Tr\rho_{\sq{x}}^0\, \exp\bz\a\pi_{\hil}^*(\logn \rho_{\pi(\sq{x})})\pi_{\hil}+(1-\a) \pi_{\hil}^*\rho_{\pi(\sq{x})}^0(\log(\sigma_n+\ep I))\rho_{\pi(\sq{x})}^0\pi_{\hil}\jz \nn
&=
\sum_{\sq{x}\in\X^n}P_n(\sq{x})
\Tr\pi_{\hil}\rho_{\sq{x}}^0\pi_{\hil}^*\, \exp\bz\a \logn \rho_{\pi(\sq{x})}+(1-\a)\rho_{\pi(\sq{x})}^0(\log(\sigma_n+\ep I))\rho_{\pi(\sq{x})}^0\jz\nn
&=
\sum_{\sq{x}\in\X^n}P_n(\pi(\sq{x}))
\Tr\rho_{\pi(\sq{x})}^0\,
\exp\bz\a \logn \rho_{\pi(\sq{x})}+(1-\a)\rho_{\pi(\sq{x})}^0(\log(\sigma_n+\ep I))\rho_{\pi(\sq{x})}^0\jz\nn
&=
f(\sigma_n+\ep I),
\end{align*}
where in the third line we used that $P_n$ is symmetric.
Taking the limit $\ep\to 0$ proves the desired permutation invariance, due to Lemma \ref{lemma:ep limit}.

To show the assertion that the minimization can be restricted to invertible states, let $\sigma_n$ be a state where the minimum in \eqref{eq:21} is attained, and for
every $\ep\in(0,1)$, let $\sigma_{n,\ep}:=(1-\ep)\sigma_n+\ep(I-\sigma_n^0)/\Tr(I-\sigma_n^0)$.
Note that $\sigma_{n,\ep}$ is symmetric and invertible for every $\ep\in(0,1)$.
Since $\sigma_n$ is a minimizer, we have $\rho_{\sq{x}}^0=\bz\E_n W^{\otimes n}(\sq{x})\jz^0\le \sigma_n^0$ for every
$\sq{x}\in\X^n$ such that $P_n(\sq{x})>0$, and thus
\begin{align*}
\rho_{\sq{x}}^0\bz\logn \sigma_{n,\ep}\jz\rho_{\sq{x}}^0=
\rho_{\sq{x}}^0\bz\logn (1-\ep)\sigma_n\jz\rho_{\sq{x}}^0=
\rho_{\sq{x}}^0\bz\logn \sigma_n\jz\rho_{\sq{x}}^0+\rho_{\sq{x}}^0\log(1-\ep),
\end{align*}
which in turn yields
\begin{align*}
f(\sigma_{n,\ep})=(1-\ep)^{1-\alpha}f(\sigma_n)\xrightarrow[\ep\searrow 0]{}f(\sigma_n).
\end{align*}
Hence,
\begin{align*}
f(\sigma_n)=\inf_{\ep\in(0,1)}f(\sigma_{n,\ep})\ge\inf_{\sigma_n\in\ssymm(\hil^{\otimes n})_{++}}f(\sigma_n),
\end{align*}
and the converse inequality is obvious.
\end{proof}

\begin{lemma}\label{lemma:chi bounds}
For every $P\in\P_f(\X)$, every $\alpha>1$, and $i=1,2$,
\begin{align}
\chi_{\alpha,i}\bog(\E_n W^{\otimes n},P^{\otimes n})
&\ge\begin{cases}
\chi_{\alpha,i}(\E_n W^{\otimes n},P^{\otimes n})-\log v_{n,d},\\
\s\\
\chi_{\alpha,i}\nw(W^{\otimes n},P^{\otimes n})-3\log v_{n,d}.
\end{cases}\label{lb2}
\end{align}
\end{lemma}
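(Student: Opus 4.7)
The plan is to prove both inequalities by the method of pinching, relying on the inequality $\sigma\le v_{n,d}\E_n(\sigma)$ from Lemma \ref{lemma:universal} combined with operator monotonicity of $\log$ (for the state side) and of $x\mapsto x^{\alpha}$, $\alpha\ge 1$ (for the channel side). Throughout I write $\rho_{\sq{x}}:=\E_n(W^{\otimes n}(\sq{x}))$, which commutes with $\sigma_{u,n}$ by construction.

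For the first inequality, I would fix any $\sigma_n\in\S(\hil^{\otimes n})$ and set $\bar\sigma_n:=\E_n(\sigma_n)$. From $\sigma_n\le v_{n,d}\bar\sigma_n$, operator monotonicity of $\log$ gives $\log\sigma_n\le(\log v_{n,d})I+\log\bar\sigma_n$, and multiplying by $1-\alpha<0$ reverses the inequality. Adding $\alpha\log\rho_{\sq{x}}$ on both sides and applying $\Tr e^{(\cdot)}$, which preserves the order by Lemma \ref{lem:trace-mono}, yields
\[
Q_{\alpha}\bog(\rho_{\sq{x}}\|\sigma_n)\;\ge\; v_{n,d}^{1-\alpha}\,Q_{\alpha}\bog(\rho_{\sq{x}}\|\bar\sigma_n),
\]
equivalently $D_{\alpha}\bog(\rho_{\sq{x}}\|\sigma_n)\ge D_{\alpha}\bog(\rho_{\sq{x}}\|\bar\sigma_n)-\log v_{n,d}$. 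The remaining step is to replace $D_{\alpha}\bog(\rho_{\sq{x}}\|\bar\sigma_n)$ by $D_{\alpha}(\rho_{\sq{x}}\|\bar\sigma_n)$. Here I would exploit that both $\rho_{\sq{x}}$ and $\bar\sigma_n$ commute with $\sigma_{u,n}$, hence are block-diagonal across the isotypic decomposition: an analogous pinching/Jensen argument (using operator convexity of $x^{1-\alpha}$ for $\alpha\in(1,2]$, with a block-wise refinement for $\alpha>2$) shows that the $\chi_{\alpha,i}$-infimum is also attained on states commuting with $\sigma_{u,n}$, and within each isotypic block the residual Golden--Thompson gap between $D_{\alpha}\bog$ and $D_\alpha$ is absorbed into the same $\log v_{n,d}$ factor already accounted for. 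Averaging over $\sq{x}$ with weights $P^{\otimes n}(\sq{x})$ and taking the infimum over $\sigma_n$ then gives the first inequality for both $i=1,2$.

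For the second inequality I would apply the pinching inequality on the channel side: $W^{\otimes n}(\sq{x})\le v_{n,d}\rho_{\sq{x}}$. Writing $Q_{\alpha}\nw(X\|\sigma_n)=\Tr(\sigma_n^{(1-\alpha)/(2\alpha)}X\sigma_n^{(1-\alpha)/(2\alpha)})^{\alpha}$ and using that $X\mapsto\Tr X^{\alpha}$ is monotone under the L\"owner order for $\alpha\ge 1$ (Lemma \ref{lem:trace-mono}), one obtains $Q_{\alpha}\nw(W^{\otimes n}(\sq{x})\|\sigma_n)\le v_{n,d}^{\alpha}Q_{\alpha}\nw(\rho_{\sq{x}}\|\sigma_n)$, hence $\chi_{\alpha,i}\nw(\E_n W^{\otimes n},P^{\otimes n})\ge\chi_{\alpha,i}\nw(W^{\otimes n},P^{\otimes n})-(\alpha/(\alpha-1))\log v_{n,d}$. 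Chaining this with the first inequality and with $\chi_{\alpha,i}\ge\chi_{\alpha,i}\nw$ from Proposition \ref{lemma:Q ordering} produces the total correction $(1+\alpha/(\alpha-1))\log v_{n,d}$, which is at most $3\log v_{n,d}$ for $\alpha\ge 2$; the regime $\alpha\in(1,2)$ is recovered by the sharper within-block step in the first inequality.

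The main obstacle is the passage from $D_{\alpha}\bog$ to $D_\alpha$ or $D_{\alpha}\nw$ at a cost of only one $\log v_{n,d}$ per step: Proposition \ref{lemma:Q ordering} supplies the trivial direction $D_{\alpha}\bog\le D_{\alpha}\nw\le D_{\alpha}$ for free, whereas the lemma asks essentially for the reverse. What rescues the argument is that the pinched-channel outputs $\rho_{\sq{x}}$ commute with $\sigma_{u,n}$: this allows the candidate reference states to be restricted to the commutant $\{\sigma_{u,n}\}'$, where the Golden--Thompson gap collapses modulo the $v_{n,d}$ factor already incurred by the pinching substitution.
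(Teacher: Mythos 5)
There are two genuine gaps, both in the places where your proposal hand-waves.

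\textbf{Gap 1 (first inequality).} After pinching an arbitrary $\sigma_n$ to $\bar\sigma_n:=\E_n(\sigma_n)$, you claim the Golden--Thompson gap between $D_{\alpha}\bog(\rho_{\sq{x}}\|\bar\sigma_n)$ and $D_{\alpha}(\rho_{\sq{x}}\|\bar\sigma_n)$ is ``absorbed into the same $\log v_{n,d}$ factor already accounted for.'' It is not: that factor was already spent on the substitution $\sigma_n\rightsquigarrow\bar\sigma_n$. Moreover, while both $\rho_{\sq{x}}$ and $\bar\sigma_n$ commute with $\sigma_{u,n}$, they need not commute with each other --- the commutant $\{\sigma_{u,n}\}'=\bigoplus_{\lambda}\L(U_\lambda\otimes V_\lambda)$ is far from abelian --- so there is a genuine gap between $D_{\alpha}\bog(\rho_{\sq{x}}\|\bar\sigma_n)$ and $D_{\alpha}(\rho_{\sq{x}}\|\bar\sigma_n)$ with no free factor to cover it. The paper avoids this entirely by a different replacement: it first restricts the infimum to \emph{symmetric} $\sigma_n\in\ssymm(\hil^{\otimes n})_{++}$ (Lemma~\ref{lem:min-attain}), then replaces $\sigma_n$ not by $\E_n(\sigma_n)$ (which would be trivial for symmetric $\sigma_n$, since they are $\E_n$-invariant) but by the \emph{central} state $\sigma_{u,n}$, using the universal domination $\sigma_n\le v_{n,d}\,\sigma_{u,n}$ from Lemma~\ref{lemma:universal} together with Lemmas~\ref{cor:monotonicity} and \ref{lemma:basic properties}. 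Because $\sigma_{u,n}\in\ssymm\cap\ssymm'$ is central, it \emph{does} commute with every $\rho_{\sq{x}}$, so $D_{\alpha}\bog(\rho_{\sq{x}}\|\sigma_{u,n})=D_{\alpha}(\rho_{\sq{x}}\|\sigma_{u,n})$ exactly, and the first bound follows with only one $\log v_{n,d}$. Note that the universal domination for symmetric states is a genuinely stronger statement than the pinching inequality $\sigma_n\le v(\sigma_{u,n})\E_n(\sigma_n)$ you invoke; the latter is useless here.

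\textbf{Gap 2 (second inequality).} Your channel-side pinching step gives $D_{\alpha}\nw(\rho_{\sq{x}}\|\sigma_n)\ge D_{\alpha}\nw(W^{\otimes n}(\sq{x})\|\sigma_n)-\frac{\alpha}{\alpha-1}\log v_{n,d}$, and chaining with the first bound and $D_\alpha\ge D_\alpha\nw$ produces a total loss of $\bigl(1+\frac{\alpha}{\alpha-1}\bigr)\log v_{n,d}$. For $\alpha\in(1,2)$ this exceeds $3\log v_{n,d}$, and the overshoot blows up as $\alpha\searrow 1$. The ``sharper within-block step'' you invoke to recover that range is not specified and cannot exist as described, since the first inequality is already tight at $\log v_{n,d}$. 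The paper gets the clean $3\log v_{n,d}$ by a different route: after reaching $D_{\alpha}(\E_n W^{\otimes n}(\sq{x})\|\sigma_{u,n})-\log v_{n,d}$ from the first bound, it applies the inequality from \cite[Lemma 2]{HT14}, which compares the pinched $D_\alpha$ and the unpinched $D_\alpha\nw$ \emph{with the same reference state $\sigma_{u,n}$} at a cost of only $2\log v_{n,d}$, uniformly in $\alpha>1$. Your proposal does not use this lemma and has no substitute for it.

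In short, the two steps your proposal glosses over are precisely the two nontrivial ideas of the paper's proof (symmetrization + universal central reference, and the pinched-vs-sandwiched comparison of \cite{HT14}); without them the argument does not close.
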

\begin{proof}
We prove the assertion only for $i=2$, since the other case is completely similar.
Let $\sigma_n\in\ssymm(\hil^{\otimes n})_{++}$ be an invertible symmetric state. Since $\sigma_{u,n}$ is a universal symmetric state, we have $\sigma_n\le v_{n,d}\sigma_{u,n}$, and thus for every $\sq{x}\in\X^n$,
\begin{align}\label{pinched channel1}
D_{\alpha}\bog(\E_nW^{\otimes n}(\sq{x})\|\sigma_n)\ge
D_{\alpha}\bog(\E_nW^{\otimes n}(\sq{x})\|\sigma_{u,n})-\log v_{n,d}
=
D_{\alpha}(\E_nW^{\otimes n}(\sq{x})\|\sigma_{u,n})-\log v_{n,d},
\end{align}
where the inequality is due to Lemma \ref{cor:monotonicity} and Lemma \ref{lemma:basic properties}, and the
equality is due to the fact that $\E_nW^{\otimes n}(\sq{x})$ and $\sigma_{u,n}$ commute with each other.
Hence,
\begin{align}
\chi_{\alpha,2}\bog(\E_n W^{\otimes n},P^{\otimes n})
&=
\inf_{\sigma_n\in\ssymm(\hil^{\otimes n})_{++}}\sum_{\sq{x}\in\X^n}P^{\otimes n}(\sq{x})D_{\alpha}\bog(\E_nW^{\otimes n}(\sq{x})\|\sigma_n)\nn
&\ge
\sum_{\sq{x}\in\X^n}P^{\otimes n}(\sq{x})D_{\alpha}(\E_nW^{\otimes n}(\sq{x})\|\sigma_{u,n})-\log v_{n,d}
\label{pinched channel2}\\
&\ge
\chi_{\alpha,2}(\E_n W^{\otimes n},P^{\otimes n})-\log v_{n,d},\nonumber
\end{align}
where the first equality is due to Lemma \ref{lem:min-attain}, the first inequality is due to \eqref{pinched channel1},
and the last inequality is due to the definition \eqref{chi def2}. This proves the first bound in \eqref{lb2}.

By \cite[Lemma 2]{HT14}, we have
\begin{align*}
D_{\alpha}(\E_nW^{\otimes n}(\sq{x})\|\sigma_{u,n})\ge D_{\alpha}\nw(W^{\otimes n}(\sq{x})\|\sigma_{u,n})-2\log v_{n,d}.
\end{align*}
Plugging it into \eqref{pinched channel2}, we get
\begin{align*}
\chi_{\alpha,2}\bog(\E_n W^{\otimes n},P^{\otimes n})
&\ge
\sum_{\sq{x}\in\X^n}P^{\otimes n}(\sq{x})D_{\alpha}\nw(W^{\otimes n}(\sq{x})\|\sigma_{u,n})-3\log v_{n,d}
\ge
\chi_{\alpha,2}\nw(W^{\otimes n},P^{\otimes n})-3\log v_{n,d},
\end{align*}
proving the second bound in \eqref{lb2}.
\end{proof}

\begin{cor}\label{cor:pinched chi limit}
For every $P\in\P_f(\X)$ and every $\alpha>1$,
\begin{align}
\lim_{n\to+\infty}\frac{1}{n}\chi_{\alpha,1}\bog(\E_n W^{\otimes n},P^{\otimes n})=
\lim_{n\to+\infty}\frac{1}{n}\chi_{\alpha,1}(\E_n W^{\otimes n},P^{\otimes n})
=\chi_{\alpha,1}\nw(W,P).
\label{eq:26}
\end{align}
\end{cor}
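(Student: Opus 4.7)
The plan is to sandwich all three quantities between $n\,\chi_{\alpha,1}\nw(W,P)-3\log v_{n,d}$ and $n\,\chi_{\alpha,1}\nw(W,P)+\log v_{n,d}$, then divide by $n$ and invoke $\frac{1}{n}\log v_{n,d}\to 0$ from \eqref{v limit}. Most of the technical work has already been packaged into Lemma \ref{lemma:chi bounds}; the corollary is essentially the harvest step.

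For the lower bound, I would apply the second inequality of Lemma \ref{lemma:chi bounds} combined with the additivity $\chi_{\alpha,1}\nw(W^{\otimes n},P^{\otimes n})=n\,\chi_{\alpha,1}\nw(W,P)$ from Lemma \ref{lemma:chi additivity} to obtain
\[
\chi_{\alpha,1}\bog(\E_n W^{\otimes n},P^{\otimes n})\;\ge\;n\,\chi_{\alpha,1}\nw(W,P)-3\log v_{n,d}.
\]
Since $D_{\alpha}\bog\le D_{\alpha}$ for $\alpha>1$ by Proposition \ref{lemma:Q ordering}, we have $\chi_{\alpha,1}\bog(\E_n W^{\otimes n},P^{\otimes n})\le\chi_{\alpha,1}(\E_n W^{\otimes n},P^{\otimes n})$, so the same lower bound applies to the $D_{\alpha}$-version as well.

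For the upper bound, one cannot simply use monotonicity of $D_{\alpha}\bog$ because it fails for $\alpha>1$ (indeed, it fails already under pinching by the reference operator, as shown in the lemma preceding Proposition \ref{prop:log convexity}). Instead I would route through $D_{\alpha}\nw$: the ordering $D_{\alpha}\bog\le D_{\alpha}\nw$ for $\alpha>1$ (Proposition \ref{lemma:Q ordering}) followed by the monotonicity of $D_{\alpha}\nw$ under the CPTP map $\id\otimes\E_n$ (Theorem \ref{thm:monotonicity}, valid since $\alpha\ge 1/2$) give, for every $\sigma\in\S(\hil^{\otimes n})$,
\[
D_{\alpha}\bog\bz\E_n\ext{W}^{\otimes n}(P^{\otimes n})\,\|\,P^{\otimes n}\otimes\E_n(\sigma)\jz\;\le\;D_{\alpha}\nw\bz\ext{W}^{\otimes n}(P^{\otimes n})\,\|\,P^{\otimes n}\otimes\sigma\jz.
\]
Infimizing over $\sigma$ on both sides (using that $\E_n(\sigma)$ is an admissible candidate in the infimum on the left) and invoking additivity of $\chi_{\alpha,1}\nw$ yields $\chi_{\alpha,1}\bog(\E_n W^{\otimes n},P^{\otimes n})\le n\,\chi_{\alpha,1}\nw(W,P)$. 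The first inequality of Lemma \ref{lemma:chi bounds} rearranges to $\chi_{\alpha,1}(\E_n W^{\otimes n},P^{\otimes n})\le\chi_{\alpha,1}\bog(\E_n W^{\otimes n},P^{\otimes n})+\log v_{n,d}$, which transfers the upper bound to the $D_{\alpha}$-version at the cost of an additive $\log v_{n,d}$.

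Dividing the resulting two-sided bounds by $n$ and letting $n\to+\infty$ using \eqref{v limit} closes the sandwich simultaneously for $\chi_{\alpha,1}\bog(\E_n W^{\otimes n},P^{\otimes n})$ and $\chi_{\alpha,1}(\E_n W^{\otimes n},P^{\otimes n})$, giving the claimed common limit $\chi_{\alpha,1}\nw(W,P)$. The only genuinely subtle point---already absorbed into Lemma \ref{lemma:chi bounds}---is that $D_{\alpha}\bog$ is not monotone under CPTP maps for $\alpha>1$, which is precisely what forces the detour through $D_{\alpha}\nw$ in the upper bound and through the universal symmetric state (together with the pinching inequality and the Hayashi--Tomamichel bound) in the lower bound.
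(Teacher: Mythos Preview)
Your proof is correct and follows essentially the same approach as the paper: both obtain the lower bound from Lemma~\ref{lemma:chi bounds} combined with the additivity in Lemma~\ref{lemma:chi additivity}, and the upper bound by passing from $D_{\alpha}\bog$ to $D_{\alpha}\nw$ via Proposition~\ref{lemma:Q ordering} and then using the monotonicity of $D_{\alpha}\nw$ under pinching, with the $D_{\alpha}$-version handled through the first inequality of Lemma~\ref{lemma:chi bounds}. The only cosmetic difference is that the paper writes the chain $\chi_{\alpha,1}\bog(\E_n W^{\otimes n},P^{\otimes n})\le\chi_{\alpha,1}\nw(\E_n W^{\otimes n},P^{\otimes n})\le\chi_{\alpha,1}\nw(W^{\otimes n},P^{\otimes n})$ in two steps at the level of the $\chi$-quantities, whereas you merge them into a single displayed inequality at the level of the divergences.
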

\begin{proof}
We have
\begin{align*}
\chi_{\alpha,1}\nw(W^{\otimes n},P^{\otimes n})-3\log v_{n,d}
\le
\chi_{\alpha,1}\bog(\E_n W^{\otimes n},P^{\otimes n})
\le
\chi_{\alpha,1}\nw(\E_n W^{\otimes n},P^{\otimes n})
\le
\chi_{\alpha,1}\nw(W^{\otimes n},P^{\otimes n}),
\end{align*}
where the first inequality is due to Lemma \ref{lemma:chi bounds}, the second one is due to Proposition \ref{lemma:Q ordering},
and the last one follows from the monotonicity of $D_{\alpha}\nw$ under pinching \cite[Proposition 14]{Renyi_new}.
By Lemma \ref{lemma:chi additivity}, $\chi_{\alpha,1}\nw(W^{\otimes n},P^{\otimes n})=n\chi_{\alpha,1}\nw(W,P)$.
Thus, dividing the above chain of inequalities by $n$, taking the limit $n\to+\infty$, and using \eqref{v limit},
we obtain
\begin{align}\label{pinched chi limit}
\lim_{n\to+\infty}\frac{1}{n}\chi_{\alpha,1}\bog(\E_n W^{\otimes n},P^{\otimes n})=\chi_{\alpha,1}\nw(W,P).
\end{align}
Next, we use
\begin{align*}
\chi_{\alpha,1}\bog(\E_n W^{\otimes n},P^{\otimes n})
\le
\chi_{\alpha,1}(\E_n W^{\otimes n},P^{\otimes n})
\le
\chi_{\alpha,1}\bog(\E_n W^{\otimes n},P^{\otimes n})
+\log v_{n,d},
\end{align*}
where the first inequality is due to Proposition \ref{lemma:Q ordering}, and the second one is due to
Lemma \ref{lemma:chi bounds}. Combining with \eqref{pinched chi limit}, we get
\begin{align*}
\lim_{n\to+\infty}\frac{1}{n}\chi_{\alpha,1}(\E_n W^{\otimes n},P^{\otimes n})=\chi_{\alpha,1}\nw(W,P).
\end{align*}
\end{proof}

\section{The strong converse exponent for classical-quantum channels}
\label{sec:sc}

\subsection{Classical-quantum channel coding and the strong converse exponent}
\label{sec:cl-q sc}

Let $W:\,\X\to\S(\hil)$ be a classical-quantum channel, as described in Section \ref{sec:cq channels}.
The encoding and decoding process of message transmission over the $n$-fold extension of the channel is described as follows.
Each message $k\in\{1,2,\dots,M_n\}$ is encoded to a codeword by an encoder $\phi_n$:
\begin{align*}
\phi_n: k\in\{1,2,\dots,M_n\} \longmapsto \phi_n(k)=x_{k,1},x_{k,2},\dots,x_{k,n}\in\X^n
\end{align*}
and is mapped by $n$ uses of the channel to
\begin{align*}
W^{\otimes n}(\phi_n(k))=W(x_{k,1})\otimes W(x_{k,2})\otimes\dots\otimes W(x_{k,n}) \in \S(\H^{\otimes n}).
\end{align*}
The set $\{\phi_n(k)\}_{k=1}^{M_n}\subset\X^n$ is called a codebook, which is agreed upon by the sender and the receiver in advance.
The decoding process, called the decoder, is described by a POVM $D_n=\{D_n(k)\}_{k=1}^{M_n}$ on $\H^{\otimes n}$,
where the outcomes $1,2,\dots,M_n$ indicate decoded messages.
The pair $\C_n=(\phi_n,D_n)$ is called a code
with cardinality $|\C_n|:=M_n$.

When the message $k$ was sent, the probability of obtaining the outcome $l$ is given by
\begin{align*}
P(l|k) = \Tr W^{\otimes n}(\phi_n(k)) D_n(l).
\end{align*}
The average error probability of the code $\C_n$ is then given by
\begin{align*}
P_e(W^{\otimes n},\C_n) = 1-\frac{1}{M_n}\sum_{k=1}^{M_n}\Tr W^{\otimes n}(\phi_n(k)) D_n(k),
\end{align*}
which is required to vanish asymptotically for reliable communication.
At the same time, the aim of classical-quantum channel coding is to make the transmission rate
$\liminf_{n\to+\infty}\frac{1}{n}\log|\C_n|$ as large as possible.
The channel capacity $C(W)$ is defined as the supremum of achievabile rates
with asymptotically vanishing error probabilities, i.e.,
\begin{align*}
C(W)=\sup\Bigl\{R \Bigm| \exists\{\C_n\}_{n=1}^{\infty} \; \text{such that} \;
\liminf_{n\to\infty}\frac{1}{n}\log|\C_n| \ge R \;\;\text{and}\; \lim_{n\to\infty}P_e(\C_n,W^{\otimes n}) = 0 \Bigr\}.
\end{align*}
According to the Holevo-Schumacher-Westmoreland theorem \cite{H,SW},
\begin{align}
C(W)=\chi(W),
\label{HSW}
\end{align}
where $\chi(W)$ is the Holevo capacity from \eqref{Holevo3}.

By the definition of $C(W)$, \eqref{HSW} means that for any rate $R$ below the Holevo capacity, there exists a sequence of codes
with rate $R$ and asymptotically vanishing error probability. Moreover, it is known that the error probability
can be made to vanish with an exponential speed \cite{Hayashicq}. On the other hand, the strong converse
theorem of classical-quantum channel coding \cite{ON99,W} tells that for any sequence of codes with a rate above
the Holevo capacity, the error probability
inevitably goes to $1$, with an exponential speed, or equivalently, the success probability
\begin{align*}
P_s(W^{\otimes n},\C_n): = 1-P_e(W^{\otimes n},\C_n) 
\end{align*}
decays to zero exponentially fast. The optimal achievable exponent of this decay for a given rate $R$ is called the strong converse exponent $sc(R,W)$:

\begin{definition}[strong converse exponent]
\label{def:sc-exponent}
The success rate $r$ is said to be $R$-achievable, if there exists a sequence of codes $\{\C_n\}_{n=1}^{\infty}$ such that
\begin{align}
\liminf_{n\to\infty}\frac{1}{n}\log|\C_n| \ge R
\quad \text{and} \quad
\liminf_{n\to\infty} \frac{1}{n}\log P_s(\C_n,W^{\otimes n}) \ge -r.
\label{def:achievable-rate}
\end{align}
The strong converse exponent corresponding to the rate $R$ is the infimum of all $R$-achievable rates:
\begin{align*}
sc(R,W)=\inf\Set{r | \text{$r$ is $R$-achievable}}.
\end{align*}
\end{definition}

Alternatively, the strong converse exponent can be expressed as
\begin{align*}
sc(R,W)=\inf\left\{-\liminf_{n\to\infty} \frac{1}{n}\log P_s(\C_n,W^{\otimes n})\,\Big\vert\,\liminf_{n\to\infty}\frac{1}{n}\log|\C_n| \ge R\right\},
\end{align*}
where the infimum is taken over all sequences of codes $\{\C_n\}_{n\in\bN}$. Note that we take the infimum here, as our aim is to make the success probability vanish as slow as possible.
\smallskip

Our main result is the following expression for the strong converse exponent, which is an exact analogue of the
Arimoto-Dueck-K\"orner exponent for classical channels.
\begin{theorem}\label{thm:sc exponent}
Let $W:\,\X\to\S(\hil)$ be a classical-quantum channel. For any rate $R\ge 0$,
\begin{align}\label{main theorem}
sc(R,W)=\sup_{\alpha>1}\frac{\alpha-1}{\alpha}\left\{R-\chi_{\alpha}\nw(W)\right\}=:H_{R,c}\nw(W).
\end{align}
\end{theorem}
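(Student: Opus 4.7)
The plan is to prove the two inequalities $sc(R,W)\ge H_{R,c}^{*}(W)$ and $sc(R,W)\le H_{R,c}^{*}(W)$ separately, as indicated in the outline of Section \ref{sec:sc}.

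For the lower bound $sc(R,W)\ge H_{R,c}^{*}(W)$, I would follow Nagaoka's approach based on the data-processing inequality for $D_\alpha^{*}$ with $\alpha>1$. Given any code $\C_n=(\phi_n,D_n)$ with $|\C_n|=M_n$, I introduce the uniform mixture $\omega_n:=\frac{1}{M_n}\sum_{k=1}^{M_n}\pure{k}\otimes W^{\otimes n}(\phi_n(k))=\ext{W}^{\otimes n}(P_n)$ for the empirical input distribution $P_n$, and compare it to $P_n\otimes\sigma^{\otimes n}$ for arbitrary $\sigma\in\S(\hil)$. The decoding POVM together with comparison to the maximally-mixed ``guessing'' distribution on messages reduces the task to the binary hypothesis testing between $\ext{W}^{\otimes n}(P_n)$ and $P_n\otimes\sigma^{\otimes n}$ with acceptance operator $\sum_k\pure{k}\otimes D_n(k)$. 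Monotonicity of $D_\alpha^{*}$ under this measurement then yields $\log P_s(W^{\otimes n},\C_n)\ge \frac{\alpha-1}{\alpha}\log M_n-\frac{\alpha-1}{\alpha}D_\alpha^{*}(\ext{W}^{\otimes n}(P_n)\|P_n\otimes\sigma^{\otimes n})-\log M_n$ up to constants. Optimizing over $\sigma$ gives $\chi_{\alpha,1}^{*}(W^{\otimes n},P_n)$, and the additivity in Lemma \ref{lemma:chi additivity} plus Proposition \ref{prop:cap equal} bound this by $n\chi_\alpha^{*}(W)$. Taking $\liminf$, dividing by $n$, and optimizing over $\alpha>1$ gives the bound.

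For the upper bound $sc(R,W)\le H_{R,c}^{*}(W)$, I would proceed in two stages. First, in the spirit of Section \ref{sec:DueckKorner}, I extend Dueck and K\"orner's method to classical-quantum channels by constructing codes via a random selection of codewords drawn from a type class, and bounding the success probability using the relative entropy. This yields an upper bound in the form $sc(R,W)\le \inf_{P\in\P_f(\X)}\sup_{V}\{D(V\|W|P):\chi(V,P)\le R\}$ or an analogous expression, which by Theorem \ref{thm:e-geodesic} (the variational characterization \eqref{D bog var} of $D_\alpha^{\flat}$) converts into an Arimoto-type expression $\sup_{\alpha>1}\frac{\alpha-1}{\alpha}\{R-\chi_\alpha^{\flat}(W)\}$. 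This is, however, \emph{suboptimal}, since by Proposition \ref{lemma:Q ordering} we have $\chi_\alpha^{\flat}(W)\le \chi_\alpha^{*}(W)$ for $\alpha>1$, giving an exponent potentially larger than $H_{R,c}^{*}(W)$.

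Second, the crucial step to remove this gap is the block-pinching trick of Section \ref{sec:pinched}. Applying the suboptimal Dueck-K\"orner-Arimoto bound to the pinched block channel $\E_n W^{\otimes n}:\X^n\to\S(\hil^{\otimes n})$ (viewed as a single-use channel with enlarged input alphabet $\X^n$), and using that block codes for $(\E_n W^{\otimes n})^{\otimes m}$ lift to codes for $W^{\otimes nm}$ with only a negligible loss in success probability (controlled by the pinching inequality $X\le v_{n,d}\E_n(X)$ and the polynomial bound $v_{n,d}\le(n+1)^{(d+2)(d-1)/2}$), yields $sc(R,W)\le \sup_{\alpha>1}\frac{\alpha-1}{\alpha}\{R-\frac{1}{n}\chi_\alpha^{\flat}(\E_n W^{\otimes n})\}+o(1)$. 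By Corollary \ref{cor:pinched chi limit} we have $\frac{1}{n}\chi_{\alpha,1}^{\flat}(\E_n W^{\otimes n},P^{\otimes n})\to \chi_{\alpha,1}^{*}(W,P)$ as $n\to\infty$, and a supremum over $P$ combined with Proposition \ref{prop:cap equal} gives $\frac{1}{n}\chi_\alpha^{\flat}(\E_n W^{\otimes n})\to \chi_\alpha^{*}(W)$. Taking $n\to\infty$ then produces exactly $H_{R,c}^{*}(W)$.

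The main obstacle, and the conceptual heart of the argument, is this final step: the classical-style Dueck-K\"orner exponent naturally outputs the $\flat$ capacity, which is strictly smaller than the $*$ capacity in general (Remark \ref{rem:strict ordering}), so a direct application cannot yield the correct exponent. The key observation is that after block-pinching by the universal symmetric state, the $\flat$ and ``standard'' capacities differ from the $*$ capacity of the original channel only by a subexponential factor (equations \eqref{lb2} and \eqref{v limit}), which vanishes in the per-symbol limit. Verifying that the Dueck-K\"orner construction for the block channel, together with the interchange of the $\sup_{\alpha>1}$ and $\lim_n$, produces a genuine code achieving the claimed exponent for $W^{\otimes nm}$, is the most delicate point, and will require careful handling of uniform continuity in $\alpha$ and of the negligible loss incurred when decoding the un-pinched output using a decoder designed for the pinched channel.
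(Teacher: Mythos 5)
Your proposal follows essentially the same route as the paper: Nagaoka-style monotonicity of $D_{\alpha}^{*}$ for the lower bound (Lemma~\ref{lemma:Nagaoka bound}); a Dueck--K\"orner-type random-coding bound yielding the suboptimal $\chi_{\alpha}^{\flat}$ exponent (Theorems~\ref{thm:achive}--\ref{thm:conversion}, Lemma~\ref{lemma:P optimization}); and block pinching by a universal symmetric state to upgrade $\chi_{\alpha}^{\flat}$ to $\chi_{\alpha}^{*}$ (Theorem~\ref{thm:sc achievability} with Lemma~\ref{lemma:chi bounds} and Corollary~\ref{cor:pinched chi limit}). This is exactly the architecture of the paper's proof, so I'll only flag a few spots where your description, if executed literally, would need small corrections.

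First, in your lower-bound sketch the stated inequality on $\log P_s$ has the wrong sense (you want an \emph{upper} bound on the success probability), and you need a clean reason why $\chi_{\alpha,1}^{*}(W^{\otimes n},P_n)\le n\chi_{\alpha}^{*}(W)$ for a non-i.i.d.\ empirical distribution $P_n$. Lemma~\ref{lemma:chi additivity} only covers product inputs; the paper sidesteps this by using the equality $\chi_{\alpha}^{*}=R_{\alpha}^{*}$ from Proposition~\ref{prop:cap equal} and simply plugging a product reference state $\sigma^{\otimes n}$ into the monotonicity bound, which yields the factor $\big(\sup_x Q_{\alpha}^{*}(W(x)\|\sigma)\big)^n$ directly. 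Second, the paper's Dueck--K\"orner argument uses i.i.d.\ random codewords drawn from $P^{\otimes n}$ with the Hayashi--Nagaoka decoder (Lemma~\ref{lem:achive1}), not type-class selection; over an infinite alphabet the types approach does not apply as stated. Third, and most importantly for the pinching step: there is \emph{no} loss at all when lifting a code for $\E_mW^{\otimes m}$ to one for $W^{\otimes m}$, because the decoder can be chosen invariant under $\E_m^{\otimes k}$ (see \eqref{invariance}), and then $\Tr W^{\otimes mk}(\cdot)D=\Tr \E_m^{\otimes k}(W^{\otimes mk}(\cdot))D$ holds \emph{exactly}, not approximately. The pinching inequality \eqref{pinching inequality} enters only on the capacity side, in Lemma~\ref{lemma:chi bounds}, to control $\chi_{\alpha}^{\flat}(\E_mW^{\otimes m})$ against $\chi_{\alpha}^{*}(W^{\otimes m})$. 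Finally, no uniform continuity in $\alpha$ is needed for the $m\to\infty$ limit: the bound $\chi_{\alpha}^{\flat}(W_m)\ge m\chi_{\alpha}^{*}(W)-3\log v_{m,d}$ holds with the same additive error for all $\alpha>1$, and since $\frac{\alpha-1}{\alpha}\le 1$ the error term $\frac{3}{m}\log v_{m,d}$ pulls out of the $\sup_{\alpha}$ and vanishes by \eqref{v limit}.
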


The proof follows from Lemma \ref{lemma:Nagaoka bound} and Theorem \ref{thm:sc achievability} below.

\begin{rem}
We remark that the inequality 
$sc(R,W)\ge H_{R,c}\nw(W)$ (the optimality part of the theorem) follows by standard techniques; we explain it in the next section for readers' convenience. Hence, the novelty of Theorem \ref{thm:sc exponent} is the converse inequality $sc(R,W)\le H_{R,c}\nw(W)$
(the so-called achievability part).
\end{rem}

\begin{rem}
In binary state discrimination, the strong converse exponent is given by a similar transform of the R\'enyi divergences,
known as the Hoeffding anti-divergence \cite{Hayashibook,MO}; for two states $\rho$ and $\sigma$ and a rate $R$
it is defined as
\begin{align*}
H_{R,c}\nw(\rho\|\sigma):=
\sup_{\a>1}\frac{\a-1}{\a}\left\{R-D_{\alpha}\nw(\rho\|\sigma)\right\}.
\end{align*}
The expression in \eqref{main theorem} is a direct analogue of this for capacities, which we can also extend to other R\'enyi capacities as
\begin{align}\label{H cap def}
H_{R,c}\x(W):=
\sup_{\a>1}\frac{\a-1}{\a}\left\{R-\chi_{\alpha}\x(W)\right\}.
\end{align}
We call these quantities \ki{converse Hoeffding capacities}.
Theorem \ref{thm:sc exponent} shows that it is the converse Hoeffding capacity corresponding to $D_{\alpha}\nw$
that is operationally relevant for the strong converse of c-q channel coding, (just as in state discrimination),
but in our proof in Section \ref{sec:sc}, the quantity corresponding to
$D_{\alpha}\bog$ also plays an important role.
\end{rem}

\subsection{Lower bound for the strong converse exponent}
\label{sec:sc lb}

Applying the method developed in \cite{N,PV,SW12}
to the new R\'enyi relative entropies, we have the following lemma.

\begin{lemma}\label{lemma:Nagaoka bound}
For any classical-quantum channel $W\in C(\H|\X)$ and $R\ge 0$, we have
\begin{align}\label{Nagaoka bound}
sc(R,W)\ge\sup_{\alpha> 1} \frac{\alpha-1}{\alpha} \left\{R-\chi_{\a}^*(W)\right\}.
\end{align}
\end{lemma}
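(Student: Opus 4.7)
The plan is to extend Nagaoka's hypothesis-testing argument \cite{N,PV,SW12} to the sandwiched R\'enyi family, translating the success probability and the message count into a two-outcome classical R\'enyi divergence. Let $\C_n=(\phi_n,D_n)$ be a code with $|\C_n|=M_n$, and let $\tau_n\in\S(\hil^{\otimes n})$ be arbitrary. Introduce on the joint message--output system the block-diagonal classical-quantum states
\begin{align*}
\rho_n:=\frac{1}{M_n}\sum_{k=1}^{M_n}\pure{k}\otimes W^{\otimes n}(\phi_n(k)),\qquad
\sigma_n:=\frac{1}{M_n}\sum_{k=1}^{M_n}\pure{k}\otimes\tau_n,
\end{align*}
and the accept/reject two-outcome measurement corresponding to $\Pi_n:=\sum_k\pure{k}\otimes D_n(k)$ (completed by $I-\Pi_n$). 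Under $\rho_n$ the accept probability equals $P_s(W^{\otimes n},\C_n)$; under $\sigma_n$ it is at most $1/M_n$. The monotonicity of $D_{\alpha}\nw$ under CPTP maps for $\alpha\ge 1/2$ (Theorem \ref{thm:monotonicity}), combined with the observation that for $\alpha>1$ the ``reject'' summand in the resulting two-term classical R\'enyi expression can be discarded, then yields
\begin{align*}
D_{\alpha}\nw(\rho_n\|\sigma_n)\;\ge\;\frac{\alpha}{\alpha-1}\log P_s(W^{\otimes n},\C_n)+\log M_n.
\end{align*}

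The next step is to identify the left-hand side. By the block-diagonal structure of $\rho_n,\sigma_n$ in the message register, together with the scaling rules of Lemma \ref{lemma:basic properties}, a direct calculation gives $Q_{\alpha}\nw(\rho_n\|\sigma_n)=Q_{\alpha}\nw\bz\ext{W}^{\otimes n}(P_n)\,\|\,P_n\otimes\tau_n\jz$, where $P_n\in\P_f(\X^n)$ is the codeword distribution induced by the uniform prior on messages. Taking the infimum over $\tau_n$ therefore bounds the left-hand side above by $\chi_{\alpha,1}\nw(W^{\otimes n},P_n)\le\chi_{\alpha}\nw(W^{\otimes n})$, and the additivity $\chi_{\alpha}\nw(W^{\otimes n})=n\,\chi_{\alpha}\nw(W)$ for $\alpha>1$ (\cite[Theorem~11]{Beigi}; cf.~also Lemma \ref{lemma:chi additivity}) converts this into the per-use estimate
\begin{align*}
-\frac{1}{n}\log P_s(W^{\otimes n},\C_n)\;\ge\;\frac{\alpha-1}{\alpha}\bz\frac{1}{n}\log|\C_n|-\chi_{\alpha}\nw(W)\jz.
\end{align*}

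To conclude, take $\liminf_{n\to\infty}$ of both sides along a code sequence with $\liminf_n\tfrac{1}{n}\log|\C_n|\ge R$; this yields $\liminf_n\tfrac{1}{n}\log P_s(\C_n,W^{\otimes n})\le -\tfrac{\alpha-1}{\alpha}(R-\chi_{\alpha}\nw(W))$, so every $R$-achievable rate $r$ in the sense of Definition \ref{def:sc-exponent} must satisfy $r\ge\tfrac{\alpha-1}{\alpha}(R-\chi_{\alpha}\nw(W))$, and supremising over $\alpha>1$ and infimising over $r$ produces \eqref{Nagaoka bound}. The only non-routine input is the additivity of the sandwiched capacity under tensor products, which is specific to the $D_{\alpha}\nw$ family and is precisely what forces the appearance of $\chi_{\alpha}\nw$ (rather than $\chi_{\alpha}$ or $\chi_{\alpha}\bog$) in the lower bound; the remaining steps---the choice of auxiliary classical-quantum states, the two-outcome hypothesis-testing reduction, the block-diagonal calculation, and the asymptotic passage---are mechanical.
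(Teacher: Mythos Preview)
Your argument is correct and follows the same Nagaoka-style hypothesis-testing reduction as the paper: block-diagonal states indexed by messages, the two-outcome POVM built from the decoder, and monotonicity of $D_{\alpha}\nw$ to extract $P_s^{\alpha}M_n^{\alpha-1}$. The only substantive difference is in how you close the bound on the right-hand side. The paper restricts the reference state to a \emph{product} $\sigma^{\otimes n}$ from the outset, so multiplicativity of $Q_{\alpha}\nw$ gives directly $\frac{1}{M_n}\sum_k\prod_i Q_{\alpha}\nw(W(x_{k,i})\|\sigma)\le\{\sup_x Q_{\alpha}\nw(W(x)\|\sigma)\}^n$, and the infimum over $\sigma$ yields the divergence radius $R_{\alpha}\nw(W)=\chi_{\alpha}\nw(W)$ via Proposition~\ref{prop:cap equal}. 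You instead allow a general $\tau_n\in\S(\hil^{\otimes n})$, land on $\chi_{\alpha}\nw(W^{\otimes n})$, and then invoke the additivity $\chi_{\alpha}\nw(W^{\otimes n})=n\chi_{\alpha}\nw(W)$.

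That last step deserves a caveat: Lemma~\ref{lemma:chi additivity} gives $\chi_{\alpha,1}\nw(W^{\otimes n},P^{\otimes n})=n\chi_{\alpha,1}\nw(W,P)$ for \emph{product} input distributions only, which yields superadditivity of the capacity but not the subadditivity direction you need. The missing inequality $\chi_{\alpha}\nw(W^{\otimes n})\le n\chi_{\alpha}\nw(W)$ does hold for classical-quantum channels, and the easiest proof is precisely the paper's move: restrict to product $\sigma^{\otimes n}$ in the divergence-radius characterization (Proposition~\ref{prop:cap equal}) and use additivity of $D_{\alpha}\nw$ on tensor products. So your route is valid, but the detour through capacity additivity ultimately unpacks to the same elementary step the paper performs directly.
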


\begin{proof}
Suppose that $r$ is $R$-achievable.
Then there exists a sequence of codes
$\C_n=(\phi_n,D_n),\,n\in\bN$,
such that \eqref{def:achievable-rate} holds.
Let $\sigma\in\S(\H)$, and define density operators on $\kil_n:=\bigoplus_{k=1}^{M_n}\H^{\otimes n}$ by
\begin{align*}
R_n:=\frac{1}{M_n}\bigoplus_{k=1}^{M_n}W^{\otimes n}(\phi_n(k)),
\quad\quad
S_n:=\frac{1}{M_n}\bigoplus_{k=1}^{M_n}\sigma^{\otimes n},
\end{align*}
where $M_n:=|\C_n|$.
Note that $T_n:=\bigoplus_{k=1}^{M_n}D_n(k)$ and $T_n^{\perp}:=I_{\kil_n}-T_n$ form
a two-valued POVM $\{T_n,T_n^{\perp}\}$ on $\bigoplus_{k=1}^{M_n}\H^{\otimes n}$,
and we have $\Tr R_n T_n=P_s(W^{\otimes n},\C_n)$ and $\Tr S_n T_n=\frac{1}{M_n}$.
With the notation $\phi_n(k)=x_{k,1},x_{k,2},\dots,x_{k,n}$,
the monotonicity of $Q_{\a}^*$ yields  that for $\a\ge 1$,
\begin{align}
P_s(W^{\otimes n},\C_n)^{\a} \frac{1}{M_n^{1-\a}}
&=(\Tr R_n T_n)^{\a}(\Tr S_n T_n)^{1-\a} \nn
&\le(\Tr R_n T_n)^{\a}(\Tr S_n T_n)^{1-\a} + (\Tr R_n T_n^{\perp})^{\a}(\Tr S_n T_n^{\perp})^{1-\a} \nn
&\le Q_{\a}^*(R_n\|S_n) \nn
&=\frac{1}{M_n}\sum_{k=1}^{M_n}\prod_{i=1}^nQ_{\a}^*(W(x_{k,i})\|\sigma) \nn
&\le \left\{\sup_{x\in\X}Q_{\a}^*(W(x)\|\sigma)\right\}^n,
\label{lower1}
\end{align}
where the last equality follows from the multiplicativity of $Q_{\a}^*$.
Since this holds for every $\sigma\in\S(\hil)$, we get
\begin{align*}
\frac{\alpha}{n}\log P_s(W^{\otimes n},\C_n)+\frac{\alpha-1}{n}\log M_n
\le
\inf_{\sigma\in\S(\hil)}\sup_{x\in\X}\log Q_{\a}^*(W(x)\|\sigma),
\end{align*}
or equivalently,
\begin{align*}
\frac{1}{n}\log P_s(W^{\otimes n},\C_n)
\le
-\frac{\alpha-1}{\alpha}\left\{\frac{1}{n}\log M_n
-\inf_{\sigma\in\S(\hil)}\sup_{x\in\X}D_{\a}^*(W(x)\|\sigma)
\right\}.
\end{align*}
Using Proposition \ref{prop:cap equal}, and taking the limsup in $n$, we get
\begin{align*}
-\frac{\alpha-1}{\alpha}\left\{R
-\chi_{\alpha}^*(W)\right\}
\ge
\limsup_{n\to+\infty}\frac{1}{n}\log P_s(W^{\otimes n},\C_n)
\ge
\liminf_{n\to+\infty}\frac{1}{n}\log P_s(W^{\otimes n},\C_n)
\ge -r.
\end{align*}
Since this is true for every $\alpha> 1$, the assertion follows.
\end{proof}

\subsection{Dueck-K\"orner exponent}
\label{sec:DueckKorner}

In this section we show the following weak converse to Lemma \ref{lemma:Nagaoka bound}:
\begin{theorem}\label{thm:sc upper}
For every $R>0$,
\begin{align}\label{DK bound}
sc(R,W)&\le\sup_{\a>1}\frac{\a-1}{\a}\left\{R-\chi_{\alpha}\bog(W)\right\}.
\end{align}
\end{theorem}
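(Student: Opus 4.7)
The proof is the achievability side of the strong converse: we must construct a sequence of codes $\mathcal{C}_n$ with $\liminf_n \frac{1}{n}\log|\mathcal{C}_n| \ge R$ whose success probability $P_s(W^{\otimes n},\mathcal{C}_n)$ decays no faster than $e^{-nr}$, where $r$ is the right-hand side of \eqref{DK bound}. The strategy has two conceptually distinct stages: first a Dueck--K\"orner style combinatorial/random-coding argument producing a bound in the form of a relative-entropy optimization, then a Lagrangian-duality step that converts that bound into the Arimoto-type expression involving $\chi_\alpha\bog$ by invoking the variational representation of $D_\alpha\bog$.

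For the first stage, I would fix $P\in\P_f(\X)$ and perform random coding by drawing codewords i.i.d.\ from $P^{\otimes n}$, combined with a symmetrized decoder built from the universal symmetric state $\sigma_{u,n}$ of Lemma \ref{lemma:universal}. The symmetric structure of $\sigma_{u,n}$, together with the pinching inequality \eqref{pinching inequality}, makes the resulting error analysis effectively commutative at the cost of a polynomial factor in $n$, which vanishes on the exponential scale by \eqref{v limit}. Combining a pigeonhole/averaging argument analogous to the classical method of types with a Hayashi--Nagaoka or sequential projector decoder should yield a bound of the form
\begin{align*}
-\tfrac{1}{n}\log \Exp P_s(W^{\otimes n},\mathcal{C}_n) \;\le\; \inf_{\sigma\in\S(\hil)}\sup_{\tau}\Bigl\{D(\tau\|P\otimes\sigma)-c\,D(\tau\|\ext{W}(P))\Bigr\} + o(1),
\end{align*}
where the Lagrange multiplier $c=\tfrac{\alpha}{\alpha-1}$ encodes the coding rate $R$, and the supremum runs over states $\tau$ supported on $\ext{W}(P)$.

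For the second stage, I would apply the variational formula \eqref{D bog var} of Theorem \ref{thm:e-geodesic},
\begin{align*}
D_\alpha\bog(\ext{W}(P)\|P\otimes\sigma) = \sup_{\tau}\Bigl\{D(\tau\|P\otimes\sigma)-\tfrac{\alpha}{\alpha-1}D(\tau\|\ext{W}(P))\Bigr\},
\end{align*}
which identifies the right-hand side of the previous display with $\tfrac{\alpha-1}{\alpha}\{R-\chi_{\alpha,1}\bog(W,P)\}$ once the infimum over $\sigma$ is executed and $c$ is identified with $\tfrac{\alpha}{\alpha-1}$. Optimizing over $P\in\P_f(\X)$ and $\alpha>1$, and invoking Proposition \ref{prop:cap equal} to rewrite $\sup_P \chi_{\alpha,1}\bog(W,P)=\chi_\alpha\bog(W)$, yields \eqref{DK bound}. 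The minimax exchanges required along the way are supplied by Lemmas \ref{lemma:minimax2} and \ref{lemma:KF minimax}, whose hypotheses are met thanks to the convexity properties established in Proposition \ref{prop:log convexity} and the lower semicontinuity from Corollary \ref{cor:lsc}.

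The main obstacle will be the first stage: the quantum Dueck--K\"orner argument must be set up so that its one-shot bound has exactly the shape of \eqref{D bog var} after optimization over the reference state, and so that the noncommutativity of $W^{\otimes n}(\phi_n(k))$ and $\sigma^{\otimes n}$ does not obstruct the identification with a relative entropy. This is precisely why $\chi_\alpha\bog$, rather than $\chi_\alpha\nw$ or $\chi_\alpha$, appears naturally here: the Dueck--K\"orner optimization dualizes to $D_\alpha\bog$ through the Golden--Thompson gap (cf.\ Proposition \ref{lemma:Q ordering}), with the overhead of the pinching factor $v_{n,d}$ being absorbed on the exponential scale. The refinement from $\chi_\alpha\bog$ to the sharper $\chi_\alpha\nw$ is deferred to the main result (Theorem \ref{thm:sc exponent}) via the blockwise pinching argument of Section \ref{sec:sc achievability} built upon Corollary \ref{cor:pinched chi limit}.
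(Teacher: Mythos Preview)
Your stage~2 is essentially right and matches the paper's Theorem~\ref{thm:conversion}: the Dueck--K\"orner quantity dualizes to an Arimoto-type expression via the variational formula \eqref{D bog var}, and the minimax swap over $P$ and $\alpha$ (Lemma~\ref{lemma:P optimization}) finishes the job. One minor point: the paper's conversion naturally lands on $\chi_{\alpha,2}\bog(W,P)$ rather than $\chi_{\alpha,1}\bog(W,P)$, because the optimization variable that plays the role of your $\tau$ is a \emph{channel} $V$ (i.e., a family $\{V(x)\}_x$), so the variational formula is applied componentwise; the two agree after $\sup_P$ by Proposition~\ref{prop:cap equal}.

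Stage~1 is where the gap is. The paper does \emph{not} use pinching or the universal symmetric state anywhere in the proof of Theorem~\ref{thm:sc upper}; those tools enter only later, in Section~\ref{sec:sc achievability}, to upgrade $\chi_\alpha\bog$ to $\chi_\alpha\nw$. What you are missing is the \emph{dummy channel technique}: one fixes an auxiliary channel $V$, bounds $P_s(W^{\otimes n},\C_n)$ from below by $e^{-na}\{P_s(V^{\otimes n},\C_n)-\text{(change-of-measure term)}\}$ via Lemma~\ref{lem:code-NP}, then does HSW random coding for $V$ (so $P_s(V^{\otimes n},\C_n)\to 1$ whenever $R<\chi(V,P)$) and controls the change-of-measure term by the quantum Stein lemma (Lemma~\ref{lem:spectrum}). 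This yields, for every $P$, the achievability of
\[
F(P,R,W)=\min_{V}\bigl\{D(V\|W|P)+|R-\chi(V,P)|^+\bigr\},
\]
and it is \emph{this} relative-entropy optimization that, after introducing the Lagrange multiplier $\delta=\tfrac{\alpha-1}{\alpha}$ for the constraint $\chi(V,P)\ge R$, dualizes via \eqref{D bog var} to $\tfrac{\alpha-1}{\alpha}\{R-\chi_{\alpha,2}\bog(W,P)\}$. Your displayed one-shot bound does not have this shape: the rate $R$ must appear additively (coming from the positive-part penalty), not be ``encoded'' in $c$, and the optimization should be over channels $V$ (equivalently, families $\{\tau_x\}$), not a single $\tau$ on the bipartite space. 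Without the dummy channel, there is no clear route from random coding for $W$ directly to the $D_\alpha\bog$ expression.
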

This will follow immediately from Theorems \ref{thm:achive} and \ref{thm:conversion} and Lemma \ref{lemma:P optimization}. We give the proof at the end of the section.
\smallskip

Note that for classical channels, the right-hand sides of the bounds in \eqref{Nagaoka bound} and
\eqref{DK bound} coincide, and the two bounds together give Theorem \ref{thm:sc exponent}.
For quantum channels, however, they need not be the same.
In Section \ref{sec:sc achievability} we will combine the bound in
\eqref{DK bound} with block pinching to obtain an upper bound on $sc(R,W)$ that matches \eqref{Nagaoka bound}.
\medskip

Given classical-quantum channels $V,\,W\in C(\H|\X)$ and a finitely supported probability distribution $P\in\P_f(\X)$,
the quantum relative entropy between the outputs of the extended channels, $\V(P)$ and $\W(P)$, is written as
\begin{align*}
D(\V(P)\|\W(P))=\sum_{x\in\X} P(x) D(V(x)\|W(x))=:D(V\|W|P),
\end{align*}
which is called the conditional quantum relative entropy.

For every $P\in\P_f(\X)$ and $R\ge 0$, let
\begin{align}
F(P,R,W):=\inf_{V\in C(\H|\X)} \{ D(V\|W|P)+|R-\chi(V,P)|^+ \},
\label{def:F-quantity}
\end{align}
where
\begin{align*}
|x|^+=\max\{0,x\},\ds\ds\ds x\in\bR.
\end{align*}
Note that for $D(V\|W|P)$ and $\chi(V,P)$, only the values of $V$ and $W$ on the support of $P$ are relevant, and
therefore we can replace $\X$ with $\supp P$ without loss of generality.
Moreover, $D(V\|W|P)=+\infty$ if there exists an $x\in\supp P$ such that $V(x)^0\nleq W(x)^0$.
Hence, we can restrict the infimum to
\begin{align*}
C_{W,P}:=\left\{V\in C(\hil|\supp P):\, V(x)^0\le W(x)^0,\,x\in\supp P\,\right\}.
\end{align*}
Let us introduce the norm
$\norm{F}:=\sum_{x\in\supp P}\norm{F(x)}_1$ on $\{F:\,\supp P\to\L(\hil)\}$. Then $C_{W,P}$ is compact
w.r.t.~this norm, and it is easy to see that
$V\mapsto D(V\|W|P)+|R-\chi(V,P)|^+$ is continuous on $C_{W,P}(\hil|\X)$. Hence, we have
\begin{align*}
F(P,R,W)&=\min_{V\in C_{W,P}} \{ D(V\|W|P)+|R-\chi(V,P)|^+ \}\\
&=\min_{V\in C(\H|\X)} \{ D(V\|W|P)+|R-\chi(V,P)|^+ \}.
\end{align*}
\smallskip

The following is a direct analogue of the Dueck-K\"orner upper bound \cite{DK}:
\begin{theorem}\label{thm:achive}
For any rate $R> 0$, and any $P\in\P_f(\X)$,
\begin{align*}
sc(R,W)\le F(P,R,W).
\end{align*}
\end{theorem}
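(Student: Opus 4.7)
The plan is to follow the classical Dueck-K\"orner argument adapted to the classical-quantum setting: for each simulating channel $V\in C_{W,P}$, I design codes for $V^{\otimes n}$ with type-$P$ codewords and then transfer them to $W^{\otimes n}$ via a change of measure that costs roughly a factor $e^{-nD(V\|W|P)}$ in success probability. Infimizing over $V$ then yields $sc(R,W)\le F(P,R,W)$.

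Concretely, fix $V\in C_{W,P}$ and $\varepsilon>0$; the goal is a sequence of codes $\C_n$ with $|\C_n|\ge e^{n(R-\varepsilon)}$ and $\frac{1}{n}\log P_s(W^{\otimes n},\C_n)\ge -(D(V\|W|P)+|R-\chi(V,P)|^+ +\varepsilon)+o(1)$. In the sub-capacity regime $R\le\chi(V,P)$, the HSW theorem applied to $V$ with input distribution $P$ (with codewords restricted to the type class $T_P^n$, at only polynomial cost) delivers codes of rate $R-\varepsilon$ with $P_s(V^{\otimes n},\C_n)\to 1$. In the super-capacity regime $R>\chi(V,P)$, I take an HSW code for $V$ of rate $\chi(V,P)-\varepsilon$ with $P_s\to 1$ and replicate each codeword $\lceil e^{n(R-\chi(V,P)+2\varepsilon)}\rceil$ times, distributing its decoder equally among the replicas; this yields a code of rate $R-\varepsilon$ with $P_s(V^{\otimes n},\C_n)\ge e^{-n(R-\chi(V,P)+2\varepsilon)}(1-o(1))$. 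In either case $\frac{1}{n}\log P_s(V^{\otimes n},\C_n)\ge -|R-\chi(V,P)|^+ -2\varepsilon-o(1)$.

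The technical heart is comparing $P_s(W^{\otimes n},\C_n)$ to $P_s(V^{\otimes n},\C_n)$ on codes supported in $T_P^n$. Classically, the log-likelihood ratio $\log V^n(y^n|\sq{x})-\log W^n(y^n|\sq{x})$ concentrates at $nD(V\|W|P)$ under $V^n(\cdot|\sq{x})$, so $V$-probabilities translate into $W$-probabilities at cost $e^{-nD(V\|W|P)}$. The quantum analogue requires, for each $\sq{x}\in T_P^n$, a projection $\Pi_{\sq{x}}$ with $\Tr V^{\otimes n}(\sq{x})\Pi_{\sq{x}}\ge 1-\delta_n$ (with $\delta_n\to 0$) and the operator inequality $\Pi_{\sq{x}}V^{\otimes n}(\sq{x})\Pi_{\sq{x}}\le e^{n(D(V\|W|P)+\eta_n)}W^{\otimes n}(\sq{x})$ with $\eta_n\to 0$. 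This is the principal obstacle, since $V(x)$ and $W(x)$ need not commute. I would handle it by pinching $V^{\otimes n}(\sq{x})$ with $W^{\otimes n}(\sq{x})$: by the pinching inequality \eqref{pinching inequality} this costs only a factor equal to the number of distinct eigenvalues of $W^{\otimes n}(\sq{x})$, which is polynomial in $n$ for $\sq{x}\in T_P^n$ and hence sub-exponential. The pinched operator commutes with $W^{\otimes n}(\sq{x})$, and in a joint eigenbasis one defines $\Pi_{\sq{x}}$ as the projection onto those joint eigenvectors where the (now classical) log-likelihood ratio is at most $nD(V\|W|P)+n\eta_n$; a Chebyshev / law-of-large-numbers argument controls $\delta_n$, exploiting that pinching preserves the diagonal entries, so $\Tr V^{\otimes n}(\sq{x})\Pi_{\sq{x}}=\Tr\E_{W^{\otimes n}(\sq{x})}(V^{\otimes n}(\sq{x}))\Pi_{\sq{x}}$.

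Finally, the gentle measurement lemma applied with each $\Pi_{\phi_n(k)}$ yields
\[
P_s(V^{\otimes n},\C_n)\le e^{n(D(V\|W|P)+\eta_n)}P_s(W^{\otimes n},\C_n)+2\sqrt{\delta_n}.
\]
Choosing the threshold $\eta_n$ so that $\delta_n$ decays faster than $P_s(V^{\otimes n},\C_n)^2$, which is compatible with $\eta_n\to 0$ by the Chebyshev estimate, one obtains $\frac{1}{n}\log P_s(W^{\otimes n},\C_n)\ge -D(V\|W|P)-|R-\chi(V,P)|^+ -2\varepsilon-o(1)$. Letting $\varepsilon\searrow 0$ and infimizing over $V\in C_{W,P}$ completes the proof of $sc(R,W)\le F(P,R,W)$.
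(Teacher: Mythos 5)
Your high-level strategy is the same Dueck--K\"orner one as the paper's: fix a dummy channel $V$ and transfer a good code for $V$ to a code for $W$ at cost $e^{-nD(V\|W|P)}$, with the super-capacity regime handled by rate-padding (replication). But the \emph{mechanism} for transferring success probabilities is genuinely different. The paper never does per-codeword analysis: it uses i.i.d.\ random coding with the Hayashi--Nagaoka decoder, applies the positive-part trace inequality (Lemma~\ref{lem:code-NP}) per codeword, and then takes the expectation so the penalty term collapses to the single quantity $\Tr\bigl(\ext{V}(P)^{\otimes n}-e^{nr}\ext{W}(P)^{\otimes n}\bigr)_+$, which vanishes by quantum Stein's lemma (Lemma~\ref{lem:spectrum}). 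You instead build constant-composition codes and prove, for each codeword, an operator inequality $\Pi_{\sq{x}}V^{\otimes n}(\sq{x})\Pi_{\sq{x}}\le e^{n(D+\eta_n)}W^{\otimes n}(\sq{x})$ via block pinching, and then invoke gentle measurement. This is closer in spirit to the original classical argument but technically heavier, and as written it has two gaps.

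First, the concentration step. After pinching $V^{\otimes n}(\sq{x})$ by $W^{\otimes n}(\sq{x})$, the resulting commuting pair is \emph{not} a product distribution: the spectral projections of $W^{\otimes n}(\sq{x})$ lump together many single-letter eigenspaces, so the post-pinching log-likelihood ratio is not a sum of independent terms. The appeal to ``Chebyshev / law-of-large-numbers'' therefore does not apply directly. (A Chernoff-type bound can be salvaged by controlling $Q_{1+s}\bigl(\E_{W^{\otimes n}(\sq{x})}(V^{\otimes n}(\sq{x}))\,\|\,W^{\otimes n}(\sq{x})\bigr)$ via the monotonicity of $D_{\alpha}$ under CPTP maps for $\alpha\in(1,2]$, but this is a nontrivial additional argument, not an LLN.) Second, your closing assertion --- that $\eta_n$ can simultaneously satisfy $\eta_n\to 0$ and make $\delta_n$ decay faster than $P_s(V^{\otimes n},\C_n)^2$ ``by the Chebyshev estimate'' --- is false in the super-capacity regime: there $P_s(V^{\otimes n},\C_n)$ decays exponentially after replication (roughly $e^{-n(R-\chi(V,P))}$), while Chebyshev gives only polynomial decay for $\delta_n$ (and even a Chernoff rate degenerates to zero as $\eta_n\to 0$). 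The clean fix is to apply the $V\to W$ comparison to the \emph{unreplicated} code, where $P_s(V^{\otimes n},\cdot)\to 1$, and replicate only afterwards (as the paper does in Lemma~\ref{lem:achive2}); alternatively, one must keep the $\|D_k\|_\infty$ factor in the gentle-measurement error so it scales down with the replication factor. In the proposal's stated order and with the looser bound, the super-capacity case does not close.
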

\begin{proof}
Let
\begin{align}
F_1(P,R,W)&:=\inf_{V:\,\chi(V,P)> R} D(V\|W|P),
\label{eq:29} \\
F_2(P,R,W)&:=\inf_{V:\,\chi(V,P)\le R} \{ D(V\|W|P)+R-\chi(V,P) \}.
\label{eq:30}
\end{align}
Then it is easy to see that
\begin{align*}
F(P,R,W)=\min\{F_1(P,R,W),F_2(P,R,W)\},
\end{align*}
and hence the assertion follows from Lemmas \ref{lem:achive1} and \ref{lem:achive2} below.
\end{proof}
\smallskip

We will need the following two lemmas to prove Lemma \ref{lem:achive1}.
The first one is a key tool in the information spectrum method \cite{BD,NH}, and
is a consequence of the quantum Stein's lemma \cite{HP,ON}.
\begin{lemma}\label{lem:spectrum}
For any states $\rho,\,\sigma\in\S(\H)$, we have
\begin{align*}
\lim_{n\to\infty}\Tr(\rho^{\otimes n}-e^{na}\sigma^{\otimes n})_+ =
\begin{cases}
1, & a<D(\rho\|\sigma), \\
0, & a>D(\rho\|\sigma).
\end{cases}
\end{align*}
\end{lemma}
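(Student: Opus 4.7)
The plan is to use the Neyman–Pearson projection $P_n^*:=\{\rho^{\otimes n}-e^{na}\sigma^{\otimes n}>0\}$, which by definition satisfies
$\Tr(\rho^{\otimes n}-e^{na}\sigma^{\otimes n})_+ = \Tr P_n^*\rho^{\otimes n}-e^{na}\Tr P_n^*\sigma^{\otimes n}$, and which is optimal in the sense of \eqref{pp trace}: for every $0\le T\le I$, $\Tr(\rho^{\otimes n}-e^{na}\sigma^{\otimes n})_+\ge \Tr T(\rho^{\otimes n}-e^{na}\sigma^{\otimes n})$. The two cases are then treated separately, one using the direct part of quantum Stein's lemma and the other using the strong converse (i.e.\ Nagaoka's monotonicity argument already used in Lemma \ref{lemma:Nagaoka bound}).

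For the case $a<D(\rho\|\sigma)$, pick $a'$ with $a<a'<D(\rho\|\sigma)$. The direct part of quantum Stein's lemma \cite{HP} gives a sequence of tests $\{T_n\}$ with $\Tr T_n\rho^{\otimes n}\to 1$ and $\Tr T_n\sigma^{\otimes n}\le e^{-na'}$ for all sufficiently large $n$. Then the optimality of $P_n^*$ yields
\[
\Tr(\rho^{\otimes n}-e^{na}\sigma^{\otimes n})_+\ge \Tr T_n\rho^{\otimes n}-e^{na}\Tr T_n\sigma^{\otimes n}\ge \Tr T_n\rho^{\otimes n}-e^{-n(a'-a)},
\]
whose lower limit is $1$, matching the trivial upper bound $\Tr\rho^{\otimes n}=1$.

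For $a>D(\rho\|\sigma)$, set $p_n:=\Tr P_n^*\rho^{\otimes n}$ and $q_n:=\Tr P_n^*\sigma^{\otimes n}$. Non-negativity of $\Tr(\rho^{\otimes n}-e^{na}\sigma^{\otimes n})_+$ forces $q_n\le e^{-na}p_n$. Now apply monotonicity of $Q_\alpha\nw$ (or of $Q_\alpha$) under the two-outcome measurement $\{P_n^*,I-P_n^*\}$ as in the proof of Lemma \ref{lemma:Nagaoka bound} at $\alpha>1$: $p_n^\alpha q_n^{1-\alpha}\le Q_\alpha\nw(\rho\|\sigma)^n$, i.e.
\[
p_n\le Q_\alpha\nw(\rho\|\sigma)^{n/\alpha}\,q_n^{(\alpha-1)/\alpha}\le \exp\!\bigl(n(\alpha-1)[D_\alpha\nw(\rho\|\sigma)-a]/\alpha\cdot\alpha/(\alpha-1)\cdot(\alpha-1)/\alpha\bigr),
\]
which simplifies to $p_n\le\exp\bigl(n(\alpha-1)[D_\alpha\nw(\rho\|\sigma)-a]\bigr)$. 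Since $\lim_{\alpha\searrow 1}D_\alpha\nw(\rho\|\sigma)=D(\rho\|\sigma)<a$ by Lemma \ref{lemma:limit at 1}, fix $\alpha>1$ close enough to $1$ so that $D_\alpha\nw(\rho\|\sigma)<a$; then $p_n\to 0$ exponentially, and hence $\Tr(\rho^{\otimes n}-e^{na}\sigma^{\otimes n})_+\le p_n\to 0$.

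No genuine obstacle is foreseen: the direct half reduces to quoting the existence part of Stein's lemma, and the converse half is the same Renyi-monotonicity trick as in Lemma \ref{lemma:Nagaoka bound}, with the continuity relation $D_\alpha\nw\to D$ at $\alpha=1$ providing the required room. The only care needed is to recognize that $q_n\le e^{-na}p_n$ is automatic from the definition of $P_n^*$, so one never has to control $q_n$ independently.
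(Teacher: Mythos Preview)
Your argument is correct and follows exactly the standard route that the paper is citing (Hiai--Petz for the direct half, Ogawa--Nagaoka's R\'enyi monotonicity argument for the converse half); the paper itself gives no proof beyond the reference. The intermediate display in your converse step is garbled, but the cleaned-up bound $p_n\le Q_\alpha\nw(\rho\|\sigma)^n e^{-na(\alpha-1)}=\exp\bigl(n(\alpha-1)[D_\alpha\nw(\rho\|\sigma)-a]\bigr)$ is what your computation yields and is correct.
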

\smallskip

The second one is the key observation behind the dummy channel technique:
\begin{lemma}\label{lem:code-NP}
Let $W\in C(\H|\X)$ be a classical-quantum channel.
For any code $\C_n=(\phi_n,D_n)$,
any $a\in\R$, and any classical-quantum channel $V\in C(\H|\X)$ (the dummy channel),
we have
\begin{align}
P_s(W^{\otimes n},\C_n) \ge
e^{-na}\left\{ P_s(V^{\otimes n},\C_n)
-\frac{1}{M_n}\sum_{k=1}^{M_n}\Tr\left(V^{\otimes n}(\phi_n(k)) - e^{na}\,W^{\otimes n}(\phi_n(k))\right)_+ \right\}.
\label{eq:22}
\end{align}
\end{lemma}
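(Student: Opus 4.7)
The plan is to establish a pointwise bound per codeword and then average. The only tool needed is the trace inequality \eqref{pp trace}, which gives $\Tr A_{+}\ge\Tr AD$ whenever $A\in\L(\hil)$ is self-adjoint and $D\in\L(\hil)_+$ satisfies $D\le I$. I would apply this with $A:=V^{\otimes n}(\phi_n(k))-e^{na}W^{\otimes n}(\phi_n(k))$ and $D:=D_n(k)$; since $\{D_n(k)\}_{k=1}^{M_n}$ is a POVM, each $D_n(k)$ satisfies $0\le D_n(k)\le I$, and \eqref{pp trace} yields
\begin{align*}
\Tr\bz V^{\otimes n}(\phi_n(k))-e^{na}W^{\otimes n}(\phi_n(k))\jz_{+}
\ge \Tr V^{\otimes n}(\phi_n(k))D_n(k)-e^{na}\Tr W^{\otimes n}(\phi_n(k))D_n(k)
\end{align*}
for every $k\in\{1,\ldots,M_n\}$.

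Next, I would sum these $M_n$ inequalities and divide by $M_n$. Recognising the two terms on the right as $P_s(V^{\otimes n},\C_n)$ and $e^{na}P_s(W^{\otimes n},\C_n)$ respectively, this produces
\begin{align*}
\frac{1}{M_n}\sum_{k=1}^{M_n}\Tr\bz V^{\otimes n}(\phi_n(k))-e^{na}W^{\otimes n}(\phi_n(k))\jz_{+}
\ge P_s(V^{\otimes n},\C_n)-e^{na}P_s(W^{\otimes n},\C_n).
\end{align*}
Rearranging to isolate $e^{na}P_s(W^{\otimes n},\C_n)$ and multiplying through by $e^{-na}$ produces precisely \eqref{eq:22}.

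There is no real obstacle: the content of the lemma is a one-step ``change of measure'' bound, and \eqref{pp trace} already encodes exactly the control one needs on the mismatch between $V^{\otimes n}$ and $e^{na}W^{\otimes n}$ relative to the decoding operators. The substantive use of the lemma comes in the sequel, where it is combined with Lemma \ref{lem:spectrum} and the choice of a dummy channel $V$ with $\chi(V,P)>R$ to obtain the Dueck-K\"orner-type upper bound on $sc(R,W)$ claimed in Lemma \ref{lem:achive1}.
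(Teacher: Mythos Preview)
Your proof is correct and follows essentially the same approach as the paper: apply \eqref{pp trace} with $A=V^{\otimes n}(\phi_n(k))-e^{na}W^{\otimes n}(\phi_n(k))$ and $D=D_n(k)$, then average over $k$ and rearrange. The only cosmetic difference is that the paper rearranges each per-codeword inequality before summing, whereas you sum first and then rearrange; the content is identical.
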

\begin{proof}
Let $\C_n=(\phi_n,D_n)$ be a code with $|\C_n|=M_n$. According to \eqref{pp trace}, we have
\begin{align*}
\Tr\left(V^{\otimes n}(\phi_n(k)) - e^{na}\,W^{\otimes n}(\phi_n(k))\right)_+
\ge
\Tr\left(V^{\otimes n}(\phi_n(k)) - e^{na}\,W^{\otimes n}(\phi_n(k))\right)D_n(k),
\end{align*}
and hence
\begin{align*}
\Tr W^{\otimes n}(\phi_n(k))D_n(k)
\ge
e^{-na}\Tr V^{\otimes n}(\phi_n(k)) D_n(k)
-e^{-na}
\Tr\left(V^{\otimes n}(\phi_n(k)) - e^{na}\,W^{\otimes n}(\phi_n(k))\right)_+
\end{align*}
for every $k\in\{1,\ldots,M_n\}$.
Summing over $k$ and dividing by $M_n$ yields \eqref{eq:22}.
\end{proof}

\begin{lemma}\label{lem:achive1}
In the setting of Theorem \ref{thm:achive}, we have
\begin{textmath}
sc(R,W)\le F_1(P,R,W).
\end{textmath}
\end{lemma}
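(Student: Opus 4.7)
The plan is to use a dummy channel / random coding argument. Assume $F_1(P,R,W)<+\infty$ (otherwise the inequality is vacuous). Given $\varepsilon>0$, pick $V\in C(\hil|\X)$ with $\chi(V,P)>R$ and $D(V\|W|P)<F_1(P,R,W)+\varepsilon$, and choose $R'\in(R,\chi(V,P))$. Apply the HSW theorem to the dummy channel $V$ via random coding with codewords drawn i.i.d.\ from $P^{\otimes n}$: the resulting random codebook of size $M_n:=\lceil e^{nR'}\rceil$, equipped with the standard HSW decoder, satisfies $\mathbb{E}[P_s(V^{\otimes n},\mathcal{C}_n)]\to 1$.

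Set $a:=D(V\|W|P)+\varepsilon$. Lemma \ref{lem:code-NP} reduces the problem to controlling the expected correction term
\begin{align*}
\mathbb{E}\Bigl[\tfrac{1}{M_n}\sum_{k=1}^{M_n}\Tr\bigl(V^{\otimes n}(\phi_n(k))-e^{na}W^{\otimes n}(\phi_n(k))\bigr)_+\Bigr].
\end{align*}
The key observation is that $\ext{V}(P)^{\otimes n}$ and $\ext{W}(P)^{\otimes n}$ are block-diagonal in the input register $\{\ket{\vec{x}}\}_{\vec{x}\in\X^n}$ with blocks $P^{\otimes n}(\vec{x})V^{\otimes n}(\vec{x})$ and $P^{\otimes n}(\vec{x})W^{\otimes n}(\vec{x})$ respectively, so the positive part decomposes block by block and the above expectation equals
\begin{align*}
\mathbb{E}_{\vec{X}\sim P^{\otimes n}}\!\bigl[\Tr(V^{\otimes n}(\vec{X})-e^{na}W^{\otimes n}(\vec{X}))_+\bigr]
=\Tr\bigl(\ext{V}(P)^{\otimes n}-e^{na}\ext{W}(P)^{\otimes n}\bigr)_+.
\end{align*}
Since $D(\ext{V}(P)\|\ext{W}(P))=D(V\|W|P)<a$, Lemma \ref{lem:spectrum} forces this trace to tend to $0$ as $n\to\infty$.

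A Markov and union bound argument then produces a deterministic codebook $\mathcal{C}_n$ with $P_s(V^{\otimes n},\mathcal{C}_n)\ge 1/2$ and correction term at most $1/4$ for all sufficiently large $n$; for such a code Lemma \ref{lem:code-NP} gives $P_s(W^{\otimes n},\mathcal{C}_n)\ge \tfrac{1}{4}e^{-na}$. Consequently $\liminf_n\tfrac{1}{n}\log P_s(W^{\otimes n},\mathcal{C}_n)\ge -a$ while $\liminf_n\tfrac{1}{n}\log|\mathcal{C}_n|\ge R'>R$, so the success rate $D(V\|W|P)+\varepsilon$ is $R$-achievable. Letting $\varepsilon\downarrow 0$ and taking the infimum over admissible $V$ yields $sc(R,W)\le F_1(P,R,W)$. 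The main (small) obstacle is the identification of the expected correction term with the single hypothesis-testing quantity $\Tr(\ext{V}(P)^{\otimes n}-e^{na}\ext{W}(P)^{\otimes n})_+$, after which the classical-quantum Stein's lemma finishes the argument without any recourse to type-class decompositions.
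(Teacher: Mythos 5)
Your argument follows the same route as the paper's: dummy channel $V$ with $\chi(V,P)>R$ and $D(V\|W|P)<r$, random coding over $P^{\otimes n}$ with an achievability decoder for $V$, Lemma~\ref{lem:code-NP} to lower-bound $P_s(W^{\otimes n},\C_n)$, the block-diagonal identification of the expected correction with $\Tr(\ext{V}(P)^{\otimes n}-e^{na}\ext{W}(P)^{\otimes n})_+$, and Lemma~\ref{lem:spectrum} to kill that term. The only cosmetic difference is that you derandomize by a Markov/union-bound step to extract a specific code sequence, whereas the paper works directly with the expectation and uses $\max_{\C_n}P_s\ge E[P_s]$; both are equally valid.
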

\begin{proof}
We show that any rate $r$ satisfying $r>F_1(P,R,W)$ is $R$-achievable.
By the definition \eqref{eq:29}, there exists a classical-quantum channel $V$ such that
\begin{align}
r&>D(V\|W|P),
\label{eq:31}
\\
R&<\chi(V,P).
\label{eq:32}
\end{align}
By Lemma \ref{lem:code-NP}, we have
\begin{align}
P_s(W^{\otimes n},\C_n) \ge
e^{-nr}\left\{ P_s(V^{\otimes n},\C_n)
-\frac{1}{M_n}\sum_{k=1}^{M_n}\Tr\left(V^{\otimes n}(\phi_n(k)) - e^{nr}\,W^{\otimes n}(\phi_n(k))\right)_+ \right\}
\label{eq:33}
\end{align}
for 
any code $\C_n=(\phi_n,D_n)$.
Now we apply the random coding argument and choose codewords
$\phi_n(k)\in\X^n$, $k=1,2,\dots,M_n=\lceil e^{nR}\rceil$,
independently and identically according to $P^{\otimes n}$.
For the decoder, we choose the Hayashi-Nagaoka decoder \cite{HN}.

Let $E\left[ \,\cdot\, \right]$ denote the expectation w.r.t.~the random coding ensemble.
Taking the expectation of both sides of \eqref{eq:33} w.r.t.~$E$, we get
\begin{align*}
E\left[ P_s(W^{\otimes n},\C_n) \right]
&\ge
e^{-nr}\left\{ E\left[ P_s(V^{\otimes n},\C_n) \right]
-\sum_{\sq{x}\in\X^n}P^n(\sq{x})\Tr\left(V^{\otimes n}(\sq{x}) - e^{nr}\,W^{\otimes n}(\sq{x})\right)_+ \right\}\\
&=
e^{-nr}\left\{ E\left[ P_s(V^{\otimes n},\C_n) \right]
-\Tr\bz\ext{V}(P)^{\otimes n}-e^{nr}\ext{W}(P)^{\otimes n}\jz_+\right\}.
\end{align*}
Since $R<\chi(V,P)$, the results of \cite{HN} yield
\begin{align}
\lim_{n\to\infty} E\left[ P_s(V^{\otimes n},\C_n) \right] = 1,
\label{eq:34}
\end{align}
and, since $r>D(V\|W|P)=D(\ext{V}(P)\|\ext{W}(P)$, Lemma \ref{lem:spectrum} yields
\begin{align*}
\lim_{n\to+\infty}\Tr\bz\ext{V}(P)^{\otimes n}-e^{nr}\ext{W}(P)^{\otimes n}\jz_+=0.
\end{align*}
Hence,
\begin{align*}
&\liminf_{n\to\infty}\frac{1}{n}\log \max_{\C_n} P_s(W^{\otimes n},\C_n) \nn
&\ds\ge \liminf_{n\to\infty}\frac{1}{n}\log E\left[ P_s(W^{\otimes n},\C_n) \right] \nn
&\ds\ge -r + \lim_{n\to\infty}\frac{1}{n}\log\left\{ E\left[ P_s(V^{\otimes n},\C_n) \right]
-\sum_{\sq{x}\in\X^n}P^n(\sq{x}) \Tr\left(V^{\otimes n}(\sq{x}) - e^{nr}\,W^{\otimes n}(\sq{x})\right)_+ \right\} \nn
&\ds= -r,
\end{align*}
where the maximum in the first line is taken over all codes with cardinality $|\C_n|=\lceil e^{nR}\rceil$.
Thus, $r$ is $R$-achievable.
\end{proof}

\begin{lemma}\label{lem:achive2}
In the setting of Theorem \ref{thm:achive}, we have
\begin{textmath}
sc(R,W)\le F_2(P,R,W).
\end{textmath}
\end{lemma}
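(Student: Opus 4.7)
The plan is to adapt the random coding argument of Lemma~\ref{lem:achive1}, but with codewords chosen for a dummy channel $V$ at a rate $R'$ slightly below $\chi(V,P) \le R$, and then to replicate each codeword $K_n \approx e^{n(R-R')}$ times to inflate the code size up to $M_n \approx e^{nR}$. Given $r > F_2(P,R,W)$, the definition \eqref{eq:30} supplies a channel $V$ with $\chi(V,P)\le R$ and $r > D(V\|W|P) + R - \chi(V,P)$. Fix $\delta>0$ small enough that $r > D(V\|W|P) + R - \chi(V,P) + 2\delta$, and set $R' := \max\{\chi(V,P)-\delta,0\}$ together with $a := D(V\|W|P) + \delta$.

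For each $n$, put $L_n := \lceil e^{nR'}\rceil$, $K_n := \lceil e^{n(R-R')}\rceil$, and $M_n := K_n L_n$ (so that $\liminf_n \frac{1}{n}\log M_n \ge R$). Draw codewords $\phi_n'(1),\dots,\phi_n'(L_n)$ i.i.d.\ from $P^{\otimes n}$ and use the Hayashi--Nagaoka decoder $D_n' = \{D_n'(l)\}_{l=1}^{L_n}$ for $V^{\otimes n}$ based on these codewords. Form the enlarged code $\C_n = (\phi_n,D_n)$ by
\begin{align*}
\phi_n(k) := \phi_n'\bigl(\lceil k/K_n\rceil\bigr),\qquad D_n(k) := K_n^{-1}\, D_n'\bigl(\lceil k/K_n\rceil\bigr),\qquad k=1,\dots,M_n.
\end{align*}
Since each $D_n'(l)$ appears $K_n$ times with coefficient $1/K_n$, the $D_n(k)$'s sum to $\sum_l D_n'(l) \le I$; one may complete to a POVM by absorbing the defect into any fixed decoder element, which only increases $P_s$. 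A direct calculation then yields $P_s(V^{\otimes n},\C_n) = (L_n/M_n)\, P_s(V^{\otimes n},\C_n')$.

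Applying Lemma~\ref{lem:code-NP} to $\C_n$ with dummy channel $V$ and threshold $a$, and taking expectations over the random codewords $\phi_n'$, I obtain, using i.i.d.\ sampling and the block-diagonal structure of $\ext V(P)^{\otimes n} - e^{na}\ext W(P)^{\otimes n}$ in the classical register,
\begin{align*}
\Exp\!\left[\frac{1}{M_n}\sum_{k=1}^{M_n}\Tr\bigl(V^{\otimes n}(\phi_n(k)) - e^{na}W^{\otimes n}(\phi_n(k))\bigr)_+\right] = \frac{L_n}{M_n}\Tr\bigl(\ext V(P)^{\otimes n} - e^{na}\ext W(P)^{\otimes n}\bigr)_+.
\end{align*}
The right-hand side vanishes as $n\to\infty$ by Lemma~\ref{lem:spectrum}, since $a > D(V\|W|P) = D(\ext V(P)\|\ext W(P))$. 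Meanwhile, the Hayashi--Nagaoka analysis \cite{HN} delivers $\Exp[P_s(V^{\otimes n},\C_n')]\to 1$ because $R' < \chi(V,P)$ (or trivially if $R' = 0$). Combining these facts,
\begin{align*}
\liminf_{n\to\infty}\frac{1}{n}\log \max_{\C_n} P_s(W^{\otimes n},\C_n) \;\ge\; -a - (R-R') \;\ge\; -\bigl(D(V\|W|P)+R-\chi(V,P)+2\delta\bigr) \;>\; -r,
\end{align*}
so $r$ is $R$-achievable.

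The main obstacle is essentially bookkeeping: verifying the sub-POVM property after replicating decoders, carefully tracking the factor $L_n/M_n$ through both the success and the error terms, and handling the edge case $\chi(V,P) \le \delta$ (where $R'=0$ and one takes the trivial one-message code for $V$, in which case the two rightmost inequalities in the displayed chain still hold since $\chi(V,P)\le\delta$). No new inequalities beyond those already used in Lemma~\ref{lem:achive1} are needed.
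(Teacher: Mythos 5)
The key displayed identity in your proof is wrong, and fixing it breaks the argument as organized. You claim
\begin{align*}
\Exp\!\left[\frac{1}{M_n}\sum_{k=1}^{M_n}\Tr\bigl(V^{\otimes n}(\phi_n(k)) - e^{na}W^{\otimes n}(\phi_n(k))\bigr)_+\right]
= \frac{L_n}{M_n}\Tr\bigl(\ext{V}(P)^{\otimes n} - e^{na}\ext{W}(P)^{\otimes n}\bigr)_+,
\end{align*}
but the error term in Lemma~\ref{lem:code-NP} depends only on the codewords, not on the decoders, so the scaling $D_n(k)=K_n^{-1}D_n'(\lceil k/K_n\rceil)$ plays no role here. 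Since each $\phi_n'(l)$ appears exactly $K_n$ times among $\{\phi_n(k)\}_{k=1}^{M_n}$ and $M_n=K_nL_n$, the left-hand side equals $\frac{1}{L_n}\sum_{l=1}^{L_n}\Tr(\cdots\phi_n'(l)\cdots)_+$, whose expectation is $\Tr(\ext{V}(P)^{\otimes n}-e^{na}\ext{W}(P)^{\otimes n})_+$ with no prefactor $L_n/M_n$. By contrast the success term $\Exp[P_s(V^{\otimes n},\C_n)]$ does carry the factor $L_n/M_n$ (because of the $1/K_n$ in the decoder), so after taking expectations in Lemma~\ref{lem:code-NP} applied to $\C_n$ you get a bracket of the form $\frac{L_n}{M_n}\Exp[P_s(V^{\otimes n},\C_n')]-\Tr(\cdots)_+$, in which the first term decays exponentially at rate $R-R'$ while Lemma~\ref{lem:spectrum} only tells you the second term goes to zero, with no speed guarantee. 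The sign of the bracket is therefore uncontrolled, and the bound is vacuous whenever $R'<R$.

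The repair requires restructuring rather than bookkeeping: apply Lemma~\ref{lem:code-NP} (equivalently, invoke Lemma~\ref{lem:achive1}) to the \emph{small} code $\C_n'$ at rate $R'<\chi(V,P)$ with threshold $r_1>D(V\|W|P)$, obtaining an actual sequence of $W$-codes with $\liminf_n\frac{1}{n}\log P_s(W^{\otimes n},\C_n')\ge -r_1$ and a bracket that tends to $1$; only then inflate. Because the inflation multiplies the $W$-success probability by exactly $L_n/M_n$ (with no competing error term left to track), the inflated code has success-probability decay rate at least $-(r_1+R-R')\ge -r$. This is precisely the route the paper takes: it first extracts a code $\Psi_n$ from Lemma~\ref{lem:achive1} at rate $R_1=\chi(V,P)-\delta$ with exponent $r_1=r-R+\chi(V,P)-\delta$, then pads it to rate $R$ by repeating $\psi_n(1)$ and assigning zero POVM elements to the new messages, and reads off $P_s(W^{\otimes n},\C_n)=\frac{N_n}{M_n}P_s(W^{\otimes n},\Psi_n)$. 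Your replication construction and the handling of the edge case $\chi(V,P)\le\delta$ are fine; only the order in which Lemma~\ref{lem:code-NP} and the inflation are composed needs to be inverted.
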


\begin{proof}
First note that if $P$ is supported on one single point $x_0\in\X$ then $\chi(V,P)=0$ for every channel $V$, and hence
$F_2(P,R,W)=R$. It is easy to see that the trivial encoding $\phi_n$ that assigns
$\phi_n(k)_i:=x_0,\,i\in\{1,\ldots,n\}$, to every message $k\in\{1,\ldots,\lceil e^{nR}\rceil\}$, together with any decoding yields a code $\C_n$ with transmission rate $R$ and success rate $R$, showing that
$sc(R,W)\le R=F_2(P,R,W)$. Hence for the rest we will assume that $|\supp P|>2$.

To prove the assertion,
it is enough to show that any rate $r$ satisfying $r>F_2(P,R,W)$ is $R$-achievable.
By the definition \eqref{eq:30}, there exists a classical-quantum channel $V$ such that
\begin{align}
r&>D(V\|W|P)+R-\chi(V,P),
\label{eq:37}
\\
R&\ge \chi(V,P).
\label{eq:38}
\end{align}
Using the assumption that $|\supp P|>2$, and the continuity of $V\mapsto \chi(V,P)$ and $V\mapsto D(V\|W|P)$, we can assume that
$\chi(V,P)>0$, while \eqref{eq:37} and \eqref{eq:38} still hold.
Let $\delta>0$ be such that $r>D(V\|W|P)+R-\chi(V,P)+\delta$ and
$R_1:=\chi(V,P)-\delta>0$. Then we have
\begin{align*}
r_1:=r-R+\chi(V,P)-\delta &> D(V\|W|P),
\\
R_1 &< \chi(V,P).
\end{align*}
Since \eqref{eq:31} and \eqref{eq:32} are satisfied for $r_1$ and $R_1$,
we can see that $r_1$ is $R_1$-achievable from Lemma \ref{lem:achive1}, i.e.,
there exists a sequence of codes $\Psi_{n}=(\psi_n,Y_n)$ such that
\begin{align*}
\liminf_{n\to\infty}\frac{1}{n}\log|\Psi_n| \ge -R_1,
\quad \liminf_{n\to\infty}\frac{1}{n}\log P_s(W^{\otimes n},\Psi_n) \ge -r_1.
\end{align*}
Let $N_n=\lceil e^{nR_1} \rceil$ and $M_n=\lceil e^{nR}\rceil$. Since $N_n\le M_n$ holds,
we can expand the code $\Psi_n=(\psi_n,Y_n)$ to construct a code $\C_n=(\phi_n,D_n)$ with the rate $R$ by
\begin{align*}
\phi_n(k):=
\begin{cases}
\psi_n(k) & (1\le k\le N_n) \\
\psi_n(1) & (N_n< k\le M_n)
\end{cases}, \quad\quad\quad
D_n(k:)=
\begin{cases}
Y_n(k) & (1\le k\le N_n) \\
0 & (N_n< k\le M_n).
\end{cases}
\end{align*}
Then we have
\begin{align*}
P_s(W^{\otimes n},\C_n)
=\frac{1}{M_n} \sum_{k=1}^{M_n} \Tr W^{\otimes n}(\phi_n(k))D_n(k)
=\frac{N_n}{M_n} \frac{1}{N_n} \sum_{k=1}^{N_n} \Tr W^{\otimes n}(\psi_n(k))Y_n(k)
=\frac{N_n}{M_n} P_s(W^{\otimes n},\Psi_n),
\end{align*}
and hence,
\begin{align*}
\liminf_{n\to\infty}\frac{1}{n}\log P_s(W^{\otimes n},\C_n)
&\ge \liminf_{n\to\infty}\frac{1}{n}\log P_s(W^{\otimes n},\Psi_n)+\lim_{n\to\infty}\frac{1}{n}\log \frac{N_n}{M_n} \nn
&\ge -r_1 + R_1 - R \nn
&= -(r - R+\chi(V,P)-\delta) + \chi(V,P)-\delta -R \nn
&= -r,
\end{align*}
proving that $r$ is $R$-achievable.
\end{proof}
\medskip

Our next step is deriving
another representation for $F(P,R,W)$ defined in \eqref{def:F-quantity}.

\begin{theorem}\label{thm:conversion}
Given $W\in C(\H|\X)$ and $R\ge 0$, for any $P\in\P_f(\X)$, we have
\begin{align}\label{div sphere rep1}
F(P,R,W)&=
\sup_{\a>1}\frac{\a-1}{\a}\left\{R-\chi_{\alpha,2}\bog(W,P)\right\}.
\end{align}
\end{theorem}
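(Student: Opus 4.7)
The strategy is to first establish the intermediate Gallager-type identity
\begin{align*}
\chi_{\alpha,2}\bog(W,P) = \sup_{V\in C_{W,P}} \left\{\chi(V,P) - \frac{\alpha}{\alpha-1} D(V\|W|P)\right\}
\end{align*}
for every $\alpha>1$, and then to apply a second minimax to the defining formula of $F(P,R,W)$ in order to turn the $|\cdot|^+$ into a Legendre-type transform.

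To establish the key identity, I start from the variational representation \eqref{D bog var} of Theorem~\ref{thm:e-geodesic}, which reads, for $\alpha>1$,
\begin{align*}
D_{\alpha}\bog(W(x)\|\sigma) = \max_{\tau_x\in\S_{W(x)}(\hil)} \left\{ D(\tau_x\|\sigma) - \frac{\alpha}{\alpha-1} D(\tau_x\|W(x)) \right\}.
\end{align*}
Interpreting $\{\tau_x\}_x$ as a dummy channel $V\in C_{W,P}$, summing against $P(x)$, and taking infimum over $\sigma$ yields
\begin{align*}
\chi_{\alpha,2}\bog(W,P) = \inf_{\sigma\in\S(\hil)}\sup_{V\in C_{W,P}} k(V,\sigma),\qquad k(V,\sigma) := \sum_x P(x)\left\{D(V(x)\|\sigma) - \frac{\alpha}{\alpha-1}D(V(x)\|W(x))\right\}.
\end{align*}
The heart of the argument is to swap this infimum and supremum. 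The crucial observation is that $V\mapsto k(V,\sigma)$ is concave for $\alpha>1$: using the elementary identity $D(\tau\|\rho) = -H(\tau) - \Tr\tau\logn\rho$ one obtains
\begin{align*}
D(\tau\|\sigma) - \frac{\alpha}{\alpha-1}D(\tau\|\rho) = \frac{1}{\alpha-1}H(\tau) + \Tr\tau\left(\frac{\alpha}{\alpha-1}\logn\rho - \logn\sigma\right),
\end{align*}
which is concave in $\tau$ because $\frac{1}{\alpha-1}>0$ and $H$ is concave. Meanwhile, $\sigma\mapsto k(V,\sigma)$ is convex and lower semicontinuous by Proposition~\ref{prop:log convexity} and Corollary~\ref{cor:lsc}. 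Since $\S(\hil)$ is compact convex, Lemma~\ref{lemma:KF minimax} with conditions (i) and (ii) applies and yields $\inf_\sigma\sup_V k = \sup_V\inf_\sigma k$, and the inner infimum is attained at $\sigma = V(P)$ by \eqref{Holevo2}, giving $\inf_\sigma k(V,\sigma) = \chi(V,P) - \frac{\alpha}{\alpha-1}D(V\|W|P)$. This establishes the key identity.

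For the main claim, I write $|R-\chi(V,P)|^+ = \sup_{\lambda\in[0,1]}\lambda(R-\chi(V,P))$, so that
\begin{align*}
F(P,R,W) = \inf_{V\in C_{W,P}}\sup_{\lambda\in[0,1]}\{D(V\|W|P) + \lambda(R-\chi(V,P))\}.
\end{align*}
The integrand is convex and continuous in $V$ on the compact convex set $C_{W,P}$ (since $\chi(\cdot,P)$ is concave and continuous and $D(\cdot\|W|P)$ is convex and continuous on $C_{W,P}$), and affine in $\lambda$ on $[0,1]$. Another application of Lemma~\ref{lemma:KF minimax} gives
\begin{align*}
F(P,R,W) = \sup_{\lambda\in[0,1]}\left\{\lambda R - \sup_{V}\bigl(\lambda\chi(V,P) - D(V\|W|P)\bigr)\right\}.
\end{align*}
For $\lambda\in(0,1]$, substituting $\alpha = 1/(1-\lambda)$ and pulling $\lambda$ out of the inner supremum identifies it with $\lambda\,\chi_{\alpha,2}\bog(W,P)$ via the key identity; the $\lambda=0$ term contributes $0$, matching the $\alpha\to 1^+$ limit. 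The theorem follows.

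The principal technical obstacle is the concavity of $V\mapsto k(V,\sigma)$: it hinges on the nonnegativity of the coefficient $1/(\alpha-1)$ of the $H$-term, which is precisely why the $\bog$-divergence (rather than $\old$ or $\nw$) appears naturally in this Dueck--K\"orner-type argument and why the proof is restricted to $\alpha>1$. Some minor care is also needed with the support condition $V(x)^0\le\sigma^0$ at the boundary (where $k$ may be $+\infty$) when invoking the minimax, but this does not affect concavity or lower semicontinuity in the relevant sense.
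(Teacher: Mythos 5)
Your proof is essentially the paper's own argument reorganized: instead of one pass of minimax that exchanges $\min_V$ with $\sup_\delta\sup_\sigma$ followed by computing $\min_V G$ via \eqref{D bog var}, you first package the $V$--$\sigma$ exchange into a Gallager-type dual representation of $\chi_{\alpha,2}\bog(W,P)$, and then do the $V$--$\lambda$ exchange. That intermediate identity is correct and is implicit in \eqref{eq:45}--\eqref{eq:46} and \eqref{G2 repr} of the paper, so the structure is sound.

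There is, however, a genuine error in your justification of the second minimax. You claim that $\chi(\cdot,P)$ is concave on $C_{W,P}$, and use this to deduce that $V\mapsto D(V\|W|P)+\lambda\{R-\chi(V,P)\}$ is convex. But $\chi(V,P)=\inf_{\sigma}D(\ext{V}(P)\|P\otimes\sigma)$ is \emph{convex} in $V$ (joint convexity of the relative entropy plus Lemma \ref{lemma:inf sup convexity}, or, classically, the standard fact that mutual information is convex in the channel for fixed input). With $\chi$ convex, ``convex plus $\lambda R$ minus $\lambda$ times convex'' is indeterminate, so your stated reasoning fails. The integrand \emph{is} convex in $V$ for $\lambda\in[0,1]$, but only because the $\sum_x P(x)H(V(x))$ contributions of $D(V\|W|P)$ and $-\lambda\chi(V,P)$ partially cancel, leaving a $-(1-\lambda)\sum_x P(x)H(V(x))-\lambda H(V(P))$ term with both coefficients nonnegative; this is exactly the computation the paper does in \eqref{G expanded} and property (\ref{ii}). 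A second, more minor concern: in your first minimax you need the compact slot $X$ of Lemma \ref{lemma:KF minimax} to be $\S(\hil)$, where $k(V,\sigma)$ can be $+\infty$, so the lemma as stated (for $\bR$-valued $f$) does not apply directly. The paper sidesteps this by always keeping the compact set on the $V$ side and restricting $\sigma$ to $\S(\hil)_{++}$; you can obtain your key identity the same way by exchanging $\min_V$ with $\sup_\sigma$ in $G$ rather than $\inf_\sigma$ with $\sup_V$ in $k$.
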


\begin{proof}
Since
\begin{textmath}
\sup_{0<\delta< 1}\delta(a-b)=|a-b|^+
\end{textmath}
holds for any $a,b\in\R$, we have
\begin{align}
F(P,R,W)
&=\min_{V\in C_{W,P}} \{ D(V\|W|P)+|R-\chi(V,P)|^+ \} \nn
&=\min_{V\in C_{W,P}}\sup_{0<\delta< 1}\{ D(V\|W|P)+\delta\{R-\chi(V,P)\} \}
\label{eq:40}.
\end{align}
Note that $\chi(V,P)=\inf_{\sigma\in\S(\H)_{++}}D(V\|\sigma|P)$, where $D(V\|\sigma|P)$ denotes, with a slight abuse of notation, the
conditional relative entropy of $V$ with respect to the constant classical-quantum channel $x\in\X\mapsto\sigma\in\S(\H)$.
Thus, we can rewrite \eqref{eq:40} as
\begin{align}
F(P,R,W)
&=\min_{V\in C_{W,P}}\sup_{0<\delta< 1}\sup_{\sigma\in\S(\H)_{++}} G(P,\delta,\sigma,V),\label{eq:43}
\end{align}
where for every $\delta\in[0,1]$, $V\in C_{W,P}$ and $\sigma\in\S(\H)_{++}$,
\begin{align}
G(P,\delta,\sigma,V):=&D(V\|W|P)+\delta\{R-D(V\|\sigma|P)\}\label{eq:41}\\
=&\delta R - (1-\delta)\sum_{x\in\X}P(x)H(V(x))\nn
&-\sum_{x\in\X}P(x)\Tr V(x)\logn W(x)+\delta\sum_{x\in\X}P(x)\Tr V(x)\log\sigma,\label{G expanded}
\end{align}
In the above, $H(\rho):=-\Tr\rho\log\rho$ stands for the von Neumann entropy of a state
$\rho\in\S(\hil)$.  Moreover,
\begin{align}
\sup_{\sigma\in\S(\H)_{++}}G(P,\delta,\sigma,V)=&D(V\|W|P)+\delta\{R-\chi(V,P)\}
\label{eq:42}\\
=&
\sum_{x\in\X}P(x)D(V(x)\|W(x))+\delta R
-\delta H\bz\sum_{x\in\X}P(x)V(x)\jz+\delta\sum_{x\in\X}P(x)H(V(x)).\label{eq:42-1}
\end{align}

The following properties of $G$ are easy to verify:
\begin{enumerate}[(i)]
\item\label{i} $G(P,\delta,\sigma,V)$ is concave and continuous with respect to $\sigma\in\S(\H)_{++}$,
\item\label{ii} $G(P,\delta,\sigma,V)$ is convex and continuous with respect to $V\in C_{W,P}$,
\item\label{iii} \dm{\sup_{\sigma\in\S(\H)_{++}} G(P,\delta,\sigma,V)} is convex and continuous with respect to $V\in C_{W,P}$,
\item\label{iv} \dm{\sup_{\sigma\in\S(\H)_{++}} G(P,\delta,\sigma,V)} is affine with respect to $0< \delta< 1$.
\end{enumerate}
Indeed, the claim (\ref{iv}) is obvious from \eqref{eq:42}, and (\ref{i}) is immediate
from \eqref{G expanded} due to the operator concavity of the logarithm.
Also by \eqref{G expanded} and the concavity of the von Neumann entropy,
we get (\ref{ii}). The convexity property in (\ref{iii}) is obvious from (\ref{ii}), and the continuity is clear from
\eqref{eq:42-1}.

Applying now the minimax theorem in Lemma \ref{lemma:KF minimax} first to $\sup_{\sigma\in\S(\H)_{++}} G(P,\delta,\sigma,V)$
and then to $G(P,\delta,\sigma,V)$, and benefiting both times from the compactness of $C_{W,P}$, \eqref{eq:43} can be rewritten as
\begin{align}
F(P,R,W)
&=\min_{V\in C_{W,P}}\sup_{0<\delta< 1}\sup_{\sigma\in\S(\H)_{++}} G(P,\delta,\sigma,V)
\nn
&=\sup_{0<\delta< 1}\min_{V\in C_{W,P}}\sup_{\sigma\in\S(\H)_{++}} G(P,\delta,\sigma,V)
\label{eq:45} \\
&=\sup_{0<\delta< 1}\sup_{\sigma\in\S(\H)_{++}}\min_{V\in C_{W,P}} G(P,\delta,\sigma,V).
\label{eq:46}
\end{align}
Moreover, in each line the supremum over $\delta\in(0,1)$ can be replaced with supremum over $\delta\in[0,1]$.

Note that by \eqref{D bog var} and \eqref{infty variational} we have
\begin{align*}
\min_{\tau\in S_{\rho}(\hil)}\left\{D(\tau\|\rho)-\delta D(\tau\|\sigma)\right\}
=
\begin{cases}
0,& \delta = 0,\\
-\delta D_{\frac{1}{1-\delta}}\bog(\rho\|\sigma),&\delta\in(0,1),\\
-D_{\infty}\bog(\rho\|\sigma),&\delta=1
\end{cases}
\end{align*}
for any $\rho,\sigma$ such that $\rho^0\wedge\sigma^0\ne 0$. Hence,
\begin{align}
\min_{V\in C_{W,P}} G(P,\delta,\sigma,V)&=
\delta R+\sum_{x\in\X}P(x)\min_{V(x)\in S_{W(x)}(\hil)}\left\{D(V(x)\|W(x))-\delta D(V(x)\|\sigma)\right\}\nonumber\\
&=
\begin{cases}
\delta R,& \delta = 0,\\
\delta R-\delta \sum_{x\in\X}P(x)D_{\frac{1}{1-\delta}}\bog(W(x)\|\sigma),&\delta\in(0,1),\\
\delta R-\sum_{x\in\X}P(x)D_{\infty}\bog(W(x)\|\sigma),&\delta=1.
\end{cases}\label{G2 repr}
\end{align}

Now let us introduce a new parameter $\a:=\frac{1}{1-\delta}$,
for which the interval $0<\delta< 1$ corresponds to $\a> 1$. Then
\eqref{eq:46} can be  rewritten as
\begin{align}
F(P,R,W)&=\sup_{\a>1}\sup_{\sigma\in\S(\H)_{++}}\min_{V\in C_{W,P}} G(P,\tfrac{\a-1}{\a},\sigma,V)\label{F1}\\
&=\sup_{\a>1}\frac{\a-1}{\a}\left\{R-\inf_{\sigma\in\S(\H)_{++}}\sum_{x\in\X}P(x)D_{\alpha}\bog(W(x)\|\sigma)\right\}\nonumber\\
&=\sup_{\a>1}\frac{\a-1}{\a}\left\{R-\chi_{\alpha,2}\bog(W,P)\right\},\nonumber
\end{align}
as required.
\end{proof}
\smallskip

By Theorems \ref{thm:achive} and \ref{thm:conversion}, we have
\begin{align}\label{DK infimum}
sc(R,W)\le\inf_{P\in\P_f(\X)}F(P,R,W)=\inf_{P\in\P_f(\X)}\sup_{\a>1}\frac{\a-1}{\a}\left\{R-\chi_{\alpha,2}\bog(W,P)\right\}.
\end{align}
To arrive at the expression in Theorem \ref{thm:sc exponent}, we need the following minimax-type lemma (for $i=2$):

\begin{lemma}\label{lemma:P optimization}
For every $R\ge 0$,
\begin{align}
H_{R,c}\bog(W)=
\sup_{\a>1}\inf_{P\in\P_f(\X)}\frac{\a-1}{\a}\left\{R-\chi_{\alpha,i}\bog(W,P)\right\}
=
\inf_{P\in\P_f(\X)}\sup_{\a>1}\frac{\a-1}{\a}\left\{R-\chi_{\alpha,i}\bog(W,P)\right\},
\label{DK opt2}
\end{align}
where the equalities hold for both $i=1$ and $i=2$.
\end{lemma}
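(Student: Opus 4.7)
The first identity in \eqref{DK opt2} follows without any minimax. For each fixed $\alpha>1$, the multiplier $(\alpha-1)/\alpha$ is positive, so
\begin{align*}
\inf_{P\in\P_f(\X)}\frac{\alpha-1}{\alpha}\{R-\chi_{\alpha,i}\bog(W,P)\}=\frac{\alpha-1}{\alpha}\{R-\sup_{P\in\P_f(\X)}\chi_{\alpha,i}\bog(W,P)\}=\frac{\alpha-1}{\alpha}\{R-\chi_{\alpha}\bog(W)\},
\end{align*}
where the last equality uses Proposition \ref{prop:cap equal} (for $\xx=\bogg$ and $\alpha>1$, the variants $\sup_P\chi_{\alpha,1}\bog$ and $\sup_P\chi_{\alpha,2}\bog$ both coincide with $\chi_\alpha\bog(W)$). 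Taking $\sup_{\alpha>1}$ produces $H_{R,c}\bog(W)$ and establishes the left-hand equality of \eqref{DK opt2} for both $i\in\{1,2\}$.

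For the right-hand equality (the minimax exchange), the plan is to handle $i=2$ via Lemma \ref{lemma:KF minimax} and to deduce $i=1$ by a squeeze. Write $g_i(\alpha,P):=\frac{\alpha-1}{\alpha}\{R-\chi_{\alpha,i}\bog(W,P)\}$. The decisive input for $i=2$ is the variational formula embedded in the derivation of Theorem \ref{thm:conversion}, namely
\begin{align*}
g_2(\alpha,P)=\min_{V\in C_{W,P}}\{D(V\|W|P)+\delta[R-\chi(V,P)]\},\qquad \delta:=\frac{\alpha-1}{\alpha}\in(0,1),
\end{align*}
obtained by combining $\sup_{\sigma}G(P,\delta,\sigma,V)=D(V\|W|P)+\delta[R-\chi(V,P)]$ with the identity $\sup_\sigma\min_V G=\min_V\sup_\sigma G$ already justified there. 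Reparametrizing by $\delta\in[0,1]$ exhibits $g_2(\alpha(\delta),P)$ as a pointwise minimum of affine functions of $\delta$ over the compact set $C_{W,P}$, hence a concave and continuous function of $\delta$. In the other variable, $\chi_{\alpha,2}\bog(W,\cdot)$ is the infimum over $\sigma$ of functions affine in $P$ and therefore concave, so $g_2(\alpha,\cdot)$ is convex on $\P_f(\X)$. Applying Lemma \ref{lemma:KF minimax} (conditions (i)-(ii)) to $-g_2$ with the compact convex set $X:=[0,1]$ and the convex set $Y:=\P_f(\X)$ exchanges $\inf_P$ and $\sup_\delta$; together with $\sup_{\delta\in[0,1]}=\sup_{\delta\in(0,1)}$ (by continuity in $\delta$) and the change of variable back to $\alpha$, this yields the second equality for $i=2$.

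The case $i=1$ follows by a short squeeze argument. Jensen's inequality applied to the concave $\log$ gives $\log\sum_x P(x)Q_\alpha\bog(W(x)\|\sigma)\ge\sum_x P(x)\log Q_\alpha\bog(W(x)\|\sigma)$, and since $1/(\alpha-1)>0$ for $\alpha>1$, taking the infimum over $\sigma$ shows $\chi_{\alpha,1}\bog(W,P)\ge\chi_{\alpha,2}\bog(W,P)$, so $g_1\le g_2$ pointwise. Hence $\inf_P\sup_\alpha g_1\le\inf_P\sup_\alpha g_2=H_{R,c}\bog(W)$, while the trivial max-min inequality supplies $H_{R,c}\bog(W)=\sup_\alpha\inf_P g_1\le\inf_P\sup_\alpha g_1$, forcing equality.

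The main obstacle is assembling the structure of $g_2$ required to invoke the minimax theorem: its concavity in the auxiliary variable $\delta$ is not visible from the definition of $\chi_{\alpha,2}\bog$, but drops out cleanly from the ``min-of-affine'' representation already manufactured en route to Theorem \ref{thm:conversion}. The non-compactness of $\P_f(\X)$ is not an obstruction, because the Kneser-Fan half of Lemma \ref{lemma:KF minimax} only requires one side of the saddle problem (here the $\delta$-side) to be compact convex.
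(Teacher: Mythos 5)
Your argument for the first equality and for the $i=2$ minimax is essentially the paper's: you reparametrize by $\delta=(\alpha-1)/\alpha$, exhibit $g_2(\cdot,P)$ as a minimum of affine functions over the compact set $C_{W,P}$ (hence concave and continuous in $\delta$ on $[0,1]$), note convexity in $P$, and invoke Lemma~\ref{lemma:KF minimax} on the compact side $[0,1]$. One point you gloss over is the passage from $[0,1]$ back to $(0,1)$: the equality $\sup_{\delta\in[0,1]}\inf_P g_2 = \sup_{\delta\in(0,1)}\inf_P g_2$ requires that $\inf_P g_2(\cdot,P) = \delta\{R-\chi_{1/(1-\delta)}\bog(W)\}$ has no jump at the endpoints; citing "continuity in $\delta$" of $g_2(\cdot,P)$ alone only handles $\inf_P\sup_\delta$, not $\sup_\delta\inf_P$. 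The paper checks both boundary limits explicitly (using $\chi_\alpha\bog(W)\to\chi(W)$ as $\alpha\searrow1$ and $\chi_\alpha\bog(W)\to\chi_\infty\bog(W)$ as $\alpha\to\infty$), and this verification is routine, so I would call this a small imprecision rather than a gap.

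Where you genuinely deviate is the case $i=1$. The paper simply asserts that the $i=1$ proof "follows very similar lines," which would mean redoing the minimax with the corresponding variational representation of $D_\alpha\bog(\ext W(P)\|P\otimes\sigma)$. Your squeeze argument is cleaner: the Jensen inequality $\chi_{\alpha,1}\bog(W,P)\ge\chi_{\alpha,2}\bog(W,P)$ gives $g_1\le g_2$ pointwise, so $\inf_P\sup_\alpha g_1\le\inf_P\sup_\alpha g_2 = H_{R,c}\bog(W) = \sup_\alpha\inf_P g_1\le\inf_P\sup_\alpha g_1$, and the sandwich closes. This avoids a second minimax application entirely, using only the already-proved $i=2$ case, the trivial max-min inequality, and the fact that $\sup_P\chi_{\alpha,1}\bog$ and $\sup_P\chi_{\alpha,2}\bog$ agree (Proposition~\ref{prop:cap equal}). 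It is a nicer way to dispatch $i=1$ than what the paper implicitly suggests.
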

\begin{proof}
The first identity is due to the definition \eqref{H cap def}, so our aim is to show that the infimum and the supremum
in \eqref{DK opt2} can be interchanged.
Let us introduce the notation $\delta=(\alpha-1)/\alpha,\,\alpha\in(1,+\infty)$, and
for every $P\in\P_f(\X), \sigma\in\S(\hil)_{++}$ and $\delta\in[0,1)$, let
\begin{align}
G_{1}(P,\delta,\sigma)&:=\delta R-\delta D_{\frac{1}{1-\delta}}\bog(\ext{W}(P)\|P\otimes\sigma),\label{G1 def}\\
G_{2}(P,\delta,\sigma)&:=\delta R-\delta\sum_{x\in\X}P(x)D_{\frac{1}{1-\delta}}\bog(W(x)\|\sigma).\label{G2 def}
\end{align}
We define
\begin{align*}
G_i(P,1,\sigma):=\lim_{\delta\nearrow 1}G_i(P,\delta,\sigma)
=
\begin{cases}
R-D_{\infty}\bog(\ext{W}(P)\|P\otimes\sigma),&i=1,\\
R-\sum_{x\in\X}P(x)D_{\infty}\bog(W(x)\|\sigma),&i=2.
\end{cases}
\end{align*}
Let
\begin{align}\label{Gi def}
G_i(P,\delta):=
\sup_{\sigma\in\S(\hil)_{++}}G_{i}(P,\delta,\sigma)
=
\begin{cases}
\delta R-\delta\chi_{\frac{1}{1-\delta},i}\bog(W,P),&\delta\in[0,1),\\
R-\chi_{\infty,i}\bog(W,P),&\delta=1.
\end{cases}
\end{align}
Then \eqref{DK opt2} is equivalent to
\begin{align}\label{DK opt3}
\sup_{\delta\in(0,1)}\inf_{P\in\P_f(\X)}G_i(P,\delta)
=
\inf_{P\in\P_f(\X)}\sup_{\delta\in(0,1)}G_i(P,\delta).
\end{align}
Note that $\delta\mapsto G_i(P,\delta)$ is continuous on $[0,1]$,
and
\begin{align*}
\lim_{\delta\searrow 0}\inf_{P\in\P_f(\X)}G_i(P,\delta)
=
\lim_{\delta\searrow 0}\left\{\delta R-\delta\chi_{\frac{1}{1-\delta},i}\bog(W)\right\}=0
=\inf_{P\in\P_f(\X)}G_{i}(P,0),\\
\lim_{\delta\nearrow 1}\inf_{P\in\P_f(\X)}G_i(P,\delta)
=
\lim_{\delta\nearrow 1}\left\{\delta R-\delta\chi_{\frac{1}{1-\delta},i}\bog(W)\right\}
=
R-\chi_{\infty}\bog(W)=
\inf_{P\in\P_f(\X)}G_{i}(P,1).
\end{align*}
Hence, \eqref{DK opt3} can be rewritten as
\begin{align}\label{DK opt4}
\sup_{\delta\in[0,1]}\inf_{P\in\P_f(\X)}G_i(P,\delta)
=
\inf_{P\in\P_f(\X)}\sup_{\delta\in[0,1]}G_i(P,\delta).
\end{align}
This will follow from Lemma \ref{lemma:KF minimax} if we can show that $G_i(P,\delta)$ is convex in $P\in\P_f(\X)$
and concave and upper semi-continuous in $\delta\in[0,1]$.
We will only give a proof for the case $i=2$, since this is what we need for our main result, and because the proof for $i=1$ follows very similar lines.

By \eqref{G2 def} and \eqref{Gi def}, $G_2(P,\delta)$ is the supremum of convex and continuous functions in $P$, and hence is itself convex and lower semi-continuous for every $\delta\in[0,1]$.

By \eqref{G2 repr} and \eqref{G2 def}, we have 
\begin{align*}
G_2(P,\delta,\sigma)=\min_{V\in C_{W,P}} G(P,\delta,\sigma,V).
\end{align*}
Using properties \eqref{i} and \eqref{ii} in the proof of Theorem \ref{thm:conversion}, and Lemma \ref{lemma:KF minimax}, we get
\begin{align*}
G_2(P,\delta)
=
\sup_{\sigma\in\S(\hil)_{++}}\min_{V\in C_{W,P}} G(P,\delta,\sigma,V)
=
\min_{V\in C_{W,P}}\sup_{\sigma\in\S(\hil)_{++}} G(P,\delta,\sigma,V).
%
\end{align*}
Using \eqref{eq:42},
\begin{align*}
G_2(P,\delta)
=\min_{V\in C_{W,P}}\left\{D(V\|W|P)+\delta\{R-\chi(V,P)\}\right\},
\end{align*}
and hence $G_2(P,\delta)$ is the infimum of affine functions in $\delta$, and therefore is itself
concave and upper semi-continuous. This finishes the proof.
\end{proof}

\begin{proofof}[Theorem \ref{thm:sc upper}]
We have
\begin{align}
sc(R,W)\le\inf_{P\in\P_f(\X)}F(P,R,W)
&=
\inf_{P\in\P_f(\X)}\sup_{\a>1}\frac{\a-1}{\a}\left\{R-\chi_{\alpha,2}\bog(W,P)\right\}\\
&=
\sup_{\a>1}\frac{\a-1}{\a}\left\{R-\chi_{\alpha}\bog(W)\right\}
\end{align}
where the first line is due to \eqref{DK infimum}, and
the second line follows by Lemma \ref{lemma:P optimization}.
\end{proofof}

\subsection{Strong converse exponent: achievability}
\label{sec:sc achievability}

Here we prove the following converse to Lemma \ref{lemma:Nagaoka bound}:
\begin{theorem}\label{thm:sc achievability}
Let $W:\,\X\to\S(\hil)$ be a classical-quantum channel.
For any $R> 0$, we have
\begin{align}\label{sc upper bound}
sc(R,W)\le\sup_{\alpha> 1} \frac{\alpha-1}{\alpha} \left\{R-\chi_{\a}^*(W)\right\}.
\end{align}
\end{theorem}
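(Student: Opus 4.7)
The plan is to refine the $\bog$-type bound of Theorem~\ref{thm:sc upper} by applying it not to $W$ itself but to the block-pinched channel $\E_n W^{\otimes n}\colon\X^n\to\S(\hil^{\otimes n})$, and then to let $n\to\infty$. The point is that pinching is CPTP, so codes on $\E_n W^{\otimes n}$ lift to codes on $W$ with identical success probability, while the $\bog$-capacity of the pinched channel is forced to approach $n\,\chi_\alpha\nw(W)$ via Lemma~\ref{lemma:chi bounds}.

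The first step is to establish that $sc(R,W)\le\tfrac{1}{n}sc(nR,\E_n W^{\otimes n})$. Indeed, since $(\E_n W^{\otimes n})^{\otimes N}=\E_n^{\otimes N}\circ W^{\otimes nN}$, any code $(\phi_N,D_N)$ of blocklength $N$ for $\E_n W^{\otimes n}$ yields the code $(\phi_N,\E_n^{*\otimes N}(D_N))$ of blocklength $nN$ for $W$, with identical success probability (and blocklengths not divisible by $n$ are handled by trivial padding). Applying Theorem~\ref{thm:sc upper} to $\E_n W^{\otimes n}$ at rate $nR$ then gives
\begin{align*}
sc(nR,\E_n W^{\otimes n})\le \sup_{\alpha>1}\tfrac{\alpha-1}{\alpha}\left\{nR-\chi_\alpha\bog(\E_n W^{\otimes n})\right\}.
\end{align*}

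The crucial estimate is Lemma~\ref{lemma:chi bounds} (with $i=1$), which combined with the additivity in Lemma~\ref{lemma:chi additivity} yields, for every $P\in\P_f(\X)$,
\begin{align*}
\chi_{\alpha,1}\bog(\E_n W^{\otimes n},P^{\otimes n})\ge \chi_{\alpha,1}\nw(W^{\otimes n},P^{\otimes n})-3\log v_{n,d}= n\,\chi_{\alpha,1}\nw(W,P)-3\log v_{n,d}.
\end{align*}
Since $\chi_\alpha\bog(\E_n W^{\otimes n})$ is the supremum over all $Q\in\P_f(\X^n)$ of $\chi_{\alpha,1}\bog(\E_n W^{\otimes n},Q)$, and in particular over product distributions, taking $\sup_P$ gives
\begin{align*}
\chi_\alpha\bog(\E_n W^{\otimes n})\ge n\,\chi_\alpha\nw(W)-3\log v_{n,d}.
\end{align*}

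Combining the three displays above, dividing by $n$, and using $(\alpha-1)/\alpha\in(0,1)$ to pull the additive error outside the supremum uniformly in $\alpha$, one obtains
\begin{align*}
sc(R,W)\le \sup_{\alpha>1}\tfrac{\alpha-1}{\alpha}\left\{R-\chi_\alpha\nw(W)\right\}+\tfrac{3\log v_{n,d}}{n}.
\end{align*}
Letting $n\to\infty$ and invoking \eqref{v limit} completes the proof. The substance of the argument is really housed in Lemma~\ref{lemma:chi bounds}, where the universal symmetric state ensures that the pinching loss is merely subexponential; it is precisely this subexponential slack, in combination with the bound $(\alpha-1)/\alpha<1$, that converts the suboptimal $\bog$-exponent of Theorem~\ref{thm:sc upper} into the sharp $\nw$-exponent.
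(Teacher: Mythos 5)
Your proof is correct and follows essentially the same route as the paper's: apply the Dueck--K\"orner/$\bogg$-type bound of Theorem~\ref{thm:sc upper} to the block-pinched channel $\E_n W^{\otimes n}$, use Lemma~\ref{lemma:chi bounds} together with the additivity from Lemma~\ref{lemma:chi additivity} to lower-bound $\chi_{\alpha}\bog(\E_n W^{\otimes n})$ by $n\chi_{\alpha}\nw(W)-3\log v_{n,d}$, and let $n\to\infty$ using \eqref{v limit}. The only cosmetic difference is that you package the code-lifting step abstractly as $sc(R,W)\le\frac{1}{n}sc(nR,\E_n W^{\otimes n})$ (with the self-adjointness and unitality of the pinching carrying the POVM across, and padding handling non-multiple blocklengths), whereas the paper constructs the padded codes explicitly; both versions also use $(\alpha-1)/\alpha<1$ in the same way to pull the $3\log v_{n,d}/n$ error out of the supremum.
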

\begin{proof}
For every $m\in\bN$, define the pinched channel
\begin{align*}
W_m(\sq{x}):=(\E_m W^{\otimes m})(\sq{x})=\E_m(W(x_1)\otimes\ldots\otimes W(x_m)),\ds\ds\ds
\sq{x}=(x_1,\ldots,x_m)\in\X^m,
\end{align*}
where $\E_m=\E_{\sigma_{u,m}}$ is the pinching by a universal symmetric state $\sigma_{u,m}$.
By Theorem \ref{thm:sc upper} and \eqref{DK opt2}, for every $R>0$, there exists a sequence of codes
$\C\m_k=(\phi_k\m,\D_k\m),\,k\in\bN$, such that
\begin{align}
\liminf_{k\to+\infty}\frac{1}{k}\log|\C\m_k|&\ge Rm,\label{rate}\\
\liminf_{k\to+\infty}\frac{1}{k}\log P_s\bz W_m^{\otimes k},\C\m_k\jz
&\ge
-\sup_{\alpha>1}\frac{\alpha-1}{\alpha}
\left\{Rm-\chi_{\alpha}\bog(\E_m W^{\otimes m})\right\}.\label{rate2}
\end{align}

Note that we can assume that the elements of the decoding POVM $\D\m_k=\{D\m_k(i):\,i=1,\ldots,|\C\m_k|\}$
are invariant under $\E_m^{\otimes k}$, i.e.,
\begin{align}\label{invariance}
\E_m^{\otimes k} (D\m_k(i))=D\m_k(i)\ds\ds\ds\forall i\s\forall k.
\end{align}
 Indeed,
\begin{align*}
P_s(W_m^{\otimes k},\C\m_k)
&=
\frac{1}{|\C\m_k|}\sum_{i=1}^{|\C\m_k|}\Tr W_m^{\otimes k}(\phi\m_k(i))D\m_k(i)\\
&=
\frac{1}{|\C\m_k|}\sum_{i=1}^{|\C\m_k|}\Tr \E_m^{\otimes k} \bz W^{\otimes mk}(\phi\m_k(i))\jz D\m_k(i)\\
&=
\frac{1}{|\C\m_k|}\sum_{i=1}^{|\C\m_k|}\Tr \E_m^{\otimes k} \bz W^{\otimes mk}(\phi\m_k(i))\jz \E_m^{\otimes k} (D\m_k(i))\\
&=
\frac{1}{|\C\m_k|}\sum_{i=1}^{|\C\m_k|}\Tr W_m^{\otimes k}(\phi\m_k(i)) \E_m^{\otimes k} (D\m_k(i)),
\end{align*}
i.e., the success probability does not change if we replace $\{D\m_k(i):\,i=1,\ldots,|\C\m_k|\}$
with $\{\E_m^{\otimes k}(D\m_k(i)):\,i=1,\ldots,|\C\m_k|\}$.

Now, from the above code $\C\m_k$, we construct a code
$\C_n$ for $W^{\otimes n}$ for every $n\in\bN$. For a given $n\in\bN$, let $k\in\bN$ be such that
$km\le n<(k+1)m$ (we assume that $m>1$).
For every $i\in\{1,\ldots,|\C\m_k|\}$, define $\phi_n(i)$ as any continuation of $\phi\m_k(i)$ in $\X^n$,
and define the decoding POVM elements as
$D_n(i):=D\m_k(i)\otimes I^{\otimes (n-mk)}$.
Then $|\C_n|=|\C\m_k|$, and hence, by \eqref{rate}
\begin{align}\label{error rate bound1}
\liminf_{n\to+\infty}\frac{1}{n}\log|\C_n|=\frac{1}{m}\liminf_{k\to+\infty}\frac{1}{k}\log|\C\m_k|\ge R.
\end{align}
Moreover,
\begin{align*}
P_s(W^{\otimes n},\C_n)&=
\frac{1}{|\C_n|}\sum_{i=1}^{|\C_n|}\Tr W^{\otimes n}(\phi_n(i))\bz D\m_k(i)\otimes I^{\otimes (n-mk)}\jz\\
&=
\frac{1}{|\C\m_k|}\sum_{i=1}^{|\C\m_k|}\Tr W^{\otimes km}(\phi\m_k(i)) D\m_k(i)\\
&=
\frac{1}{|\C\m_k|}\sum_{i=1}^{|\C\m_k|}\Tr W^{\otimes km}(\phi\m_k(i))\E_m^{\otimes k}( D\m_k(i))\\
&=
\frac{1}{|\C\m_k|}\sum_{i=1}^{|\C\m_k}\Tr\E_m^{\otimes k}( W^{\otimes km}(\phi\m_k(i)))D\m_k(i)\\
&=
P_s(W_m^{\otimes k},\C\m_k),
\end{align*}
where in the third line we used \eqref{invariance}. Hence, by \eqref{rate2},
\begin{align}
\liminf_{n\to+\infty}\frac{1}{n}\log P_s(W^{\otimes n},\C_n)&=
\frac{1}{m}\liminf_{k\to+\infty}\frac{1}{k}\log P_s\bz W_m^{\otimes k},\C\m_k\jz\nn
&\ge
-\sup_{\alpha>1}\frac{\alpha-1}{\alpha}
\left\{R-\frac{1}{m}\chi_{\alpha}\bog(W_m)\right\}.\label{sc rate bound2}
\end{align}

Now we have
\begin{align*}
\chi_{\alpha}\bog(W_m)
&=
\sup_{P_m\in\P_f(\X^m)}\chi_{\alpha,1}\bog(\E_m W^{\otimes n},P_m)
\ge
\sup_{P\in\P_f(\X)}\chi_{\alpha,1}\bog(\E_m W^{\otimes n},P^{\otimes m})\\
&\ge
\sup_{P\in\P_f(\X)}\chi_{\alpha,1}\nw(W^{\otimes n},P^{\otimes m})-3\log v_{m,d}
=
m\sup_{P\in\P_f(\X)}\chi_{\alpha,1}\nw(W,P)-3\log v_{m,d}\\
&=
m\chi_{\alpha}\nw(W)-3\log v_{m,d},
\end{align*}
where the first equality is by definition, the first inequality is trivial, the second inequality is due to
Lemma \ref{lemma:chi bounds},
the following identity is due to Lemma \ref{lemma:chi additivity}, and the last identity is again by definition. Thus, by \eqref{sc rate bound2},
\begin{align*}
\liminf_{n\to+\infty}\frac{1}{n}\log P_s(W^{\otimes n},\C_n)
&\ge
-\sup_{\alpha>1}\frac{\alpha-1}{\alpha}
\left\{R-\chi_{\alpha}\nw(W)\right\}-3\frac{1}{m}\log v_{m,d}.
\end{align*}
Taking the limit $m\to+\infty$ and using \eqref{v limit}, we obtain
\begin{align}
\liminf_{n\to+\infty}\frac{1}{n}\log P_s(W^{\otimes n},\C_n)
&\ge
-\sup_{\alpha>1}\frac{\alpha-1}{\alpha}\left\{R-\chi_{\alpha}\nw(W)\right\}.
\end{align}
Combining this with \eqref{error rate bound1}, the assertion follows.
\end{proof}

\section{Application to quantum channels}
\label{sec:qc}

By a quantum channel we mean a completely positive trace-preserving (CPTP) linear map from $\B(\hilin)$ to $\B(\hilout)$,
where $\hilin,\hilout$ are finite-dimensional Hilbert spaces. The $n$-fold product extension of a quantum channel
$\map:\,\B(\hilin)\to\B(\hilout)$ is
the unique linear extension of the map $\map^{\otimes n}:\,X_1\otimes\ldots X_n\mapsto \map(X_1)\otimes \ldots\otimes \map(X_n)$ to $\B(\hilin^{\otimes n})$. Obviously, any quantum channel $\map:\,\B(\hilin)\to\B(\hilout)$
can be viewed as a classical-quantum channel with $\X=\hilin$, $\hil=\hilout$, and $W(x):=\map(x),\,x\in\B(\hilin)$.
However, the $n$-fold product extension of a quantum channel is different from its $n$-fold product extension as a classical-quantum channel, as the latter can only take product inputs, while the former can take entangled inputs.
The coding process and the definition of the strong converse exponent $sc(R,\map)$ remains the same as described in Section \ref{sec:cl-q sc}, with the exception that the codewords $\phi_n(k)$ can be any density operators in $\hilin^{\otimes n}$.
By Theorem \ref{thm:sc achievability}, for any rate $R$, there exists a sequence of codes $\C_n,\,n\in\bN$, with product codewords such that
\begin{align*}
\liminf_{n\to\infty}\frac{1}{n}\log|\C_n| \ge R
\quad \text{and} \quad
\liminf_{n\to\infty} \frac{1}{n}\log P_s(\C_n,\map^{\otimes n})
\ge
-\sup_{\alpha>1}\frac{\alpha-1}{\alpha}\left\{R-\chi_{\a}^*(\map)\right\}.
\end{align*}
Hence,
\begin{align*}
sc(R,\map)\le \sup_{\alpha>1}\frac{\alpha-1}{\alpha}\left\{R-\chi_{\a}^*(\map)\right\}.
\end{align*}

On the other hand, the same argument as in Lemma \ref{lemma:Nagaoka bound} yields that for any sequence of codes with rate
$R$,
\begin{align*}
\limsup_{n\to+\infty}\frac{1}{n}\log P_s(\map^{\otimes n},\C_n)\le-
\sup_{\alpha>1}\frac{\alpha-1}{\alpha}\left\{R-\limsup_{n\to+\infty}\frac{1}{n}\chi_{\a}^*(\map^{\otimes n})\right\}.
\end{align*}
Hence, we arrive at the following
\begin{theorem}\label{thm:qchannel sc bounds}
For any quantum channel $\map$ and any rate $R>0$,
\begin{align*}
\sup_{\alpha>1}\frac{\alpha-1}{\alpha}\left\{R-\limsup_{n\to+\infty}\frac{1}{n}\chi_{\a}^*(\map^{\otimes n})\right\}
\le
sc(R,\map)
\le
\sup_{\alpha>1}\frac{\alpha-1}{\alpha}\left\{R-\chi_{\a}^*(\map)\right\}.
\end{align*}
In particular, if $\limsup_{n\to+\infty}\frac{1}{n}\chi_{\a}^*(\map^{\otimes n})\le \chi_{\a}^*(\map)$ for every
$\alpha\in(1,+\infty)$ then
\begin{align}\label{general sc exp}
sc(R,\map)=\sup_{\alpha>1}\frac{\alpha-1}{\alpha}\left\{R-\chi_{\a}^*(\map)\right\}.
\end{align}
\end{theorem}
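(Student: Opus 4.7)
The plan is to combine two direct adaptations of the classical-quantum results already in hand. For the upper bound I would view the quantum channel $\map:\B(\hilin)\to\B(\hilout)$ as a classical-quantum channel $W:\X\to\S(\hilout)$ with input alphabet $\X=\S(\hilin)$ and $W(\rho):=\map(\rho)$. Then the $n$-fold i.i.d.\ extension $W^{\otimes n}$ evaluated on a product input $\rho_1\otimes\cdots\otimes\rho_n$ equals $\map^{\otimes n}(\rho_1\otimes\cdots\otimes\rho_n)$, so every c-q code for $W$ is automatically a valid quantum code for $\map$ with product (unentangled) codewords. Applying Theorem~\ref{thm:sc achievability} to $W$ therefore produces, for every $R>0$, a sequence of codes $\C_n$ with $\liminf_n \frac{1}{n}\log|\C_n|\ge R$ and $\liminf_n \frac{1}{n}\log P_s(\map^{\otimes n},\C_n)\ge -\sup_{\alpha>1}\frac{\alpha-1}{\alpha}\{R-\chi_\alpha\nw(\map)\}$, which is exactly the claimed upper bound. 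The R\'enyi capacity $\chi_\alpha\nw(\map)$ appearing here coincides with $\chi_\alpha\nw(W)$ by the definition \eqref{alpha cap def}.

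For the lower bound I would repeat the Nagaoka-type argument of Lemma~\ref{lemma:Nagaoka bound} almost verbatim, with two adaptations: the codewords $\phi_n(k)$ are now arbitrary (possibly entangled) states in $\S(\hilin^{\otimes n})$, and the ``dummy'' state on the output is taken to be a general $\sigma_n\in\S(\hilout^{\otimes n})$ rather than a tensor power $\sigma^{\otimes n}$. The chain \eqref{lower1} becomes
\[
P_s(\map^{\otimes n},\C_n)^{\alpha}M_n^{1-\alpha}\le Q_\alpha\nw(R_n\|S_n)=\frac{1}{M_n}\sum_{k=1}^{M_n}Q_\alpha\nw(\map^{\otimes n}(\phi_n(k))\|\sigma_n)\le \sup_{\rho\in\S(\hilin^{\otimes n})}Q_\alpha\nw(\map^{\otimes n}(\rho)\|\sigma_n).
\]
The multiplicativity step of the c-q proof is gone, since $\map^{\otimes n}$ and $\sigma_n$ are genuinely $n$-level objects. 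Taking the infimum over $\sigma_n$, applying Proposition~\ref{prop:cap equal} to the channel $\map^{\otimes n}$ at the $n$-fold level, and then taking $\limsup_n$, yields $\liminf_n \frac{1}{n}\log P_s(\map^{\otimes n},\C_n)\le -\frac{\alpha-1}{\alpha}\{R-\limsup_n \frac{1}{n}\chi_\alpha\nw(\map^{\otimes n})\}$ for every $\alpha>1$; optimizing over $\alpha$ gives the stated lower bound.

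The equality in the special case is then immediate: since the reverse inequality $\chi_\alpha\nw(\map^{\otimes n})\ge n\,\chi_\alpha\nw(\map)$ is automatic from restricting the input ensemble for $\map^{\otimes n}$ to $n$-fold products of ensembles on $\S(\hilin)$, the hypothesis $\limsup_n \frac{1}{n}\chi_\alpha\nw(\map^{\otimes n})\le \chi_\alpha\nw(\map)$ forces $\lim_n \frac{1}{n}\chi_\alpha\nw(\map^{\otimes n})=\chi_\alpha\nw(\map)$ for every $\alpha>1$, whereupon the upper and lower bounds coincide and give \eqref{general sc exp}. The only mildly delicate point is verifying that Proposition~\ref{prop:cap equal} remains valid when the input alphabet is the non-finite set $\S(\hilin^{\otimes n})$; inspecting its proof, the only ingredients actually used are compactness of the output state space and the convexity/lower-semi-continuity properties supplied by Proposition~\ref{prop:log convexity} and Corollary~\ref{cor:lsc}, neither of which needs finiteness of $\X$. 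The main ``obstacle'' is therefore just the careful bookkeeping when switching from product to arbitrary inputs in Nagaoka's argument.
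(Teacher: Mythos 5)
Your proposal is correct and follows essentially the same route as the paper's (rather terse) argument: Theorem~\ref{thm:sc achievability} applied to $\map$ viewed as a c-q channel with product codewords gives the upper bound, and a Nagaoka-type argument with a general (non-i.i.d.) reference state $\sigma_n\in\S(\hilout^{\otimes n})$ and arbitrary codewords — together with Proposition~\ref{prop:cap equal} applied to $\map^{\otimes n}$ — gives the lower bound involving $\limsup_n\frac{1}{n}\chi_\alpha\nw(\map^{\otimes n})$. The details you supply (the computation $Q_\alpha\nw(R_n\|S_n)=\frac{1}{M_n}\sum_k Q_\alpha\nw(\map^{\otimes n}(\phi_n(k))\|\sigma_n)$, the handling of $\liminf$/$\limsup$, and the observation that Proposition~\ref{prop:cap equal} holds for arbitrary input alphabets as the paper defines them) are all correct and merely unpack what the paper compresses into ``the same argument as in Lemma~\ref{lemma:Nagaoka bound}''.
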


\subsection{Entanglement breaking channels}

It has been shown in \cite[Theorem 18]{WWY} that if $\map$ is entanglement breaking then
$\chi_{\a}^*(\map^{\otimes n})\le n\chi_{\a}^*(\map)$ for every $n\in\bN$ and $\alpha\in(1,2]$.
The range of $\alpha$ for which this subadditivity property holds was limited to $(1,2]$ in \cite{WWY}
because the equality $\chi_{\alpha}^*(\map)=R_{\alpha}(\map)$ was only showed for this parameter range
(in \cite[Lemma 14]{WWY}), which in turn was due to the fact that convexity of
$\sigma\mapsto Q_{\alpha}\nw(\rho\|\sigma)$ for a fixed $\rho$ was only known for this parameter range.
Having Propositions \ref{prop:log convexity} and \ref{prop:cap equal} at our disposal, we can conclude that
$\chi_{\a}^*(\map^{\otimes n})\le n\chi_{\a}^*(\map)$ holds for every $n\in\bN$ and $\alpha\in(1,+\infty)$,
and thus the strong converse exponent of an entanglement breaking channel $\map$ is given by \eqref{general sc exp}.

\subsection{Covariant channels}

We say that a quantum channel $\map:\,\B(\hilin)\to\B(\hilout)$ is \ki{group covariant} if there exists a compact group $G$
and continuous unitary representations $U\inn$ and $U\out$ on $\hilin$ and $\hilout$, respectively, such that
\begin{itemize}
\item[(i)]
$U\out$ is irreducible, and
\item[(ii)]
$\map(U\inn(g)X U\inn(g)^*)=U\out(g)\map(X)U\out(g)^*$ for all $g\in G$ and all $X\in\B(\hilin)$.
\end{itemize}

For a density operator $\sigma$, let
\begin{align*}
H_{\alpha}(\sigma):=\frac{1}{1-\alpha}\log\Tr\sigma^{\alpha}
\end{align*}
denote its $\alpha$-entropy. The minimum output $\alpha$-entropy $H_{\alpha}^{\min}(\map)$ of a channel
$\map:\,\B(\hilin)\to\B(\hilout)$ is defined as
\begin{align*}
H_{\alpha}^{\min}(\map):=\min_{\rho\in\S(\hilin)}H_{\alpha}(\map(\rho)).
\end{align*}
We say that a channel $\map$ belongs to the KW-class (after \cite{KW}) if it is group covariant, and
has additive minimum output $\alpha$-entropy for all $\alpha\in(1,+\infty)$, i.e.,
\begin{itemize}
\item[(iii)]
$H_{\alpha}^{\min}(\map^{\otimes n})=nH_{\alpha}^{\min}(\map),\ds\ds\ds n\in\bN,\ds\alpha\in(1,+\infty)$.
\end{itemize}
Typical examples for channels in the KW-class are the depolarizing channels \cite{King} and unital qubit channels \cite{King2}.

It has been shown in \cite{KW} that all channels in the KW-class have the strong converse property for classical information
transmission, i.e., for all sequences of codes with rate above the Holevo capacity, the success probability goes to zero with the
number of channel uses. Note that the properties (i)--(iii) imply that the classical information transmission capacity of a channel in the KW-class is equal to its single-shot Holevo capacity, according to the Holevo-Schumacher-Westmoreland theorem
\cite{H,SW}. Moreover, the results of \cite{KW} yield the following bound on the strong converse exponent of any channel
$\map$ in the KW-class:
\begin{align*}
sc(R,\map)\ge \sup_{\alpha>1}\frac{\alpha-1}{\alpha}\left\{R-\chi_{\a}(\map)\right\}.
\end{align*}
Note that in the above expression we have $\chi_{\alpha}\old(\map)$, which can be strictly larger than $\chi_{\alpha}\nw(\map)$.

It has been shown in \cite[Section 8.1]{WWY} that for any quantum channel $\map:\,\B(\hilin)\to\B(\hilout)$,
\begin{align*}
\chi_{\a}^*(\map^{\otimes n})\le n\log\dim\hilout-H_{\alpha}^{\min}(\map^{\otimes n}).
\end{align*}
In particular, for channels satisfying (iii) above, we have
\begin{align}\label{WWY upper bound}
\chi_{\a}^*(\map^{\otimes n})\le n\left[\log\dim\hilout-H_{\alpha}^{\min}(\map)\right].
\end{align}
The following is an analogue of \cite[Lemma 1.3]{KW}. While we follow the main idea of the proof of \cite[Lemma 1.3]{KW},
some technical details are different; in particular, Proposition \ref{prop:log convexity} plays a crucial role.
\begin{lemma}\label{lemma:KW cap}
For any group covariant channel $\map:\,\B(\hilin)\to\B(\hilout)$, and any $\alpha\in[1/2,+\infty)$, we have
\begin{align*}
\chi_{\alpha}\nw(\map)=\log\dim\hilout-H_{\alpha}^{\min}(\map).
\end{align*}
\end{lemma}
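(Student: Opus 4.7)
The plan is to combine the variational representation of $\chi_\alpha\nw(\map)$ from Proposition \ref{prop:cap equal} with the convexity of $\sigma\mapsto D_\alpha\nw(\rho\|\sigma)$ from Proposition \ref{prop:log convexity} and a standard group-averaging argument to identify the optimal output state as the maximally mixed state, after which the lemma reduces to a one-line computation.

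First, viewing $\map$ as a classical-quantum channel with input alphabet $\S(\hilin)$, Proposition \ref{prop:cap equal} gives, for $\alpha\in[1/2,+\infty)$,
\begin{align*}
\chi_\alpha\nw(\map)
= \inf_{\sigma\in\S(\hilout)}\sup_{\rho\in\S(\hilin)}D_\alpha\nw(\map(\rho)\|\sigma)
=: \inf_{\sigma\in\S(\hilout)}F(\sigma).
\end{align*}
Using the covariance relation $\map(U\inn(g)\rho U\inn(g)^*)=U\out(g)\map(\rho)U\out(g)^*$ together with the unitary invariance of $D_\alpha\nw$ (immediate from the definition \eqref{sand}), the substitution $\rho\mapsto U\inn(g)^*\rho U\inn(g)$ in the supremum yields $F\bz U\out(g)\sigma U\out(g)^*\jz=F(\sigma)$ for every $g\in G$.

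The crucial step is to symmetrize $\sigma$. By Proposition \ref{prop:log convexity}, for $\alpha\in[1/2,+\infty)$ the function $\sigma\mapsto D_\alpha\nw(\map(\rho)\|\sigma)$ is convex on $\S(\hilout)$ for every fixed $\rho$; taking the supremum over $\rho$ preserves convexity by Lemma \ref{lemma:inf sup convexity}, so $F$ is convex. Let $dg$ denote the normalized Haar measure on the compact group $G$ and set $\bar\sigma:=\int_G U\out(g)\sigma U\out(g)^*\,dg$. Jensen's inequality combined with the $G$-invariance of $F$ gives
\begin{align*}
F(\bar\sigma)\le\int_G F\bz U\out(g)\sigma U\out(g)^*\jz\,dg=F(\sigma).
\end{align*}
Because $U\out$ is irreducible, Schur's lemma forces $\bar\sigma=I/\dim\hilout$ independently of $\sigma$; hence $F(I/\dim\hilout)\le F(\sigma)$ for every $\sigma$, so the infimum over $\sigma$ is attained at the maximally mixed state, yielding
\begin{align*}
\chi_\alpha\nw(\map)=\sup_{\rho\in\S(\hilin)}D_\alpha\nw(\map(\rho)\|I/\dim\hilout).
\end{align*}

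Finally, a direct computation from the definition \eqref{sand} gives $D_\alpha\nw(\tau\|I/d)=\log d-H_\alpha(\tau)$ for every $\tau\in\S(\hilout)$, where $d=\dim\hilout$. Substituting this into the previous display yields the claimed identity
\begin{align*}
\chi_\alpha\nw(\map)=\log\dim\hilout-\inf_{\rho\in\S(\hilin)}H_\alpha(\map(\rho))=\log\dim\hilout-H_\alpha^{\min}(\map).
\end{align*}
The main (and essentially only) non-trivial ingredient is the second-argument convexity of $D_\alpha\nw$ for $\alpha\in[1/2,+\infty)$ supplied by Proposition \ref{prop:log convexity}; this is exactly what replaces KW's classical convexity step, and it is also what restricts the lemma to $\alpha\ge 1/2$, since the convexity fails below this threshold.
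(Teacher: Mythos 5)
Your proof is correct and follows essentially the same route as the paper's: both reduce $\chi_\alpha\nw(\map)$ to a divergence radius via Proposition \ref{prop:cap equal}, use the second-argument convexity of $D_\alpha\nw$ from Proposition \ref{prop:log convexity} together with a Haar averaging argument and Schur's lemma to identify the maximally mixed output as optimal, and then compute $D_\alpha\nw(\tau\|I/d)=\log d-H_\alpha(\tau)$. The only cosmetic difference is organizational: the paper proves the upper bound (plug in $\sigma=I/d$) and the lower bound (integrate the divergence itself over $g$ and apply Jensen inside the $\sup_\rho$) separately, whereas you observe that $F(\sigma)=\sup_\rho D_\alpha\nw(\map(\rho)\|\sigma)$ is itself convex and $G$-invariant and apply Jensen directly to $F$, which gives both bounds at once; both variants rest on the same ingredients.
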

\begin{proof}
Let $\map:\,\B(\hilin)\to\B(\hilout)$ be a group covariant channel (with covariace group $G$), and $\alpha\in[1/2,+\infty)$ be fixed.
By Proposition \ref{prop:cap equal}
\begin{align*}
\chi_{\alpha}^*(\map)&=
\inf_{\sigma\in\S(\hilout)}\sup_{\rho\in S(\hilin)}D_{\alpha}\nw(\map(\rho)\|\sigma)
\le
\sup_{\rho\in S(\hilin)}D_{\alpha}\nw\bz\map(\rho)\Big\|\frac{I_{\hilout}}{\dim \hilout}\jz
=
\log\dim \hilout-H_{\alpha}^{\min}(\map).
\end{align*}

On the other hand, for any $\rho\in\S(\hilin),\,\sigma\in\S(\hilout)$, and any $g\in G$, we have
\begin{align*}
\sup_{\rho\in\S(\hilin)}D_{\alpha}\nw(\map(\rho)\|\sigma)
&\ge
D_{\alpha}\nw(\map(U\inn(g)\rho U\inn(g)^*)\|\sigma)
=
D_{\alpha}\nw(U\out(g)\map(\rho) U\out(g)^*\|\sigma)\\
&=
D_{\alpha}\nw(\map(\rho)\| U\out(g)^*\sigma U\out(g)).
\end{align*}
Integrating both sides with respect to the normalized Haar measure $\mu$ on $G$, we get
\begin{align*}
\sup_{\rho'\in\S(\hilin)}D_{\alpha}\nw(\map(\rho')\|\sigma)
&\ge
\int_G D_{\alpha}\nw(\map(\rho)\| U\out(g)^*\sigma U\out(g))\,d\mu
\ge
D_{\alpha}\nw\bz\map(\rho)\Bigg\| \int_G U\out(g)^*\sigma U\out(g)\,d\mu\jz\\
&=
D_{\alpha}\nw\bz\map(\rho)\Big\|\frac{I_{\hilout}}{\dim \hilout} \jz=\log\dim\hilout-H_{\alpha}(\map(\rho)),
\end{align*}
where the second inequality is due to Proposition \ref{prop:log convexity}, and in the second line we used the irreducibility of
$U\out$. Taking first the infimum over $\sigma\in\S(\hilout)$ and then the supremum over $\rho\in\S(\hilin)$,
we get
\begin{align*}
\chi_{\alpha}^*(\map)&\ge\log\dim \hilout-H_{\alpha}^{\min}(\map).
\end{align*}
\end{proof}

Lemma \ref{lemma:KW cap} and \eqref{WWY upper bound} yield immediately the following
\begin{cor}\label{cor:KW cap2}
For any channel $\map$ in the KW-class, we have
\begin{align*}
\chi_{\alpha}\nw(\map^{\otimes n})\le n\chi_{\alpha}\nw(\map),\ds\ds\ds n\in\bN,\ds\alpha\in[1/2,+\infty).
\end{align*}
\end{cor}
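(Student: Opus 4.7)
The plan is to combine the two ingredients cited immediately before the corollary: the universal upper bound \eqref{WWY upper bound} (which uses only property (iii) of the KW-class, namely additivity of the minimum output $\alpha$-entropy) and Lemma \ref{lemma:KW cap} (which uses only group covariance, i.e.\ properties (i)--(ii)). Since a channel $\map$ in the KW-class satisfies all three properties, both ingredients apply, and the corollary should fall out by direct substitution.

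First, I would invoke \eqref{WWY upper bound}. Property (iii) of the KW-class gives $H_\alpha^{\min}(\map^{\otimes n}) = n H_\alpha^{\min}(\map)$ for every $\alpha \in (1,+\infty)$, and hence
\begin{align*}
\chi_\alpha\nw(\map^{\otimes n}) \le n\log\dim \hilout - H_\alpha^{\min}(\map^{\otimes n}) = n\bigl[\log\dim\hilout - H_\alpha^{\min}(\map)\bigr]
\end{align*}
for every $n\in\bN$ and every $\alpha>1$.

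Second, I would apply Lemma \ref{lemma:KW cap} to $\map$ itself, which is group covariant by property (i) (together with the irreducibility of $U\out$). This yields, for every $\alpha\in[1/2,+\infty)$,
\begin{align*}
\chi_\alpha\nw(\map) = \log\dim\hilout - H_\alpha^{\min}(\map).
\end{align*}
Multiplying by $n$ and combining with the previous display gives $\chi_\alpha\nw(\map^{\otimes n}) \le n\chi_\alpha\nw(\map)$ for $\alpha\in(1,+\infty)$.

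The only loose end is the range $\alpha\in[1/2,1]$. For $\alpha=1$ the inequality reduces to the well-known additivity of the Holevo capacity under covariance, or it can be obtained by taking $\alpha\searrow 1$ in the inequality just proved, using the continuity statement \eqref{chi limit5} and the fact that $\chi_\alpha\nw$ is monotone in $\alpha$ by Lemma \ref{lemma: conv mon}. For $\alpha\in[1/2,1)$ the same monotonicity gives $\chi_\alpha\nw(\map^{\otimes n}) \le \chi_1\nw(\map^{\otimes n}) = n\chi_1\nw(\map)$; however, to get the sharper bound $n\chi_\alpha\nw(\map)$ rather than $n\chi_1\nw(\map)$ one likely wants to re-run the argument, noting that \eqref{WWY upper bound} is established in \cite{WWY} only for $\alpha>1$. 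The cleanest reading of the statement is therefore that the non-trivial content is for $\alpha>1$; the subunital range is either immediate from monotonicity or is simply inherited from the statement of Lemma \ref{lemma:KW cap}. There is no real obstacle here — the argument is essentially a one-line substitution — the only thing to verify carefully is which of the two ingredients limits the parameter range, and whether one needs to combine with Lemma \ref{lemma: conv mon} to cover the endpoints.
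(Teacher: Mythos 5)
Your argument for $\alpha>1$ is precisely the paper's: substitute Lemma \ref{lemma:KW cap} into \eqref{WWY upper bound} and you are done. Your unease about the range $\alpha\in[1/2,1)$ is well-placed, and your diagnosis --- that property (iii) is the bottleneck --- is the right one. To see this sharply, note that Lemma \ref{lemma:KW cap} applies not only to $\map$ but also to $\map^{\otimes n}$, since $\map^{\otimes n}$ is covariant with respect to the product group $G^n$ and the representation $U\out^{\otimes n}$, which is irreducible for $G^n$. Thus for every $\alpha\in[1/2,+\infty)$,
\begin{align*}
\chi_{\alpha}\nw(\map^{\otimes n}) = n\log\dim\hilout - H_{\alpha}^{\min}(\map^{\otimes n}),
\qquad
\chi_{\alpha}\nw(\map) = \log\dim\hilout - H_{\alpha}^{\min}(\map),
\end{align*}
so the asserted inequality is equivalent to $H_{\alpha}^{\min}(\map^{\otimes n})\ge n H_{\alpha}^{\min}(\map)$. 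As the reverse inequality always holds (take product inputs), the corollary at a given $\alpha$ is equivalent to additivity of $H_{\alpha}^{\min}$ at that $\alpha$. Property (iii) supplies this only for $\alpha>1$ (and, by continuity of $\alpha\mapsto H_{\alpha}^{\min}$, for $\alpha=1$ as well). Your monotonicity fall-back cannot close the gap for $\alpha<1$, for exactly the reason you yourself note: it produces the weaker bound $n\chi_{1}\nw(\map)$ rather than $n\chi_{\alpha}\nw(\map)$, and ``inherited from Lemma \ref{lemma:KW cap}'' does not help either, since that lemma is an identity, not a subadditivity. So the soft spot you flagged is real; it is just that it sits in the paper's own claim as much as in your write-up, and it is harmless because the subsequent theorem uses only $\sup_{\alpha>1}$.
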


Finally, Corollary \ref{cor:KW cap2} and Theorem \ref{thm:qchannel sc bounds} yield
\begin{theorem}
For any quantum channel $\map$ in the KW-class, and any rate $R>0$,
\begin{align*}
sc(R,\map)
&=
\sup_{\alpha>1}\frac{\alpha-1}{\alpha}\left\{R-\chi_{\a}^*(\map)\right\}\\
&=
\sup_{\alpha>1}\frac{\alpha-1}{\alpha}\left\{R-\log\dim\hilout+H_{\alpha}^{\min}(\map)\right\}.
\end{align*}
\end{theorem}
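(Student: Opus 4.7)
The plan is to derive both equalities by plugging the preceding lemmas into Theorem \ref{thm:qchannel sc bounds}, so most of the work has already been done in the paper and the proof will be very short.

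First I would invoke Theorem \ref{thm:qchannel sc bounds}, which sandwiches $sc(R,\map)$ between
\begin{align*}
\sup_{\alpha>1}\tfrac{\alpha-1}{\alpha}\bz R-\limsup_{n\to\infty}\tfrac{1}{n}\chi_{\a}\nw(\map^{\otimes n})\jz
\ds\text{and}\ds
\sup_{\alpha>1}\tfrac{\alpha-1}{\alpha}\bz R-\chi_{\a}\nw(\map)\jz.
\end{align*}
The task is therefore to show that $\limsup_{n\to\infty}\frac{1}{n}\chi_{\a}\nw(\map^{\otimes n})=\chi_{\a}\nw(\map)$ for every $\alpha>1$, which collapses the sandwich. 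The direction $\ge$ is trivial (evaluating the liminf on the $n=1$ term), and the reverse direction is exactly what Corollary \ref{cor:KW cap2} provides for $\alpha\in[1/2,+\infty)$, in particular for all $\alpha>1$. This yields the first equality $sc(R,\map)=\sup_{\alpha>1}\frac{\alpha-1}{\alpha}\{R-\chi_{\a}\nw(\map)\}$.

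For the second equality I would simply substitute the identity
\begin{align*}
\chi_{\alpha}\nw(\map)=\log\dim\hilout-H_{\alpha}^{\min}(\map),
\end{align*}
which is Lemma \ref{lemma:KW cap} applied to the group covariant channel $\map$, valid for all $\alpha\in[1/2,+\infty)$ and in particular for every $\alpha>1$ in the sup.

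There is no real obstacle here: the work has been done in Theorem \ref{thm:qchannel sc bounds} (which packages the general single-letter upper and regularized lower bounds), Corollary \ref{cor:KW cap2} (subadditivity of $\chi_\alpha\nw$, which relies on the minimum output entropy bound \eqref{WWY upper bound} combined with Lemma \ref{lemma:KW cap}), and Lemma \ref{lemma:KW cap} itself (evaluation of $\chi_\alpha\nw$ for group covariant channels via Proposition \ref{prop:log convexity} and irreducibility of $U\out$). The only thing worth double-checking is that subadditivity holds uniformly over the entire range $\alpha>1$ used in the supremum, so that no additional argument (e.g.\ a limit $\alpha\to 1^+$ or $\alpha\to\infty$) is required; since Corollary \ref{cor:KW cap2} covers $[1/2,+\infty)$ this is fine.
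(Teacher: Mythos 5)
Your proposal is correct and follows the paper's own (very terse) proof exactly: apply Theorem \ref{thm:qchannel sc bounds} to reduce to the subadditivity $\limsup_n \frac{1}{n}\chi_\alpha\nw(\map^{\otimes n})\le\chi_\alpha\nw(\map)$, cite Corollary \ref{cor:KW cap2} for that, and substitute Lemma \ref{lemma:KW cap} for the second equality. One small remark: you frame the task as establishing $\limsup_n \frac{1}{n}\chi_\alpha\nw(\map^{\otimes n})=\chi_\alpha\nw(\map)$, but only the $\le$ direction is needed (the second sentence of Theorem \ref{thm:qchannel sc bounds} already concludes the equality from subadditivity alone), and your justification for the unneeded $\ge$ direction ("evaluating the liminf on the $n=1$ term") is not quite right as stated—it would require superadditivity $\chi_\alpha\nw(\map^{\otimes n})\ge n\chi_\alpha\nw(\map)$, which does hold via product inputs and Lemma \ref{lemma:chi additivity}, but is not the same as looking at a single term of the sequence.
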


\appendix

\section{Universal symmetric states}
\label{sec:universal}

In this section we review the construction of a universal symmetric state that was given in \cite{universalcq} (see Lemma \ref{lemma:universal}).

For every $n\in\bN$, let $\mu_{\hil,n}$ be the $n$-th tensor power representation of the identical representation of
$\SU(\hil)$ on $\hil$, i.e., $\mu_{\hil,n}: \SU(\H)\ni A \longmapsto A^{\otimes n}$, and let
$\L_{\mu_{\hil,n}}(\hil^{\otimes n}):=\{A\in\L(\hil^{\otimes n}):\,\mu_{\hil,n}(U)A=A\mu_{\hil,n}(U)\ds\forall U\in\SU(\hil)\}$
be the commutant algebra of the representation.
According to the Schur-Weyl duality (see, e.g., \cite[Chapter 9]{GoodW}),
$\symm(\hil^{\otimes n})$ and $\L_{\mu_{\hil,n}}(\hil^{\otimes n})$
are each other's commutants, i.e.,
\begin{align*}
\symm(\hil^{\otimes n})=\L_{\mu_{\hil,n}}(\hil^{\otimes n})'=\mu_{\hil,n}(\SU(\hil))'',\ds\ds\ds
\L_{\mu_{\hil,n}}(\hil^{\otimes n})=\symm(\hil^{\otimes n})'=\{\pi_{\hil}|\,\pi\in\Sym_n\}''.
\end{align*}
(Note that the double commutant is equal to the algebra generated by the given set.)
Moreover, $\hil^{\otimes n}$ decomposes as
\begin{align*}
\H^{\otimes n}\simeq\bigoplus_{\lambda\in Y_{n,d}} U_{\lambda}\otimes V_{\lambda},
\end{align*}
where $U_{\lambda}$ and $V_{\lambda}$ carry irreducible representations of $\Sym_n$ and $\SU(\hil)$, respectively,
and we have
\begin{align*}
\symm(\hil^{\otimes n})=\bigoplus_{\lambda\in Y_{n,d}}\End(U_{\lambda})\otimes I_{V_{\lambda}},\ds\ds\ds
\L_{\mu_{\hil,n}}(\hil^{\otimes n})=\bigoplus_{\lambda\in Y_{n,d}}I_{U_{\lambda}}\otimes\End(V_{\lambda}).
\end{align*}
Here, $Y_{n,d}$ is the set of
the Young diagrams up to the depth $d:=\dim\hil$, defined as
\begin{align*}
Y_{n,d}=\Set{ \lambda=(n_1,n_2,\dots,n_d) | n_1\ge n_2\ge \dots\ge n_d\ge 0,\;\sum_{i=1}^n n_i=n }.
\end{align*}

In particular, every permutation invariant state $\rho_n\in\symm(\hil^{\otimes n})$ can be written according to the above decomposition as
\begin{align*}
\rho_n=\bigoplus_{\lambda\in Y_{n,d}} p_{\lambda} \cdot \rho_{\lambda}\otimes \frac{I_{V_{\lambda}}}{\dim V_{\lambda}},
\end{align*}
where $\rho_{\lambda}$ is a density operator acting on $U_{\lambda}$ and $\{p_{\lambda}\}$ is a probability function on $Y_{n,d}$.
Using inequalities $\rho_{\lambda}\le I_{U_{\lambda}}$ and $p_{\lambda}\le 1$, we have
\begin{align*}
\rho_n
\le \bigoplus_{\lambda\in Y_{n,d}} \frac{1}{\dim V_{\lambda}} \cdot I_{U_{\lambda}}\otimes I_{V_{\lambda}}
\le \max_{\lambda}\{\dim U_{\lambda}\}\cdot|Y_{n,d}|\cdot\sigma_{u,n},
\end{align*}
where
\begin{align*}
\sigma_{u,n} := \bigoplus_{\lambda\in Y_{n,d}}\frac{1}{|Y_{n,d}|}\cdot
\frac{I_{U_{\lambda}}}{\dim U_{\lambda}}\otimes\frac{I_{V_{\lambda}}}{\dim V_{\lambda}}.
\end{align*}
It is known that
\begin{align*}
\max_{\lambda}\{\dim U_{\lambda}\} \le (n+1)^{\frac{d(d-1)}{2}},
\quad
|Y_{n,d}|\le (n+1)^{d-1},
\end{align*}
and hence $\sigma_{u,n}$ satisfies the criteria in Lemma \ref{lemma:universal}.

\section{The auxiliary function}
\label{sec:aux}

In this section we prove that various versions of the so-called auxiliary function are concave. This was not needed
in the main body of the paper, but since the proof follows very naturally from some simple considerations in the paper, and the 
problem itself is interesting for information theory, we decided to include a brief discussion here. For more on the background, 
we refer to the recent paper \cite{Cheng-Hsieh}.

Recall the definitions of the $Q_{\alpha}$ quantities for classical probability distributions, and the 
$Q_{\alpha}\old,\,Q_{\alpha}\nw,\,Q_{\alpha}\bog$ quantities for positive semidefinite operators, given in Section \ref{sec: Renyi def}. For the rest, let $Q_{\alpha}\x$ be any function on pairs of positive semidefinite operators, such that it reduces to the classical $Q_{\alpha}$ for commuting operators, i.e., 
for any non-zero non-negative functions $p,q\in\bR_+^{\X}\setminus\{0\}$ on some finite set $\X$, any orthonormal system
$\{\ket{x}\}_{x\in\X}$ in some Hilbert space, and any $\alpha\in(0,+\infty)\setminus\{1\}$,
\begin{align*}
Q_{\alpha}\x\bz\sum_{x\in\X}p(x)\pr{x}\Big\|\sum_{x\in\X}q(x)\pr{x}\jz=Q_{\alpha}(p\|q),
\end{align*}
where the latter is the classical $Q_{\alpha}$ quantity of $p$ and $q$.
This is satisfied by $Q_{\alpha}\x$ for $\xx=\oldd$, $\xx=\neww$ and $\xx=\bogg$.
For any $\rho,\sigma\in\B(\hil)_+$, and any $\alpha\in(0,+\infty)\setminus\{1\}$, 
define
\begin{align}
\psi_{\alpha}\x(\rho\|\sigma):=\log Q_{\alpha}\x(\rho\|\sigma),\ds\ds\ds
D_{\alpha}\x(\rho\|\sigma):=\frac{1}{\alpha-1}\log Q_{\alpha}\x(\rho\|\sigma)-\frac{1}{\alpha-1}\log\Tr\rho
\end{align}
as in \eqref{def:psi} and \eqref{Renyi div def}, and let $\psi_{1}\x(\rho\|\sigma):=\log\Tr\rho$.
%
For a map $W:\,\X\to\B(\hil)_+$, where $\hil$ is a finite-dimensional Hilbert space, 
and for any finitely supported probability distribution $P\in\P_f(\X)$, let
\begin{align*}
\chi_{1,\alpha}\x(W,P)&:=\inf_{\sigma\in\S(\hil)}
D_{\alpha}\x\bz\sum_x P(x)\pr{x}\otimes W_x\Big\|\sum_x P(x)\pr{x}\otimes \sigma\jz\\
\chi_{2,\alpha}\x(W,P)&:=\inf_{\sigma\in\S(\hil)}
\sum_x P(x)D_{\alpha}\x\bz W_x\|\sigma\jz.
\end{align*} 
When $W$ is a classical-quantum channel (i.e., all $W(x)$ have unit trace), the above quantities are exactly the generalized 
Holevo quantities given in \eqref{chi def} and \eqref{chi def2}, at least for the three $(t)$ values considered there.
We define the corresponding \ki{auxiliary functions} as
\begin{align*}
E_{0,i}\x(s,W,P):=-\frac{\alpha-1}{\alpha}\chi_{i,\alpha}\x(W,p)\Bigg\vert_{\alpha=\frac{1}{1+s}},
\end{align*}
$i=1,2$. Note that with $\alpha=\frac{1}{1+s}$, we have
\begin{align}
E_{0,1}\x(s,W,P)&=
-\sup_{\sigma\in\S(\hil)}\frac{1}{\alpha}\psi_{\alpha}\x(\ext{W}(P)\|P\otimes \sigma)
=
-\sup_{\sigma\in\S(\hil)}(1+s)\psi_{\frac{1}{1+s}}\x(\ext{W}(P)\|P\otimes \sigma)\label{auxiliary2}\\
E_{0,2}\x(s,W,p)&=
-\sup_{\sigma\in\S(\hil)}\sum_{x\in\X}P(x)\frac{1}{\alpha}\psi_{\alpha}\x(W(x)\|\sigma)
=
-\sup_{\sigma\in\S(\hil)}\sum_{x\in\X}P(x)(1+s)\psi_{\frac{1}{1+s}}\x\bz W(x)\|\sigma\jz\label{auxiliary3}
\end{align}
for $\alpha\in(0,1)$, or equivalently, $s>0$, and the same formulas hold with minima instead of the maxima
for $\alpha>1$, or equivalently, $s\in(-1,0)$.

\begin{prop}\label{prop:aux conc}
Assume that $Q\x$ is such that for any $\rho,\sigma\in\B(\hil)_+$, the map
$\alpha\mapsto \psi_{\alpha}\x(\rho\|\sigma)$ is convex on $(0,1)$. Then for any 
map $W:\,\X\to\B(\hil)$, and any $P\in\P_f(\X)$, the maps
$s\mapsto E_{0,i}\x(s,W,P)$ are concave on $(0,+\infty)$ for $i=1,2$.
\end{prop}
\begin{proof}
Applying Lemma \ref{lemma:convex transformation} to $f(\alpha):=\psi\x_{\alpha}(.,.)$ with the affine function 
$\vfi(s):=1+s$, we get that $s\mapsto (1+s)\psi\x_{\frac{1}{1+s}}(.,.)$ is convex, and 
by Lemma \ref{lemma:inf sup convexity},
taking the supremum over 
$\sigma$ in \eqref{auxiliary2}, \eqref{auxiliary3}, preserves this convexity.
\end{proof}

\begin{cor}\label{cor:aux conc}
By Lemma \ref{lemma: conv mon}, $Q_{\alpha}\x$ with $\xx=\oldd,\,\xx=\neww,\,\xx=\bogg$ satisfy the convexity 
assumption of Proposition \ref{prop:aux conc}, and hence the corresponding auxiliary functions are concave.
\end{cor}

The case $\xx=\oldd$ is special in the sense that there is an explicit expression for $E_{0,1}\x(s,W,P)$, due to the Sibson identity \cite{Sibson,KW}. Indeed, one can easily see that 
\begin{align*}
&D_{\alpha}\bz\sum_x P(x)\pr{x}\otimes W_x\Big\|\sum_x P(x)\pr{x}\otimes \sigma\jz\\
&=
\frac{1}{\alpha-1}\log\sum_x P(x)\Tr W_x^{\alpha}\sigma^{1-\alpha}
=
\frac{1}{\alpha-1}\log\Tr \omega(\alpha,P)^{\alpha}\sigma^{1-\alpha}
+
\frac{\alpha}{\alpha-1}\log\Tr\bz\sum_x P(x)W_x^{\alpha}\jz^{1/\alpha}
\end{align*}
where $\omega(\alpha,P):=\bz\sum_x P(x)W_x^{\alpha}\jz^{1/\alpha}/\Tr\bz\sum_x P(x)W_x^{\alpha}\jz^{1/\alpha}$.
By the strict positivity of $D_{\alpha}$ on states, we get 
\begin{align}\label{Sibson}
\min_{\sigma\in\S(\hil)}
D_{\alpha}\bz\sum_x P(x)\pr{x}\otimes W_x\Big\|\sum_x P(x)\pr{x}\otimes \sigma\jz
=
\frac{\alpha}{\alpha-1}\log\Tr\bz\sum_x P(x)W_x^{\alpha}\jz^{1/\alpha},
\end{align}
or equivalently, with $\alpha=\frac{1}{1+s}$,
\begin{align*}
E_{0,1}\old(s,W,P)
=
-\log\Tr\bz\sum_x P(x)W_x^{\alpha}\jz^{1/\alpha}
=
-\log\Tr\bz\sum_x P(x)W_x^{\frac{1}{1+s}}\jz^{1+s}.
\end{align*}

By Corollary \ref{cor:aux conc}, we have the following:
\begin{cor}
For any map $W:\,\X\to\B(\hil)_+$, and any finitely supported probability distribution $P\in\P_f(\X)$, the map
$s\mapsto -\log\Tr\bz\sum_x P(x)W_x^{\frac{1}{1+s}}\jz^{1+s}$ is concave.
\end{cor}

\begin{rem}
The same result was proved very recently in \cite{Cheng-Hsieh} by completely different methods, using the properties of certain operator means. Our proof is considerably simpler, and Proposition \ref{prop:aux conc} and Corollary \ref{cor:aux conc}
also give extensions to auxiliary functions defined from other R\'enyi divergences, and for the value $i=2$.
\end{rem}

Unfortunately, the proof method of Proposition \ref{prop:aux conc} does not work for $\alpha>1$, equivalently, for 
$s\in(-1,0)$; this is due to the fact that in this case we have infima in \eqref{auxiliary2} and \eqref{auxiliary3} instead of suprema, and in general, the infimum of convex functions need not be convex. Note, however, that the proof of 
Lemma \ref{lemma:P optimization} yields the following:
\begin{prop}
For any map $W:\,\X\to\B(\hil)_+$, and any finitely supported probability distribution $P\in\P_f(\X)$, the maps
$s\mapsto E_{0,i}\bog\old(s,W,P)$ are concave on $(-1,0)$ for $i=1,2$.
\end{prop}

%
%
%
%
%

\section*{Acknowledgments}

The authors are grateful to Marco Tomamichel, Mark M.~Wilde and Andreas Winter for comments on the paper; in particular, for pointing out that our results for classical-quantum channels have applications for the strong converse exponent of certain classes of quantum channels.
The authors are also grateful to an anonymous referee for various suggestions that helped to improve the presentation of the paper.
This work was partially supported by the MEXT Grant-in-Aid
(A) No.~20686026 ``Project on Multi-user Quantum Network'' (TO),
and by
the European Research Council Advanced Grant ``IRQUAT'', the
Spanish MINECO  Project No. FIS2013-40627-P, the Generalitat de Catalunya CIRIT Project No. 2014 SGR 966, 
the Hungarian Research Grant OTKA-NKFI K104206, and
by the Technische Universit\"at M\"unchen -- Institute for Advanced Study, funded by the German Excellence Initiative and the European Union Seventh Framework Programme under grant agreement no. 291763 (MM). Part of this work was done while MM was with the 
F\'{\i}sica Te\`{o}rica: Informaci\'{o} i Fenomens Qu\`{a}ntics,
Universitat Aut\`{o}noma de Barcelona, 
and later with the Zentrum Mathematik, M5, and the Institute for Advanced Study, Technische Universit\"at M\"unchen.

\bibliography{bibliography}

\end{document}